\newcommand\vldbdoi{XX.XX/XXX.XX}
\newcommand\vldbpages{XXX-XXX}
\newcommand\vldbvolume{15}
\newcommand\vldbissue{11}
\newcommand\vldbyear{2022}
\newcommand\vldbauthors{Chenghao Lyu, Qi Fan, Fei Song, Arnab Sinha, Yanlei Diao, Wei Chen, Li Ma, Yihui Feng, Yaliang Li, Kai Zeng, Jingren Zhou}
\newcommand\vldbtitle{\shorttitle} 
\newcommand\vldbavailabilityurl{}
\newcommand\vldbpagestyle{empty} 
\newcommand{\ra}[1]{\renewcommand{\arraystretch}{#1}}
\newcommand{\minip}[1]{\vspace{0.05in} \noindent \textbf{#1}}
\newcommand{\cut}[1]{}
\newcommand{\todo}[1]{{\color{blue}[\textit{#1}]}}
\newcommand{\chenghao}[1]{{\color{black}#1}}
\newcommand{\sketch}[1]{{\color{cyan}[\textit{#1}]}}
\newcommand{\status}[1]{{\color{magenta}#1}}
\newcommand{\rv}[1]{{\color{black}#1}}
\newcommand{\po}{Pareto-optimal }
\newcommand{\bs}{\boldsymbol}
\newcommand{\techreport}[2]{\ifthenelse{\boolean{show-tr}}{{#1}}{{#2}}}
\theoremstyle{definition}
\newtheorem{proposition}{Proposition}[section]
\newtheorem{lemma}{Lemma}
\newtheorem{theorem}{Theorem}[section]
\begin{document}
\title{Fine-Grained Modeling and Optimization for Intelligent Resource Management in Big Data Processing}

%%%
%%% The "author" command and its associated commands are used to define the authors and their affiliations.

%\author{Chenghao Lyu$^\dagger$, Qi Fan$^\ddagger$, Fei Song$^\ddagger$, Arnab Sinha$^\ddagger$, Yanlei Diao$^{\dagger \ddagger}$}
%\author{Wei Chen$^*$, Li Ma$^*$, Yihui Feng$^*$, Yaliang Li$^*$, Kai Zeng$^*$, Jingren Zhou$^*$} 
%\affiliation{$^\dagger$ University of Massachusetts, Amherst; $^\ddagger$ Ecole Polytechnique; $^*$ Alibaba Group } 
%\affiliation{\textit {chenghao@cs.umass.edu}, \textit {\{qi.fan, fei.song, arnab.sinha, yanlei.diao\}@polytechnique.edu}, \textit {\{wickeychen.cw, mali.mali, yihui.feng, yaliang.li, zengkai.zk, jingren.zhou\}@alibaba-inc.com}}

\author{
Chenghao Lyu$^\dagger$,
Qi Fan$^\ddagger$,
Fei Song$^\ddagger$,
Arnab Sinha$^\ddagger$,
Yanlei Diao$^{\dagger \ddagger}$}
\author{Wei Chen$^*$,
Li Ma$^*$,
Yihui Feng$^*$,
Yaliang Li$^*$,
Kai Zeng$^*$,
Jingren Zhou$^*$} 
\affiliation{
$^\dagger$ University of Massachusetts, Amherst;
$^\ddagger$ Ecole Polytechnique;
$^*$ Alibaba Group} 
\email{chenghao@cs.umass.edu, {qi.fan, fei.song, arnab.sinha, yanlei.diao}@polytechnique.edu}
\email{{wickeychen.cw, mali.mali, yihui.feng, yaliang.li, zengkai.zk, jingren.zhou}@alibaba-inc.com}

%\author{Ben Trovato}
%\affiliation{%
%  \institution{Institute for Clarity in Documentation}
%  \streetaddress{P.O. Box 1212}
%  \city{Dublin}
%  \state{Ireland}
%  \postcode{43017-6221}
%}
%\email{trovato@corporation.com}
%
%\author{Lars Th{\o}rv{\"a}ld}
%\orcid{0000-0002-1825-0097}
%\affiliation{%
%  \institution{The Th{\o}rv{\"a}ld Group}
%  \streetaddress{1 Th{\o}rv{\"a}ld Circle}
%  \city{Hekla}
%  \country{Iceland}
%}
%\email{larst@affiliation.org}
%
%\author{Valerie B\'eranger}
%\orcid{0000-0001-5109-3700}
%\affiliation{%
%  \institution{Inria Paris-Rocquencourt}
%  \city{Rocquencourt}
%  \country{France}
%}
%\email{vb@rocquencourt.com}
%
%\author{J\"org von \"Arbach}
%\affiliation{%
%  \institution{University of T\"ubingen}
%  \city{T\"ubingen}
%  \country{Germany}
%}
%\email{jaerbach@uni-tuebingen.edu}
%\email{myprivate@email.com}
%\email{second@affiliation.mail}
%
%\author{Wang Xiu Ying}
%\author{Zhe Zuo}
%\affiliation{%
%  \institution{East China Normal University}
%  \city{Shanghai}
%  \country{China}
%}
%\email{firstname.lastname@ecnu.edu.cn}
%
%\author{Donald Fauntleroy Duck}
%\affiliation{%
%  \institution{Scientific Writing Academy}
%  \city{Duckburg}
%  \country{Calisota}
%}
%\affiliation{%
%  \institution{Donald's Second Affiliation}
%  \city{City}
%  \country{country}
%}
%\email{donald@swa.edu}

%%
%% The abstract is a short summary of the work to be presented in the
%% article.
\begin{abstract}

Big data processing at the production scale presents a highly complex environment for resource optimization (RO), a problem crucial for meeting performance goals and budgetary constraints of analytical users. 
The RO problem is challenging because it involves a set of decisions (the partition count, placement of parallel instances on machines, and resource allocation to each instance), requires multi-objective optimization (MOO), and is compounded by the scale and complexity of big data systems while having to meet stringent time constraints for scheduling. 
This paper presents a MaxCompute based integrated system to support multi-objective resource optimization via fine-grained instance-level modeling and optimization. We propose a new architecture that breaks RO into a series of simpler problems, new fine-grained predictive models, and novel optimization methods that exploit these models to make effective instance-level RO decisions well under a second.
%recommendations in a hierarchical MOO framework. 
Evaluation using production workloads shows that our new RO system could reduce \rv{37-72\%} latency and \rv{43-78\%} cost at the same time, compared to the current optimizer and scheduler, while running in \rv{0.02-0.23s}.

\end{abstract}

%\pagestyle{empty} % to remove running title & authors
%\include{response}

%\clearpage
%\pagenumbering{arabic}
\maketitle

%%% do not modify the following VLDB block %%
%%% VLDB block start %%%
\pagestyle{\vldbpagestyle}
\techreport{}{
\begingroup\small\noindent\raggedright\textbf{PVLDB Reference Format:}\\
\vldbauthors. \vldbtitle. PVLDB, \vldbvolume(\vldbissue): \vldbpages, \vldbyear.\\
\href{https://doi.org/\vldbdoi}{doi:\vldbdoi}
\endgroup
\begingroup
\renewcommand\thefootnote{}\footnote{\noindent
This work is licensed under the Creative Commons BY-NC-ND 4.0 International License. Visit \url{https://creativecommons.org/licenses/by-nc-nd/4.0/} to view a copy of this license. For any use beyond those covered by this license, obtain permission by emailing \href{mailto:info@vldb.org}{info@vldb.org}. Copyright is held by the owner/author(s). Publication rights licensed to the VLDB Endowment. \\
\raggedright Proceedings of the VLDB Endowment, Vol. \vldbvolume, No. \vldbissue\ %
ISSN 2150-8097. \\
\href{https://doi.org/\vldbdoi}{doi:\vldbdoi} \\
}\addtocounter{footnote}{-1}\endgroup
}
%%% VLDB block end %%%

%%% do not modify the following VLDB block %%
%%% VLDB block start %%%
\ifdefempty{\vldbavailabilityurl}{}{
\vspace{.3cm}
\begingroup\small\noindent\raggedright\textbf{PVLDB Artifact Availability:}\\
The source code, data, and/or other artifacts have been made available at \url{\vldbavailabilityurl}.
\endgroup
}
%%% VLDB block end %%%

%\input{template_input}

\section{Introduction}
\label{sec:intro}

While big data query processing has become commonplace in enterprise businesses and many platforms have been developed for this purpose~\cite{BorkarCGOV11,DeanG04,hadoop,fuxi-vldb14,Flink-recovery2015,GatesNCKNORSS09,maxcompute,Murray+13:naiad,scope-vldb12,spark-rdd-nsdi12,ThusooSJSCALWM09,trident-vldb21,Xin+2013:shark}, 
%[cite classical systems, MR based, Spark, Flink, as well as other recent systems that MaxCompute is based on], 
resource optimization in large clusters~\cite{JyothiCMNTYMGKK16,RajanKCK16,cleo-sigmod20,spark-moo-icde21} has received less attention. 
However, we observe from real-world experiences of running large compute clusters in the Alibaba Cloud that resource management plays a vital role in meeting both performance goals and budgetary constraints of internal and external analytical users.

%----- BEGIN: current job's lifecycle + drawbacks 
\cut{
\begin{figure*}
\centering
\begin{minipage}{.7\textwidth}
  \centering
  \includegraphics[height=6.4cm,width=\linewidth]{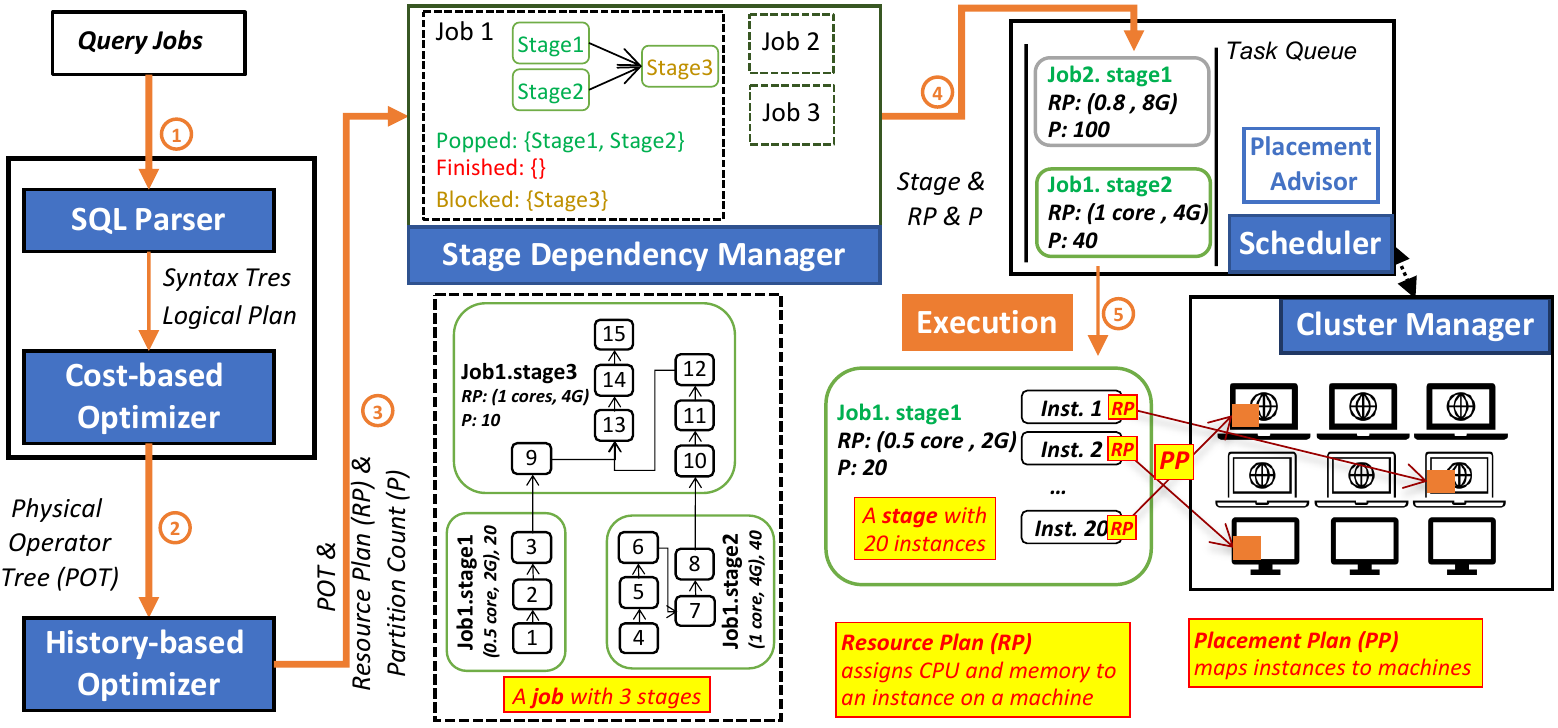}
   \vspace{-0.2in}
  \captionof{figure}{The lifecycle of a query job in MaxCompute}
  \label{fig:mc-job-lifecycle}
\end{minipage}%
\begin{minipage}{.27\textwidth}
  \centering
  \begin{tabular}{ll}
		\subfigure[\small{CDF for \#stages / job}]
		{\label{fig:cdf-job-stage-num-low}\includegraphics[height=1.6cm,width=.98\linewidth]{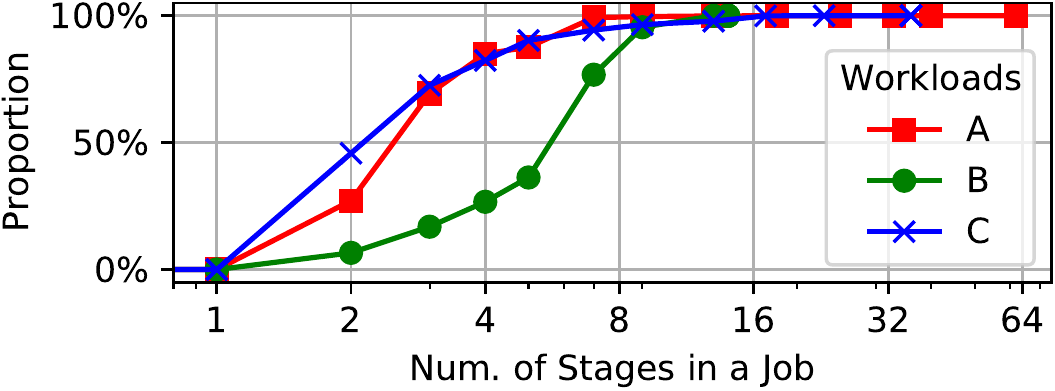}}
%		\vspace{-0.1in}
		\\
		\subfigure[\small{CDF for \#instances / stage}]
		{\label{fig:cdf-stage-inst-num-low}\includegraphics[height=1.6cm,width=.98\linewidth]{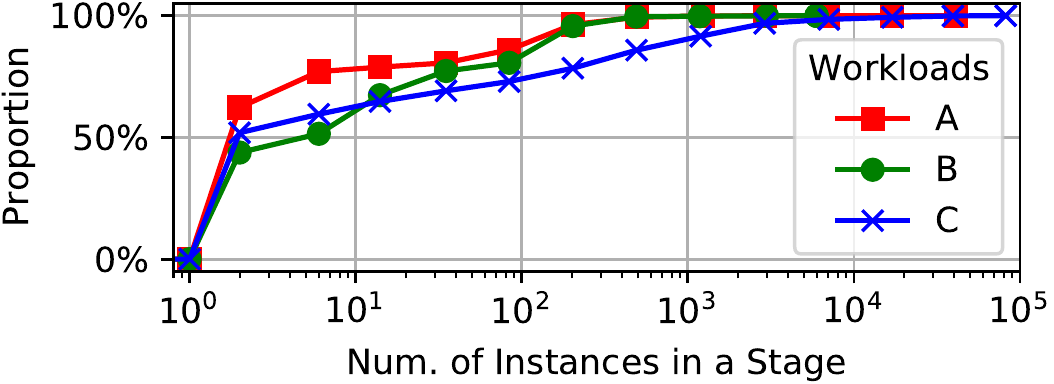}}
%		\vspace{-0.1in}
		\\
		\subfigure[\small{Timeline of 6716 instances in a stage}]
		{\label{fig:stage-lats}\includegraphics[height=1.6cm,width=.98\linewidth]{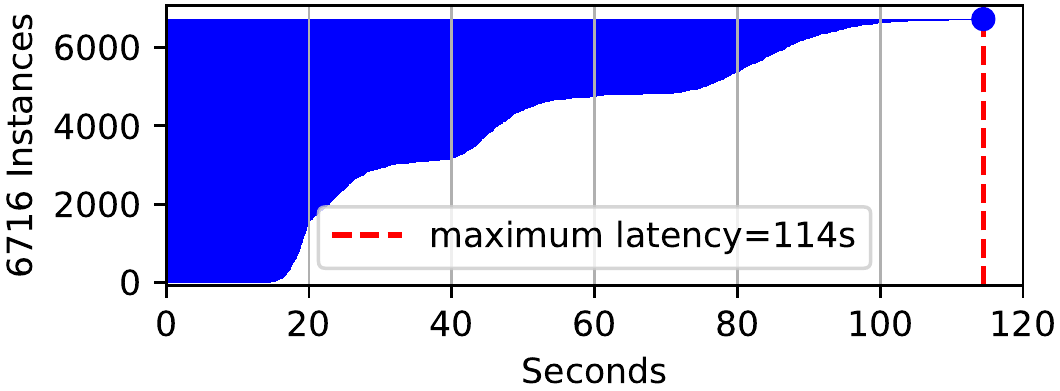}}	  					  
  \end{tabular}
  \vspace{-0.2in}
  \captionof{figure}{Trace overview}
  \label{fig:intro-stats}
\end{minipage}
 \vspace{-0.1in}
\end{figure*}
}

\begin{figure*}
\centering
\begin{minipage}{.7\textwidth}
  \centering
  \includegraphics[height=5.6cm,width=\linewidth]{figures/mc-job-lifecycle.pdf}
   \vspace{-0.25in}
  \captionof{figure}{The lifecycle of a query job in MaxCompute}
  \label{fig:mc-job-lifecycle}
\end{minipage}%
\begin{minipage}{.27\textwidth}
  \centering
  \begin{tabular}{ll}
		\subfigure[\small{CDF for \#stages / job}]
		{\label{fig:cdf-job-stage-num-low}\includegraphics[height=1.4cm,width=.98\linewidth]{figures/cdfs/cdf_job_stage_num_low.pdf}}
		\vspace{-0.02in}
		\\
		\subfigure[\small{CDF for \#instances / stage}]
		{\label{fig:cdf-stage-inst-num-low}\includegraphics[height=1.4cm,width=.98\linewidth]{figures/cdfs/cdf_stage_inst_num_low.pdf}}
		\vspace{-0.02in}
		\\
		\subfigure[\small{Timeline of 6716 instances in a stage}]
		{\label{fig:stage-lats}\includegraphics[height=1.4cm,width=.98\linewidth]{figures/stage-lats.pdf}}	  					  
  \end{tabular}
  \vspace{-0.2in}
  \captionof{figure}{Trace overview}
  \label{fig:intro-stats}
\end{minipage}
 \vspace{-0.1in}
\end{figure*}

Production-scale big data processing presents a highly complex environment for resource optimization. 
We show an example through the life cycle of a submitted job in MaxCompute~\cite{maxcompute}, the big data query processing system at Alibaba, as shown in Fig.~\ref{fig:mc-job-lifecycle}. A submitted SQL job is first processed by a  "Cost-Based Optimizer" (CBO) to construct a query plan in the form of a Directed Acyclic Graph (DAG) of operators. %This plan is referred to as the "Physical Operator" Tree (POT). 
These operators are further grouped into several stages connected with data shuffle (exchanging) dependencies. 
As shown in Fig.~\ref{fig:mc-job-lifecycle}, job 1 is composed of stage 1 (operators 1-3), stage 2 (operators 4-8), and stage 3 (operators 9-15), and their boundaries are data shuffling operations. 
To explore data parallelism, each \textit{stage} runs over multiple machines and each machine runs an \textit{instance} of a stage over a partition of the input data. To enable such parallel execution, a  "\rv{History}-Based Optimizer" (HBO)  recommends a \textit{partition count} (number of instances)  for each stage and a \textit{resource plan}  (the number of cores and memory needed) for all instances of the stage based on previous experiences. 
 %HBO recommends a partition count and a resource plan for each of the three stages.

During job execution, a Stage Dependency Manager will maintain the stage dependencies of a job and pop out all the unblocked stages to the Fuxi scheduler~\cite{fuxi-vldb14}. The scheduler treats each stage as a task to be scheduled and maintains tasks in queues. For each stage, the Fuxi scheduler uses a heuristic-based approach to recommending a \textit{placement plan} that sends instances to machines. %without being aware of the latency of each instance. 
After an instance is assigned to a machine,  it will be executed using the resource plan that HBO has created for this instance, which is the same for all instances of the same stage. 

\textbf{Challenges.} MaxCompute's large, complex big data processing environment poses a number of challenges to resource optimization. 

First, resource optimization involves a set of decisions that need to be made in the life cycle of a big data query: 
(1) the \textit{partition count} of a stage; 
(2) the \textit{placement plan} that maps the instances of a stage to the available machines; 
(3) the \textit{resource plan} that determines the resources %(e.g, the number of cores and memory size) 
assigned to each instance on a given machine. 
All of these factors will have a significant impact on the performance of the job, e.g., its latency and computing cost.  
Among the three issues, the partition count is best studied in the literature~\cite{LiNN14,JyothiCMNTYMGKK16,RajanKCK16,cleo-sigmod20}. 
But this decision alone is not enough -- both decisions 2 and 3 can affect the performance a lot. 
While most of the existing work neglects the placement problem, due to the use of virtualization or container technology, a service provider like Alibaba Cloud does need to solve the placement problem in the physical clusters. 
As for the resource plan, it is determined by HBO from past experiences, without considering the latencies of the current instances in hand.
%As for the resource plan, the current practice is to leave it to the application user, or to choose one among a few common configurations based on past experience, as in HBO. 
However, such solutions are far from ideal: if one underestimates the resource needs, the job can miss its deadline. On the other hand, overestimation leads to wasted resources and higher costs on (both internal and external) users. A few systems addressed the placement problem~\cite{LiNN14,JyothiCMNTYMGKK16} or the resource problem~\cite{spark-moo-icde21,VanAken:2017:ADM}  in isolation and in simpler system environments. 
We are not aware of any solution that can address all three problems in an integrated system. 

Second, the scale and complexity of our production clusters make the above decisions a challenging problem. 
Most notably,  our clusters can easily extend to \textit{tens of thousands} of machines, while all the resource optimization decisions must be made \textit{well under a second}. An algorithm for the placement problem with quadratic complexity~\cite{LiNN14} may work for a small cluster of 10 machines and a small number of jobs, as in the original paper, but not for the scale of 10's thousands of machines and millions of jobs. 
Furthermore, the optimal solutions to both the placement and resource plans depend on the workload characteristics of instances, available machines, and current system states. It is nontrivial to characterize all the data needed for resource optimization, let alone the question of using all the data to derive optimal solutions well under a second. 

Third, our use cases clearly indicate that resource optimization is a multi-objective optimization (MOO) problem. 
A real example we encountered in production is that after the user changed to allocate 10x more resources (hence paying 10x the cost), the latency of a job was reduced only by half, which indicates ineffective use of resources and a poor tradeoff between latency and cost. Over such a complex processing system, the user has no insights into how latency and cost trade-off. Therefore, there is a growing demand that the resource optimizer makes the decisions automatically to achieve the best tradeoff between multiple, often competing, objectives. 
Most existing work on resource optimization focuses on a single objective, i.e., job latency~\cite{cleo-sigmod20,Li:2018:MCD,LiNN14,JyothiCMNTYMGKK16,RajanKCK16,VanAken:2017:ADM,Zhang:2019:EAC}. Only a few recent systems~\cite{spark-moo-icde21,tan2016tempo} offer a multi-objective approach to the resource optimization problem. Still, their solutions do not suit the complex structure of big data queries, which we detail below. 
  
%where the objectives include the job latency, throughput, and computational cost. 
%Therefore, in this work, we consider resource optimization as a {\bf multi-objective optimization problem}, more precisely, to minimize both latency and cost, the two objectives that concern all of our internal and external users.  

Fourth, to enable multi-objective resource optimization, the system must have a model for each objective to predict its value under any possible solution that the optimizer would consider. 
%(e.g., having one such model for predicting latency of a job and another model for predicting the cost in CPU-hour). 
Existing models,  derived from domain knowledge~\cite{LeisK21,LiNN14,RajanKCK16} or   machine learning methods~\cite{qppnet-vldb19,tlstm-cost-estimator-vldb19,MB2-sigmod21,gpredictor-vldb20}, have been used to improve SQL query plans~\cite{neo-vldb19,VaidyaDNC21-vldb21} or to improve selectivity estimation~\cite{NegiMKMTKA21-vldb21,face-vldb21,Sun0021-sigmod21,nura-vldb19,NeuroCard-vldb20,bayescard,FLAT-vldb21,QiuWYLWZ21-sigmod21,LiuD0Z21-vldb21,LuKKC21-vldb21,cloudcard,WuC21-sigmod21,HasanTAK020-sigmod21,WoltmannOHHL21-vldb21}. 
%Why fine-grained modeling: scheduling delay, improving placement decision, improving resource plan
%There are lots of stages in a job.
%There are also lots of instances in a stage.
%Show an example that the latency of instances can vary a lot but the stage latency is determined by the longest Instance.
%If we only know the longest instance, we are missing the opportunity to find optimal assignments: want to assign more resources to the longest job, but not to other shorter running instances. Otherwise, overprovisioning, which is a poor trade-off between latency and cost.
%If the machine states further vary a lot, the placement is also nontrivial. 
But a key observation made in this work is that \textit{existing models do not suit the needs of resource optimization of big data queries because they perform only coarse-grained modeling}:  by capturing only end-to-end query latency or operator latency across parallel instances, these models may yield highly variable performance as they often involve large numbers of stages and parallel instances. Running optimization on highly-variable models gives undesirable results while missing opportunities for instance-level recommendations. 

% By capturing end-to-end query latency or operator latency including multiple instances on different machines, these models may involve a large number of internal stages and numerous parallel instances. %and unpredictable scheduling delays for different instances. 
%Running optimization on such highly-variable latency models is undesirable while missing opportunities for instance-level recommendations. 

\underline{Example 1.}
Fig.~\ref{fig:cdf-job-stage-num-low} and~\ref{fig:cdf-stage-inst-num-low} show that in a production trace of 0.62 million jobs, there are 1.9 million stages in total, with up to 64 stages in each job, and 121 million instances, with up to 81430 instances in a stage. For a particular stage with 6716 instances, Fig.~\ref{fig:stage-lats} shows that the latencies of different instances vary a lot. 
If a performance model captures only the overall stage latency, i.e., the maximum instance latency,
when the resource optimizer is asked to reduce latency, it will assign more resources uniformly to all the instances (as they are not distinguishable by the model). 
The extra resources do not contribute to the stage latency for those short-running instances, while incurring a higher cost. 
Instead, an optimal solution would be only to assign more resources to long-running instances while reducing resources for short-running ones. Such decisions require fine-grained instance-level models as well as instance-specific resource plans. 

%Running optimization on such highly-variable latency models is not effective while missing opportunities for instance-level recommendations.  
%Therefore, our work aims to support multi-objective resource optimization via fine-grained instance-level performance modeling and optimization.

\textbf{Contributions.}
Based on the above discussion, our work aims to support \textit{multi-objective resource optimization via fine-grained instance-level modeling and optimization}, and devises new system architecture and algorithms to enable \textit{fast resource optimization decisions, well under a second, in the face of 10's thousands of machines and 10's thousands of instances per stage}. By way of addressing the above challenges, our work makes the following contributions. 

1. \textit{New architecture of a resource optimizer} (Section 3).
To enable all the resource optimization (RO) decisions of a stage within a second, we propose a new architecture that breaks RO into a series of simpler problems. The solutions to these problems leverage existing cost-based optimization (CBO) and history-based optimization (HBO), while fixing their suboptimal decisions using a new Intelligent Placement Advisor (IPA) and Resource Assignment Advisor (RAA), both of which exploit fine-grained predictive models to enable effective instance-level recommendations.
%precisely capture the current workload and cluster state.

2. \textit{Fine-grained models} (Section 4).
To suit the complexity of our big data system, our fine-grained instance-level models capture all relevant aspects from the outputs of CBO and HBO, the hardware, and machine states, and embed these heterogenous channels of information in a number of deep neural network architectures. 

3. \textit{Optimizing placement and resource plans} (Section 5). We design a new stage optimizer that employs a new IPA module to derive a placement plan to reduce the stage latency, and a novel RAA model to derive instance-specific resource plans to further reduce latency and cost in a hierarchical MOO framework. Both methods \rv{are proven to achieve optimality with respect to their own set of variables, } and are optimized to run well under a second.

4. \textit{Evaluation} (Section 6). Using production workloads of 0.6M jobs and  1.9M stages and a simulator of the extended MaxCompute environment, our evaluation shows promising results:  
(1) Our best model achieved 7-15\% median error and 9-19\% weighted mean absolute percentage error. 
(2) Compared to the HBO and Fuxi scheduler, IPA  reduced the latency by 10-44\% and the \rv{cloud cost} by 3-12\% while running in 0.04s. 
(3) IPA + RAA further achieved the reduction of \rv{37-72\%} latency and \rv{43-78\%} cost while running in \rv{0.02-0.23s}.

%SCHEDULING DELAY

%I just talked to Yihui about the scheduling delay and some logic. Here are the conclusions. 

%The reason why most observed delays (end2end delay) are larger than microseconds could be
%A. the recorded starting time for instance is not the end time of scheduling. That is to say, the observed delay = scheduling (ms-level) and other delays.
%B. the cluster is very loaded. (For example, the average CPU utils around 20000s in Figure 4 is close to 80\%. Since a machine does not take more instances for running when its CPU util > 80\% in the cluster, this could lead to a big scheduling delay.)

%The gap between the end of scheduling and the start_time of an instance can be caused by
%1) network delay and container launching time (sub-seconds)
%2) (occasionally happened) the machine does not have enough resources for an instance when the placement plan is settled. The machine states observed in the scheduler might have some delay. E.g., when a machine got a pulse on CPU usage right before an instance arrived, an instance needs to wait until the machine has more CPU cycles.
%3) other unresolved reasons (Yihui even suspected the currently recorded instance_start_time in the Alibaba trace table is not the real instance starting time, etc) -- this would involve more engineering efforts to ask around people

%To sum up, I think we shall definitely downplay the scheduling delay in our paper because it is not well-studied yet.

\section{Related Work}
\label{sec:related-work}

%Several families of systems are broadly related to my work.

{\bf ML-based query performance prediction.} Recent work has developed various Machine Learning (ML) methods to predict query performance. 
QPPNet~\cite{qppnet-vldb19} builds separate neural network models (neural units) for individual query operators and constructs more extensive neural networks based on the query plan structure. Each neural unit learns the latency of the subquery rooted in a given operator. %and QPPNet estimates the latency of the query via the root operator. 
TLSTM~\cite{tlstm-cost-estimator-vldb19} constructs a uniform feature representation for each query operator and feeds the operator representations into a TreeLSTM to learn the query latency. 	
Both approaches, however, are tailored only for a {\em single} machine with an isolated runtime environment and fixed resources. %and a specific machine type. 
Hence, they are not directly applicable to our RO problem in large-scale complex clusters.

ModelBot2~\cite{MB2-sigmod21} trains ML models for fine-grained operating units decomposed from a DBMS architecture to enable a self-driving DBMS.
However, it is designed for a local DMBS but not big data systems as in our work.
%ModelBot2~\cite{MB2-sigmod21} learns ML models to predict the query performance in a self-driving DMBS. It decomposes a DBMS's architecture into four fine-grained operating units and learns separate models for them. However, it is designed for the runtime architectures of local DMBSs but not big data systems as in our work.
%hence not applicable in big data systems. 
%Our model shares the same design philosophy as ModelBot2 by modeling the fine-grained instances after decomposing a query in big data systems.
GPredictor~\cite{gpredictor-vldb20} predicts the latency of concurrent queries in a single machine with a graph-embedding-based model. 
%It uses vertices and edges to represent the query operators and their relationships, 
%Besides the parent-child relationships between operators, 
It further improves accuracy by using profiling features such as data-sharing, data-conflict and resource competition from the local DBMS.
In our work, however, concurrency information regarding the operators from different queries is unavailable 
due to the container technology in large clusters~\cite{maxcompute,scope-vldb12,spark-rdd-nsdi12,hadoop}. 
%However,	%it is applicable only  to local DMBSs; 
%as queries in our problem setting are executed  
%using container technology in large-scale clusters~\cite{maxcompute,scope-vldb12,spark-rdd-nsdi12,hadoop}, the concurrency information regarding the operators from different queries is not available.

%ModelBot2~\cite{MB2-sigmod21} uses ML models to improve a self-driving DBMS's decision-making. 
%trains ML models for fine-grained operating units decomposed from a DBMS's architecture to improve a self-driving DBMS's decision-making.

ML-based models have been used for different purposes.  
NEO~\cite{neo-vldb19} learns a DNN-based cost model for (sub)queries and uses it to build a value-based Reinforcement Learning (RL) algorithm for improving query execution plans. 
Vaidya et al. \cite{VaidyaDNC21-vldb21} train ML models from query logs to improve query plans for parametric queries.
Phoebe~\cite{phoebe-vldb21} uses ML models for improving checkpointing decisions. 
Many recent works~\cite{NegiMKMTKA21-vldb21,face-vldb21,Sun0021-sigmod21,nura-vldb19,NeuroCard-vldb20,bayescard,FLAT-vldb21,QiuWYLWZ21-sigmod21,LiuD0Z21-vldb21,LuKKC21-vldb21,cloudcard,WuC21-sigmod21,HasanTAK020-sigmod21,WoltmannOHHL21-vldb21} have applied ML-based approaches to improve cardinality estimation.

A final, yet important, comment on the above systems is that they do not address the RO problem like in our work. 
%A final, yet important, comment on the above systems is that they do not address the resource optimization problem. 
%In contrast, our work aims to support multi-objective resource optimization via fine-grained instance-level performance modeling and optimization. 

{\bf Performance tuning in DBMS and big data systems.}
Performance tuning systems require a dedicated, often iterative, tuning session for each workload, which can take long to run (e.g., 15-45 minutes~\cite{VanAken:2017:ADM,Zhang:2019:EAC}). As such, they are not designed for production workloads that need to be executed on demand. In addition, they aim to optimize a single objective, e.g., query latency. 
Among {\em search-based} methods, BestConfig~\cite{zhu2017bestconfig} searches for good configurations by dividing high-dimensional configuration space into subspaces based on samples, but it cold-starts each tuning request. ClassyTune~\cite{zhu2019classytune} solves the optimization problem by classification, which cannot be easily extended to the MOO setting.
Among {\em learning-based} methods, 
Ottertune~\cite{VanAken:2017:ADM}  %can learn  flexible models  from  data: it
 builds a predictive model for each query by leveraging similarities to past queries,  and runs Gaussian Process exploration to try other configurations to reduce latency. 
CDBTune~\cite{Zhang:2019:EAC} uses Deep RL to predict the reward (a weighted sum of latency and throughput) of a given configuration and explores a series of configurations to optimize the reward. 
%As noted earlier, it uses a weighted method to compose a single objective from latency and throughput, which is 
%Our results showed that such a weighted approach to handling multiple objectives is far less effective than our MOO approach. 
ResTune~\cite{ResTune-sigmod21} uses a meta-learning model to learn the accumulated knowledge from historical workloads to accelerate the tuning process for optimizing resource utilization without violating SLA constraints.

{\bf Task scheduling in big data systems.}
Fuxi~\cite{fuxi-vldb14} and Yarn~\cite{hadoop} make the scheduling decisions based on locality information, while Trident~\cite{trident-vldb21} improves Yarn by considering the locality in different storage tiers. 
However, these systems treat each task as a blackbox and make scheduling decisions without knowing the task characteristics, such as the query plan structures and performance predictions. Hence, they cannot find optimal solutions to the machine placement and/or resource allocation problems.

{\bf Resource optimization in big data systems.}
In cluster computing, a resource optimizer (RO)  determines the optimal resource configuration {\em on demand} and {\em with low latency} as jobs are submitted. 
RO for parallel databases~\cite{LiNN14} determines the best data partitioning strategy across different machines to minimize a single objective, latency. 
%SmartSLA2015,,ShiZLCLW14
%While it bears some similarity to our IPA technique, 
Its time complexity of solving the placement problem is quadratic to the number of machines (9 machines in~\cite{LiNN14}), which is not affordable on today's productive scale (>10K machines).
%Further, it focuses on a single objective, i.e., the latency, but not MOO like in our work. 
%For cloud computing, 
Morpheus~\cite{JyothiCMNTYMGKK16} 
%addresses the tradeoff between  cluster utilization and job performance predictability by  codifying 
codifies user expectations as  SLOs and enforces them using scheduling methods. 
% that isolate jobs from sharing-induced performance variability, and mitigates performance variance (e.g., due to failures) via dynamic re-provisioning.
\cut{Its online packing algorithm minimizes the maximal total allocation with a log-competitive bound.}
However, its optimization focuses on system utilization and predictability,  but not minimizing the cost and latency of individual jobs as in our work. 
PerfOrator~\cite{RajanKCK16} 
%uses domain knowledge to build a platform-specific performance model, and 
solves a single-objective (latency) optimization problem via an exhaustive search of the solution space while calling its model for predicting the performance of each solution. 
WiseDB~\cite{MarcusP16} manages cloud resources based on  a decision tree trained on  performance and cost features from minimum-cost schedules of sample workloads, %MarcusSP17
%It is used to model the workload characteristics and later used for scheduling. 
while such schedules are not available in our case. 
Li et al.~\cite{Li:2018:MCD} minimize end-to-end tuple processing time using deep RL and requires defining scheduling actions and the associated reward, which is not available in our problem.
Recent work~\cite{LeisK21} proposes a heuristic-based model to recommend a cloud instance (e.g., EC2 instance) that achieves cost optimality for OLAP queries, which is different from our resource optimization problem. 

%Other systems~\cite{qrop-icde18,plan-aware-resource-opt-hotcloud20} proposed techniques to tune resource configuration together with the query plan in big data systems. 
%QROP~\cite{qrop-icde18} visions a framework that combines query optimization and resource optimization into a single layer in big data systems. 
%However, it focuses on doing constrained optimizations for a single objective, but not MOO like our work.
CLEO~\cite{cleo-sigmod20,qrop-icde18} learns the end-to-end latency model of query operators, and based on the model, minimizes the latency of a stage (involving multiple operators) by tuning the partition count. It has a set of limitations: First, its modeling target concerns the latency of multiple instances over different machines,  which can be highly variable (e.g., due to uneven data partitions or scheduling delays) and hard to predict. 
%Running optimization on highly-variable predictions is undesirable. 
Second, CLEO's latency model does not permit instance-level recommendations for the placement and resource allocation problems. Third, its optimization supports  a single objective, and determines only the partition count in a stage, but not other RO decisions. % other resource parameters such as the number of cores or memory size.
Bag et al.~\cite{plan-aware-resource-opt-hotcloud20} propose a plan-aware resource allocation approach to save resource usage without impacting the job latency but not minimize latency and cost like in our work.

%Most similar to ours is CLEO~\cite{cleo-sigmod20}, which, however, supports only single-objective (latency) optimization and tunes only the partition count in a stage to minimize the stage latency. 
%CLEO~\cite{cleo-sigmod20}, the most similar to ours, supports single objective optimization based on a highly-variable latency model. 
%CLEO~\cite{cleo-sigmod20}, the most similar to ours, supports single objective optimization based on a highly-variable  latency model. 
	
UDAO~\cite{spark-moo-icde21,udao-vldb19} tunes Spark configurations to optimize for multiple objectives. However, it works only on the granularity of an entire query and neglects its internal structure. 
%which can be much more complex in production systems (e.g., involving 10's of stages in a query)  than in available benchmarks. 
Such coarse-grained modeling of latency is unlikely to be accurate for complex big data queries, which leads to poor results of optimization. 
TEMPO~\cite{tan2016tempo} considers multiple Service-Level Objectives (SLOs) of SQL queries and guarantees max-min fairness when they cannot be all met.
%TEMPO~\cite{tan2016tempo} addresses resource management for DBMSs in the MOO setting: 
%when Service-Level Objectives (SLOs) cannot be all satisfied, it guarantees max-min fairness over SLOs. 
%otherwise, it uses WS for returning a single solution.
But its MOO works for entire queries, but not fine-grained MOO that suits the complex structure of big data systems. 

\cut{
In short, existing work on resource optimization suffers from (one of) the following two issues: 
(1) It does not support MOO. 
(2) Its optimization procedure requires an end-to-end query latency model or operator latency model that involves running multiple instances on different machines. Such coarse-grained latency models can involve unpredictable scheduling delays, a large number of internal stages, and numerous parallel instances. Running optimization on such highly-variable latency models is not effective while missing opportunities for instance-level recommendations.  
Therefore, our work aims to support multi-objective resource optimization via fine-grained instance-level performance modeling and optimization.
}

\textbf{Multi-objective optimization} (MOO).  
MOO for SQL~\cite{Hulgeri:2002:PQO,Kllapi:2011:SOD,Trummer:2014,Trummer:2014:ASM,TrummerK15} finds Pareto-optimal  query plans by efficiently searching through a large  set of them. The problem is fundamentally different from RO, including machine placement and resource tuning problems.
%by searching through an \textit{infinite} parameter space to find Pareto-optimal configurations. 
MOO for workflow scheduling~\cite{Kllapi:2011:SOD} assigns operators  to containers to minimize total running time and money cost. But its method is limited to searching through 20 possible numbers of containers and solving a constrained optimization for each option. 
%differ from our MOO in both the parameter space and the solution.

%\techreport{
Theoretical MOO solutions suffer from a range of performance issues when used in a RO:
\emph{Weighted Sum}~\cite{marler2004survey} is known to have \textit{poor coverage} of the Pareto frontier~\cite{messac2012from}. %unable to fully characterize the tradeoffs between objectives.
\emph{Normalized Constraints}~\cite{messac2003nc} lacks in  \textit{efficiency} due to  repeated recomputation to return more solutions. 
% it uses a pre-set parameter $k$ to indicate the number of Pareto points desired but often returns fewer points than $k$; if more points are needed to cover the Pareto frontier, a large value $k'$ will be tried, by starting the computation from scratch. 
\emph{Evolutionary Methods}~\cite{Emmerich:2018:TMO} %are randomized methods to 
approximately compute a Pareto set but suffer from  \textit{inconsistent solutions}.
%: the Pareto frontier built with  $k'$ points can be inconsistent with that built with $k$ points.
% ---- 
\cut{ % cut for camera-ready
\emph{Multi-objective Bayesian Optimization} extends the Bayesian approach to modeling an unknown function with an acquisition function for deciding to explore the next point(s) that are likely to be Pareto optimal. By taking a long time to run, it lacks the \textit{efficiency} required by RO.
}
\chenghao{
\emph{Multi-objective Bayesian Optimization} 
explores the potential \po points iteratively by extending the Bayesian approach, but lacks the \textit{efficiency} required by RO due to a long-running time.
}
\emph{Progressive frontier}~\cite{spark-moo-icde21} is the first MOO solution suitable for RO, offering good coverage, efficiency, and consistency.
However, none of them consider complex structures of big data queries and require modeling end-to-end query latency, which does not permit instance-level optimization. 
\cut{ % cut for camera-ready
To do so, our work employs fine-grained modeling and a hierarchical MOO solution. }
%\vspace{-0.1in}
\section{System Overview}
\label{sec:sys-overview}

In this section, we provide the background on MaxCompute~\cite{maxcompute}, the big data query processing system at Alibaba, and our proposed extended architecture for resource optimization. 

\vspace{-0.05in}
\subsection{Background on MaxCompute}

%1. The concepts of job, stage, operator in big data systems:
In MaxCompute, a submitted user job is represented hierarchically using stages and operators, which will then be executed by parallel instances. As shown in Fig.~\ref{fig:mc-job-lifecycle},  
a \textit{job} is a Directed Acyclic Graph (DAG) of stages, where the edges between stages are inter-machine data exchange (shuffling) operations.
A \textit{stage} is a DAG of operators, where edges are intra-machine pipelines without data shuffling.
%where edges are intra-machine pipelines on the same set of machines. Therefore, operators in the same stage run on the same set of machines without shuffling.
The input data of each stage is partitioned over different machines, where each partition is run as an {\it instance} of the stage in a container.

%\begin{itemize}
%    \item A job is a DAG of stages, where the edges are inter-machine data exchanges (shuffling).
%    \item A stage is a DAG of operators, where edges are intra-machine pipelines in the same set of machines. Therefore, operators in the same stage run on the same set of machines without shuffling.
%    \item The input data of each stage is distributed over different machines, where the system runs each partition of the input data as an {\it instance} of the stage in a container.
%% 	\item A query job is handled over a DAG of operators with dependencies. 
%% 	\item The operators in a job are further grouped into multiple stages connected with shuffle dependencies.
%% 	\item The operators in the same stage run on the same set of machines (with the same resource plan) without inter-machine data exchanges (shuffling).
%% 	\item the resource plan (RP) includes DOP, \# of cores, memory size, etc.
%%    \item a submitted query job is parsed as a DAG of operators, named physical operator tree (POT), along with operator statistics from a cost-based optimizer (CBO) and stage resource plans (including partition count, \# of cores, memory size) from a heuristic-based optimizer (HBO).
%\end{itemize}

Fig.~\ref{fig:mc-job-lifecycle} shows the functionality of query optimizers and the scheduler in the lifecycle of a submitted job.   

\textbf{Cost-Based Optimizer} (CBO): MaxCompute's Cost-Based Optimizer (CBO) is a variant of the Cascades optimizer~\cite{cascades}. It follows traditional SQL optimization based on cardinality and cost estimation and generates a Physical Operator Tree (POT), which is a DAG of stages where each stage is a DAG of operators.   

\textbf{\rv{History}-Based Optimizer} (HBO): To facilitate resource optimization, a \rv{History}-Based Optimizer (HBO) gives an initial attempt to recommend a \textit{partition count} (number of instances) for each stage and a \textit{resource plan} (the number of cores and memory needed) for all instances of the stage based on previous experiences. 
%This history-based approach is known to be suboptimal because it has not taken into account the machines to run these queries and their current states -- both hardware characteristics and system states affect the latencies of instances and an optimal solution should try to minimize the maximum latency of parallel instances, in addition to cost objectives. 
This history-based approach is known to be suboptimal because it does not consider the machines to run these queries and their current states. Both hardware characteristics and system states affect the latencies of instances, and an optimal solution should try to minimize the maximum latency of parallel instances, in addition to cost objectives.
Moreover, HBO needs expensive engineering efforts, especially when workloads change and the system upgrades. 
We will address these issues in our new system design. 

\textbf{Scheduler}: During job execution, the granularity of scheduling is a stage: a stage that has all dependencies met is handed over to the Fuxi scheduler~\cite{fuxi-vldb14}. Fuxi uses a heuristic approach to recommending a \textit{placement plan} that sends instances to machines, and each instance is assigned to a container on a machine with a previously-determined {\it resource plan} for CPU and memory. 
%Each container runs an instance with isolation guarantees under the quota of CPU and memory according to the resource plan.
These decisions, however, are made without being aware of the latency of each instance. 
As such, an instance with a potential longer running time (e.g., due to larger input size) may be sent to a heavily loaded machine while another instance with less data may be sent to an idle machine, leading to overall poor stage latency (the maximum of instance latencies). A detailed example is given later in Fig.~\ref{fig:toy}.
%Therefore, the current Fuxi policy can only achieve sub-optimal stage latencies.

%2. Towards the scheduler:
%\begin{itemize}
%	\item The stage is the granularity of a scheduling task, which involves assigning a set of instances.
%    \item Each instance is assigned to a container upon a machine with a {\it resource plan} for CPU and memory.
%    \item Each container runs an instance with isolation guarantees under the quota of CPU and memory according to the resource plan.
%    \item Machines in a cluster could have heterogeneous hardware types.
%\end{itemize}
%
%3. Towards the existing optimizers in MaxCompute,
%\begin{itemize}
%	\item The Cost-based Optimizer (CBO) is a variant of the Cascades optimizer~\cite{cascades}. It supports cardinality and cost estimation for query operators. It does cost-based optimization to generate an execution plan as a physical operator tree (POT) for each submitted query job.
%	\item The Heuristic-based Optimizer (HBO) recommends the resource plans for instances in each stage based on the heuristic domain knowledge from engineers in Alibaba.
%\end{itemize}

\begin{figure*}
%	\vspace{-0.2in}
	\centering
    \includegraphics[width=.95\textwidth,height=4.55cm]{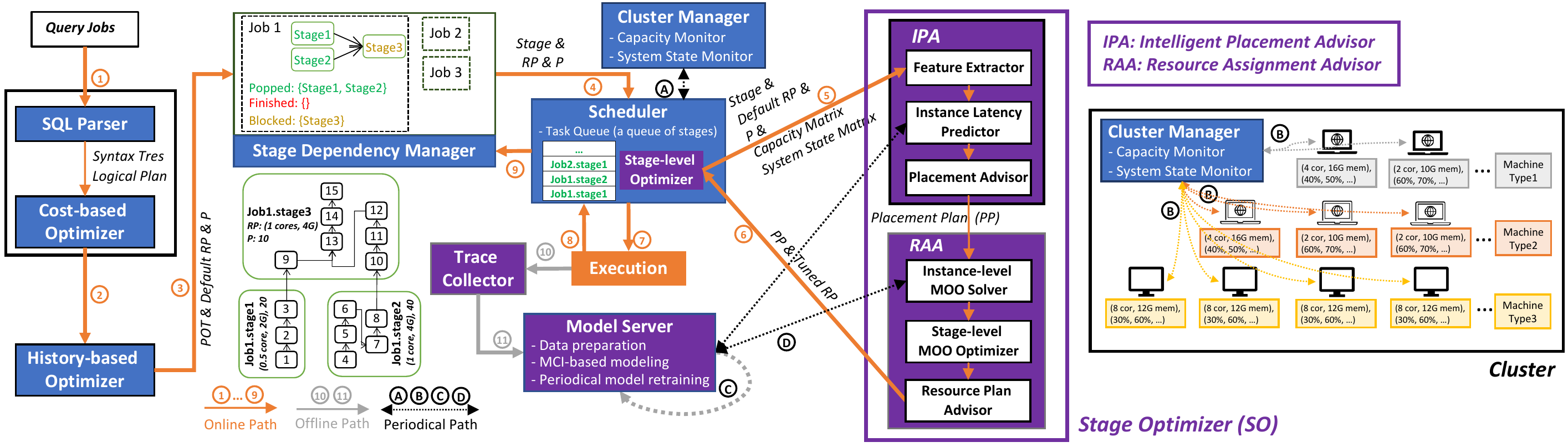}
    \vspace{-0.1in}
    \caption{Extended system architecture for resource optimization}
    \vspace{-0.1in}
    \label{fig:system-arch}
    \vspace{-0.08in}    
\end{figure*}

%\subsection{System Complexity}

{\bf Workload and Cluster complexity.} 
%workload
Production clusters take workloads with a vast variety of characteristics. 
Workload A from an internal department includes 1 - 64 stages in each job. A stage may involve 2 - 249 operators and be executed by 1 - 42K instances, where an instance could take sub-seconds up to 1.4 hours to run.
%hardware
Further, production clusters consist of heterogeneous machines. For example, we observed 5 different hardware types when executing workload A, where each hardware type includes 30 - 7K machines.
%system states
Moreover, the system states in each machine vary over time.
Take CPU utilization for example. The average CPU utilization varies from 32\%-83\%, and the standard deviation ranges from 6\%-23\%. 
%resource plan
Finally, each machine could run multiple containers, and a container runs an instance with a quota of CPU and memory based on a resource plan. For workload A, we observed 17 different resource plans for containers. Since there is no perfect isolation between containers and the host OS~\cite{dist-sys-book}, containers with the same resource plan could perform differently on machines with different hardware or system states, making the running environment more complex.
 
\cut{
%%LONGER VERSION
We further show the system complexity by observing a representative cluster with $\approx$ 8K nodes over five consecutive days. %in Alibaba.

{\bf Workload complexity.} Production clusters take workloads with a vast variety of characteristics. 
Workload A from an internal department consists of 79K - 87K jobs per day, including 190K - 203K stages and 6.6M - 7.3M instances. 
A job in workload A can have 1 - 64 stages. Each stage could involve 2 - 249 operators and be executed by 1 - 42K instances, where an instance could take sub-seconds up to 1.4 hours to run. 

%For example, the representative cluster consumes 79K-87K jobs per day, including 190K-203K stages and 6.6M-7.3M instances. 
%A job in the cluster can have 1-64 stages and takes sub-seconds to 3.1 hours to finish. 
%Each stage could involve 2-249 operators and be executed in 1-42K instances, where an instance could cost at most 1.4 hours to run. (More details in Fig.~\ref{fig:workload-cdfs} for workload A).

\begin{figure}[t]
	\centering
    \includegraphics[height=4.5cm,width=.48\textwidth]{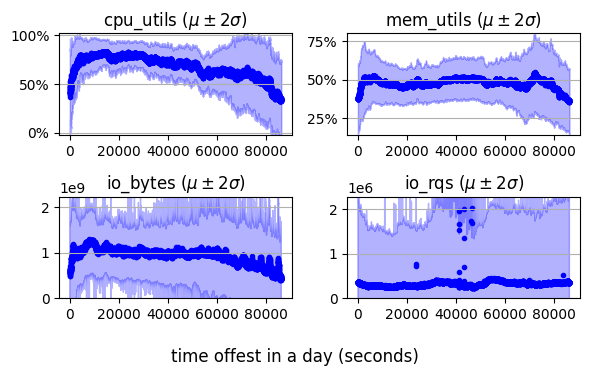}
     \vspace{-0.35in}
    \caption{Variability of  system states throughput a day} %Machine system states trends in a representative cluster
    \vspace{-0.3in}
    \label{fig:daily-ss-all}
\end{figure}

{\bf Cluster complexity.} 
%hardware type
Production clusters consist of heterogeneous machines. For example, we observed 5 different hardware types when executing workload A, where each hardware type includes 30 - 7K machines.
%system states
Further, the system states in each machine vary over time.
Fig.~\ref{fig:daily-ss-all}  shows how the CPU, memory, and IO usage change throughout the day across all machines. 
Take CPU utilization for example. The average CPU utilization varies from 32\%-83\%, and the standard deviation ranges from 6\%-23\%. 
%resource plan
Finally, each machine could run multiple containers, and a container runs an instance with a quota of CPU and memory based on a resource plan. For workload A, we observed 17 different resource plans for containers.
Since there is no perfect isolation between containers and the host OS~\cite{dist-sys-book}, containers with the same resource plan could perform differently on machines with different hardware or system states, making the running environment more complex.

%%hardware type
%Production clusters consist of heterogeneous machines. For example, we observed 5 different hardware types when executing workload A, where each hardware type includes 30-7K machines.
%%system states
%Further, the system states in each machine vary over time. Take CPU utilization for example. 
%Fig.~\ref{fig:daily-ss-all}  shows the distribution of the CPU utilization among all machines towards the same time offset in a day over five days.
%The average CPU utilization varies from 32\%-83\%, and the standard deviation ranges from 6\%-23\%. 
%%resource plan
%Each machine could run multiple containers. Each container runs an instance with a quota of CPU and memory based on a resource plan. In the representative cluster, we observed 17 different resource plans for containers.
%Note that there is no perfect isolation between containers and the host OS. Hence containers with the same resource quota could perform differently on machines with different hardware and system states, making the running environment more complex.
}

\cut{(1) show the complexity for the range of the factors in the system. \\
(i) workload complexity, \# jobs / \# stages / \# instances per day; a vast variety of workloads, dist. of \# operators used per stage; dist. of \# instances used per stage; dist. runtime for stages / instances; dist. of degree of stages, dist. of scheduling delay in stages; => (lead to the need of capturing stage-level embedding and instance-level features to describe how to partition the stages) \\
(ii) machine complexity, machine-related distribution, system states distribution over time, dist. of machines used involved; distribution of resource plan; hardware statistics; => (lead to the need of capturing machine-level features and hardware statistics}

\cut{(2) why the current scheduling work does not good enough: in the literature, people only consider a subset of the concerns.}

%----------------------------------------------------------------------------------------------------------

\vspace{-0.1in}
\subsection{System Design for Resource Optimization}
We next show our design for resource optimization by extending the MaxCompute architecture. 
Our work aims to support \textit{multi-objective resource optimization} (MORO). 
Here, we can support any user objectives (as long as we can obtain training data for them). 
For ease of composition, our discussion below focuses on minimizing both the stage latency (maximum latency among its instances) and 
%computing cost in CPU-hour (the sum of the instance latency times its used CPU cores), 
\rv{the cloud cost (a weighted sum of CPU-hour and memory-hour)},
the two common objectives of our users. 

To achieve MORO, the resource optimizer needs to make three decisions: 
(1)~the \textit{partition count} of a stage; 
(2)~the \textit{placement plan} (PP) that maps the instances of a stage to the available machines; 
(3)~the \textit{resource plan} (RP) that determines the resources (the number of cores and memory size) assigned to each instance on a given machine. 
Our design is guided by two principles: 

{\bf Simplicity and Efficiency.}
An optimal solution to MORO may require examining all possibilities of dividing the input data into $m$ instances and arranging them to run on some of the $n$ machines, each using one of the $r$ possible resource configurations. 
%yielding a complexity $\Omega(m \cdot n \cdot r)$.  
In production clusters, both $m$ and $n$ could be 10's of thousands, while all the RO decisions must be made well under a second.  
%in this work set to be {\bf within 200 msec}. 
Hence, it is infeasible to run 
an exhaustive search for optimal solutions.
%any optimization method that examines all possibilities at $\Omega(m \cdot n \cdot r)$. 

Our design principle is to break MORO into a series of simpler problems, each of which can run very fast. 
First, we keep the \rv{History-Based Optimizer} (HBO) that uses past experiences to recommend a \textit{partition count} for a stage and an initial \textit{resource plan} for its instances. There is a merit of learning such configurations from past best-performing runs of recurring jobs (which dominate production workloads). While the recommendations may not be optimal, they serve as a good initial solution. 
%they serve as a good starting point and we can adjust the resource plan later, rather cheaply, by considering instance latency. 
Second, given the output of HBO, we design an \textbf{Intelligent Placement Advisor} (IPA) that determines the \textit{placement plan} (PP), mapping the instances  to machines, by predicting latencies of individual instances.
%instance latencies based on the workload characteristics, hardware, and machine states.  
Third, our \textbf{Resource Assignment Advisor} (RAA) will fine-tune the \textit{resource plan} (RP) for each instance, after it is assigned to a specific machine, to achieve the best tradeoff between stage latency and cost. 
\rv{We give a theoretical justifcation of our approach in Section~\ref{sec:so}.}
%Both IPA and RAA can be further optimized to run in 200 msec. 

{\bf Fine-grained Modeling and Hierarchical MOO.}
As discussed earlier, instance-level recommendations for PP and RP are key to minimizing latency and cost. To do so, we design a fine-grained model that predicts the latency of each instance based on the workload characteristics, hardware,  machine states, and resource plan in use. 
The model will be used in IPA to develop the PP that minimizes the maximum instance latency (while using the same RP for all instances). The instance-level model will be further used in RAA to improve the RP by solving a hierarchical MOO problem: We first compute the instance-level MOO  solutions that minimize the latency and cost of each individual instance. 
%these solutions, a.k.a., Pareto-optimal,  are not dominated by any other solution in all objectives but exhibit tradeoffs among themselves. 
We then combine the instance-level MOO solutions into stage-level MOO solutions, and recommend one of them that determines the instance-specific resources with the best tradeoff between stage latency and cost. 

%The model will be used in IPA to assign instances to machines based on the predicted instance latency, thereby determining the set of machines that can minimize the maximum instance latency. The model will also be used in RAA to solve a hierarchical MOO problem: among all possible configurations of the resource plan, we can first compute the MOO (a.k.a., Pareto-optimal) solutions that minimize the latency and cost of each individual instance, and then combine the instance-level MOO solutions into stage-level MOO solutions, which are the instance-specific resource plans that minimize the stage latency and cost. 

%---- BEGIN: from AIR-report
Fig.~\ref{fig:system-arch} presents the detailed architecture that extends MaxCompute with three new components (colored in purple): %a Trace Collector, a Model Server, and the Stage-level Optimizer (SO).
\textbf{The trace collector} collects runtime traces, including 
the query traces (the query plan and operator features such as cardinality\cut{and cost estimation}), 
instance-level traces (input row number and data size of an instance, and the resource plan assigned to it), 
and machine-level traces (machine system states and hardware type).
\textbf{The model server} featurizes the collected traces and internally learns an instance-level latency model. The internal model gets updated periodically by retraining or fine-tuning when the new traces are ready. It will serve as an instance-level latency predictor for resource optimization of online queries.
\textbf{The Stage-level Optimizer} (SO) consists of the IPA and RAA, as described above. For a stage to be scheduled, it calls the predictive model to estimate the latency of each instance on any available machine, and determines the PP by minimizing stage latency and then the RP by minimizing both stage latency and cost. 

%\textbf{The Stage-level Optimizer} (SO) consists of an Intelligent Placement Advisor (IPA) and a Resource Assignment Advisor (RAA).
%IPA solves the placement problem for a stage on demand. It takes placement-solving requests with needed information (including stage characteristics and the cluster status) from the scheduler and returns a placement plan.
%Internally, it feeds extracted features to the instance latency predictor and gets a latency matrix covering all instance-machine matches. Based on the latency matrix, its internal placement advisor solves the placement plan by minimizing the maximum instance latency.
%
%RAA recommends the resource plans for instances in a stage after the placement plan is determined. 
%It tunes the resource plan to minimize both latency and cost for a stage via a hierarchical MOO solution. 
%Then it applies the Weighted Utopia Nearest strategy~\cite{spark-moo-icde21} to choose the resource plans for all instances.

\cut{
The system includes three types of paths. 
The online path (orange arrows) executes a job as soon as it is submitted, and SO is on this path to serve the placement advice to the scheduler on-demand with low latency. 
The offline path (gray arrows) works asynchronously to collect and extract traces when the system has more cycles. 
The periodical path (dash arrows) runs periodically for system communication and components maintenance, including retraining and updating the predictive model in the model server.
}

%---- END: from AIR-report

\cut{
\noindent
\textbf{Components:}
\begin{enumerate}
	\item SQL Parser
	\item Cost-based Optimizer (CBO)
	\item Stage Dependency Manager
	\item Cluster Manager
	\item Scheduler
	\item Intelligent Resource Assignment Advisor (IRAA)
	\item Model Trainer
	\item Trace Collector
\end{enumerate}

\noindent
\textbf{Cluster Backgrounds:}
\begin{enumerate}
	\item Different machine types
	\item Various numbers of machines in different types
	\item The \emph{Cluster Manager} periodically pings machines to maintain the \underline{available resources of each machine} by the \emph{Capacity Monitor}, and \underline{the machine running states} (e.g., cpu utils, mem utils, IO reqs/s, etc.) by the \emph{System State Monitor}
	\item The \emph{Scheduler} pings the \emph{Cluster Manager} periodically or on depend.
\end{enumerate}

\noindent
\textbf{Online Path}
\begin{enumerate}
	\item Once jobs submitted, \emph{ODPS} parses the query and generate physical operator trees (POTs) by the \emph{Cost-based Optimizer (CBO)}. 
	\item The POTs are sent to the \emph{Stage Dependency Manager} for the stage-level dependency analyses.
	\item The \emph{Stage Dependency Manager} sends all the unblocked stages to the Task Queue in the \emph{Scheduler} with specified resource requests.
	\item The \emph{Scheduler} sends stage/resource plan/capacity/system states to the \emph{IRAA} for an assignment solution to minimize the stage-level latency given the fixed total resources.
	\item The \emph{IRAA} returns the solved assignment plan to the \emph{Scheduler}
	\item The \emph{Scheduler} assigns resources for the stage for execution based on the return plan.
	\item The \emph{Scheduler} is acknowledged once a stage finishes.
	\item The \emph{Scheduler} sends the stage finishing message to the \emph{Stage Dependency Manager} to update dependencies status in the corresponded job, and triggers step 3 again.

\end{enumerate}

\noindent
\textbf{Offline Path}
\begin{enumerate}
	\setcounter{enumi}{8}
	\item The \emph{Trace Collector} collects the runtime traces asynchronously.
	\item Traces are fed to the \emph{Model Trainer} for model training/tuning.
	\item Update the instance-level model once the training/tuning is done.
\end{enumerate}

\noindent
\textbf{Periodical Path}
\begin{enumerate}[label=(\Alph*)]
	\item The \emph{Cluster Manager} periodically pings the machines and maintain their status including available resources and states.
	\item The \emph{Cluster Manager} sends a message to the \emph{Scheduler} periodically to update the cluster status, or when available resources are released. 
	\item The \emph{Model Trainer} periodically retrains the model with the most recent traces.
\end{enumerate}
}

\section{Fine-Grained Modeling}
\label{sec:modeling}

In this section, we present how to build fine-grained instance-level models that can be used to minimize both latency and cost of each stage. To suit the complexity of big data systems,  our models capture all relevant systems aspects to support the resource optimizer to make effective recommendations.
% so that the resource optimizer can leverage the model to make effective recommendations.

\cut{ % half-half 5.2 cm
\begin{figure*}
\centering
\hspace{-0.3in}
\begin{minipage}{.72\textwidth}
\begin{flushleft}
   \includegraphics[height=6.8cm,width=.98\linewidth]{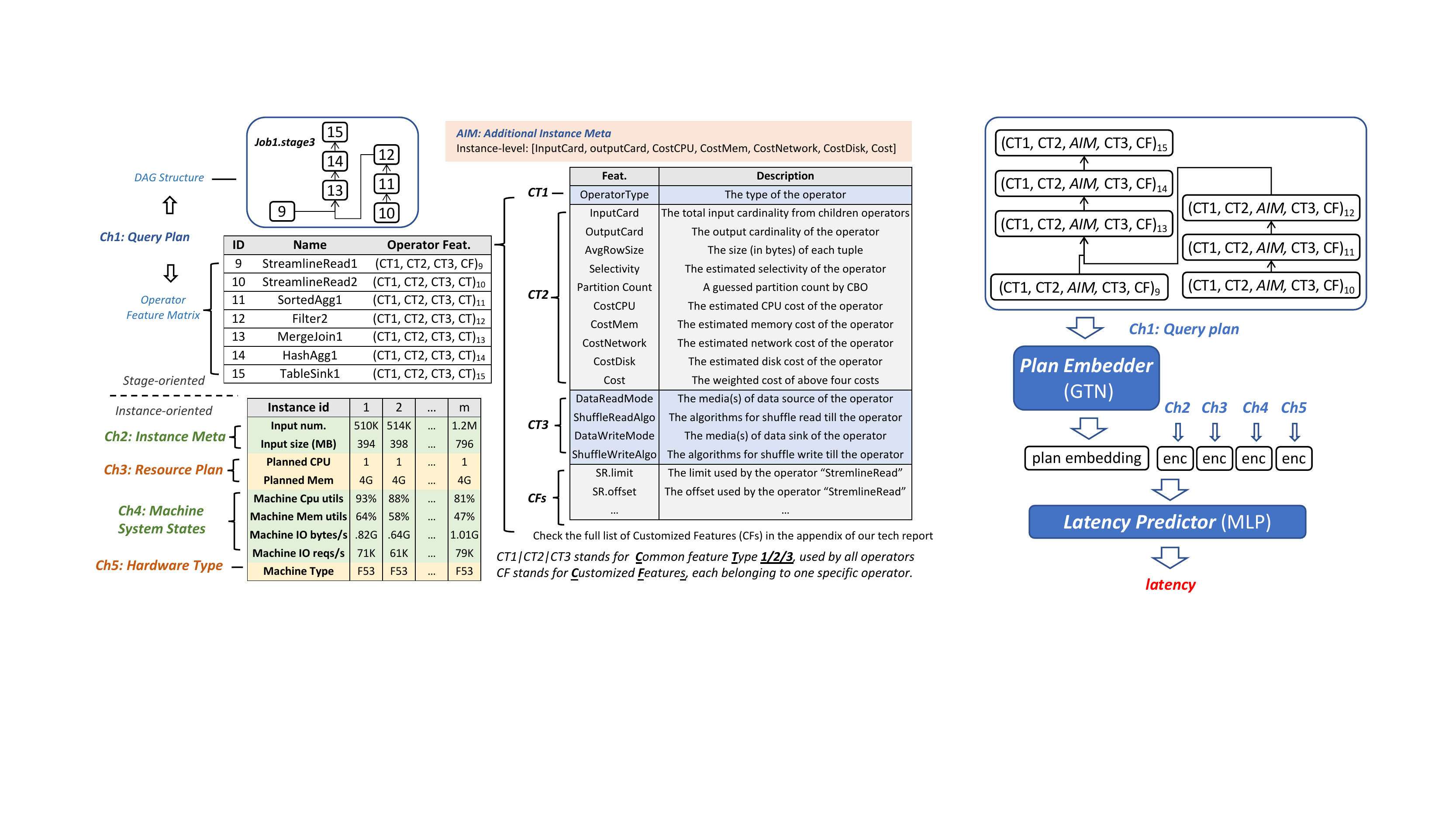}
  \vspace{-0.1in}
  \captionof{figure}{The multi-channel coverage from basic features}
  \vspace{-0.1in}
  \label{fig:mci}
\end{flushleft}  
\end{minipage}%
\begin{minipage}{.24\textwidth}
\begin{flushright}
   \includegraphics[height=5.8cm,width=.9\linewidth]{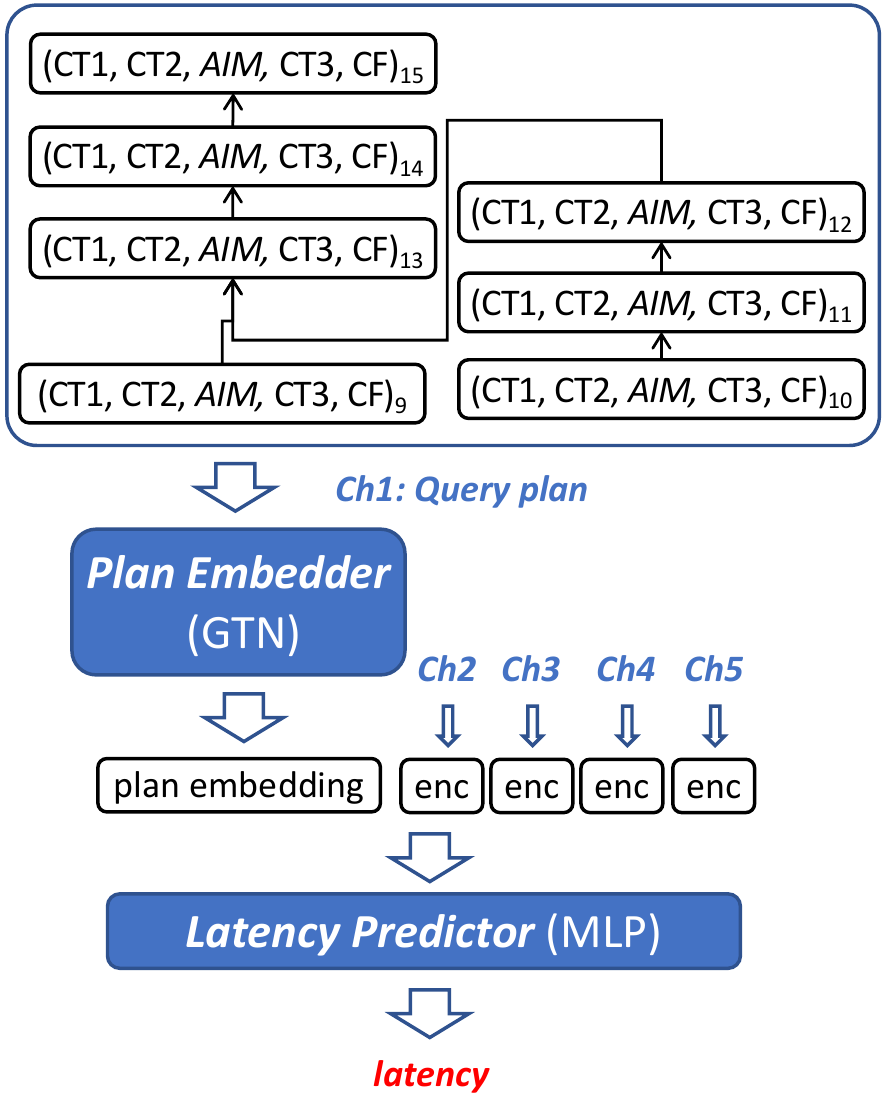}
   \captionof{figure}{MCI-based modeling framework}   
  \label{fig:mci-model}
%  \begin{tabular}{ll}
%		\subfigure[\small{MCI-model Framework}]
%		{\label{fig:mci-model}\includegraphics[height=4.2cm,width=.9\linewidth]{figures/mci-framework.pdf}}
%		\vspace{-0.05in}
%		\\
%		\subfigure[\small{Converting a DAG to a tree}]
%		{\label{fig:dag2tree}\includegraphics[height=2.1cm,width=.9\linewidth]{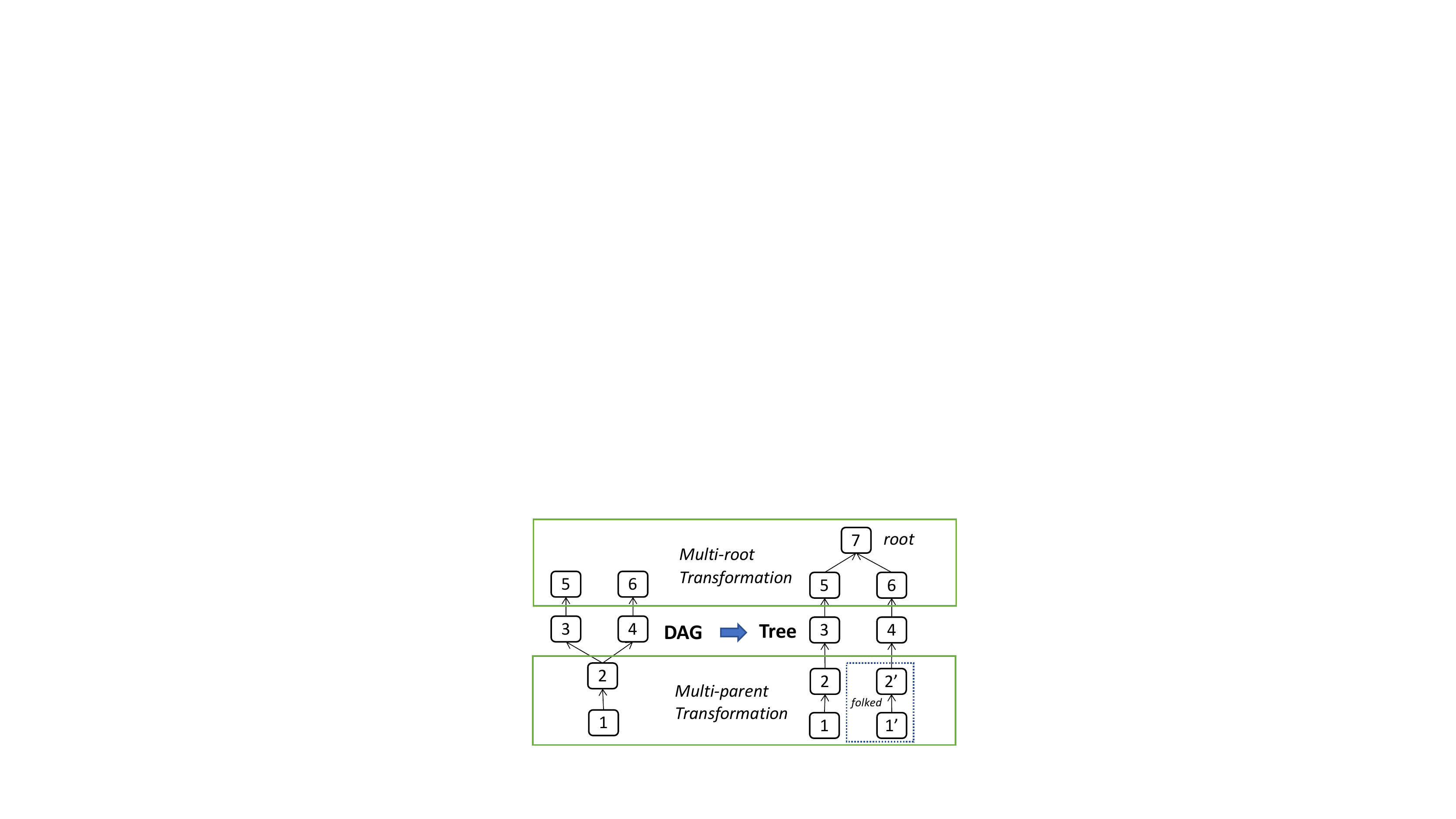}}	
%  \end{tabular}
%  \vspace{-0.2in}
%  \captionof{figure}{MCI Models}
%  \label{fig:mci-framework}
\end{flushright}   
\end{minipage}
\end{figure*}
}

\begin{figure*}
\centering
%\hspace{-0.3in}
\begin{minipage}{.73\textwidth}
\begin{flushleft}
   \includegraphics[height=5.1cm,width=.95\linewidth]{figures/mci.pdf}
  \vspace{-0.1in}
  \captionof{figure}{The multi-channel coverage from basic features}
  \vspace{-0.1in}
  \label{fig:mci}
\end{flushleft}  
\end{minipage}%
\begin{minipage}{.23\textwidth}
\begin{flushright}
   \includegraphics[height=4.7 cm,width=.95\linewidth]{figures/mci-framework.pdf}
   \vspace{-0.1in}
   \captionof{figure}{MCI-based modeling framework}   
  \vspace{-0.1in}
  \label{fig:mci-model}
%  \begin{tabular}{ll}
%		\subfigure[\small{MCI-model Framework}]
%		{\label{fig:mci-model}\includegraphics[height=4.2cm,width=.9\linewidth]{figures/mci-framework.pdf}}
%		\vspace{-0.05in}
%		\\
%		\subfigure[\small{Converting a DAG to a tree}]
%		{\label{fig:dag2tree}\includegraphics[height=2.1cm,width=.9\linewidth]{figures/dag2tree.pdf}}	
%  \end{tabular}
%  \vspace{-0.2in}
%  \captionof{figure}{MCI Models}
%  \label{fig:mci-framework}
\end{flushright}   
\end{minipage}
\vspace{-0.1in}
\end{figure*}

\subsection{Multi-Channel Coverage}
%\cut{The starting point of the modeling problem is to decide the \# of channels factors or the variety of information we want to capture in such complex system. => argue that we need to collect a range of information from different channels (in ML, this is called multichannels). In each channel, how to collect information to address the complexity tied back to previous section. \\
%=> multichannel information needs to be captured properly, and our work has the most comprehensive coverage of these channels. 
%}

%---- BEGIN: from AIR-report
It is nontrivial to predict the latency of a single instance due to many factors in big data systems. 
Those factors include 
the characteristics of the (sub)query plan running in the instance, 
data characteristics, 
resources in use, 
current system states, % of the underlying machine, 
and hardware properties.
Each factor alone could affect the latency of an instance, and the coeffects of multiple factors can make the latency pattern more complex.
Therefore, we propose the idea of \textit{multi-channel inputs} (MCI) to capture all of the above factors in our model.

%\subsubsection{Multi-channel Inputs} 
%\label{sec:mci}
%\noindent

\textbf{Multi-channel Inputs.} 
For a given stage, we extract features from all available runtime traces including the query plan, resource plan, instance-level metrics,  hardware profile, and system states. We design multi-channel inputs (MCI) to group the features into five channels that characterize different factors.
Fig.~\ref{fig:mci} shows the overview of the MCI design for the instances of a stage. 

\underline{Channel 1:} {\em Stage-oriented features (query plan)}.
% Channel 1 introduces the stage-level features, which are shared among instances in a stage. 
Channel 1 introduces the stage-oriented features that are shared among instances in a stage. 
It captures the operator characteristics in an operator feature matrix (see Fig.~\ref{fig:mci})
and operator dependencies in a DAG structure (Fig.~\ref{fig:mci-model}).
The characteristics of each operator are featurized by three common feature types (CT1-CT3) and the customized features (CFs).
CT1 identifies the operator type and is represented as a categorical variable. 
CT2 captures the statistics from CBO and HBO, including the cardinality, selectivity, average row size, partition count, and cost estimation.
CT3 captures the IO-related properties, including the location of data (local disk or network) and strategies for shuffling.
CFs are tailored for the unique properties of operators, and each feature belongs to one particular operator.
 
\underline{Channels 2-5:} {\em Instance-oriented features}.  
Channel 2-5 characterizes individual instances (Fig.~\ref{fig:mci}).
{\it Channel 2, instance meta,} includes an instance's input row number and input size captured from the underlying storage system after the data partition is determined. 
{\it Channel 3, resource plan,} featurizes properties of the container that runs an instance, in terms of the CPU cores and memory size.
{\it Channel 4, machine system states,} records the CPU utilization, memory utilization, and IO activities of a machine to capture its running states. Since the system states represented by count-based measurements could be infinite, we discretize the system states to reduce the computation complexity. 
{\it Channel 5, hardware type}, uses the machine model to distinguish a set of hardware types. 
%machine types with different hardware properties.

\underline{Augmented Channel 1:}  \textit{Additional Instance Meta (AIM)}.
In big data systems, CBO produces the query plan without considering the characteristics of individual instances.
Therefore, instances in a stage share the same query plan features in Ch1, even though their input row numbers can differ significantly. A model built on such features may have difficulty predicting different instance latencies because most of their features are identical. 

To address this issue, for each instance, we seek to enrich its query plan features with instance-level characteristics. More specifically, we augment the query plan channel of an instance by adding {\it additional instance meta (AIM)} features for each operator. AIM consists of the estimated operator input/output cardinality and costs of an individual instance.
It is derived using the stage-level selectivity (Ch1), the instance meta (Ch2), and the cost model in CBO.

%To inject instance characteristics into query plan features, we augment the query plan channel towards instances by adding {\it additional instance meta (AIM)} features for each operator towards each instance. %, and concatenate them with the stage-level operator features.
%AIM consists of the estimated operator input/output cardinality and costs towards individual instances.
%It is derived using the stage-level selectivity, the instance meta features, and the cost model in CBO.

Take  {\it job1.stage3} in Fig.~\ref{fig:mci} for example. 
We first get the precise input cardinalities for operators 9 and 10 in the instance by leveraging the instance meta (Ch 2) in the corresponding data partitions.
Accordingly, we calculate the output cardinality of 9 and 10 by multiplying the input cardinality and the operator selectivity.
Then we derive the instance-level cardinality for the remaining operators based on the operator dependency.
Finally, we recompute the operator costs in an instance by reusing CBO's cost model. 
Specifically, we substitute the stage-level cardinality with the instance-level cardinality, set the partition count to one, and call the CBO's cost model to derive the operator costs for an instance.

Note that the above approach assumes that instances in a stage share the same selectivities. 
Although it may not always be true, our evaluation results show that the model built under this assumption could approximate the best performance using the (unrealistic) ground-truth instance-level selectivities. 

%---- END: from AIR-report

%\textbf{Other concerns}:
%We next describe two additional issues in featurization. 
%
%{\it Representing system states.} Machines are characterized by their hardware type and system states in a resource optimization problem. While the choices of the hardware type are limited (less than 10) in a cluster, the system states represented by count-based measurements could be infinite. Therefore, we discretize the system states to reduce the search complexity. 
%We explore the discretized degree and the search complexity \todo{change to searching time after updating experiments} in the Section~\ref{sec:expr}.

%{\it Data characteristics.}
Finally, recent ML models of DBMSs~\cite{qppnet-vldb19,tree-lstm-acl15}  also learn data characteristics by encoding/embedding tables. In a production system, however, many tables are not reachable due to access control for security reasons, and the overhead of representing a distributed table is much higher than in a single machine. Therefore, we assume the data characteristics are already digested by CBO in our work.

%Finally, we note that 
%recent ML-based models of DBMSs~\cite{qppnet-vldb19,tree-lstm-acl15}  also learn data characteristics by encoding/embedding tables. However, it is not directly applicable in big data systems.
%First, many tables are not reachable due to the access control for security reasons in the production environment.
%Second, the overhead of representing a table is significantly larger in big data systems than  in a single machine 
%because 
%(1) the number of tables is much larger, 
%(2) tables can be much bigger by spanning across multiple machines,
%(3) table characteristics change periodically (e.g., daily) when a table is updated. 
%In our work, we assume the data characteristics are already digested by CBO and postpone the issue of 
%improving the data representation until table access becomes feasible. 
%%leave the issue of improving the data representation for modeling in big data systems to future work (once such table access becomes feasible).

%\vspace{-0.15in}
\subsection{MCI-based Models}

Now we propose our MCI-based models to learn instance latency. 

\minip{MCI-based Modeling Framework.}
In the output of MCI, features in the query plan channel are represented as a DAG (Fig.~\ref{fig:mci-model}), while the instance-oriented features are in the tabular form.
To incorporate these heterogeneous data structures in model building, we design two model components, as shown in Fig.~\ref{fig:mci-model}.
%we design our model based on a {\it plan embedder} and a {\it latency predictor} to align mixed data structures, as shown in Fig.~\ref{fig:mci-model}.
The {\it plan embedder} constructs a plan embedding of the plan features in the form of an arbitrary DAG structure. 
It first encodes the operators into a uniform feature space by padding zeros for the unmatched customized features. 
To embed the query plan, it then applies a Graph Transformer Networks (GTN)~\cite{gtn-neurips19}  due to its ability to learn the DAG context in heterogeneous graphs with different types of nodes. 
%To embed the query plan, it then applies a graph-based neural network \todo{cite} that has shown good performance in capturing the graph characteristics. 
%In our work, we apply the Graph Transformer Networks (GTN) as the plan embedder due to its ability to learn the DAG context in heterogeneous graphs with different types of nodes~\cite{gtn-neurips19}. 
The {\it  latency predictor} is a downstream model 
%connected with the plan embedder. It 
that concatenates the plan embedding with other instance-oriented features (channels 2-5) into a big vector, and feeds the vector to a Multilayer Perceptron (MLP) to predict the instance-level latency. 

\minip{Modeling Tools in Our Framework.}
Besides GTN, our modeling framework %(MCIs, the embedder, the predictor) 
can accommodate other models designed for DBMSs, with necessary extensions to our graph-based MCIs. 
Thus, we can leverage different models and examine their pros and cons in our system. Due to space limitations, our extensions of  QPPNet~\cite{qppnet-vldb19} and TLSTM~\cite{tlstm-cost-estimator-vldb19} are deferred to \techreport{Appendix~\ref{appendix:model}}{~\cite{tech-report}} .

\cut{
Our model includes two parts. 
%--- the Tree-LSTM model
The first part is an extended version of a Tree-Structured Long Short-Term Memory Network (TreeLSTM)~\cite{tree-lstm-acl15}, that generates the instance-level stage embedding. 
The extended TreeLSTM takes each operator's feature encoding as the input of an LSTM cell, propagates the hidden states through the data flow (from leaves to the root), and gets the stage embedding from the output operator (root). 
However, since the DAG structure could be more complex than a tree, we extend the tree concept and the TreeLSTM model in the following ways. 
We first consider a DAG as an extended tree by treating the output operators (out degree = 0) as the ``root(s)" and the input operators (in degree = 0) as the ``leaves", which makes the data flow direction pointing from a child node to a parent node. 
Notice that besides the basic tree properties, the extended tree gets two new properties: (1) a ``child" node could have more than one parent, and (2) an extended tree could have multiple ``roots". 
We then extend the TreeLSTM model accordingly. When a ``child" node has multiple parent nodes, its output hidden state will be shared as the input of all its parent nodes. When an extended tree has multiple ``roots", we augment the tree by adding one artificial node and pointing all the current ``root" nodes to it, such that the artificial node becomes the only root for the new tree.
%--- the Graph Transformer Model 
\todo{replace the TreeLSTM model by the Graph Transformer Model}

The second part is a Multi-layer Perceptron (MLP), that takes feature inputs from multiple channels and computes the predicted instance-level latency. 

It is worth mentioning that we encode the features by converting all the categorical variables to dummy vectors and normalizing the numerical variables.
}

%\begin{figure}[t]
%	\centering
%    \includegraphics[width=.48\textwidth]{}
%    \caption{Model structure}
%    \label{fig:model-treelstm}
%\end{figure}

% add learning settings to Expr section.
\cut{
\subsection{Learning Methods}
\sketch{discuss the loss function; training strategy; warmup learning rate; (some others SOTA VS my own method)}

\noindent
\status{To add and think more learning methods to compare.}

\subsubsection{Learning Setting}
\noindent

\begin{enumerate}
	\item Batch-based training, update the weights via back-prop. The tricky for doing mini-batch training for arbitrary tree structure is to (i) connect the roots of multiple trees to a new root to generate a new tree, (ii) have Tree-LSTM transformation on the new tree. In this way, we could learn a batch of tree (as subtrees) in parallel. \todo{could be extended}.
	\item Loss formula for a stage: a weighted summation of the relative error between each predicted objective/runtime metric. The overall loss is the sum of stage losses in the batch. \todo{to pick a loss function in practical, (i) to use the MAPE directly like UDAO, (ii) to use the log-wise form $(\log(l_a) - \log(l_p))^2$ like CLEO, (iii) to use $max(\frac{|l_a - l_p|}{l_p}, \frac{|l_a - l_p|}{l_a}) like the e2e-cost-learner$ }
	\item list weights to be learned: $W^{(e)}$, $b^{(e)}$, $W^{(i)}$, $b^{(i)}$, $W^{(f)}$, $b^{(f)}$, $W^{(o)}$, $b^{(o)}$, $W^{(u)}$, $b^{(u)}$, $W^{(m_i)}$, $b^{(m_i)}$
\end{enumerate}
}

\section{Stage-level optimization}
\label{sec:so}
In this section, we present our {\it Stage-level Optimizer (SO)} that minimizes both stage latency and cost based on instance-level models.

%\begin{figure}[t]
%	\centering
%%	\vspace{-0.1in}
%%	\hspace{-6cm}
%  \includegraphics[height=4.5cm,width=.5\linewidth]{}
%  \vspace{-0.1in}
%  \captionof{figure}{An example to compare Fuxi to an optimal solution \todo{typos for IPA solution. add RAA decisoin?}}
%  \vspace{-0.1in}
%  \label{fig:toy}
%\end{figure}

\begin{figure*}
%\vspace{-0.1in}
\centering
%\hspace{-0.1in}
\begin{minipage}{.28\linewidth}
\begin{flushleft}
\centering
  \includegraphics[width=1.\linewidth,height=3.8cm]{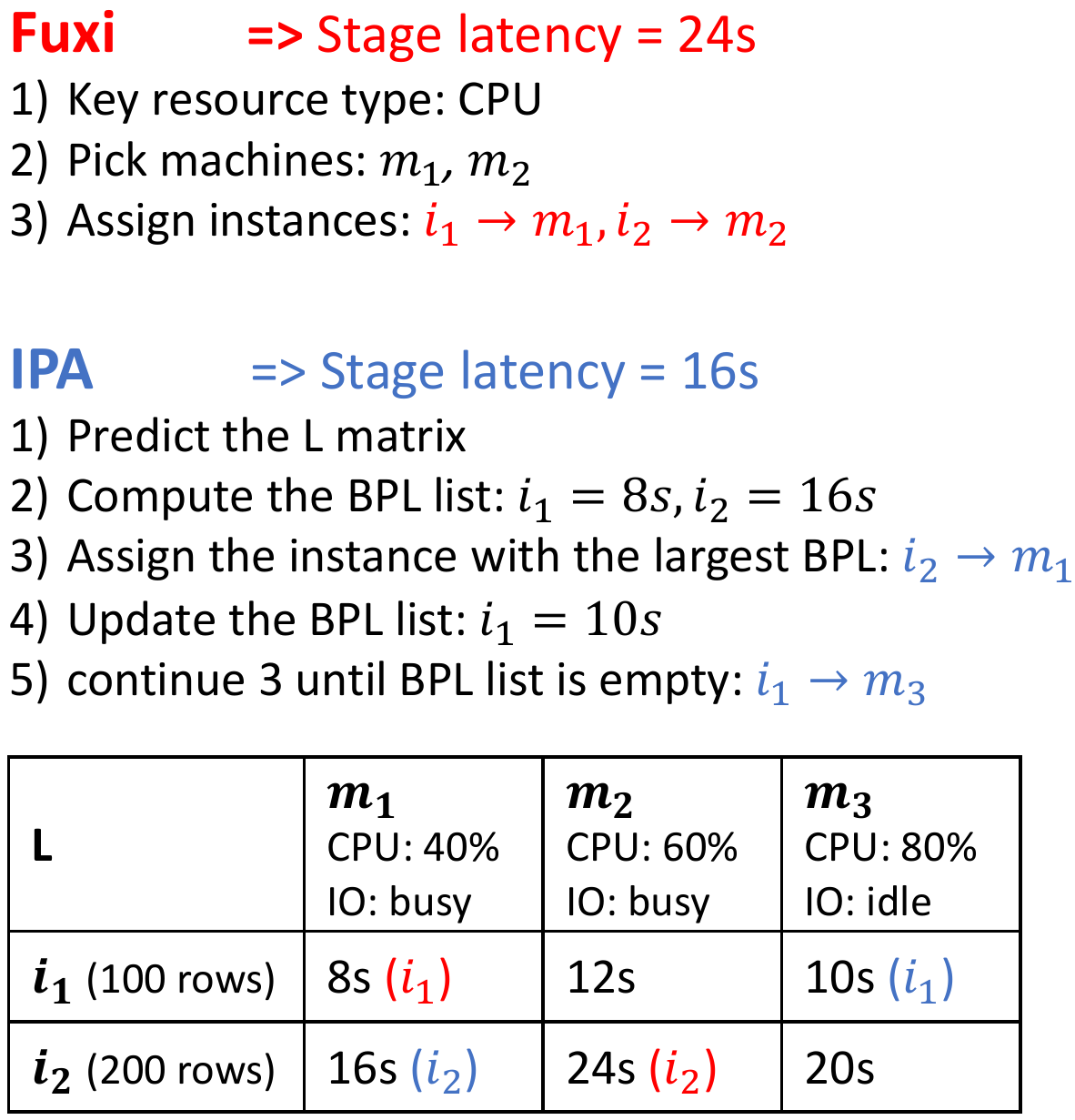}
  \vspace{-0.2in}
  \captionof{figure}{Example of IPA}
  %An example to compare Fuxi to an optimal solution \todo{typos for IPA solution. add RAA decisoin?}
%   \vspace{-0.1in}
  \label{fig:toy}
\end{flushleft}   
\end{minipage}%
\hspace{0.1in}
\begin{minipage}{.33\linewidth}
\begin{flushleft}
  \includegraphics[width=1.\linewidth,height=3.8cm]{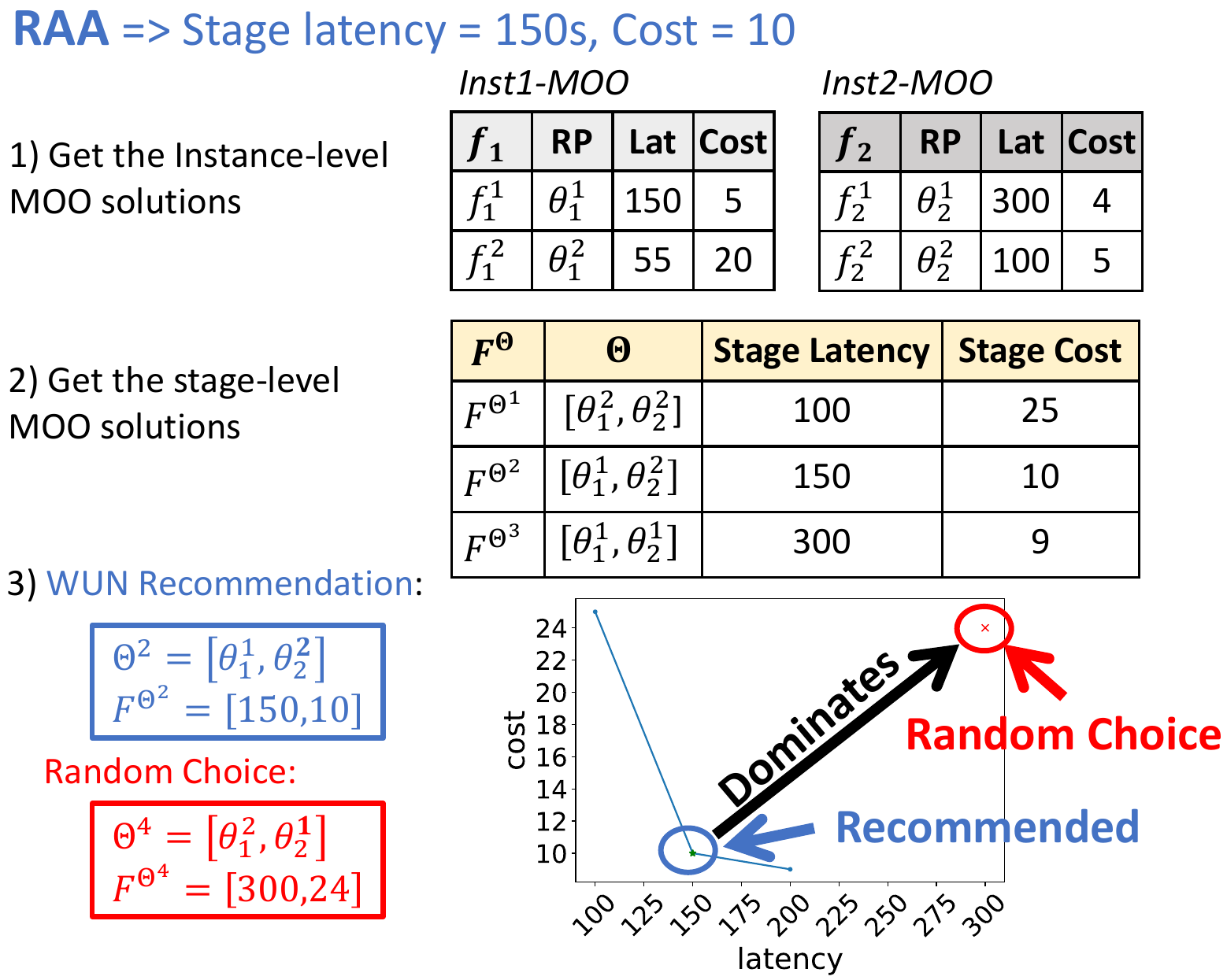}
	\vspace{-0.2in}
	\captionof{figure}{Example of RAA} %Machine system states trends in a representative cluster
	%\vspace{-0.2in}
	\label{fig:raa-example}
\end{flushleft}  	
\end{minipage}
%\hspace{}
\begin{minipage}{.33\textwidth}
\begin{flushright}
\centering
   \includegraphics[width=1.\linewidth,height=3.8cm]{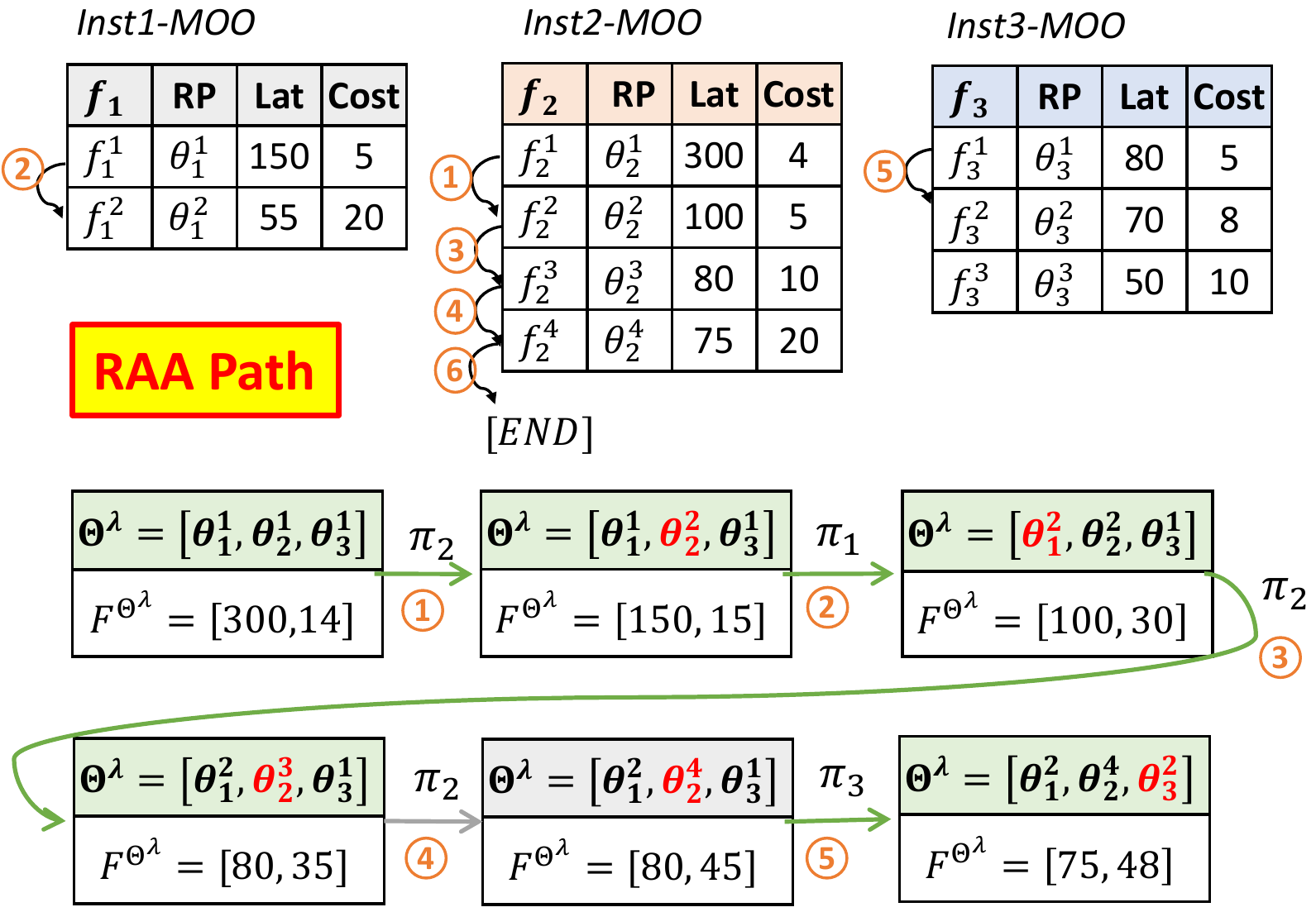}
    \vspace{-0.2in}
    \captionof{figure}{Example of RAA path} %Machine system states trends in a representative cluster
    %\vspace{-0.2in}
    \label{fig:raa-path}
\end{flushright}    
\end{minipage}
\vspace{-0.1in}
\end{figure*}

During execution, once a stage is handed to the scheduler, two decisions are made: 
the {\it placement plan (PP)} that maps instances to machines, and the {\it resource plan (RP)} that determines the CPU and memory resources of each instance on its assigned machine.
The current Fuxi scheduler~\cite{fuxi-vldb14} decides a PP for $m$ instances as follows: 
(1) Identify the key resource (bottleneck) in the current cluster, e.g., CPU or IO.
(2) Pick $m$ machines with top-$m$ lowest resource watermarks.
%\footnote{When the cluster has enough available machines, Fuxi has this diverse placement preference to send instances to different machines.}
(3) Assign instances, in order of their instance id, to the $m$ machines, and use the same resource plan for each instance as suggested by HBO.
However, its negligence of latency variance among instances leads to suboptimal decisions for PP and RP: 

\underline{Example 2.} %We next explain why neglecting instance latencies can lead to poor decisions for both placement plans and resource plans. 
Figure~\ref{fig:toy} shows how Fuxi gets a suboptimal placement plan in a toy example of sending a stage of 2 instances ($i_1$, $i_2$) to a cluster of 3 available machines ($m_1$, $m_2$, $m_3$).
Assume that the instance latency is proportional to its input row number on the same machine, and the current key resource in the cluster is the CPU.
In this example, Fuxi first picks $m_1$ and $m_2$ as machines with the top-2 lowest CPU utilization (watermarks) and assigns $i_1$ to $m_1$ and $i_2$ to $m_2$ respectively. The stage latency, i.e., the maximum instance latency, is 24s.
However, the {\em optimal placement plan} could achieve a 16s stage latency by assigning $i_1$ to $m_3$ and $i_2$ to $m_1$.
Further, using the same resources for $i_1$ and $i_2$ is not ideal. Instead, an {\em optimal resource plan} would be adding resources to $i_2$ and reducing resources for $i_1$ so as to reduce both latency and cost, indicating the need for instance-specific resource allocation.

\subsection{\rv{MOO Problem and Our Approach}}\label{sec:moo-problem}
\rv{
To derive the optimal placement and resource plans, we begin by providing the mathematical definition of multi-objective optimization (MOO) and present an overview of our approach. 
}

\minip{Instance-level MOO.} 
First, consider a given instance to be run on a specific machine. 
We use $f_1, ..., f_k$ to denote the set of predictive models of the $k$ objectives 
and $\theta$ to denote a {\em resource configuration} available on that machine. Then the instance-level multiple-objective optimization (MOO) problem is defined as:

{\definition{\textbf{Multi-Objective Optimization (MOO).}
		\begin{eqnarray}
		\label{eq:mult_obj_opt_def}
		\arg \min_{\theta} &f^\theta = f(\theta)=& {\left[
			\begin{array}{l}
			f_1(\theta) \\
			%          F_2(\bm{x})= \Psi_2(\bm{x}) \\
			%          F_3(\bm{x})= -c*S\\
			...\\
			f_k(\theta)
			\end{array}
			\right]} \\
		\nonumber s.t. & & \theta \in \Sigma \subseteq \mathbb{R}^d
		\end{eqnarray}
}}\noindent{where $\Sigma$ denotes the set of all possible resource configurations.}

A point $f' \in \mathbb{R}^k$ Pareto-dominates another point $f''$ iff $\forall i \in [1, k], f_i' \leq f_i''$  and $\exists j \in [1, k], f_j' < f_j''$. A point $f^*$ is \textbf{Pareto Optimal} iff there does not exist another point $f'$ that Pareto-dominates it. Then, the \textbf{Pareto Set} $f = [f^{\theta^1}, f^{\theta^2}, ...]$ includes all the Pareto optimal points in the objective space $\Phi \subseteq \mathbb{R}^k$, under the configurations [$\theta^1, \theta^2, \ldots$], and is the solution to the MOO problem.
		
\cut{For a given instance on a specific machine, our instance-level models offer a predictive model for each objective. Then we can use any existing MOO method to solve the instance-level MOO problem. In our work, we use the Progressive Frontier (PF) algorithm in our prior work, UDAO~\cite{spark-moo-icde21}, since it is shown to be the fastest.
}

\rv{
\minip{Stage-level MOO.}
We next consider the stage-level MOO problem over multiple instances.
Consider $m$ instances, ($x_1, ..., x_m$), and $n$ machines,  ($y_1$, ..., $y_n$).
We use $\tilde{x}_i$ to denote the characteristics of $x_i$ based on its  features of Ch1 and Ch2 in its multi-channel representation, and $\tilde{y}_j$ to denote the features of Ch4 and Ch5 of machine $y_j$. 
Then consider two sets of variables, $B$  and $\Theta$: 
\begin{enumerate}
%(1)~
\item 
$B \in \mathbb{R}^{m\times n}$ is the binary assignment matrix, where $B_{i,j} = 1$ when $x_i$ is assigned to $y_j$, and $\sum_j B_{i,j} = 1$, $\forall i = 1...m$.
%(2)~
\item 
$\Theta = [\Theta_1, \Theta_2, ..., \Theta_m]$, denotes the collection of resource configurations of $m$ instances, where $\forall i \in [1, \ldots, m], \Theta_i \in \Sigma_i^* \subseteq \mathbb{R}^d,$  and $\Sigma_i^*$ is set of possible  configurations of $d$ resources (e.g., $d=2$ for CPU and memory resources) for instance $i$.
%$\Theta = [\Theta_1, \Theta_2, ..., \Theta_m] = [\theta_1^{l_1}, \theta_2^{l_2}, ... \theta_m^{l_m}]$, $\theta_i^{l_i}$ denotes the $l_i$-th resource configuration in Pareto set of the $i$-th instance ($i \in \{ 1,...,m \}$), the aggregator $g_j$ (j=1,...,k) is either a \texttt{sum} or \texttt{max}, and $\Sigma_i^*$ is set of resource configurations for instance $i$.
\end{enumerate}

Given these variables, suppose that $f$ is the instance-level latency prediction model, i.e., $f(\tilde{x}_i, \Theta_i, \tilde{y}_j)$ gives the latency when $x_i$ is running on $y_j$ using the resource configuration $\Theta_i$.
Then the stage-level latency can be written as, 
$
L_{stage} =  \max_{i,j} B_{i,j} f(\tilde{x}_i, \Theta_i, \tilde{y}_j)
$.
The stage cost is the weighted sum of cpu-hour and memory-hour and can be written as, 
$
C_{stage}  = \sum_{i,j} B_{i,j}  f(\tilde{x}_i, \Theta_i, \tilde{y}_j) 
   (\mathbf{w} \cdot  \Theta_i^T)
$, 
%(w_1 \Theta_i^1 + w_2 \Theta_i^2 )
where $\mathbf{w}$ is the weight vector over $d$ resources and 
$\mathbf{w} \cdot  \Theta_i^T$ is the dot product between $\mathbf{w}$ and $\Theta_i$.
Other objectives  can be written in a similar fashion. 
Then we have the Stage-level MOO Problem:

%Consider $m$ instances ($x_1, ..., x_m$) and $n$ machines ($y_1$, ..., $y_n$). Given two set of variables $B$  and $\Theta$,  $f$ is instance-level latency when $x_i$ is running on $y_j$ using the resource configuration $\Theta_i$. We have a constrained Multi-Objective Optimization (MOO) problem:
{\definition{\textbf{Stage-Level MOO Problem.}
\label{def:stagemoo}
\begin{eqnarray}
		%\label{eq:stage-moo-v1}
		\arg\min_{B, \Theta} & {\left[
			\begin{array}{l}
			L(B, \Theta) = \max_{i,j} B_{i,j}  f(\tilde{x}_i, \Theta_i, \tilde{y}_j) \\
			C(B, \Theta) = \sum_{i,j} B_{i,j}  f(\tilde{x}_i, \Theta_i, \tilde{y}_j)  (\mathbf{w} \cdot  \Theta_i^T ) \\
			...
			\end{array}
			\right]} \\
		\nonumber s.t. &  {\begin{array}{l}
			B_{i,j} \in \{0, 1\}, \,\,\, \forall i=1...m , \forall j=1...n \label{eq:con1}\\
			\sum_j B_{i,j} = 1, \,\,\, \forall i=1...m \\
%			R_i^1 \in [0.5, 1], \,\,\, \forall i=1...m \\ 
%			R_i^2 \in [1G, 12G], \,\,\, \forall i=1...m \\
			\sum_i B_{i,j} \Theta_i^1 \leq U_j^1 , \,\, \ldots, \,\,\sum_i B_{i,j} \Theta_i^d \leq U_j^d \,\, , \forall j=1...n \\
%			\sum_j B_{i,j} \Theta^2 \leq U_j^2 \,\,\, \forall i=1...m \\
%			\sum_i B_{i,j} \leq \alpha \,\,\, \forall j=1...n
			\end{array}}
		\end{eqnarray}
}}where $U_j \in \mathbb{R}^d$ is the $d$-dim resource capacities on  machine $y_j$.
% and $\alpha$ defines the maximum number of instances each machine can take based on the diverse placement preference. Basically, a smaller $\alpha$ shows a stronger preference of the diverse placement for a stage and we have $\alpha >= \lceil m/n \rceil$. 
%\todo{Number the above equation}
}

\rv{
\minip{Existing MOO Approaches.}
Given the above definition of stage-level MOO, one approach is to call existing MOO methods~\cite{marler2004survey,messac2003nc,Emmerich:2018:TMO,spark-moo-icde21} to solve it directly. However, this approach is facing a host of issues: 
(1)~The parameter space is too large. In our problem setting, both $m$ and $n$ can reach 10's of thousands. Hence, both $B \in \mathbb{R}^{m\times n}$ and 
$\Theta = [\Theta_1, \Theta_2, ..., \Theta_m]$, $\forall i, \Theta_i \in \mathbb{R}^d$, involve  $O(mn)$ and $O(md)$ variables, respectively, which challenge all MOO methods. 
(2)~There are also constraints specified in Def.~\ref{def:stagemoo}. Most MOO methods do not handle such complex constraints and hence may often fail to return feasible solutions.
%(3)~We also have mixed binary variables in $B$ and integer variables in $\Theta$ in the same set of functions, which is usually hard for most solvers to deal with. 
We will demonstrate the performance issues of this approach in our experimental study.

\minip{Our MOO Approach.}
To solve the complex stage-level MOO problem while meeting stringent time constraints, we devise a novel MOO approach that proceeds in two steps:

\minip{\underline{Step 1 (IPA)}:} Take the resource configuration $\Theta_0$ returned from the HBO optimizer as the default and assign it uniformly to all instances, $\Theta_i = \Theta_0$, $\forall i \in [1,...,m]$.  Then minimize over $B$ in Def.~\ref{def:stagemoo} by treating $\Theta_0$ as a constant. 
%In this step, $\Theta_0$ is treated as a constant. 
%This step is addressed by our IPA method, which will be detailed in the next subsection.
\cut{
\begin{align}
		\label{eq:stage-ipa}
		\arg \min_{B} & {\left[
			\begin{array}{l}
			L(B, \Theta_0) = \max_{i,j} B_{i,j} f(\tilde{x}_i, \Theta_0, \tilde{y}_j) \\
			C(B, \Theta_0) = \sum_{i,j} B_{i,j} f(\tilde{x}_i, \Theta_0, \tilde{y}_j) (\mathbf{w} \cdot  \Theta_0^T ) \\
			...
			\end{array}
			\right]} 
%		\nonumber s.t. &  {\begin{array}{l}
%			B_{i,j} \in \{0, 1\}, \,\,\, \forall i=1...m , \forall j=1...n \label{eq:con1}\\
%			\sum_j B_{i,j} = 1, \,\,\, \forall i=1...m \\
%			R_i^1 \in [0.5, 1], \,\,\, \forall i=1...m \\ 
%			R_i^2 \in [1G, 12G], \,\,\, \forall i=1...m \\
%			\sum_i B_{i,j} \Theta_i^1 \leq U_j^1 \,\,\, \forall i=1...m \\
%			\sum_j B_{i,j} \Theta^2 \leq U_j^2 \,\,\, \forall i=1...m \\
%			\sum_i B_{i,j} \leq \alpha \,\,\, \forall j=1...n
%			\end{array}}
\end{align}
}
		
\minip{\underline{Step 2 (RAA)}:} Given the solution from step 1, $B^*$, we  now minimize over the  variables $\Theta$ in Def.~\ref{def:stagemoo} by treating $B^*$ as a constant.
%This step is addressed by our RAA method.
\cut{
\begin{align}
		\label{eq:stage-raa}
		\arg \min_{\Theta} & {\left[
			\begin{array}{l}
			L(B^*, \Theta) = \max_{i,j} B_{i,j}^* f(\tilde{x}_i, \Theta_i, \tilde{y}_j) \\
			C(B^*, \Theta) = \sum_{i,j} B_{i,j}^* f(\tilde{x}_i, \Theta_i, \tilde{y}_j) (\mathbf{w} \cdot  \Theta_i^T ) \\
			...
			\end{array}
			\right]} 
%		\nonumber s.t. &  {\begin{array}{l}
%			B_{i,j} \in \{0, 1\}, \,\,\, \forall i=1...m , \forall j=1...n \label{eq:con1}\\
%			\sum_j B_{i,j} = 1, \,\,\, \forall i=1...m \\
%			R_i^1 \in [0.5, 1], \,\,\, \forall i=1...m \\ 
%			R_i^2 \in [1G, 12G], \,\,\, \forall i=1...m \\
%			\sum_i B_{i,j} \Theta_i^1 \leq U_j^1 \,\,\, \forall i=1...m \\
%			\sum_j B_{i,j} \Theta^2 \leq U_j^2 \,\,\, \forall i=1...m \\
%			\sum_i B_{i,j} \leq \alpha \,\,\, \forall j=1...n
%			\end{array}}
		\end{align}
}

The intuition behind our approach is that if we start with a decent choice of $\Theta_0$, as returned by HBO, we hope that step 1 will reduce stage latency via a good assignment of instances to machines by considering machine capacities and instance latencies. Then step 2 will fine-tune the resources assigned to each instance on a specific machine to reduce stage latency, cost, as well as other objectives. 
We next present our IPA and RAA methods that implement the above steps, respectively, prove their optimality based on their respective definitions, 
	%based on Def.~\eqref{eq:stage-ipa} and \eqref{eq:stage-raa}, 
	and optimize them to run well under a second. 
}
%Assuming that we start with a decent choice of $\Theta_0$, we hope that the above two steps are sufficient for finding a good solution to both $B$ and $\Theta$, while meeting our stringent time constraint of 200 msec.  Without time constraints, it may be possible to iterate the above two steps to improve the solutions, but this is deferred to future work when we have scenarios where it is feasible to do so. 

%To address these issues, our stage optimizer (SO) employs a new {\it Intelligent Placement Advisor (IPA)}, which derives a  latency-aware placement plan to reduce the stage latency, and a novel {\it Resource Assignment Advisor (RAA)}, which derives instance-specific resource plans to further reduce latency and cost in a hierarchical MOO framework. Both methods are optimized to run under 200 msec.

%---------------------------------------------------

%\input{moo-problem}
\subsection{Intelligent Placement Advisor (IPA)}
\label{sec:ipa}
%IPA takes the placement request  of a stage from the scheduler  and returns a placement plan (PP). The high-level objective is to reduce the stage latency by assigning the instances to those machines that best meet their computing needs. Note that in this step the resource plan is set uniformly to the default $\Theta_0$ from HBO.

\rv{
\minip{Problem Setup.}
IPA determines the placement plan (PP) by way of minimizing the stage latency.} We assume that the total available resources in a cluster are more than a stage's need -- this assumption is likely to be met in a production system with 10's of thousands of machines\footnote{Otherwise, there is a scheduling delay for the stage with the admission control.}.
We further suppose that all the instances of a stage can start running simultaneously and hence the stage latency equals the maximum instance latency.

IPA begins by feeding the MCI features of the stage to the latency model and builds a latency matrix, $L\in \mathbb{R}^{m\times n}$, to include the predictions of running the instances on all available machines, i.e., $L_{i,j}$ is the latency of running instance $x_i$ on machine  $y_j$.
\rv{By setting $L_{i,j} = f(\tilde{x}_i, \Theta_0, \tilde{y}_j)$, ignoring the constant $\Theta_0$, and focusing on the stage latency objective in Def.~\ref{def:stagemoo}, we obtain:
\begin{align}
		\label{eq:ipa}
		\arg \min_{B} & {\left[
			\begin{array}{l}
			L(B) = \max_{i,j} B_{i,j} L_{i,j} 
			%\\
			%C(B) = \sum_{i,j} B_{i,j} L_{i,j} 
			\end{array}
			\right]} \\
		\mbox{s.t.} & \;\; \sum_{j} B_{i,j} = 1, \forall i  
    		\;\; \mbox{and}
                 \;\; \sum_i B_{i,j} \leq \beta_j, \forall j
%		\nonumber s.t. &  {\begin{array}{l}
%			B_{i,j} \in \{0, 1\}, \,\,\, \forall i=1...m , \forall j=1...n \label{eq:con1}\\
%			\sum_j B_{i,j} = 1, \,\,\, \forall i=1...m \\
%			R_i^1 \in [0.5, 1], \,\,\, \forall i=1...m \\ 
%			R_i^2 \in [1G, 12G], \,\,\, \forall i=1...m \\
%			\sum_i B_{i,j} \Theta_i^1 \leq U_j^1 \,\,\, \forall i=1...m \\
%			\sum_j B_{i,j} \Theta^2 \leq U_j^2 \,\,\, \forall i=1...m \\
%			\sum_i B_{i,j} \leq \alpha \,\,\, \forall j=1...n
%			\end{array}}
\end{align}
Here, we add the constraint $\sum_i B_{i,j} \leq \beta_j$ to meet the {\em resource capacity constraint} of each machine and a {\em diverse placement preference} (soft constraint) of sending instances to different machines to reduce resource contention.
Let $\beta_j$ denote the maximum number of instances a machine $y_j$ can take based on its capacity constraint ($U_j \in \mathbb{R}^d$) and diversity preference. We have
\begin{equation}
    \beta_j =  \min \{\lfloor U_j^1 / \Theta_0^1 \rfloor, \lfloor U_j^2 / \Theta_0^2 \rfloor, ..., \lfloor U_j^d / \Theta_0^d\rfloor, \alpha \} 
\end{equation}}where  $\alpha$ is a parameter that defines the maximum number of instances each machine can take based on the diverse placement preference. 
Basically, a smaller $\alpha$ shows a stronger preference of the diverse placement for a stage and we have $\alpha >= \lceil m/n \rceil$.

From Eq.~\eqref{eq:ipa}, 
%if we restrict our attention  to minimizing the stage latency alone, 
given that the variable $B$ is a binary matrix,  it is an NP-hard {\em Integer Linear Programming (ILP)}  problem~\cite{algo-book}.

%------- OLD VERSION ------------------
\cut{
Consider the problem of sending 
a stage of $m$ instances $X=(x_1, ... , x_m)$, each with the identical demand of $d$ resource types ($r^1, r^2, ..., r^d$), to $n$  machines $Y=(y_1, ..., y_j, ..., y_n)$, each with the capacity of $d$ resources ($c_j^1, c_j^2, ..., c_j^d$). 
Further denote 
\begin{enumerate}
    \item $L\in \mathcal{R}^{m\times n}$ as the latency matrix, where $L_i^j$ is the latency of instance $x_i$ running over machine  $y_j$,
    \item $x_i \rightarrow y_j$ as sending the instance $x_i$ to the machine $y_j$,
    \item $P = \{x_i \rightarrow y_{p_i}, \forall i \in [1,m] | \, p_i \in [1, n]\}$ as the placement plan, and
    \item $B \in \mathcal{R}^{m\times n}$ as the binary matrix that codifies $P$, with $B_i^j = 1$ when $x_i \rightarrow y_j$. 
\end{enumerate}

Our goal is to minimize the stage latency, $\max_i L_i^{p_i}$, with a placement plan that meets the {\em resource capacity constraint} of each machine and a {\em diverse placement preference} (soft constraint) of sending instances to different machines to reduce resource contention. %(e.g., IO)

Denote $s_j$ as the maximum number of instances a machine $y_j$ can take based on the constraint and preference. We have 
\begin{equation}
    s_j =  \min \{\lfloor c_j^1 / r^1 \rfloor, \lfloor c_j^2 / r^2 \rfloor, ..., \lfloor c_j^d / r^d\rfloor, \alpha \} 
\end{equation}
where $\alpha$ is a parameter that defines the maximum number of instances each machine can take based on the diverse placement preference. 
Basically, a smaller $\alpha$ shows a stronger preference of the diverse placement for a stage and we have $\alpha >= \lceil m/n \rceil$.

Then, we formalize our IPA problem as follows. %\todo{sum is wrong}
%\begin{align}
%    & \min_{P} (\max_i L_i^{p_i} ) \\
%    \mbox{s.t.} & \;\; count(P, j) \leq s_j, \forall j
%\end{align}
%where $count(P, j)$ returns the frequency of machine $y_j$ in the placement plan $P$.
%By codifying the placement plan using the binary matrix $B$, we rewrite the problem as 
\begin{align}
    & \min_{B} (\max_{ij} L_i^j B_i^j ) \\
    \mbox{s.t.} & \;\; \sum_j B_i^j = 1, \forall i  
    		\;\; \mbox{and}
                 \;\; \sum_i B_i^j \leq s_j, \forall j
%    \mbox{s.t.} & \;\; \sum_i B_i^j = 1, \forall i \\
%                & \;\; \sum_j B_i^j \leq s_j, \forall j
\end{align}

Since the variable $B$ is a binary matrix,  our problem is an NP-hard {\em Integer Linear Programming (ILP)}  problem~\cite{algo-book}.
}

%---------------------------------------------------
\minip{IPA Method.}
%\label{sec:appr-algo}
We design a new solution to IPA based on the following intuition: Since the stage latency is the maximum instance latency, intuitively, the placement plan wants to reduce the latency of the longer-running instances by sending them to the machines where they can run faster, potentially at the cost of compromising the latency of other short-running instances. %on other machines.

Our IPA method works as follows: 
We first prioritize the instances by their {\it best possible latency} (BPL), where BPL is defined as the minimum latency that an instance  can achieve among all available machines. %$Y^*$, e.g., $BPL_i = \min_{\{j | Y_j \in Y^* \}} L_i^j$. 
Then we keep sending the instance of the largest BPL to its matched machine and updating the BPL for instances when a machine cannot take more instances.
The full procedure is given in Algorithm~\ref{alg:appr-ipa-algo}.
The time complexity is $O(m(m+n)+d)$ using parallelism and vectorized computations in the implementation.
 
\rv{\minip{Optimality Result.}
Our proof of the IPA result is based on the following  \textit{column-order} assumption about the latency matrix $L$: For a given column (machine), when we visit the cells in increasing order of latency, let $s_1, \ldots, s_m$ denote the indexes of the cells retrieved this way. Then the column-order assumption is that all columns of the $L$ matrix share the same order, $s_1, \ldots, s_m$. See Fig.~\ref{fig:toy} for an example of $L$  matrix. This assumption is likely to hold because, in the IPA step, all machines  use the same amount of resources, $\Theta_0$, to process each instance. Hence, the order of latency across different instances is strongly correlated with the size of input tuples (or tuples that pass the filter) in those instances, which is independent of the machine used.  Empirically, we verified that this assumption holds over 88-96\% stages across three large production workloads, where the violations largely arise from the uncertainty of the learned model used to generate the $L$ matrix.

Below is our main optimality result of IPA. All proofs in this paper are left to \techreport{Appendix~\ref{appendix:ipa}}{\cite{tech-report}} 
due to space limitations. 

\begin{theorem}\label{thm:ipa}
	IPA achieves the single-objective stage-latency optimality under the column-order assumption.
\end{theorem}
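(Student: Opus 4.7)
The plan is to prove the theorem by an exchange argument combined with induction on the number of instances $m$. The column-order assumption is the key structural property: under it, there is a single permutation $\pi$ of the $m$ instances that simultaneously sorts every column of $L$ by latency. In particular, the instance with the largest BPL, which I will call $i_1$, is the slowest on every machine, so $L_{i_1, j} \geq L_{i, j}$ for all machines $j$ and all other instances $i$. This universal dominance is what will allow $i_1$ to be swapped into its own best machine without harming the objective.

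The first step is an exchange lemma. I would take any optimal assignment $B^*$ with objective value $T^*$, and let $j^*$ denote the machine minimizing $L_{i_1, j}$ among those with positive capacity, so that $L_{i_1, j^*} = \mathrm{BPL}(i_1)$. If $B^*$ already assigns $i_1$ to $j^*$ nothing is needed; otherwise $i_1$ sits on some $j_0 \neq j^*$. Either $j^*$ has a free slot in $B^*$ (move $i_1$ to $j^*$ outright) or we can pick any instance $i'$ placed on $j^*$ and swap it with $i_1$. After the swap, $i_1$'s latency is $\mathrm{BPL}(i_1) \leq L_{i_1, j_0} \leq T^*$, and $i'$'s new latency is $L_{i', j_0} \leq L_{i_1, j_0} \leq T^*$ by the column-order assumption. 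All other instances are unaffected, so the swapped assignment is still optimal and now places $i_1$ on $j^*$.

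The second step is the induction on $m$. Define the reduced instance by deleting row $i_1$ from $L$ and decrementing $\beta_{j^*}$ by one; column-order, feasibility, and the constraint structure are all preserved. Writing $\mathrm{OPT}(m)$ for the optimum of the original problem and $\mathrm{OPT}(m-1)$ for the reduced one, the exchange lemma yields $\mathrm{OPT}(m) = \max\bigl(\mathrm{BPL}(i_1),\, \mathrm{OPT}(m-1)\bigr)$, with the easy direction coming from combining an optimal reduced assignment with the placement $i_1 \mapsto j^*$. On the algorithmic side, IPA's first iteration places $i_1$ on $j^*$ and then, on the same reduced machines, proceeds exactly like a fresh execution on the reduced problem, so $\mathrm{IPA}(m) = \max\bigl(\mathrm{BPL}(i_1),\, \mathrm{IPA}(m-1)\bigr)$. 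With the base case $m=1$ being immediate and the induction hypothesis $\mathrm{IPA}(m-1) = \mathrm{OPT}(m-1)$, the conclusion $\mathrm{IPA}(m) = \mathrm{OPT}(m)$ follows.

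The main obstacle I expect is carefully matching the algorithm's BPL-recomputation logic to the induction's reduction step: the algorithm only recomputes BPLs when a machine's capacity is exhausted, whereas the inductive reduction decrements capacity after every placement. I would argue that, while $j^*$ still has residual capacity, the BPL values and the identity of the next-largest-BPL instance coincide in both views because a machine that is still in play does not shrink any remaining instance's minimum, and column-order guarantees the BPL order on the surviving machines matches the $\pi$-order restricted to those machines. A secondary subtlety is ties (in BPL and in the best-machine choice), both of which are handled by observing that column-order makes tied instances pairwise substitutable on every machine, so any tie-breaking rule the algorithm uses remains consistent with an optimal swap.
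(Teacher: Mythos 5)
Your proof is correct and follows the same overall strategy as the paper's: sort/peel off the largest-BPL instance, show some optimal solution places it on its best machine $j^*$, and induct on $m$ after deleting that row and one unit of $j^*$'s capacity. Where you differ is in how the inductive step is justified. The paper (Appendix B) first proves that IPA processes instances in the shared column order, then splits the optimal solution into three cases according to whether instance $1$ sits on $j^*$, whether $j^*$ is unused, or whether $j^*$ is occupied by some other instance $i$; the last case is handled by a somewhat involved comparison of two reduced submatrices ($L$ minus row $1$ versus $L$ minus row $i$, both minus column $j^*$), using elementwise domination to show $l'^*_k \geq l^*_k$ and then two further subcases. Your single exchange lemma subsumes all three cases at once: because the largest-BPL instance $i_1$ is (under column order) the slowest instance on \emph{every} machine, swapping it with whichever $i'$ occupies $j^*$ cannot raise either party above the old makespan ($L_{i_1,j^*} \leq L_{i_1,j_0} \leq T^*$ and $L_{i',j_0} \leq L_{i_1,j_0} \leq T^*$). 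This is a cleaner and more standard argument than the paper's Case 3, and your explicit reconciliation of the algorithm's lazy BPL recomputation with the per-step capacity decrement is a point the paper glosses over. What the paper's route buys in exchange is essentially nothing beyond matching its particular algorithmic presentation.

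One caveat: your closing claim that ties are handled because "column-order makes tied instances pairwise substitutable on every machine" is not right. Two instances can be tied in BPL (their minima agree on some machine) yet differ substantially on other machines, in which case the swap in your exchange lemma can strictly worsen the makespan and the algorithm's optimality genuinely depends on breaking the argmax tie in favor of the instance that is last in the shared column order. The paper sidesteps this by explicitly assuming all entries of $L$ are distinct, and under that same assumption your argument is airtight; but as stated, your tie-breaking remark would not survive scrutiny and should either be dropped or replaced by the same distinctness assumption.
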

}

%\vspace{0.05in}
%All proofs of this paper are left to \techreport{Appendix~\ref{appendix:raa}}{\cite{tech-report}}.
 
\begin{algorithm}[t]
	\renewcommand{\algorithmiccomment}[1]{\bgroup//~{\it#1}\egroup}
	\caption{IPA Approach}
	\label{alg:appr-ipa-algo}
	\small
	\begin{algorithmic}[1]  
		\STATE $L$ = cal\_latency(model, X, Y), $S$ = cal\_max\_num\_inst().
		\STATE $Y^* = Y$, $X^* = X$, and $P = \{\}$. \hspace*{35mm} \COMMENT{Init}
		\STATE $BPL_{list}$ = cal\_bpl($L$, $X^*$, $Y^*$).
		\REPEAT
		\STATE $i_t, j_t$ = argmax($BPL_{list}$) 
		\STATE \hspace*{4mm} \COMMENT{get the instance and machine index pairs of the largest BPL}
		\STATE $P = P \bigcup \{x_{i_t} \rightarrow y_{j_t}\}$, $X^* = X^* - \{ x_{i_t} \}$, $S_{j_t} = S_{j_t} - 1$. 
		\IF {$Y^*$ is $\emptyset$ \AND $X^*$ is not $\emptyset$}
		\RETURN $\{\}$ \hspace*{34mm}\COMMENT{No solution found}
		\ENDIF		
		%\IF {$S_{j_t} == 0$}
		%\STATE $Y^* = Y^* - \{ y_{j_t}\}$, Recalculate the $BPL_{list}$.
		%\ENDIF
		\STATE\algorithmicif\ {$S_{j_t} == 0$} \algorithmicthen\ $Y^* = Y^* - \{ y_{j_t}\}$, Recalculate the $BPL_{list}$.
		\STATE\algorithmicend\ \algorithmicif
		\UNTIL{$X^*$ is empty}
		\RETURN P
	\end{algorithmic}
\end{algorithm}

\cut{
{\it Step 1.} Calculate the predicted latency matrix $L$ and the maximum number of instances accepted on each machine $s_j$.
% L O(mn) | embarrassingly parallel, GPU => O(mn/\verb|p|)
% S O(nd) | vectorizable => O(d)

{\it Step 2.} Initialize the available machines as $Y^* = Y$, instance remained to be sent as $X^* = X$, and the placement plan $P = \{\}$.
% O(1)

{\it Step 3.} Get the BPL for all the remaining instances to have a BPL list. $BPL_i = \min_{\{j | Y_j \in Y*\} } L_i^j, \forall i \in \{i | x_i \in X^*\}$
% O(mn) | vectorizable => O(n)

{\it Step 4.} Get the id $i_t = arg\max_{\{i | x_i \in X^*\}}(BPL_i)$ for the instance of the largest BPL.
% O(m)

{\it Step 5.} Get the id $j_t = arg\min_{\{j | Y_j \in Y*\} } L_i^j$ for the machine that achieves BPL with $x_{i_t}$.
% O(n)

{\it Step 6.} $P = P \bigcup \{x_{i_t} \rightarrow y_{j_t}\}$. Update $X^* = X^* - \{ x_{i_t} \}$, $s_j = s_j - 1$. 
% O(1)

{\it Step 7.} When $s_j$ equals 0, further update $Y^* = Y^* - \{ y_{j_t}\}$ and recalculate the BPL list. Otherwise, skip the step.
% 0 or O(m), at most n updates.

{\it Step 8.} When $X^*$ is empty, return P and finish. Otherwise, if $Y^*$ is empty, return ``no solution found'', else go back to step 4.
% m loops in total
}

%\underline{Time Complexity}. The total time complexity is $O(m(m+n)+d)$ by using parallelism and vectorized computations in the implementation. 

\minip{Boosting IPA with clustering.} %\noindent
A remaining issue is that a stage can be up to 80K instances in a production workload, and the number of machines can also be tens of thousands. \cut{For example, solving the PP for a stage of 6716 instances over 8000  machines costs 2.4s, 
%(Figure~\ref{fig:inst-profiling}) 
%2.366s = 1.840s (modeling) + 0.526s (solving)
whereas a scheduler must give the response within 200 msec. %well under a second (cost 0.018s by Fuxi).
}To further reduce the time cost, we exploit the idea that groups of machines or instances may behave similarly in the placement problem. %among large numbers of machines and instances. 
Therefore, we boost the IPA efficiency by clustering both machines and instances without losing much stage latency performance. 
%Towards the same stage, the solving time reduces to 0.054s (including 0.028s for modeling) with clustering.
%0.054s = 0.037s (modeling) + 0.017s (solving)

While there exist many clustering methods~\cite{book-data-clustering}, running them on many instances with large MCI features is still an expensive operation for a scheduler. This motivates us to design a customized clustering method based on MCI properties. 
An instance in a stage is characterized by its query plan (Ch1), instance meta (Ch2), resource plan (Ch3) and additional instance meta (AIM), where Ch1 and Ch3 are the same for all instances, hence not needed in clustering,   
and AIM fully depends on Ch1 and Ch2. So the key factors lie in Ch2, where the input row number and input size are correlated. Thus, we approximately characterize an \textbf{instance} only by its input row number and apply 1D  density-based clustering~\cite{book-data-clustering}. To represent a cluster, we choose the instance with the largest input row number to avoid latency underestimation. A \textbf{machine} in the cluster is characterized by its system states (Ch4) and hardware type (Ch5). We cluster them based on discretized values of Ch4 and Ch5.

Suppose that clustering yields $m'$ instance clusters and $n'$ machine clusters. Then the time complexity of IPA is $O(m\log m + n\log n + m'(m'+n') + d)$, 
 %Consider $m' \ll m, n' \ll n$ when $m$ and $n$ are thousands. The time complexity for a stage of large number of instances reduces to $O(mlogm + nlogn)$.
%\todo{Explain where  $O(mlogm + nlogn)$ come from}
%\todo{Comment on the complexity of other packing algorithms to show that our complexity is lower.}
where a sorting-based method is used for clustering both instances and machines. 
Since $m' \ll m, n' \ll n$, the complexity reduces to $O(m\log m + n\log n)$. 
%\todo{For 1D clustering, the complexity of a kernel density clustering method can be reduced to $O(m)$ with a FFT-based implementation~\cite{kdepy}.}
Compared to other packing algorithms~\cite{LiNN14} that solve linear programming problems with quadratic complexity, the complexity of our algorithm is lower. 

\cut{
Assume after the clustering, we got $m'$ instance clusters $X^c=[X_1^c, X_2^c, ..., X_{m'}^c]$ and $n'$ machine $Y^c=[Y_1^c, Y_2^c, ..., Y_{n'}^c]$. Denote $\hat{X_i}^c$ and $\hat{Y_j}^c$ as the representatives for $X_i^c$ and $Y_i^c$, and we get $\hat{X}^c = [\hat{X_1}^c, \hat{X_2}^c, ..., \hat{X_3}^c]$, $\hat{Y}^c = [\hat{Y_1}^c, \hat{Y_2}^c, ..., \hat{Y_3}^c]$ as representative instance and machine lists.

Our approximate approach can be boosted by clustering with several changes. 
\todo{Wite the pseudocode but leave it to the appendix. The text needs only to highlight the differences, e.g., step 6. }

{\it Step 0.} Clustering instances and machines to get $X^c$ and $Y^c$ and $\hat{X}^c$ and $\hat{Y}^c$.
% O(mlogm + nlogn) -- distribute items to groups need a sort-based method

{\it Step 1.} Calculate the latency matrix $L$ against $\hat{X}^c$ and $\hat{Y}^c$. Calculate $s_j = \sum_{\{ j' | y_{j'} \in Y_{j}^c \}}{s_j'}, \forall j \in [1, n']$ as the maximum instances $Y_j^c$ can take, and $\beta_{i} = |X_i^c|$ as the number of instances in $X_i^c$.

{\it Step 2.} Initialize the available machines as $Y^* = \hat{Y}^c$, instance remained to be sent as $X^* = \hat{X}^c$, and the placement plan $P = \{\}$.

{\it Step 3-5.} keep the same steps to find the $i_t$ and $j_t$ over clusters.

{\it Step 6.} Get the number of instance can be sent $\delta = \min (\beta_{i_t}, s_{j_t})$ as minimum value between the number of instances in the instance cluster $X^c_{i_t}$ and the number of available machines in the machine cluster $Y^c_{j_t}$. Pick $\delta$ instances with top-$\delta$ input row numbers from $X^c_{i_t}$ as $X^\delta$ and $\delta$ machines from $Y^c_{j_t}$ as $Y^\delta$. 
Update $P = P \bigcup \{X^\delta_k \rightarrow Y^\delta_k |k\in [1,\delta] \}$.
Update $s_{j_t} = s_{j_t} - \delta$, $\beta_{i_t} = \beta_{i_t} - \delta$, $X^c_{i_t} = X^c_{i_t} - X^\delta$, $Y^c_{j_t} = Y^c_{j_t} - Y^\delta$.
% O(1) vectorizable for the place \delta pairs.

{\it Step 7-8.} keep the same steps.

It is worth mentioning that instances in the same cluster can still have smaller variances due to the differences in input row numbers. 
To avoid the case when the number of machines is not enough for running the entire instance cluster while the longer-running instances cannot be sent in step 6, 
our approach gives instances with a larger input row number a higher priority to be chosen in the same group. It also sorts the instances in the descending input row number order during clustering to reduce the time complexity during picking instances from the cluster.
\todo{move to appendix.}}

\cut{
\begin{algorithm}[t]
\caption{Approximate Approach}
\label{alg:appr}
\small
\begin{algorithmic}[1]  
\REQUIRE {latency matrix $L$ with size $(m, n)$, \\ resource plan matrix $R$ with size $(m, 2)$, \\ capacity matrix $C$ with size $(n, 2)$, \\ the maximum number of instances a machine can take $s$.}

{\ }\\
\underline{{Initialization Step:}}
\STATE $B \leftarrow 0_{m\times n}$, $w \leftarrow 0_n$
\STATE $L_{min} \leftarrow [\min L_{1, *}, \min L_{2, *}, ..., \min L_{m, *}]$
\STATE $P \leftarrow argsort (L_{min}, ascent=False)$ 

{\ }\\
\underline{{Machine Picking Step:}}
\REPEAT
    \STATE $i \leftarrow P.pop()$
    \STATE $J_{avail} = FilterAvailableMachines(R_{i,*}, C_{i,*}, w, s)$
    \STATE $L'_{i, *} = [L_{i, j} | j \in J_{avail}]$
    \IF {$L'_{i,*}.isEmpty()$}
        \STATE \Return{"NSF"}
    \ENDIF
    \STATE $j \leftarrow argmin(L'_{i,*})$
    \STATE $B_{i,j} \leftarrow 1$
    \FOR{$k \leftarrow 1$ to $2$}       
        \STATE $C_{j, k} \leftarrow C_{j, k} - R_{i, k}$
    \ENDFOR
    \STATE $w_j \leftarrow w_j + 1$
\UNTIL{$P.isEmpty()$}
\STATE \Return{$B$}
\end{algorithmic}
\end{algorithm}
}

\subsection{Resource Assignment Advisor (RAA)}
\label{sec:raa}

%\cut{
%\begin{figure}[t]
%	\centering
%	\includegraphics[width=.48\textwidth]{}
%	\vspace{-0.3in}
%	\caption{RAA example} %Machine system states trends in a representative cluster
%	\vspace{-0.2in}
%	\label{fig:raa-example}
%\end{figure}
%}

After IPA determines the placement plan of a stage, each of its instances is scheduled to run on a specific machine. \rv{Then we propose a Resource Assignment Advisor (RAA) to tune the resource plan of each instance to solve a MOO problem, e.g., minimizing the stage latency, cloud cost, as well as other objectives.}

% by increasing the resources for long-running instances and decreasing the resources for short-running instances after the placement plan after getting the placement plan from IPA.

%-----------------------------------------
\cut{
\minip{Instance-level MOO.} 
For a given instance to be run on a specific machine, 
\rv{
we use $f_1, ..., f_k$ to denote the set of predictive models of the $k$ objectives 
}
and $\theta$ to denote the set of {\em resource configurations} available on that machine. Then the multiple-objective optimization (MOO) problem is defined as: 

\rv{
{\definition{\textbf{Multi-Objective Optimization (MOO).}
		\begin{eqnarray}
		\label{eq:mult_obj_opt_def}
		\arg \min_{\theta} & f^\theta = f(\theta)=& {\left[
			\begin{array}{l}
			f_1(\theta) \\
			%          F_2(\bm{x})= \Psi_2(\bm{x}) \\
			%          F_3(\bm{x})= -c*S\\
			...\\
			f_k(\theta)
			\end{array}
			\right]} \\
		\nonumber s.t. & & \theta \in \Sigma \subseteq \mathbb{R}^d
		\end{eqnarray}
}}}\rv{\noindent{where $\Sigma$ denotes the set of all possible resource configurations.}}

A point $f' \in \mathbb{R}^k$ Pareto-dominates another point $f''$ iff $\forall i \in [1, k], f_i' \leq f_i''$  and $\exists j \in [1, k], f_j' < f_j''$. A point $f^*$ is \textbf{Pareto Optimal} iff there does not exist another point $f'$ that Pareto-dominates it. For instance, the \textbf{Pareto Set} $f = [f^{\theta^1}, f^{\theta^2}, ...]$ includes all the Pareto optimal points in the objective space $\Phi \subseteq \mathbb{R}^k$, under the configurations [$\theta^1, \theta^2, \ldots$], and is the solution to the MOO problem.
		
For a given instance on a specific machine, our instance-level models offer a predictive model for each objective. Then we can use any existing MOO method to solve the instance-level MOO problem. In our work, we use the Progressive Frontier (PF) algorithm in our prior work, UDAO~\cite{spark-moo-icde21}, since it is shown to be the fastest. 
}

\rv{
\minip{Stage-level MOO.}
Consider the stage-level MOO problem defined in Def.~\ref{def:stagemoo}. By ignoring the constant $B$, we can rewrite it in the following abstract form using \textbf{aggregators ($g_1, ..., g_k$)}, each for one objective, to be applied to $m$ instances:
%{\definition{\textbf{Stage-level Multi-Objective Optimization (MOO).}
		\begin{align}
		\label{eq:stage-mult_obj_opt_def}
		\arg \min_{\Theta} & \,\,\, F^{\Theta} = F(\Theta)=& {\left[
			\begin{array}{l}
			F_1(\Theta) = g_1(f_1(\Theta_1), ..., f_1(\Theta_m)) \\ %-\lambda
			%          F_2(\bm{x})= \Psi_2(\bm{x}) \\
			%          F_3(\bm{x})= -c*S\\
			...\\
			F_k(\Theta) = g_k(f_k(\Theta_1), ..., f_k(\Theta_m))
			\end{array}
			\right]} %\\
		%\nonumber s.t. & &  {\begin{array}{l}
		%	\Theta_i \in \Sigma^*_i \subseteq \mathbb{R}^{d}		
		%	\end{array}}
		\end{align} %}}
\noindent{where $\Theta = [\Theta_1, \Theta_2, ..., \Theta_m] = [\theta_1^{l_1}, \theta_2^{l_2}, ... \theta_m^{l_m}]$, $\theta_i^{l_i}$ denotes the $l_i$-th resource configuration of the $i$-th instance ($i \in \{ 1,...,m \}$), and the aggregator $g_j$ ($j$=1,...,$k$) is either a \texttt{sum} or \texttt{max}. 
% and $\Sigma_i^*$ is set of resource configurations for instance $i$.
}
}

%\noindent{where $\Theta = [\Theta_1, \Theta_2, ..., \Theta_m] = [\theta_1^{\lambda_1}, \theta_2^{\lambda_2}, ... \theta_m^{\lambda_m}]$, $\theta_j^{\lambda_j}$ denotes  the $\lambda_j$-th resource configuration in the $j$-th instance, $j \in 1 \ldots m$, $\Phi_j$ is the general function mapping from $\Theta$ to the $j$-th stage-level objective, and $g_i$ is either the \texttt{sum} or \texttt{max} operator.} 

%\todo{Why $\Phi_j$? Remove F(x)}

%If we denote $\Phi_j$ as the function mapping from resource plans to the $j$-th stage-level objective. 

As in the instance-level, we define the \textbf{stage-level resource configuration} as $\Theta = [\theta_1^{l_1}, \theta_2^{l_2}, ... \theta_m^{l_m}]$, with its corresponding $F^\Theta$ in the objective space $\Phi \subseteq \mathbb{R}^k$. Then we can define stage-level \textbf{Pareto Optimality} and the \textbf{Pareto Set} similarly as before.

Note that we are particularly interested in two aggregators. The first is \texttt{max}: the stage-level value of one objective is the maximum of instance-level objective values over all instances, e.g., latency. The second is \texttt{sum}: the stage-level value of one objective is the sum of instance-level objective values, e.g., cost. 

% \cut{The $i$-th stage-level objective value is $F_i = op_i(f_i^{l_1}, f_i^{l_2}, ... f_i^{l_m})$. Denote $\Phi_j$ as the function mapping from resource plans to the $j$-th stage-level objective, then we have $F_j = \Phi_j(\theta_1^{l_1}, \theta_2^{l_2}, ... \theta_m^{l_m})$, where $\theta^{l_n}$ denote  the $l_n$-th resource plan in the $n$-th instance. In our example, we have operator max for latency, and sum for cost. And we  can define the \textbf{stage-level resource plan} as $\Theta = [\theta_1^{l_1}, \theta_2^{l_2}, ... \theta_m^{l_m}]$, the \textbf{stage-level solution} $F = [F_1, F_2, ... F_k]$. }

%\underline{Example.} In Figure~\ref{fig:raa-example}, if we pick up $[\theta_1^1 ,\theta_2^2]$ as the stage-level resource configuration, use max and sum as the operators, then the stage-level solution is $[max(150, 100)=150, sum(5, 5)=10]$.  Its Pareto optimality is the same as in standard MOO. In this example, if we choose the resource configuration marked in red, the stage-level solution is $[300, 24]$, it is dominated by $[150, 10]$, hence not a Pareto solution.

%\vspace{0.1in}
\cut{
\underline{Example.} \todo{consider cutting more to reserve space} In Figure~\ref{fig:raa-example}, we have $f_1$ as the predictive model for latency and  $f_2$ for cost. Also we have $g_1 = $ \texttt{max} and $g_2 = $ \texttt{sum}. Suppose we pick up $\Theta = [\theta_1^1 ,\theta_2^2]$ as the stage-level resource configuration.  

In the upper left Table of Inst 1, from the row indexed by $\theta_1^1$ we know that $f_1(\theta_1^1)=150$, and $f_2(\theta_1^1)=5$. Similarly from Table of Inst 2 we have $f_1(\theta_2^2)=100$, and $f_2(\theta_2^2)=5$. 

Now we can compute the stage-level objectives using $g_1=$ \texttt{max} and $g_2=$ \texttt{sum}.  More specifically, we have $F_1(\Theta) = g_1(f_1(\theta_1^1), f_1(\theta_2^2)) = max(150, 100) = 150$, and $F_2(\Theta) = g_2(f_2(\theta_1^1), f_2(\theta_2^2)) = sum(5, 5) = 10$. Therefore, the stage-level solution $F^\Theta$ is $[150 ,10]$.
}

\rv{%\todo{consider cutting more to reserve space} 
\underline{Example.} 
In Figure~\ref{fig:raa-example}, we have $f_1$ as the predictive model for latency ($g_1 = $ \texttt{max}) and $f_2$ for cost ($g_2 = $ \texttt{sum}). Suppose $\Theta = [\theta_1^1 ,\theta_2^2]$ as the stage-level resource configuration.  
Then we have the stage-level latency as $F_1(\Theta) = \max(150, 100) = 150$, the stage-level cost as $F_2(\Theta) = sum(5, 5) = 10$, and the stage-level solution $F^\Theta$ is $[150 ,10]$.}

\rv{
While one may consider using existing MOO methods to solve Eq.~\eqref{eq:stage-mult_obj_opt_def}, it is still subject to a large number ($O(md)$) of variables. To enable a fast algorithm, we next introduce our Hierarchical MOO approach developed in the divide-and-conquer paradigm. 

\definition{\textbf{Hierarchical MOO.} We solve the instance-level MOO problem for each of the $m$ instances of a stage separately, for which we can use any existing MOO method (in practice, the Progressive Frontier (PF) algorithm~\cite{spark-moo-icde21} since it is shown to be the fastest). Suppose that there are $k$ stage-level objectives, each with its \texttt{max} or \texttt{sum} aggregator to be applied to $m$ instances. Our goal is to efficiently find the stage-level MOO solutions from the instance-level MOO solutions, for which we use $f_i^j$ to denote the j-th \po solution in the i-th instance by using $\theta_i^j$. }

}

\cut{In this following, we first propose a general solution to the Hierarchical MOO problem with \texttt{sum} or \texttt{max} \rv{aggregators}. Then we propose another efficient solution for $k=2$ only. }
%RAA tunes the resource plan of each instance to solve a MOO problem, e.g., minimizing both the stage latency and cost (CPU-hour) by increasing the resources for long-running instances and decreasing the resources for short-running instances after the placement plan after getting the placement plan from IPA.

\minip{General hierarchical MOO solution.}
%\cut{ % overspace
%Suppose that there are $k$ user objectives, where $k_1$ of them use the \textit{max} aggregate operator and $k_2$ of them use \textit{sum}. So $k_1 + k_2 = k$. 
%For $k_1$ objectives using \texttt{max}, we can enumerate all the stage-level values of one objective after obtaining the lower and upper bound of this objective. Although there may be multiple $k_1$ objectives, all of their combinations are easily found by an enumeration. 
%For $k_2$  objectives using \texttt{sum}, due to the complexity of the sum operation, we can not afford the enumeration, which grows exponentially in the number of instances, $O(k_2 * p_{max})^m)$ where $p_{max}$ is the maximum number of instance-level Pareto points among m instances. 
%Therefore, we resort to any existing MOO method, denoted by the function \textit{findOptimal},  that returns one Pareto-optimal solution for the $k_2$  objectives at lower complexity.
%At the end of the procedure, we add a filter to remove the non-optimal solutions.
%Algorithm~\ref{algo_stage_moo} gives the full  description.
%}
Suppose that there are $k$ user objectives, where $k_1$ objectives use the \texttt{max} and $k_2$ objectives use \texttt{sum}, $k_1 + k_2 = k$. 
%\qi{old}For the $k_1$ objectives using \texttt{max}, we enumerate all the stage-level values of one objective after obtaining the lower and upper bound of this objective and find all their combinations easily.
\rv{
We are also given instance-level Pareto  sets. Then we want to select a solution for each instance such that the corresponding stage-level solution is Pareto optimal.
Algorithm~\ref{algo_stage_moo} gives the full  description.
After initialization, line $2$ obtains the lower and upper bounds for each of the $k_1$ \texttt{max} objectives. In line $3$ ($find\_all\_possible\_values$), we first find for each \texttt{max} objective all the possible values as one list, and then use the Cartesian product of these lists as the candidates for the $k_1$ \texttt{max} objectives.
% returns a matrix with the index of \texttt{max} objectives as rows, and stage-level values within the bound as columns. Step $4$ returns combinations among all rows of the matrix returned by Step $3$.
}
%Although there may be multiple $k_1$ objectives, all of their combinations are easily found by an enumeration. 
\rv{
In lines $4$ to $10$, given one candidate, we try to 
find the corresponding Pareto-optimal solution for the $k_2$ \texttt{sum} objectives.
}
For the $k_2$  objectives using \texttt{sum}, due to the complexity of the sum operation, we can not afford the enumeration, which grows exponentially in the number of instances, 
%$O((k_2 * p_{max})^m)$
$O(p_{max}^m)$
where $p_{max}$ is the maximum number of instance-level Pareto points among $m$ instances. 
To reduce the complexity, we resort to any existing MOO method, denoted by the function \textit{find\_optimal}, 
% that returns one Pareto-optimal solution for the $k_2$  objectives at lower complexity.
\rv{
 that (in line 6) for each instance selects one Pareto-optimal solution for the $k_2$  objectives.
}
At the end of the procedure, we add a filter to remove the non-optimal solutions.
%Algorithm~\ref{algo_stage_moo} gives the full  description.

\begin{algorithm}[t]
	\caption{General Hierarchical MOO}
		\SetKwData{Left}{left}\SetKwData{This}{this}\SetKwData{Up}{up} \SetKwFunction{findRange}{findRange}\SetKwFunction{findAllSolutions}{findAllSolutions}\SetKwFunction{enum}{enum}\SetKwFunction{findOptimal}{findOptimal}\SetKwFunction{calObjValues}{calObjValues}\SetKwFunction{filterDominated}{filterDominated} \SetKwInOut{Input}{input}\SetKwInOut{Output}{output}
	\label{algo_stage_moo}
	\small
	\begin{algorithmic}[1] 
		\REQUIRE {$f_i^j, i\in[1,m], j\in[1,p_i]$, where $p_i$ is the number of \po solutions in $i$-th instance
		and $ [\theta_i^1, ..., \theta_i^{p_i}]$}
%		\STATE solutionList = []
		\STATE $PO_{\Theta} = [], PO_{F}=[]$
		\STATE min{M}List, max{M}List = {\it find\_range}($\bm{f}$)
%		\STATE stageMatrix = find\_all\_solutions($\bm{f}$, min{M}List, max{M}List)
%		\STATE k1CombinationList = enum(stageMatrix)
		\rv{
		\STATE k1Combs = {\it find\_all\_possible\_values}($\bm{f}$, min{M}List, max{M}List)}	
		\FOR {$c$ \textbf{in} k1Combs}
%			\STATE initialize selectedSolution
			\FOR{$i$ \textbf{in} $m$}
%				\STATE optimal\_solution, index = find\_optimal($c$, $f_i^j$)
				\STATE optimal\_solution, index = {\it find\_optimal}($c$, [$f_i^1, ... f_i^{p_i}$])
%				\STATE selectedSolution[i][index] = 1
				\STATE $\Theta_i = \theta_i^{index}$
			\ENDFOR
			\rv{
			\STATE $PO_{\Theta}.append(\Theta)$, $PO_{F}.append(F(\Theta))$}
%			\STATE add selectedSolution in solutionList
		\ENDFOR
%		\STATE stageSolutions = cal\_obj\_values(solutionList, $\bm{f}$)
%		\qi{\STATE stageSolutions = calculate\_obj\_values(solutionList, $\bm{f}$)}
%		\STATE stageParetoSolutions = filter\_dominated(stageSolutions)
%		\RETURN stageParetoSolutions
		\rv{\STATE $PO_{F}, PO_{\Theta}$ = {\it filter\_dominated}($PO_{F}, PO_{\Theta}$)
		\RETURN $PO_{F}, PO_{\Theta}$}
	\end{algorithmic}
\end{algorithm}

%\cut{
%\begin{algorithm}
%	\caption{An algorithm with caption}\label{alg:cap}
%	\SetKwData{Left}{left}\SetKwData{This}{this}\SetKwData{Up}{up} \SetKwFunction{findRange}{findRange}\SetKwFunction{findAllSolutions}{findAllSolutions}\SetKwFunction{enum}{enum}\SetKwFunction{findOptimal}{findOptimal}\SetKwFunction{calObjValues}{calObjValues}\SetKwFunction{filterDominated}{filterDominated} \SetKwInOut{Input}{input}\SetKwInOut{Output}{output}
%	
%	\Input{instance-level Pareto solutions $\bm{f}$ ($f_i^{\theta_i^j}, i\in[1,m], j\in[1,p_i]$)} 
%	\Output{stage-level Pareto solutions $\bm{F}$ ($F_{\Theta_k}, \Theta_k = [\theta_1^{k_1}, \theta_2^{k_2}, ..., \theta_m^{k_m}]$)}
%	\BlankLine 
%	
%	\emph{solutionList = []}\; 
%	\emph{min{M}List, max{M}List = \findRange{$\bm{f}$}};\\ 
%	\emph{stageMatrix = \findAllSolutions{$\bm{f}$, min{M}List, max{M}List}};\\ 
%	\emph{k1CombinationList = \enum{stageMatrix}};\\ 
%	\For{$c$ \textbf{in} k1CombinationList}{ 
%		% 	 	\For{$w$ \textbf{in} wList} {
%		% 	 	   \emph{initialize selectedSolution matrix};\\
%		\For{$i$ \textbf{in} $m$}{
%			optimal\_solution, index = \findOptimal{$c$, $f_i^{\theta_i}$};\\
%			\emph{selectedSolution[i][index] = 1};\\}
%		\emph{add selectedSolution in solutionList};\\}
%	\emph{stageSolutions = \calObjValues{solutionList, $\bm{f}$}};\\
%	\emph{stageParetoSolutions = \filterDominated{stageSolutions}};\\
%	\caption{General Heirarchical MOO}
%	\label{algo_stage_moo} 
%\end{algorithm}
%}

\underline{Example.} 
\cut{ % cut for camera-ready
In Figure~\ref{fig:raa-example}, first we calculate the lower and upper bound for stage-level latency, which is the max of instance-level latencies: $lower = max(min(150, 55), min(300, 100)) = 100$, $upper = max(max(150, 55), max(300, 100)) = 300$. Then we enumerate all the possible stage-level latency values $\in [100, 300]$: $(100, 150, 300)$.
Now for latency=$100$, the only feasible $\Theta$ choice is $[\theta_1^2, \theta_2^2]$, and the stage-level solution is $[\max(100, 55)=100, sum(20, 5)=25]$. Similarly for latency=$150$, the stage-level solution is $[150, 10]$. The interesting case is when latency=$300$, %function \textit{findOptimal} is used to return a Pareto-optimal solution.  
%by considering trade-offs among $k2$ objectives. 
there are two possible solutions, $[\theta_1^2, \theta_2^1]$ with value $[max(55, 300)=300, sum(20, 4)=24]$, and $[\theta_1^1, \theta_2^1]$ with value $[max(150, 300)=300, sum(5, 4)=9]$. The latter is returned by \textit{find\_optimal} because it dominates the first one. 
%Compared with a possible choice $[\theta_1^2, \theta_2^1]$ with value $[max(55, 300)=300, sum(20, 4)=24]$, it always returns a non-dominated solution $[\theta_1^1, \theta_2^1]$ with value $[max(150, 300)=300, sum(5, 4)=9]$.
%\cut{The interesting case is when latency=$300$. We need to enumerate two pairs, 
%$[\theta_1^1, \theta_2^1]$, and $[\theta_1^2, \theta_2^1]$. The correspoding the stage-level solutions are $[max(150, 300)=300, sum(5, 4)=9]$ and $[max(55, 300)=300, sum(20, 4)=24]$, and we would choose $[300, 9]$.} 
The last step is to run the filter and eventually, the stage-level MOO solution set is $[[100, 25], [150, 10], [300, 9]]$. 
}
%---- camera-ready version
\chenghao{
In Figure~\ref{fig:raa-example}, we calculate the lower and upper bounds of the stage-level latency as $\max(55, 100) = 100$ and $\max(150, 300)$ $= 300$ respectively. Then we enumerate all the possible stage-level latency values within the bounds ($100, 150, 300$) to collect potential \po solutions.
For latency=$100$, there is only one feasible $\Theta$ choice ($\Theta=[\theta_1^2, \theta_2^2]$) and hence the only solution is $[100, 25]$. Similarly for latency=$150$, we get another solution $[150, 10]$.
When latency=$300$, there are two solutions: $[300, 24]$ with $\Theta=[\theta_1^2, \theta_2^1]$ and $[300, 9]$ with $\Theta=[\theta_1^1, \theta_2^1]$.
The first one will be filtered because it is dominated by the latter one (line 6).
Finally, we further filter solutions being dominated in the chosen set (line 11) and get the stage-level MOO solutions as $[[100, 25], [150, 10], [300, 9]]$. 
}

\proposition \label{algo2_optimality} For a stage-level MOO problem, Algorithm~\ref{algo_stage_moo} guarantees to find a subset of stage-level Pareto optimal points.

\cut{
Regarding the time complexity. If $k_2 = 1$, the worst case time complexity of Algorithm~\ref{algo_stage_moo} is $O(m^{k1+1}*max(p_i)^{k1+1})$. If $k2 > 1$, the time complexity will also depends on the MOO methods used. 
}

\begin{algorithm}[t]
	\caption{\rv{RAA Path policy to construct $PO_{\Theta}$ and $PO_F$}}
	\label{alg:appr-raa}
	\small
	\begin{algorithmic}[1]  
		\REQUIRE {\rv{$p_i, \theta_i^j, f_i^j$, $\forall i\in [1,m], j\in[1, p_i]$, where we pre-sort $f_i^j$ and \\$\theta_i^j$ in the descending order of latency for each instance.}}
		\STATE \rv{$u_i^j = lat(f_i^j), \forall i\in [1,m], j\in[1,p_i]$. }
		\STATE \rv{$PO_{\Theta} = [], PO_F=[]$}, $\lambda = [1, 1, ..., 1]$, smax=inf %Initialize Pareto Set:
		\STATE Build the max heap $Q$ with data points $[(u_i^1, i)], \forall i \in [1,m]$. % O(mlogm)
		\REPEAT
		\STATE qmax, $i = Q.pop()$ % O(1)
		\IF {qmax < smax}
		\STATE \rv{$PO_{\Theta}.append(\Theta^\lambda)$, $PO_F.append(F(\Theta^\lambda))$}, smax = qmax
		\ENDIF
		\STATE $\lambda = \pi_i(\lambda)$
		\STATE \algorithmicif\ {$\lambda_i > p_i$} \algorithmicthen\ \Return{$PO_{\Theta}$, $PO_F$}
		\STATE
		\algorithmicend\ \algorithmicif
		\STATE $Q.push((u_i^{\lambda_i}, i))$ %O(logm)
		\UNTIL{True}
	\end{algorithmic}
\end{algorithm}
%}{}

\minip{Fast hierarchical MOO solution: RAA Path.} %\label{sec:raa-path}
In the case that there are only two objectives, whose aggregators are \texttt{max} and \texttt{sum}, respectively,  we provide another algorithm that can resolve the Hierarchical MOO efficiently.
The key idea here is that the two objectives are indeed making tradeoffs. The increase in one objective often leads to a decrease in the other. With this observation, we design a new approach called "RAA path".

%Notice PF algorithm generates Pareto-optimal solutions progressively, we get the Pareto $f_i$ sorted by one objective. WLOG, we assume each $f_i$ is listed with the descending order of latency, and hence the ascending order of cost.

\cut{
Figure~\ref{fig:raa-path} shows an example of the Pareto sets of 3 instances in a stage, where each instance has 2, 4, 3 Pareto-optimal solutions, and each solution is sorted by latency in descending order.
We start the procedure by selecting the first Pareto solution for each instance, this leads to the $[\theta_1^1, \theta_2^1, \theta_3^1]$ and $F = [300, 14]$. Notice that $300$ is corresponding to $\theta_2^1$, so we replace $\theta_2^1$ by the entry below it in the same Pareto set. Now we have the second solution which is $[\theta_1^1, \theta_2^2, \theta_3^1]$ and $F = [150, 15]$. Continue in this manner, each time select the $\theta$ corresponding to the max latency and replace it with the item below it the same Pareto set. It terminates until there is no entry below the current $\theta$. 
}

\underline{Example.}
Figure~\ref{fig:raa-path} shows an example of the Pareto sets of 3 instances in a stage, where each instance has 2, 4, 3 Pareto-optimal solutions, and each solution is sorted by latency in descending order.
We start the procedure by selecting the first Pareto solution for each instance, this leads to $\Theta^\lambda = [\theta_1^1, \theta_2^1, \theta_3^1]$ and $F^\lambda = [300, 14]$. Notice that $300$ is corresponding to $\theta_2^1$, so we replace $\theta_2^1$ by the entry below it in the same Pareto set. Now we have the second solution as $\Theta^\lambda = [\theta_1^1, \theta_2^2, \theta_3^1]$ and $F^\lambda = [150, 15]$. Continue in this manner; each time, select the $\theta$ corresponding to the instance of the max latency and replace it with the item below it in the same Pareto set. It terminates until there is no entry below the current $\theta$. 
%\todo{Use [...] not (...)}
%Figure~\ref{fig:raa-path} shows an example of the RAA path, which could effectively iterate the possible stage-level latencies in decreasing order by choosing the resource plan for each instance from their Pareto sets.

%Algorithm~\ref{alg:appr-raa}. 
%To give a formal descrption, we further introduce two notations. $\lambda = [\lambda_1, \lambda_2, ..., \lambda_m]$ as a state, such that $\Theta^\lambda = [\theta_1^{\lambda_1},  \theta_2^{\lambda_2}, ..., \theta_m^{\lambda_m}]$; $\pi_i: \lambda \rightarrow \lambda'$ is a {\it step} such that $\lambda'_{i} = \lambda_i+1$ and $\lambda'_{i'} = \lambda_{i'}$ when $i' \neq i$. Then Algorithm~\ref{alg:appr-raa} gives the algorithm. 
%\todo{check}

For a formal description, we use the following notation: 
(1)~$\lambda = [\lambda_1, \lambda_2, ..., \lambda_m]$ as a state, where $\lambda_i$ is the index of the Pareto point in instance $i$ with the $\lambda_i$-largest latency;
(2)~$\Theta^\lambda = [\theta_1^{\lambda_1},  \theta_2^{\lambda_2}, ..., \theta_m^{\lambda_m}]$;
%$\pi_i: \lambda \rightarrow \lambda'$ is a {\it step} such that $\lambda'_{i} = \lambda_i+1$ and $\lambda'_{i'} = \lambda_{i'}$ when $i' \neq i$. 
(3)~$\pi_i: \lambda \rightarrow \lambda'$ is a {\it step} such that the $i^{th}$ dimension in state $\lambda$ is increased  by 1.
(4)~$p_i$ is the number of instance-level Pareto solutions for instance $i$.
%\techreport{
Then Algorithm~\ref{alg:appr-raa} gives the RAA Path algorithm.
%}{\chenghao{
%The RAA path first presort the \po solutions in the descending order.
%The RAA path first presort the \po solutions in the descending order.
%The RAA path first presort the \po solutions in the descending order.
%}
%}

\cut{
	We now define the problem of how to construct a stage-level Pareto set from a set of instance Pareto sets. 
	
	Without further notation, we use the term ``instance'' for ``instance cluster'' in the RAA path discussion.
	Given $m$ instance-level Pareto sets, where each instance $i$ has $p_i \geq 1$ Pareto solutions, consider:
	
	\begin{itemize}
		\item $\theta_i = [\theta_i^1, \theta_i^2,..., \theta_i^{p_i}], \forall i \in [1, m]$ as a list of resource plans that achieves Pareto-optimal solutions for instance $i$, where $\theta_i^{j}$ is a resource plan of CPU and memory sizes.
		\item $f_i = [f_i^{\theta_i^1}, f_i^{\theta_i^2}, ..., f_i^{{\theta_i^{p_i}}}], \forall i \in [1, m]$ as a list of Pareto-optimal solutions for instance $i$, where $f_i^{\theta_i^j} = (l_i^{\theta_i^j}, o_i^{\theta_i^j})$ is the Pareto solution of latency and cost. Denote $l_i = [l_i^{\theta_i^1}, l_i^{\theta_i^2}, ..., l_i^{{\theta_i^{p_i}}}]$.
		\item $\Theta_k = [\theta_1^{k_1}, \theta_2^{k_2}, ..., \theta_m^{k_m}]$ as the resource plans that achieves the $k^{th}$ stage Pareto-optimal solution, where $\theta_i^{k_i} \in \theta_i$.
		\item $F^{\Theta_k} = \Phi(\Theta_k) = (\max_i l_i^{\theta_i^{k_i}}, \sum_i o_i^{\theta_i^{k_i}})$,  $\forall i \in [1, m]$ is the $k^{th}$ stage Pareto-optimal solution, where $\Phi$ is the function mapping from resource plans to the two stage-level objectives
		\item $\Theta = [\Theta_1, \Theta_2, ...]$ is the a set of resource plans that achieve stage Pareto-optimal solutions.
		\item $F = [F^{\Theta_1}, F^{\Theta_2}, ...]$ is the stage-level Pareto-optimal solutions.
		\item $\lambda = [\lambda_1, \lambda_2, ..., \lambda_m]$ as a state, such that $\Theta^\lambda = [\theta_1^{\lambda_1},  \theta_2^{\lambda_2}, ..., \theta_m^{\lambda_m}]$.
		\item $\pi_i: \lambda \rightarrow \lambda'$ is a {\it step} such that $\lambda'_{i} = \lambda_i+1$ and $\lambda'_{i'} = \lambda_{i'}$ when $i' \neq i$
	\end{itemize}
	
	\cut{
		\begin{figure}[t]
			\centering
			\includegraphics[width=.48\textwidth]{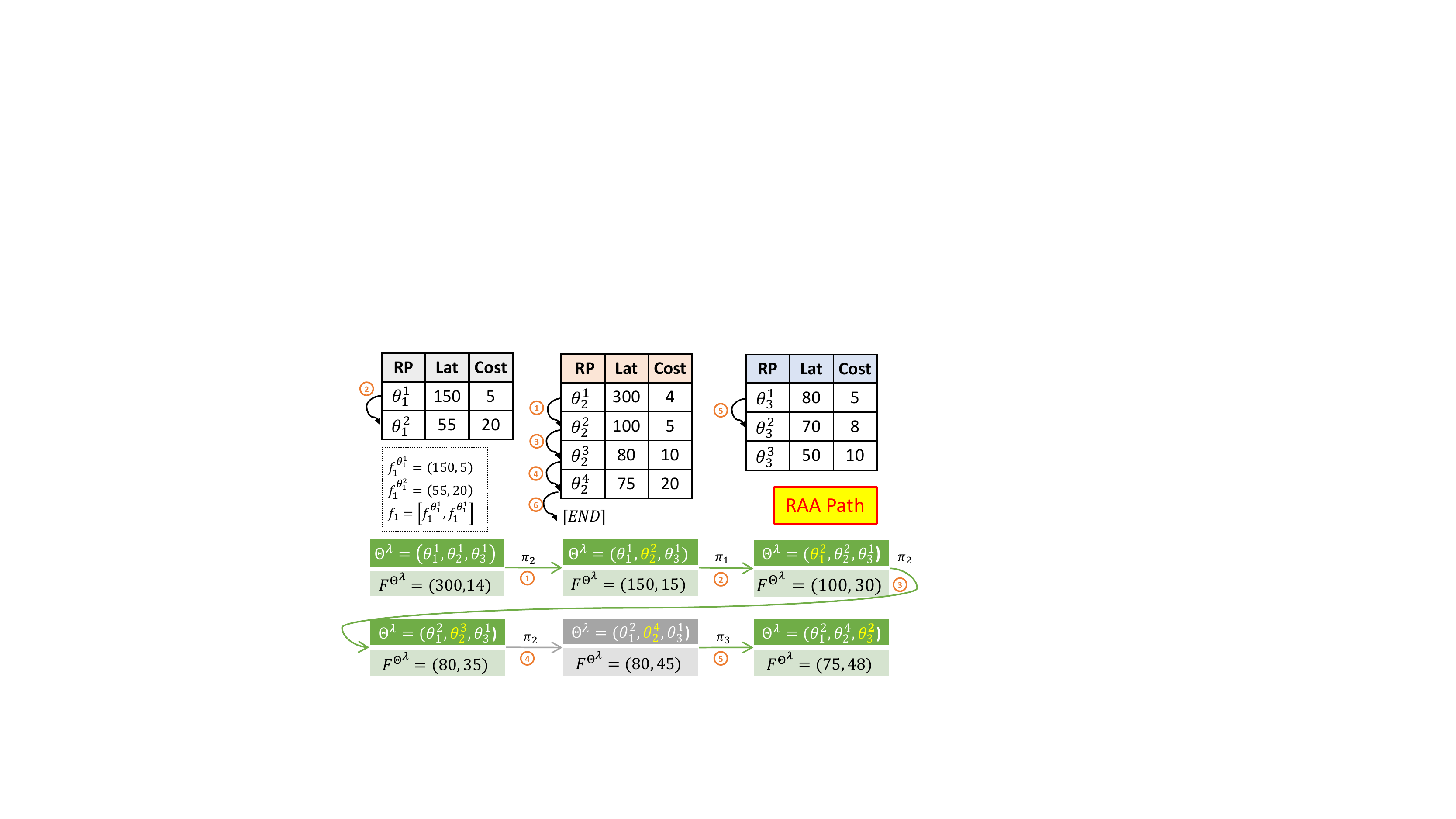}
			\vspace{-0.3in}
			\caption{RAA path example} %Machine system states trends in a representative cluster
			\vspace{-0.2in}
			\label{fig:raa-path}
		\end{figure}
	}
}
%The proof of the correctness is shown in our extension version \todo{cite extension}

\proposition \label{algo3_optimality} For a stage-level MOO problem with two objectives using \texttt{max} and \texttt{sum}, respectively, 
\techreport{Algorithm~\ref{alg:appr-raa}}{RAA path}
guarantees to 
%\todo{find a subset of stage-level Pareto optimal points at the complexity of ...} 
find the full set of stage-level Pareto optimal points at the complexity of 
%$O( m \cdot p_{max}\log p_{max} + p_{max} \cdot m\log m)$.
$O( m \cdot p_{max}\log (m\cdot p_{max}))$.

%Denote $\Pi$ as our RAA path policy that generates a sequence of finite steps. 
%\newenvironment{claim}[1]{\par\noindent\underline{Claim:}\space#1}{}
%\begin{claim}
%The full set of stage-level Pareto-optimal solutions $\Theta$ can be found by stepping from the initial state $[1,1,...,1]$ by following $\Pi$.
%\end{claim}

%The time complexity of the algorithm is $O((m + \sum_i{p_i})logm)$. Its desciption can be found in our Tech Report.

%\minip{Clustering option}
%As to the IPA, we can use the clustering method to improve efficiency. We left the detail in the Appendix.

\minip{RAA with Clustering.}
For efficiency, we run both the General hierarchical MOO and RAA Path methods by clustering the instances and machines, where $m$ is replaced by $m'<<m$ 
	%and $n$ will be replaced by $n'<<n$ 
	in the complexity.

\minip{Resource plan recommendation. }
After getting the stage-level Pareto set, we reuse UDAO's Weighted Utopia Nearest (WUN) strategy to recommend the resource plan, which includes a configuration for each instance of the stage.
It recommends the resource plan whose objectives could achieve the smallest distance to the Utopia point, which is the hypothetical optimal in all objectives. \cut{The distance computation is also weighted by a weight vector for latency and cost based on previous traces to capture their importances.}

%In our running example, the result $[\theta_1^1, \theta_2^2]$ is our recommendation. It leads to the stage-level objective $[150, 10]$.

%\todo{define the Pareto Point collection}

% \todo{a proof in the verbose version after submission}
% Assume the first non-Pareto point is from the state i s_i, check s_{i-1} and induce to a contradiction.

\section{Experiments}
\label{sec:expr}
%
%\begin{figure}[t]
%	\centering
%    \includegraphics[width=.48\textwidth]{}
%    \caption{Workload statistics for 3 workload}
%    \label{fig:workload-stats}
%\end{figure}

This section presents the evaluation of our models and stage optimizer. 
Using production traces from MaxCompute, we first analyze our models and compare them to state-of-the-art modeling techniques. We then report the end-to-end performance of our stage optimizer using a simulator of the extended MaxCompute environment (detailed in\techreport{~\ref{appendix:simulator}}{~\cite{tech-report}}) by replaying the production traces. 

%including the breakdown comparison to SOTA algorithms and the end-to-end comparisons to the existed scheduling strategy~\cite{fuxi-vldb14} in Alibaba.

\cut{
We consider three representative workloads in a total of $\approx$ 0.62 million jobs, with $\approx$ 2 million stages and $\approx$ 0.12 billion instances. 
Each workload consists of productive jobs from a business department over five consecutive days in Alibaba. 
As shown in Fig.~\ref{tab:workload-stats}, workloads A, B, and C show various properties: % Specifically,
\cut{
\begin{enumerate}
	\item the total number of jobs (41K-0.4M), stages (0.1M-1.0M), and instances (34M-50M).
	\item the average number of stages per job (2.4-5.0)
	\item the average number of instances per stage (85-1224).
	\item the average number of operators per stage (3.7-6.3)
	\item the average job latency (31-377s).
	\item the average stage latency (15-182s).
	\item the average instance latency (16-71s).
\end{enumerate} 
}
Workload A involves the most significant number of jobs (2-10x compared to B and C), while most are short-running jobs. 
Workload C has the least number of jobs but the longest job/stage/instance latencies, with its average job latencies 3-12.5x larger than A and B. It also has more instances on average, 12-14x times than A and B. 
%Workload B has the most complex DAG topologies, with an average of 2x number of stages per job and an average of 1.2-1.7x number of operators per stage. 
}
{\bf Workload Characteristics.} See Table~\ref{tab:workload-stats} for descriptions of production workloads A-C. We give additional details in
 \techreport{Appendix~\ref{appendix:workload}}{~\cite{tech-report}}.

\cut{
{\bf Hardware Property.}
Our experiments are deployed over 3 machines, each with 2x Intel Xeon Platinum 8163 CPU with 24 physical cores, 500G RAM, and 8 GeForce GTX 2080 GPU cards.
}

\cut{
\begin{figure*}[t]
	\centering
	\vspace{-0.3in}
	\begin{tabular}{ccc}
	\hspace{-0.3in}
		\subfigure[\small{MCI profiling}]
		{\label{fig:expr-mci}\includegraphics[height=3.2cm,width=.35\textwidth]{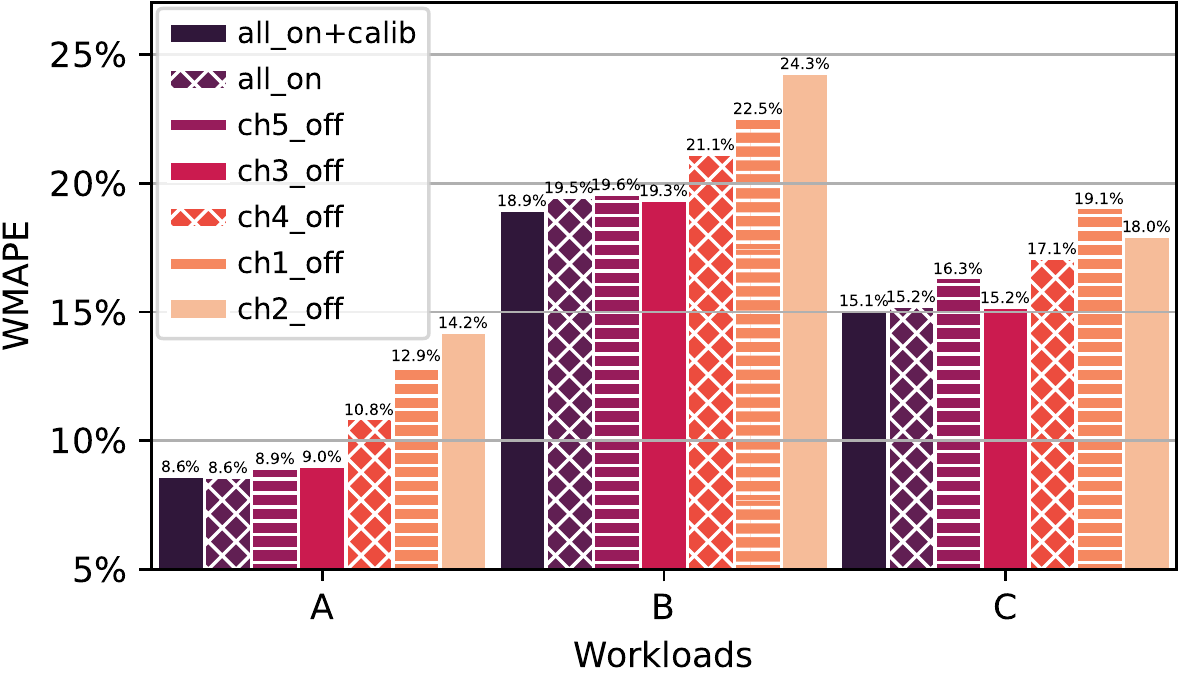}}

		\subfigure[\small{Comparison among different cardinality choices}]
		{\label{fig:expr-card}\includegraphics[height=3.2cm,width=.35\textwidth]{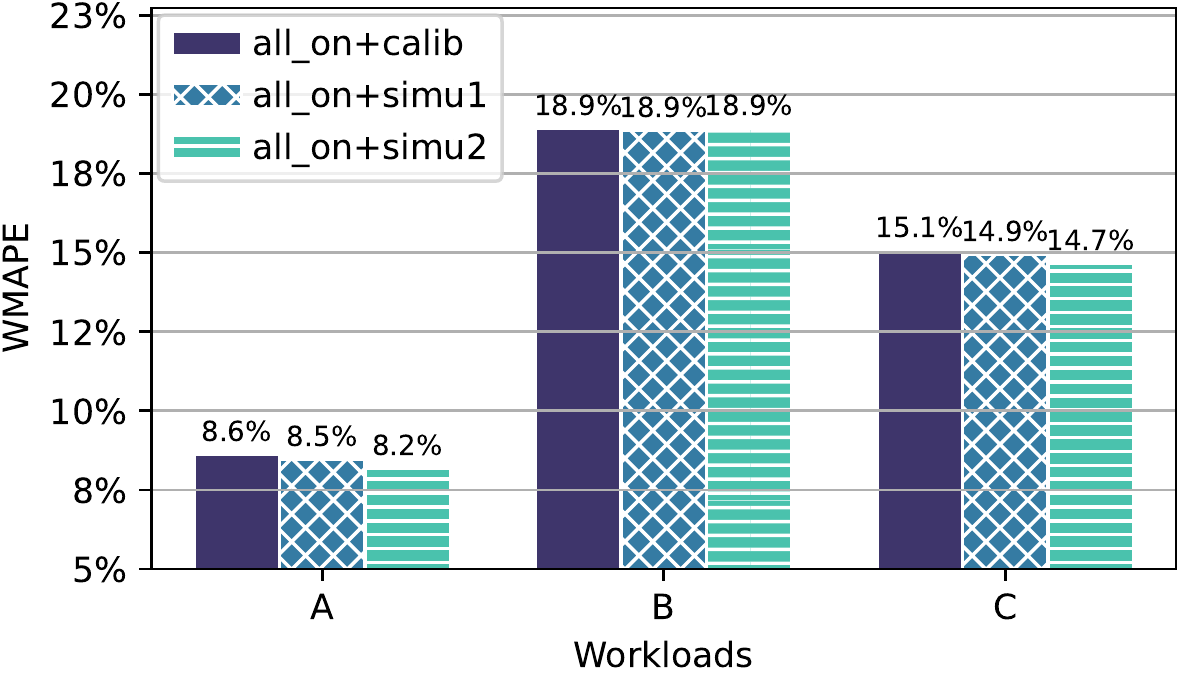}}
		
		\subfigure[\small{Comparison with single-machine SOTAs}]
		{\label{fig:sota-single-mach}\includegraphics[height=3.2cm,width=.35\textwidth]{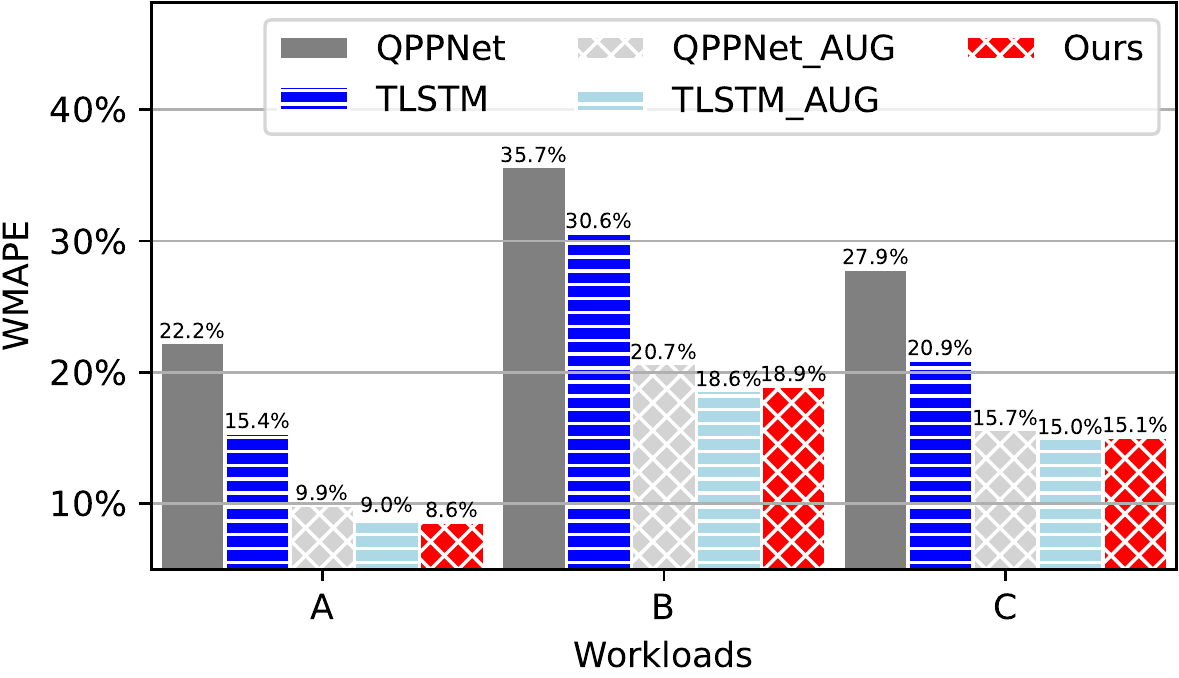}}				
		
%		\subfigure[\small{DD VS IPA overhead}]
%		{\label{fig:ss-dd-c-small}\includegraphics[height=3.2cm,width=.35\textwidth]{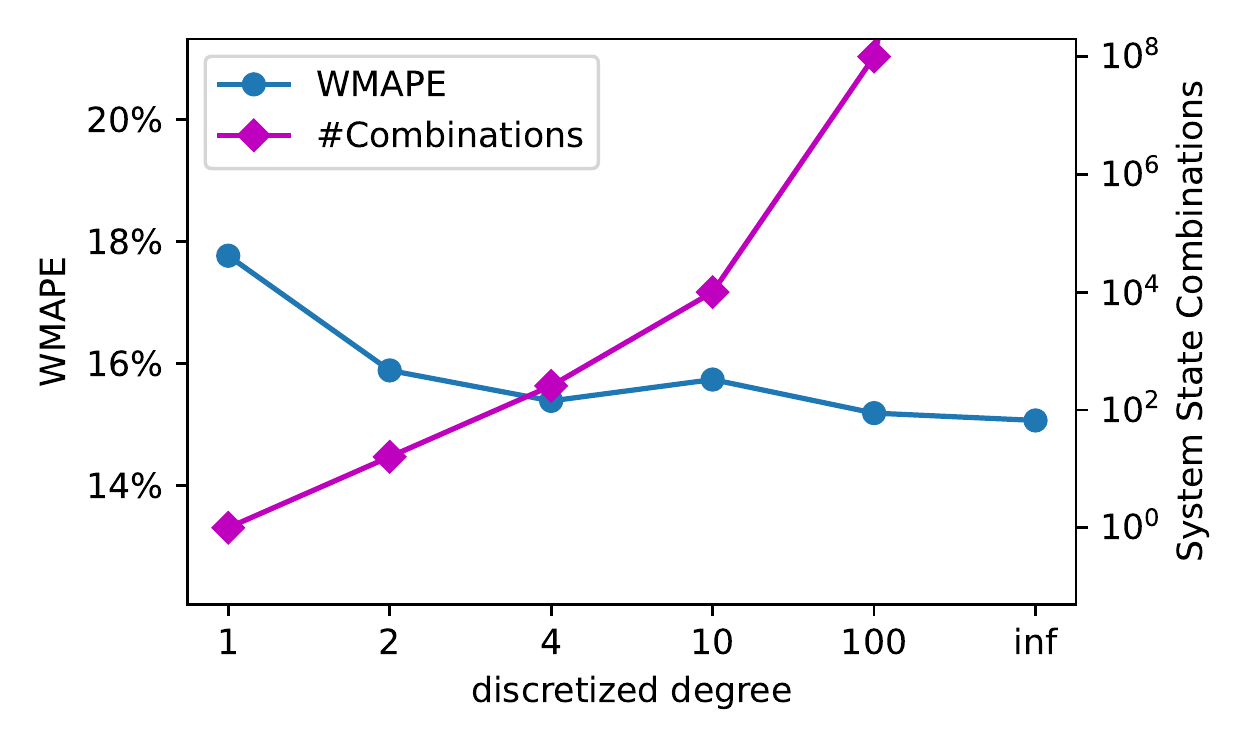}}		
				
	\end{tabular}
	\vspace{-0.25in}
	\caption{Performance of our instance-level models, compared to the state-of-the-art (SOTA) methods}
	\label{fig:expr-in-analyses}
	\vspace{-0.1in}
\end{figure*}
}

\begin{figure*}[t]
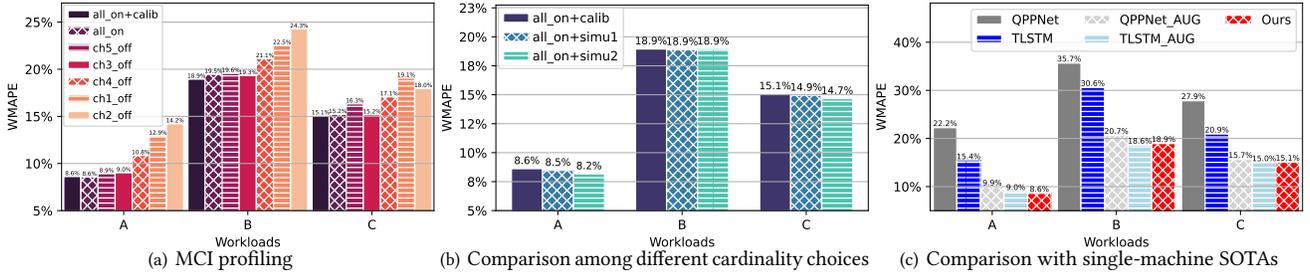

	\centering
%	\vspace{-0.3in}
	\begin{tabular}{ccc}
%	\hspace{-0.3in}
		\subfigure[\small{MCI profiling}]
		{\label{fig:expr-mci}\includegraphics[height=3.3cm,width=.32\textwidth]{figures/expr-multi-input.pdf}}

		\subfigure[\small{Comparison among different cardinality choices}]
		{\label{fig:expr-card}\includegraphics[height=3.3cm,width=.32\textwidth]{figures/expr-card.pdf}}
		
		\subfigure[\small{Comparison with single-machine SOTAs}]
		{\label{fig:sota-single-mach}\includegraphics[height=3.3cm,width=.32\textwidth]{figures/sota-single-mach.pdf}}				
		
%		\subfigure[\small{DD VS IPA overhead}]
%		{\label{fig:ss-dd-c-small}\includegraphics[height=3.2cm,width=.35\textwidth]{figures/expr-ss-dd-C.pdf}}		
				
	\end{tabular}
%	\vspace{-0.23in}
	\vspace{-0.15in}
	\caption{Performance of our instance-level models, compared to the state-of-the-art (SOTA) methods}
	\label{fig:expr-in-analyses}
	\vspace{-0.1in}
\end{figure*}

\cut{
\begin{table*}[t]
\begin{minipage}{0.53\linewidth}
\ra{.6}
\small
\centering
\addtolength{\tabcolsep}{-2.5pt}
\begin{tabular}{cccccccccc}\toprule
%\begin{tabular}{llllllllll}\toprule
\multicolumn{1}{c}{WL} & \multicolumn{1}{c}{\begin{tabular}[c]{@{}c@{}}Num.\\ Jobs\end{tabular}} & \multicolumn{1}{c}{\begin{tabular}[c]{@{}c@{}}Num. \\ stages\end{tabular}} & \multicolumn{1}{c}{\begin{tabular}[c]{@{}c@{}}Num.\\ Inst\end{tabular}} & \multicolumn{1}{c}{\begin{tabular}[c]{@{}c@{}}\#stages\\ /job\end{tabular}} & \multicolumn{1}{c}{\begin{tabular}[c]{@{}c@{}}\#insts\\ /stage\end{tabular}} & \multicolumn{1}{c}{\begin{tabular}[c]{@{}c@{}}\#ops\\ /stage\end{tabular}} & \multicolumn{1}{c}{\begin{tabular}[c]{@{}c@{}}Avg Job \\ Lat(s)\end{tabular}} & \multicolumn{1}{c}{\begin{tabular}[c]{@{}c@{}}Avg Stage \\ Lat(s)\end{tabular}} & \multicolumn{1}{c}{\begin{tabular}[c]{@{}c@{}}Avg Inst \\ Lat(s)\end{tabular}} \\

\midrule
A  & 405K       & 970K         & 34M        & 2.40      & 35.45      & 3.71      & 30.97      & 14.64        & 16.85       \\
B  & 173K       & 858K         & 36M        & 4.95      & 42.02      & 6.27      & 120.15     & 39.72        & 15.63       \\
C  & 41K        & 100K         & 50M        & 2.42      & 505.51     & 5.31      & 376.83     & 181.88       & 71.08     \\
\bottomrule
\end{tabular}
\caption{Workload statistics for 3 workload over 5 days}
\vspace{-0.1in}
\label{tab:workload-stats}	
\end{minipage}
\begin{minipage}{0.46\linewidth}
        \centering
	    \setlength{\belowcaptionskip}{0pt}  
	    \setlength{\abovecaptionskip}{3pt}      
        \includegraphics[width=.99\linewidth,height=1.8cm]{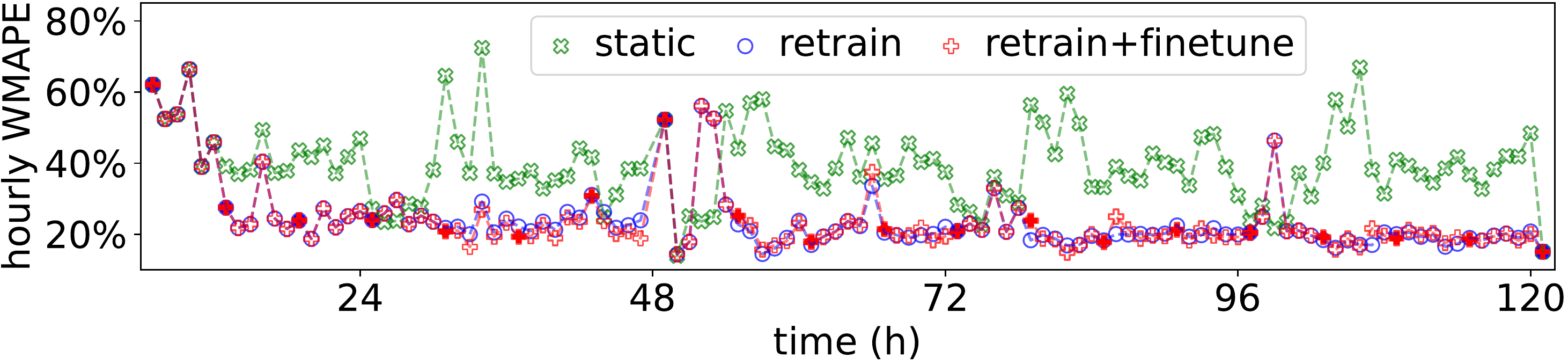}
        \vspace{-0.05in}
        \captionof{figure}{\rv{WMAPEs over time (workload C)}}
        \label{fig:hourly-wmape}
\end{minipage}
\vspace{-0.20in}
\end{table*}
}

\cut{
\begin{table*}[t]
\begin{minipage}{0.48\linewidth}
\ra{.7}
\footnotesize
\addtolength{\tabcolsep}{-2.5pt}
\begin{flushleft}
\begin{tabular}{cccccccccc}\toprule
%\begin{tabular}{llllllllll}\toprule
\multicolumn{1}{c}{WL} & \multicolumn{1}{c}{\begin{tabular}[c]{@{}c@{}}Num.\\ Jobs\end{tabular}} & \multicolumn{1}{c}{\begin{tabular}[c]{@{}c@{}}Num. \\ stages\end{tabular}} & \multicolumn{1}{c}{\begin{tabular}[c]{@{}c@{}}Num.\\ Inst\end{tabular}} & \multicolumn{1}{c}{\begin{tabular}[c]{@{}c@{}}\#stages\\ /job\end{tabular}} & \multicolumn{1}{c}{\begin{tabular}[c]{@{}c@{}}\#insts\\ /stage\end{tabular}} & \multicolumn{1}{c}{\begin{tabular}[c]{@{}c@{}}\#ops\\ /stage\end{tabular}} & \multicolumn{1}{c}{\begin{tabular}[c]{@{}c@{}}Avg Job \\ Lat(s)\end{tabular}} & \multicolumn{1}{c}{\begin{tabular}[c]{@{}c@{}}Avg Stage \\ Lat(s)\end{tabular}} & \multicolumn{1}{c}{\begin{tabular}[c]{@{}c@{}}Avg Inst \\ Lat(s)\end{tabular}} \\

\midrule
A  & 405K       & 970K         & 34M        & 2.40      & 35.45      & 3.71      & 30.97      & 14.64        & 16.85       \\
B  & 173K       & 858K         & 36M        & 4.95      & 42.02      & 6.27      & 120.15     & 39.72        & 15.63       \\
C  & 41K        & 100K         & 50M        & 2.42      & 505.51     & 5.31      & 376.83     & 181.88       & 71.08     \\
\bottomrule
\end{tabular}
\end{flushleft}
\caption{Workload statistics for 3 workload over 5 days}
\vspace{-0.1in}
\label{tab:workload-stats}	
\end{minipage}
\begin{minipage}{0.5\linewidth}
	\begin{flushright}
	    \setlength{\belowcaptionskip}{0pt}  
	    \setlength{\abovecaptionskip}{3pt}      
        \includegraphics[width=.95\linewidth,height=1.5cm]{figures/wmape-hourly-trends/wmape_temporal_alimama_ecpm_algo_3_tight.pdf}
        \vspace{-0.05in}
        \captionof{figure}{\rv{WMAPEs over time (workload C)}}
        \vspace{-0.05in}
        \label{fig:hourly-wmape}
	\end{flushright}      
\end{minipage}
\vspace{-0.20in}
\end{table*}
}

\begin{table}[t]
\ra{.7}
\footnotesize
\addtolength{\tabcolsep}{-2.5pt}
\begin{flushleft}
\begin{tabular}{cccccccccc}\toprule
%\begin{tabular}{llllllllll}\toprule
\multicolumn{1}{c}{WL} & \multicolumn{1}{c}{\begin{tabular}[c]{@{}c@{}}Num.\\ Jobs\end{tabular}} & \multicolumn{1}{c}{\begin{tabular}[c]{@{}c@{}}Num. \\ stages\end{tabular}} & \multicolumn{1}{c}{\begin{tabular}[c]{@{}c@{}}Num.\\ Inst\end{tabular}} & \multicolumn{1}{c}{\begin{tabular}[c]{@{}c@{}}\#stages\\ /job\end{tabular}} & \multicolumn{1}{c}{\begin{tabular}[c]{@{}c@{}}\#insts\\ /stage\end{tabular}} & \multicolumn{1}{c}{\begin{tabular}[c]{@{}c@{}}\#ops\\ /stage\end{tabular}} & \multicolumn{1}{c}{\begin{tabular}[c]{@{}c@{}}Avg Job \\ Lat(s)\end{tabular}} & \multicolumn{1}{c}{\begin{tabular}[c]{@{}c@{}}Avg Stage \\ Lat(s)\end{tabular}} & \multicolumn{1}{c}{\begin{tabular}[c]{@{}c@{}}Avg Inst \\ Lat(s)\end{tabular}} \\

\midrule
A  & 405K       & 970K         & 34M        & 2.40      & 35.45      & 3.71      & 30.97      & 14.64        & 16.85       \\
B  & 173K       & 858K         & 36M        & 4.95      & 42.02      & 6.27      & 120.15     & 39.72        & 15.63       \\
C  & 41K        & 100K         & 50M        & 2.42      & 505.51     & 5.31      & 376.83     & 181.88       & 71.08     \\
\bottomrule
\end{tabular}
\end{flushleft}
\caption{Workload statistics for 3 workload over 5 days}
\vspace{-0.1in}
\label{tab:workload-stats}	
\vspace{-0.2in}
\end{table}

\begin{figure}
	\begin{flushright}
	    \setlength{\belowcaptionskip}{0pt}  
	    \setlength{\abovecaptionskip}{3pt}      
        \includegraphics[width=.95\linewidth,height=1.7cm]{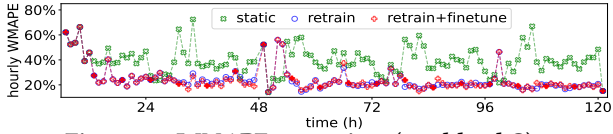}
        \vspace{-0.05in}
        \captionof{figure}{\rv{WMAPEs over time (workload C)}}
        \vspace{-0.05in}
        \label{fig:hourly-wmape}
	\end{flushright}      
	\vspace{-.25in}
\end{figure}

\subsection{Model Evaluation}
\label{sec:expr-model}

To train models, we partition the traces of workloads A-C into training, validation, and test sets, and tune the hyperparameters of all models using the validation set. 
As the default, we train MCI+GTN over all channels augmented by AIM for the instance latency prediction. 
More details about training  are given in \techreport{Appendix~\ref{appendix:model-setup}}{~\cite{tech-report}}.

%{\it Data Split.} 
%Notice that the number of instances in a stage has huge skewness as shown in Fig.~\ref{fig:cdf-stage-inst-num-low}. 
%To address the issue, we count the frequency of each DAG structure from all instances, and divide the DAG structures in each workload into four categories by their frequency.
%We construct our test and validation set by applying different sampling strategies in each frequency type. 
%For the high (>1K) and median (100-1K) type, we apply stratified sampling to pick 200 and 20 instances over each DAG structure; for the meidan-low (5-100) types, we randomly sample 20\% instances per structure; on the remaining low frequency type, we randomly sample 20\% over all instances.
%Each of our workloads has separate 40K-95K instances for validation and testing. And we use the remaining instances for training. 
%We follow the common rules for training and hpyperparameter tuning in the machine learning literature. 
%We show more details in Appendix~\ref{appendix:model-setup}.

%\noindent
%\underline{\it Model Measurements.} 
We report model accuray on 5 metrics: (1)~weighted mean absolute percentage error (WMAPE), (2)~median error (MdErr), (3)~95 percentile Error (95\%Err), (4)~Pearson correlation (Corr),  and (5)~error of the cloud cost (GlbErr).
We choose WMAPE {\it  as the primary, also the hardest, metric} because it assigns more weights to  long-running instances, which are of more importance in resource optimization.

%\rv{
%We also evaluate the training overhead when deploying the models online. 
%The data preparation stage costs $\sim$2 minutes at MaxCompute every day to update the training data in each department. 
%In the training stage, we execute the periodical retraining and the model update costs 3.5 hours on average over 16 GPUs (including hyperparameter tuning). See more details about in \techreport{Appendix~\ref{appendix:training-overhead}}{~\cite{tech-report}}.
%}

\cut{
{\bf Modeling Overheads.} We run ETL queries at MaxCompute (with 219 parallelisms) to get the data for training from the existed log tables, which costs $\sim$2 minutes on average every day to extract data for training in a department.
Regarding the training overhead, 
it costs 3.5 hours on average to periodically retrain the model and 0.9 hour on average to incrementally finetune the model over 16 GPUs.
See \techreport{Appendix~\ref{appendix:training-overhead}}{~\cite{tech-report}} for more details.
}

%\cut{
%\noindent
%\underline{\it Model Tuning.} 
%We follow the common rules for training and hpyperparameter tuning in the machine learning literature. 
% }

%\noindent
%\underline{\it Default Settings.} 
%By default, we train MCI+GTN over all channels augmented by AIM for instance latency, and follow the common rules for training and hpyperparameter tuning in the machine learning literature. 

\cut{
{\bf Model Targets.} We group the targets of the latency measurements in big data systems into the following four categories.
\begin{enumerate}
	\item Single-instance Stage Latency (SiSL)
	\item Single-instance Operator Latency (SiOL)
	\item Multi-instance Stage Latency (MiSL)
	\item Multi-instance Operator Latency (MiOL)
\end{enumerate}

This paper aims at {\bf single-instance stage latency (SiSL)} as the modeling target. Besides, we model SiOL for the breakdown error profiling and MiSL for performance comparison. MiOL is the modeling target used by CLEO~\cite{cleo-sigmod20} to evaluate the exclusive cost of a query operator.
}

%\subsubsection{Internal Analyses}

%\begin{table}
%\centering
%\small
%\newrobustcmd{\B}{\bfseries}
%\begin{tabular}{cccccc}
%\hline
%\B WL & \B WMAPE  & \B MdErr  & \B 95\%Err & \B Corr   & \B GlbErr \\ \hline
% A     & 8.6\%  & 7.4\%  & 62.4\%  & 96.6\% & 1.9\%  \\
% B & 19.0\% & 15.1\% & 71.5\%  & 96.4\% & 5.4\%  \\
% C & 15.1\% & 12.7\% & 97.3\%  & 98.4\% & 5.1\%  \\ \hline
%\end{tabular}	
%\caption{Modeling Performance for instance latency}
%\label{tab:expr-sisl}
%\end{table}

%-------------------------------------------------------------------
\underline{Expt 1:}  {\it Performance Profiling.} We report the performance of our best models for each workload in Table~\ref{tab:expr-sisl}.
%Here are our observations.
First, our best model achieves 9-19\% WMAPEs and 7-15\% MdErrs over the three workloads with our MCI features (Ch1-Ch5 plus AIM).
Second, WMAPE is a more challenging metric than MdErr. The distribution of the instance latency in production workloads is highly skewed to short- and median-running instances. Therefore, MdErr is determined by the majority of the instances but does not capture well those long-running ones.
Third, the error rate of the total cloud cost is 3-4.5x smaller than WMAPEs, because individual errors of single instances could have a canceling effect on the global resource metric, hence better model performance on the global metric.
%which helps conduct more precise resource optimization.

Our breakdown analyses further show that IO-intensive operators and the dynamics of system states are the two main sources of model errors.
First, by training a separate model for the instance-level operator latency and calculating the error contribution of each operator type, we find the top-3 most inaccurate operators as \verb|StreamLineWrite|, \verb|TableScan|, and \verb|MergeJoin|, all involving frequent IO activities. It is likely that current traces in MaxCompute miss the features to fully characterize IO operations. 
As for the system dynamics, we train a separate model by augmenting system states with the average system metrics during the lifetime of an instance (which is not realistic to get before running an instance). 
%But this allows me to show that the model further reduces WMAPE and MdErr to 6-12\% and 5-10\%, pointing the error source  to the system dynamics. 
But this allows us to show that the model further reduces WMAPE and MdErr to 6-12\% and 5-10\%, pointing the error source  to the system dynamics.
The comprehensive results are shown in  \techreport{Appendix~\ref{appendix:expt-breakdown-analyses}}{~\cite{tech-report}}.

\underline{Expt 2:} {\it Multi-channel Inputs}.
 We now investigate the importance of each channel to model performance. 
We train separate models with different input channel choices, including (1) the leave-one-out choices for a channel \verb|x| (\verb|Chx_off|), (2) the five basic channels \verb|all_on|, and (3) the five basic channels and the AIM \verb|all_on+calib|.

Fig.~\ref{fig:expr-mci} shows our results. 
The top-3 important features are the instance meta (Ch2), the query plan (Ch1), and the system states (Ch4). By turning off each, WMAPE gets worse by 18-66\%, 16-50\%, and 9-27\%, respectively, compared to \verb|all_on|. 
The effect of the hardware type (Ch5) is not significant, likely because the hardware types used are all high-performance ones. 
The effect of the resource plan (Ch3) is small here due to its sparsity in the feature space; e.g., only 26 different resource plans are observed in workload B. But their effects will be different once they are tuned widely in MOO. 
\underline{Expt 3:} {\it Impact of Cardinality.} 
We train separate models by deriving AIMs from different cardinalities: 
(1)~\verb|all_on+calib| represents the stage-oriented cardinality from MaxCompute's CBO;
(2)~\verb|all_on+simu1|  applies the {\it ground-truth} stage-level cardinalities while assuming operators in different instances  share the stage selectivities;
(3)~\verb|all_on+simu2| applies the {\it ground-truth} instance-level cardinalities for operators.
Fig.~\ref{fig:expr-card} shows at most a 0.2\% and 0.4\% WAMPE can be reduced by using \verb|all_on+simu1| and \verb|all_on+simu2|, respectively.
This means that improving cardinality estimation alone cannot improve much latency prediction in big data systems, which is consistent with CLEO's observation \cite{cleo-sigmod20}. 

%\cut{
%{\ }\\
%{\bf Expt 3: Impact of the Cardinality}
%%\begin{figure}[t]
%%	\centering
%%    \includegraphics[width=.48\textwidth]{figures/expr-card.pdf}
%%    \caption{The impact of cardinality on WMAPE}
%%    \label{fig:expr-card}
%%\end{figure}
%
%\noindent
%{\bf Purpose:} Investigate how important a good cardinality estimation is to the model performance in big data systems. \\
%{\bf Methodology:} Train separate models by using three different types of AIMs deriving from 
%\begin{itemize}
%	\item \verb|all_on+calib|: the current CBO's multi-instance operator cardinality 	
%	\item \verb|all_on+simu1|: a simulated CBO with the actual query-level operator cardinality over multiple instances.
%	\item \verb|all_on+simu2|: a simulated CBO with the actual instance-level operator cardinality for each instance 
%\end{itemize}
%So \verb|all_on+calib| uses the best realistic features, \verb|all_on+simu1| uses the best feature we get with the ideal query-level cardinality estimation, and \verb|all_on+simu2| uses the best feature we get with the ideal instance-level cardinality estimation.
%\\
%{\bf Results present and conclusions:} \\
%as shown in Fig.~\ref{fig:expr-card}, we can further reduce at most 0.2\% WMAPE with a CBO that can perfectly estimate query-level cardinality and at most 0.4\% WMAPE with a CBO that can perfectly estimate instance-level cardinality. So improving cardinality estimation alone cannot improve the latency prediction in big data systems, which also aligns with CLEO's observation \cite{cleo-sigmod20}.
%}

%\noindent
\cut{
\underline{Expt 4:} {\it Impact of Discretized System States.} 
We train separate models over system states with different discretized degrees (DD), and compare the model performances against optimization overhead in IPA. 
Fig.~\ref{fig:ss-dd-c-small} shows the example in workload C, when increasing DD, WMAPE converges to certain range, while the complexity of the resource optimization problem is exponentially increased. By exploring the DD for each workload, we choose DD=10 for workload A, and DD=4 for both B and C when doing resource optimization. The similar results for A and B are in Appendix~\ref{appendix:expr-modeling-ss}. \todo{upate the secondary y-axis plot to time cost}.
}

%\cut{
%{\ }\\
%{\bf Expt 4: Impact of Discretized System States}
%
%\noindent
%{\bf Purpose:} Explore the tradeoffs between the model performance and the degree of discretized degree (DD) of the system states.
%{\bf Methodology:} 
%train separate models with different DDs for the system state channel.
%\\
%{\bf Results present and conclusions:} as shown in Fig.~\ref{fig:ss-dd}, we observe
%\begin{enumerate}
%	\item When increasing DD, the number of system state combinations increases exponentially, so will be the complexity and the time cost for the scheduling problem. \todo{replace the secondary y-axis to the IPA's cost for solving a stage}
%	\item When increasing DD, WMAPE first decreases and then converges.
%	\item WMAPE increases when DD is increased from 4 to 10 in Figures~\ref{fig:ss-dd-b} and \ref{fig:ss-dd-c} due to the model overfitting.
%	\item We choose DD as 10 for workload A and 4 for workload B and C for the scheduling problem.
%\end{enumerate}
%}

%\subsubsection{External Comparison}

%{\ }\\
\underline{Expt 4:} {\it Comparison with QPPNet}~\cite{qppnet-vldb19} {\it and TLSTM}~\cite{tlstm-cost-estimator-vldb19}. 
We next compare different modeling tools, including (1) original QPPNet, (2) original TLSTM, (3) our extension, MCI-based QPPNet, (4) MCI-based TLSTM, and (5) MCI+GTN (our new graph embedder). 
Fig.~\ref{fig:sota-single-mach} shows our results.
First, QPPNet and TLSTM achieve 22-36\% and 15-31\% WMAPE, respectively, 2-3x larger than our best model, MCI+GTN.
Second, using our MCI, MCI+QPPNet and MCI+TLSTM can improve WMAPE by 12-15\% and 6-12\%, respectively, while MCI+TLSTM and MCI+GTN achieve close performance.
Thus, our MCI framework offers both good modeling performance and  extensibility to adapt SOTAs from a single machine to big data systems. 

\cut{
\underline{Expt 5:} {\it Comparison of Modeling Targets}. 
CLEO~\cite{cleo-sigmod20} learns a model for \textit{end-to-end query operator latency} across multiple instances. While the CLEO code is not publicly available and we lack source data to train query operator latency, we construct an indirect comparison via modeling the \textit{end-to-end stage latency across multiple instances} against our \textit{instance-level latency} to illustrate the issues of CLEO's  choice of the modeling target.
For the \textit{end-to-end stage latency}, our best method can achieve only 37-54\% WMAPE, much worse than 8.6-19\% WMAPE of our \textit{instance latency model},  showing that it is more challenging to predict end-to-end latency due to the high variability across different instances.
}
%over multiple instances due to various scheduling delays. 
%More modeling results are in  Appendix~\ref{appendix:expr-modeling-cleo}.

%CLEO learns end-to-end latency model of query operators, and based on the model, minimizes the latency of a stage by tuning the partition count. Its model concerns the latency of multiple instances running over different machines.
%While the modeling target in CLEO is not directly comparable to ours, we construct an indirect comparison by investigating the modeling performance of the end-to-end stage latency to address the potential issues in CLEO towards the choice of modeling target.
%Towards the end-to-end stage latency, our best model can only achieve 37-54\% WMAPE.
%We find that it is more challenging to predict the end-to-end latency over multiple instances due to various scheduling delays. More modeling results are in  Appendix~\ref{appendix:expr-modeling-cleo}.

\cut{
{\ }\\
{\bf Expt 6: With SOTA Approaches in big data Systems}\\
We compare with CLEO~\cite{cleo-sigmod20}towards the choice of the modeling target. CLEO models the multi-instance operator latency (MiOL) while our modeling target is the single-instance stage latency (SiSL). 
To interpret the differences between single-instance objective and multi-instance objective, we compare the WMAPEs achieved on the multi-instance stage latency (MiSL) and its single-instance version SiSL, to enable an indirect comparison to CLEO\footnote{Code is not published}.

As shown in Table~\ref{tab:model-performance}, the multi-instance latency (MiSL) has a larger WMAPE 37-54\% compared to the single-instance latency (SiSL) 9-19\%. It is more challenging to predict MiSL because the end-to-end latency of a multi-instance objective could involve various scheduling delays.
}

\underline{Expt 5:} {\it Training Overhead and Model Adaptivity}.
We next report on the training overhead:  
The data preparation stage costs $\sim$2 minutes every day to collect new traces from each department using MaxCompute jobs (with a parallelism of 219). We collected 5-day traces from three internal departments, totaling 0.62M jobs, 2M stages, and 407G bytes. 
In the training stage, we run 24h periodic retraining (\textit{retrain}) at midnight of each day when the workloads are light, which takes 3.5 hours on average over 16 GPUs (including hyperparameter tuning). Optionally, we run fine-tuning every 6 hours (\textit{retrain+finetune}), at the average cost of 0.9h each. 
%See more details about in \techreport{Appendix~\ref{appendix:training-overhead}}{~\cite{tech-report}}.

To study model adaptivity, we design two settings of workload drifts: (a) a realistic setting where queries from 5 days are injected in temporal order; (b) a hypothetical worst case where queries are injected in decreasing order of latency. In both settings, we compare a \textit{static} method that trains the model using the data from the first 6 hours and never updates it afterward, with our 
	\textit{retrain} and \textit{retrain+finetune} methods. 
%our periodic \textit{retrain} method and the \textit{retrain+finetune} method. 
Our results, as shown in Fig.~\ref{fig:hourly-wmape} for workload C, include: 
(1) The static approach can reach up to 72\% WMAPE in some hours of a day, demonstrating the presence of workload drifts in our dataset.  
(2) \textit{retrain} and \textit{retrain+finetune}  adapt better to the workload drifts, keeping errors mostly in the range of 15-25\% after days of training, while having tradeoffs between them: If the workload patterns in 24h windows are highly regular, \textit{retrain} works slightly better than \textit{retrain+finetune} by avoiding overfitting to insignificant local changes (workload A). Otherwise, \textit{retrain+finetune} works better by adapting to significant local changes (workload B).
More discussion is given in \techreport{Appendix~\ref{appendix:perf-over-workload-shift}}{~\cite{tech-report}}. 

\chenghao{
Additional experiments are in \techreport{Appendix~\ref{appendix:expt-modeling-ss} and~\ref{appendix:expt-modeling-cleo}}{~\cite{tech-report}}.
}

\begin{table*}
%\vspace{-0.3in}
\begin{minipage}{0.64\linewidth}
\ra{0.7}
\addtolength{\tabcolsep}{-2.8pt}
\newrobustcmd{\B}{\bfseries}
\newrobustcmd{\BR}{\color{red}}
\newrobustcmd{\BL}{\color{blue}}
\footnotesize
\cut{
\begin{tabular}{@{}lrclrclrclrclrclrclrcl@{}}\toprule
& \multicolumn{3}{c}{Coverage}
& \multicolumn{3}{c}{$Lat_{s}^{(in)} \downarrow$} &
 \multicolumn{3}{c}{$Cost_{s} \downarrow$} & \multicolumn{3}{c}{$avg(T_{s})$ (ms)} & \multicolumn{3}{c}{$\max(T_{s})$ (ms)} \\
\cmidrule{2-4} \cmidrule{5-7} \cmidrule{8-10} \cmidrule{11-13} \cmidrule{14-16}
SO choice & A & B & C & A & B & C & A & B & C & A & B & C & A & B & C \\ \midrule
\verb|IPA(Org)| &100\% &100\% &100\%  & 11\%   & 20\% & 51\% & 5\%  & 9\% & 15\% & 280 & 17  & 1.4K   &2.0K & 18 & 1.8K \\
\verb|IPA(Cluster)| &100\% &100\% &100\% & \B 12\%  & \B 17\% & \B 50\% & \B 4\%  & \B 7\%  & \B 14\% & \B 15 & \B 10 &\B 33  & \B 24 & \B 10 & \B 36  \\
\midrule % aws-cost
\verb|IPA+RAA(W/O_C)| &100\% &100\% &100\% & 8\% & 79\% & 58\% & 31\% & 76\% & 75\% & 2.5K & 177 & 3.5K & 19K & 220 & 6.3K \\
\verb|IPA+RAA(DBSCAN)| &100\% &100\% &100\%& 27\% & 69\% & 67\% & 21\% & 64\% & 74\% & 223   & 132 & 258  & 937 & 136 & 452 \\
\verb|IPA+RAA(General)| &100\% &100\% &100\% & 36\% & 80\% & 76\% & 29\% & 75\% & 75\% & 100 & 20  & 167   & 241 & 23 & 229 \\
\B \texttt{IPA+RAA(Path)} &100\% &100\% &100\% & \B36\% & \B80\% & \B76\% & \B29\% & \B75\% & \B75\% & \B98    & \B17  & \B156  & \B 226 & \B 18 & \B 224 \\ \midrule
\BR \texttt{EVO} & \BR   0\%& \BR 82\%& \BR 0\%    & \BR   --\%& \BR -36\%& \BR --\%  & \BR   --\%& \BR 66\%& \BR --\%      & \BR   --& \BR 21K& \BR -- & \BR   --& \BR 24K& \BR --\\
%\BR \texttt{WS(MOGD)} & \BR   99\%& \BR 100\%& \BR 98\%       & \BR   6\%& \BR 52\%& \BR 68\% & \BR   24\%& \BR 56\%& \BR 75\%        & \BR   1.1K& \BR 869& \BR 953  & \BR   1.5K& \BR 926& \BR 1.3K\\
\BR \texttt{WS(Sample)} & \BR   90\%& \BR 85\%& \BR 82\%        & \BR   -140\%& \BR 48\%& \BR -74\%       & \BR   -107\%& \BR 49\%& \BR -52\%       & \BR   7.4K& \BR 465& \BR 9.8K & \BR   22K& \BR 753& \BR 12K\\
% MOGD random initial
%\BR \texttt{PF(MOGD)} & \BR   94\%& \BR 96\%& \BR 89\%        & \BR   -54\%& \BR 6\%& \BR 28\%     & \BR   16\%& \BR 37\%& \BR 60\%        & \BR   2.9K& \BR 1.8K& \BR 3.0K        & \BR   8.6K& \BR 2.2K& \BR 3.6K\\ 
% MOGD boosted by Fuxi
\BR \texttt{PF(MOGD)} & \BR   99\%& \BR 100\%& \BR 98\%       & \BR   -15\%& \BR 49\%& \BR 65\%    & \BR   24\%& \BR 56\%& \BR 75\%        & \BR   2.7K& \BR 2.1K& \BR 1.5K        & \BR   4.0K& \BR 2.5K& \BR 2.3K \\ \midrule
\BL \texttt{IPA+EVO} & \BL   98\%& \BL 100\%& \BL 98\%       & \BL   -27\%& \BL 69\%& \BL 43\%    & \BL   22\%& \BL 75\%& \BL 71\%        & \BL   3.0K& \BL 2.4K& \BL 5.0K        & \BL   7.3K& \BL 2.6K& \BL 5.9K\\
%\BL \texttt{IPA+WS(MOGD)} & \BL   100\%& \BL 100\%& \BL 100\%     & \BL   11\%& \BL 52\%& \BL 71\%        & \BL   24\%& \BL 55\%& \BL 75\%        & \BL   815& \BL 651& \BL 675   & \BL   1.3K& \BL 693& \BL 988\\
\BL \texttt{IPA+WS(Sample)} & \BL   100\%& \BL 100\%& \BL 100\%     & \BL   -36\%& \BL 76\%& \BL -1\% & \BL   -19\%& \BL 72\%& \BL 32\%    & \BL   3.5K& \BL 517& \BL 12K        & \BL   10K& \BL 741& \BL 18K \\
% MOGD random initial
% \BL \texttt{IPA+PF(MOGD)} & \BL   100\%& \BL 100\%& \BL 100\%     & \BL   -28\%& \BL 48\%& \BL 67\%    & \BL   23\%& \BL 56\%& \BL 75\%        & \BL   1.8K& \BL 1.2K& \BL 1.4K        & \BL   3.5K& \BL 1.7K& \BL 2.6K\\
% MOGD boosted by Fuxi
\BL \texttt{IPA+PF(MOGD)} & \BL   100\%& \BL 100\%& \BL 100\%     & \BL   -0.4\%& \BL 51\%& \BL 69\%     & \BL   26\%& \BL 56\%& \BL 75\% & \BL   1.6K& \BL 1.2K& \BL 1.2K        & \BL   2.5K& \BL 1.6K& \BL 2.2K\\
\bottomrule
\end{tabular}
}
\begin{flushleft}
\begin{tabular}{@{}rrclrclrclccc@{}}\toprule
& \multicolumn{3}{c}{Coverage}
& \multicolumn{3}{c}{$Lat_{s}^{(in)} \downarrow$} 
& \multicolumn{3}{c}{$Cost_{s} \downarrow$} 
& \multicolumn{3}{c}{$avg(T_{s})$ (ms) / $\max(T_{s})$ (ms)} \\
\cmidrule{2-4} \cmidrule{5-7} \cmidrule{8-10} \cmidrule{11-13}
SO choice & A & B & C & A & B & C & A & B & C & A & B & C %& A & B & C 
\\ \midrule
\verb|IPA(Org)| &100\% &100\% &100\%  & 11\%   & 20\% & 51\% & 5\%  & 9\% & 15\% & 280 / 2.0K & 17 / 18 & 1.4K / 1.8K   \\ %&2.0K & 18 & 1.8K \\
\verb|IPA(Cluster)| &100\% &100\% &100\% & \B 12\%  & \B 17\% & \B 50\% & \B 4\%  & \B 7\%  & \B 14\% & \B 15 / 24 & \B 10 / 10 &\B 33 / 36 \\ %& \B 24 & \B 10 & \B 36  \\
\midrule % aws-cost
\verb|IPA+RAA(W/O_C)| &100\% &100\% &100\% & 8\% & 79\% & 58\% & 31\% & 76\% & 75\% & 2.5K / 19K & 177 / 220 & 3.5K / 6.3K \\ %& 19K & 220 & 6.3K \\
\verb|IPA+RAA(DBSCAN)| &100\% &100\% &100\%& 27\% & 69\% & 67\% & 21\% & 64\% & 74\% & 223 / 937 & 132 / 136 & 258 / 452 \\ % & 937 & 136 & 452 \\
\verb|IPA+RAA(General)| &100\% &100\% &100\% & 36\% & 80\% & 76\% & 29\% & 75\% & 75\% & 100 / 241 & 20 / 23 & 167 / 229 \\ % & 241 & 23 & 229 \\
\B \texttt{IPA+RAA(Path)} &100\% &100\% &100\% & \B36\% & \B80\% & \B76\% & \B29\% & \B75\% & \B75\% & \B98 / 226   & \B17 / 18 & \B156 / 224 \\ \midrule %& \B 226 & \B 18 & \B 224 \\ \midrule
\BR \texttt{EVO} & \BR   0\%& \BR 82\%& \BR 0\%    & \BR   --\%& \BR -36\%& \BR --\%  & \BR   --\%& \BR 66\%& \BR --\%      & \BR   -- / --& \BR 21K / 24K& \BR -- / -- \\ %& \BR   --& \BR 24K& \BR --\\
\BR \texttt{WS(Sample)} & \BR   90\%& \BR 85\%& \BR 82\%        & \BR   -140\%& \BR 48\%& \BR -74\%       & \BR   -107\%& \BR 49\%& \BR -52\%       & \BR   7.4K / 22K & \BR 465 / 753& \BR 9.8K / 12K \\ %& \BR   22K& \BR 753& \BR 12K\\
\BR \texttt{PF(MOGD)} & \BR   99\%& \BR 100\%& \BR 98\%       & \BR   -15\%& \BR 49\%& \BR 65\%    & \BR   24\%& \BR 56\%& \BR 75\%        & \BR   2.7K / 4.0K & \BR 2.1K / 2.5K & \BR 1.5K / 2.3K \\ \midrule %       & \BR   4.0K& \BR 2.5K& \BR 2.3K \\ \midrule
\BL \texttt{IPA+EVO} & \BL   98\%& \BL 100\%& \BL 98\%       & \BL   -27\%& \BL 69\%& \BL 43\%    & \BL   22\%& \BL 75\%& \BL 71\%        & \BL   3.0K / 7.3K & \BL 2.4K / 2.6K& \BL 5.0K / 5.9K \\ %      & \BL   7.3K& \BL 2.6K& \BL 5.9K\\
\BL \texttt{IPA+WS(Sample)} & \BL   100\%& \BL 100\%& \BL 100\%     & \BL   -36\%& \BL 76\%& \BL -1\% & \BL   -19\%& \BL 72\%& \BL 32\%    & \BL   3.5K / 10K & \BL 517 / 741 & \BL 12K / 18K \\ %        & \BL   10K& \BL 741& \BL 18K \\
\BL \texttt{IPA+PF(MOGD)} & \BL   100\%& \BL 100\%& \BL 100\%     & \BL   -0.4\%& \BL 51\%& \BL 69\%     & \BL   26\%& \BL 56\%& \BL 75\% & \BL   1.6K / 2.5K & \BL 1.2K / 1.6K & \BL 1.2K / 2.2K \\ %       & \BL   2.5K& \BL 1.6K& \BL 2.2K\\
\bottomrule
\end{tabular}
\end{flushleft}
\caption{\rv{Average Reduction Rate (RR) against Fuxi in 29 subworkloads within 60s}}
\label{tab:expr-so-sub-workloads}
\end{minipage}
\begin{minipage}{0.35\linewidth}
%    \begin{minipage}{1.\textwidth}
%        \centering
%	    \setlength{\belowcaptionskip}{3pt}  
%	    \setlength{\abovecaptionskip}{0pt}      
%        \includegraphics[width=1.1\linewidth,height=1.1cm]{}
%        \captionof{figure}{\rv{WMAPEs over time}}
%        \label{fig:hourly-wmape}
%    \end{minipage}
\begin{minipage}{1.\textwidth}    
\ra{.6}
\centering
\setlength{\belowcaptionskip}{0pt}  
\setlength{\abovecaptionskip}{0pt}  
\footnotesize
\newrobustcmd{\B}{\bfseries}
\addtolength{\tabcolsep}{-2.5pt}
\begin{tabular}{@{}cccccc@{}}
\toprule
\multicolumn{1}{c}{WL} & \multicolumn{1}{c}{WMAPE} & \multicolumn{1}{c}{MdErr} & \multicolumn{1}{c}{95\%Err} & \multicolumn{1}{c}{Corr} & \multicolumn{1}{c}{GlbErr} \\
\midrule
 A     & 8.6\%  & 7.4\%  & 62.4\%  & 96.6\% & 1.9\%  \\
 B & 19.0\% & 15.1\% & 71.5\%  & 96.4\% & 5.4\%  \\
 C & 15.1\% & 12.7\% & 97.3\%  & 98.4\% & 5.1\%  \\ 
\bottomrule
\end{tabular}	
\caption{Modeling Performance }
\label{tab:expr-sisl}	
\end{minipage}
\begin{minipage}{1.\textwidth}
\ra{0.7}
\begin{flushright}
\footnotesize
\addtolength{\tabcolsep}{-2.5pt}
\newrobustcmd{\B}{\bfseries}
\begin{tabular}{@{}lcc@{}}
\toprule
 SO scenario        & Stage   Lat (in):    & Cost:       \\
\midrule % aws-cost
\B IPA (noise-free)    & \B10\%, \B15\%, \B44\%      &\B 3\%, \B 7\%, \B 12\%   \\ %
\B \status{IPA (noisy)}       & \status{\B9\%, \B10\%, \B42\% }      & \status{\B3\%, \B6\%, \B12\%  } \\ %
\B IPA+RAA (noise-free)     & \B 37\%, \B 58\%, \B 72\%      & \B 43\%, \B 56\%, \B 78\% \\ %
\B \status{IPA+RAA (noisy)}          & \status{\B37\%, \B55\%, \B72\% }     &\status{ \B42\%, \B56\%, \B78\% }\\ % 
\midrule
Bootstrap Model    & Stage   Lat (in):    & Cost:       \\
\midrule % aws-cost                 
GTN+MCI	&	34\%,49\%,68\%	&	48\%,41\%,71\% \\
TLSTM	&	{\B 17\%},{\B 2\%},65\%	&	43\%,{\B 31\%},70\% \\
QPPNet	&	34\%,{\B 1\%},63\%	&	46\%,{\B 30\%},68\% \\
\bottomrule
\end{tabular}
\caption{\rv{Average RR over 2M stages }}
\label{tab:e2e-noise}	
\end{flushright}
\end{minipage}
\end{minipage}
\vspace{-0.3in}
\end{table*}

\cut{
\begin{table*}
\begin{minipage}{0.64\linewidth}
\ra{0.7}
\addtolength{\tabcolsep}{-1.5pt}
\newrobustcmd{\B}{\bfseries}
\footnotesize
\begin{tabular}{@{}lrclrclrclrclrclrcl@{}}\toprule
& \multicolumn{3}{c}{$\overline{Lat_{stage}^{(in)} \downarrow}$} &
 \multicolumn{3}{c}{$\overline{Cost_{stage} \downarrow}$} & \multicolumn{3}{c}{$\overline{T_{stage}}$ (ms)} & \multicolumn{3}{c}{$\max(T_{stage})$ (ms)} \\
\cmidrule{2-4} \cmidrule{5-7} \cmidrule{8-10} \cmidrule{11-13}
SO choice & A & B & C & A & B & C & A & B & C & A & B & C \\ \midrule
\verb|IPA+RAA(Path)| & \B36\% & \B80\% & \B69\% & \B29\% & \B75\% & \B68\% & \B98    & \B17  & \B140  & \B 226 & \B 18 & \B 224\\ \midrule
\verb|EVO| \\
\verb|WS(MOGD)| \\
\verb|PF(MOGD)| \\ \midrule
\verb|IPA+EVO| \\
\verb|IPA+WS(MOGD)| \\
\verb|IPA+PF(MOGD)| \\
\bottomrule
\end{tabular}
\caption{\rv{Average RR against Fuxi Compared to MOO baselines}}
\label{tab:expr-direct-moo}
\end{minipage}	
\end{table*}
}

\cut{
\begin{figure*}[t]

	\centering
%	\vspace{-0.3in}
%	\hspace{-6cm}
	
	\begin{tabular}{lccc}

		\subfigure[\small{reduction rates of $Lat_{stage}^{(ex)}$, $Lat_{stage}^{(in)}$, $Cost_{stage}$ by IPA-only}]
		{\label{fig:ipa-1}\includegraphics[height=3.0cm,width=.24\linewidth]{figures/e2e-fuxi-ipa/aws-cost-e2e-ipa(Cluster)-metrics.pdf}}

		&
		\subfigure[\small{reduction rates of $Lat_{stage}^{(ex)}$ on busy/idle by IPA-only}]
		{\label{fig:ipa-2}\includegraphics[height=3.0cm,width=.24\linewidth]{figures/e2e-fuxi-ipa/aws-cost-e2e-ipa(Cluster)-busy-idle.pdf}}

%		&		
%		\subfigure[\small{$T_{stage}$ for IPA(Cluster)}]
%		{\label{fig:ipa-time-cluster}\includegraphics[height=3.0cm,width=.22\linewidth]{figures/e2e-fuxi-ipa/e2e-ipa(ic_on_kde_silverman)-fuxi-solving-time.pdf}}
%
%		&		
%		\subfigure[\small{$T_{stage}$ for IPA(APPR)}]
%		{\label{fig:ipa-time-appr}\includegraphics[height=3.0cm,width=.22\linewidth]{figures/e2e-fuxi-ipa/e2e-ipa(ic_off)-fuxi-solving-time.pdf}}

		\subfigure[\small{reduction rates of $Lat_{stage}^{(ex)}$, $Lat_{stage}^{(in)}$, $Cost_{stage}$ by IPA+RAA}]
		{\label{fig:ipa+raa-1}\includegraphics[height=3.0cm,width=.24\linewidth]{figures/e2e-fuxi-so/aws-cost-e2e-ipa+raa(Path)-metrics.pdf}}

		&
		\subfigure[\small{reduction rates of $Lat_{stage}^{(ex)}$ on busy/idle by IPA+RAA}]
		{\label{fig:ipa+raa-2}\includegraphics[height=3.0cm,width=.24\linewidth]{figures/e2e-fuxi-so/aws-cost-e2e-ipa+raa(Path)-busy-idle.pdf}}				
				
	\end{tabular}
	\vspace{-0.2in}
	\caption{Reduction Rate of IPA(Cluster) and IPA(Cluster) + RAA(Path) against Fuxi over 29 sub-workloads}
	\label{fig:e2e-fuxi-ipa-ic-on-kde-silverman}
\end{figure*}
}

\vspace{-0.06in}
\subsection{Resource Optimization (RO) Evaluation}
\label{sec:expr-so}

We next evaluate the end-to-end performance of our Stage Optimizer (SO) against the current HBO and Fuxi scheduler~\cite{fuxi-vldb14}, \rv{as well as other MOO methods} using production workloads A-C. 
As we cannot run experiments directly in the production clusters, we developed a  simulator of the extended MaxCompute and replayed the query traces to conduct our experiments (detailed in\techreport{~\ref{appendix:simulator}}{~\cite{tech-report}}).

%\cut{
%To interpret the individual component in SO, we first assume our model is accurate latency, and conduct different analyses for IPA and RAA in Expr 7\&8 over 29 sub-workloads with a total of 51K stages.
%Afterwards, we consider the noise in our model prediction and have the end-to-end comparison with Fuxi by replaying the production workloads of 0.6M jobs and 2M stages in a simulator of the extended MaxCompute environment.
%}
%
%\cut{
%\underline{\it Sub-workloads.} 
%We sample submitted job from each workload, during the busiest and idlest time periods in five consecutive days. In total, we get 12K jobs with 51K stages over 29 sub-workloads\footnote{Workload C does not have submitted jobs in the idle time period for one day}.
%}

\cut{
%{\it Simulation.} 
As we cannot run experiments directly in the production clusters for workloads A-C, we developed a  simulator of the extended MaxCompute environment for our experiments. To simulate the latency of a stage after our SO determines the PP and RP, we pre-train a Gaussian Process Regression (GPR) model to learn the actual latency distribution against the predicted latency from a set of (predicted, actual) pairs generated from a bootstrap model (by default, MCI+GTN).
During simulation, the GPR model takes a predicted latency and generates a Gaussian distribution ($N(\mu,\sigma)$) as the output. Then our simulator will sample from the distribution within $\mu\pm3\sigma$ to simulate the actual latency.
%See Appendix~\ref{appendix:simulator} for more descriptions.
}

We consider the following metrics in resource optimization (RO): 
%(1) $Lat_{stage}^{(ex)}$ and $Lat_{stage}^{(in)}$, the average stage latency excluding and including the solving time, respectively;
\rv{(1) \textit{coverage}, the ratio of stages that receive  feasible solutions within 60s;}
(2) $Lat_{s}^{(in)}$, the average stage latency that includes the RO time;
(3) $Cost_{s}$, the average cloud cost of all stages in a workload;
(4) $T_{s}$, the RO time cost.
Below, we first present  a microbenchmark of our methods and \rv{other MOO methods} using 29 subworkloads in Table~\ref{tab:expr-so-sub-workloads}, and then report our net benefit over the full dataset in Table~\ref{tab:e2e-noise}.

\underline{Expt 6:} {\it IPA Only.} 
We first turn on only the IPA module in the stage optimizer (no  RAA) and compare the two options of IPA to the Fuxi scheduler over the 29 subworkloads.
%We replay the workloads in chunks, and use (1) the Fuxi scheduler, (2) our approximate approach \verb|IPA(Appr)| and (3) the clustering-based \verb|IPA(Cluster)| approaches to solve placement plans.
As shown in Tab.~\ref{tab:expr-so-sub-workloads}, the clustered version \verb|IPA(Cluster)| reduces the stage latency (including the solving time) by \rv{12-50\%} and cost by \rv{4-14\%}, with the average time cost between \rv{10-33} msec. 
Without clustering, \verb|IPA(Org)| achieves comparable reduction rates as \verb|IPA(Cluster)| but costs \rv{2-83x}  more time for solving.
\cut{
Table~\ref{tab:expr-so-sub-workloads} shows the average reduction rate (RR) IPA achieved in stage latency and cost, where \verb|IPA(Cluster)| reduces the stage latency (including the solving time) by \rv{12-50\%} and cost by \rv{4-14\%}, with the average time cost between \rv{10-33} msec. 
Without clustering, \verb|IPA(Org)| costs \rv{2-83x} more RO time compared to \verb|IPA(Cluster)|.
}

%While \verb|IPA(Appr)| shows a slight more improvement in the exclusive latency it cost 2-32x more RO time compared to \verb|IPA(Cluster)|.

\cut{
We then dive into 29 sub-workloads, subsampled in different system busy and idle periods. 
%, and show the properties of the IPA. 
% based on Fig.~\ref{fig:e2e-fuxi-ipa-ic-on-kde-silverman}.
Fig.~\ref{fig:ipa-1} shows that IPA reduces more stage latencies and \rv{cloud cost} for workload C (with more long-running instances,) than workload B (median-length stages) and workload A (short-running stages).
Fig.~\ref{fig:ipa-2} shows that  IPA helps reduce more latency and cost when a cluster has an idle environment because it can assign more machines of high capacity to a stage and hence reduce the stage latency more.
%Third, \verb|IPA(Cluster)| is much more efficient than \verb|IPA(Appr)| when the number of instances increases. As shown in Fig.~\ref{fig:ipa-time-cluster} and~\ref{fig:ipa-time-appr}, for a sub-workload with $\sim$3K thousand instances per stage, \verb|IPA(Appr)| needs 2s to get the placement plan, while \verb|IPA(Cluster)| can finish within 0.04s.
%We leave other results for IPA in Appendix~\ref{appendix:expr-ipa}.
}

\underline{Expt 7:} {\it  IPA+RAA}. 
We now run \verb|IPA(Cluster)| with RAA of four choices.
\verb|IPA+RAA(Path)| solves the resource plan by applying RAA Path over instance and machine clusters and achieves the best performance. It reduces the stage latency by \rv{36-80\%} and \rv{cloud cost} by \rv{29-75\%}, with an average time cost between \rv{17-156ms} (including IPA).
\verb|IPA+RAA(W/O_C)| does RAA without clustering and suffers high overhead (up to \rv{19s} for a stage). 
\verb|IPA+RAA(DBSCAN)| applies DBSCAN for the instance clustering, which incurs 
up to \rv{937} msec for a stage, hence inefficient for production use.
% instead of using our MCI-customized clustering method. 
\verb|IPA+RAA(General)| shows the performance of our general hierarchical MOO approach, which is slightly worse than \verb|IPA+RAA(Path)| (in this 2D MOO problem) in running time while offering a similar reduction of latency and cost. 
Details of the four choices are in~\techreport{Appendix~\ref{appendix:raa} \&~\ref{appendix:expt-ipa+raa}}{~\cite{tech-report}}.

\underline{Expt 8:} {\it MOO baselines.} 
We next compare to SOTA MOO solutions, \texttt{EVO}~\cite{Emmerich:2018:TMO}, \texttt{WS(Sample)}~\cite{marler2004survey}, and \texttt{PF(MOGD)}~\cite{spark-moo-icde21}, using Def.~\ref{def:stagemoo}. See \techreport{Appendix~\ref{appendix:formula}}{~\cite{tech-report}} for their implementation details.
As shown in the 3 red rows of Table~\ref{tab:expr-so-sub-workloads}, 
(1) over 29 sub-workloads, none of them guarantees to return all results within 60s;  
(2) their latency and cost reduction rates  are all dominated by \texttt{IPA+RAA(Path)}, and even lose to the Fuxi scheduler on some workloads; 
(3) their solving time is 1-2 magnitude higher than our approach, making them infeasible to be used by a cloud scheduler.
As an alternative, we apply IPA to solve the $B$ variables and these MOO methods to  solve only $\Theta$  based on Eq.~\eqref{eq:stage-mult_obj_opt_def}. 
%Since IPA has taken care of a big set of the parameters, their performances all get improved.
As shown in the last 3 blue rows in Table~\ref{tab:expr-so-sub-workloads}, they are still inferior to \texttt{IPA+RAA(Path)} in both latency and cost reduction and in running time (mostly taking 1-6 sec to complete).

\cut{
{\ }\\
{\bf Expt 11: end-to-end comparison with noisy predictions}\\ 
\noindent

Now we run the entire 2M stages over 3 departments in both noise and noise-free cases. 
To simulate the effects from the model inaccuracy, the execution simulator pre-trains a Gaussian Process Regression (GPR) model to learn the actual latency distribution against the predicted latency from a trained model (named bootstrap model), which by default, uses MCI+GTN). 
For each predicted instance latency, GPR generates a gaussian distribution ($N(\mu,\sigma)$) of the actual latency, and sample a simulated actual from the distribution within $\mu\pm3\sigma$.

Table~\ref{tab:e2e-noise} shows that in both noise-free and noisy settings, SO significantly reduces stage latency and cost compared to Fuxi with the MCI+GTN model for latency prediction.
}

\underline{Expt 9:} {\it Net Benefits.} 
We next compare our SO (IPA+RAA) against Fuxi's scheduling results by considering the model effects: the 
{\em noise-free} case means that the predicted latency is the true latency, while the {\em noisy} case captures the fact that the true latency is different from the predicted one.  We run the entire 2M stages over 3 departments in both noisy and noise-free cases. For the noisy case, we turn on the actual latency simulator (GPR model) to simulate the actual latency. Specifically, given a predicted instance latency, GPR generates a Gaussian distribution ($N(\mu,\sigma)$) of the actual latency, and samples from the distribution within $\mu\pm3\sigma$.
Table~\ref{tab:e2e-noise} shows that in both noise-free and noisy settings, SO (IPA+RAA) significantly reduces stage latency and \rv{cloud cost} compared to Fuxi, with the fine-grained MCI+GTN model for latency prediction. 
%SO can reduce 36-67\% stage latency and 34-75\% CPU-hours when the model is accurate, and reduces 36-68\% stage latency and 33-75\% CPU-hours in the simulation of the extended MaxCompute environment.

\underline{Expt 10:} {\it Impact of Model Accuracy.}
Finally, to quantify the impact of model accuracy on resource optimization, we use GPR models pre-trained over three bootstrap models (MCI+GTN, TLSTM, QPPNet).  Note that in terms of model accuracy, we have  (MCI+GTN > TLSTM > QPPNet), as shown in Fig.~\ref{fig:sota-single-mach}. 
We borrow Fuxi's ground-truth placement plan and ask RAA for resource plans of each stage under the same condition of the system states as Fuxi. To be fair in the comparison, we dropped the scheduling delays for all stages and set the stage latency as the maximum instance latency.
%Table~\ref{tab:e2e-ground-truth} 
Table~\ref{tab:e2e-noise} compares the RAA's reduction rate (RR) among different bootstrap models. These results show that indeed, better reduction rates are achieved by using a more accurate model, which validates the importance of having a fine-grained accurate prediction model.

%Finally, to quantify the impact of model accuracy on resource optimization, we use GPR models pre-trained over three bootstrap models (MCI+GTN, TLSTM, QPPNet) and use MCI+GTN as our fine-grained instance model $\mathcal{M}$.
%We consider that $\mathcal{M}$ is more accurate in the simulation when GPR is pre-trained by a more accurate kernel model (MCI+GTN > TLSTM > QPPNet), as shown in Fig.~\ref{fig:sota-single-mach}. 

\section{Conclusions}
\label{sec:conclusion}

We presented a MaxCompute~\cite{maxcompute} based big data system that supports multi-objective resource optimization via fine-grained instance-level modeling and optimization. 
To suit the complexity of our system, we developed fine-grained instance-level models that encode all relevant information as multi-channel inputs to deep neural networks. By exploiting these models, our stage optimizer employs a new IPA module to derive a latency-aware placement plan to reduce the stage latency, and a novel RAA model to derive instance-specific resource plans to further reduce stage latency and cost in a hierarchical MOO framework.
Evaluation using production workloads shows that   
(1) our best model achieved 7-15\% median error and 9-19\% weighted mean absolute percentage error; 
(2) compared to the Fuxi scheduler~\cite{fuxi-vldb14}, IPA+RAA achieved the reduction of \rv{37-72\%}
latency and \rv{43-78\%} cost while running in \rv{0.02-0.23s}.

%\begin{enumerate}
%	\item We designed a model that can learn the instance-level latencies from multi-channel inputs in big data systems and presented deeper understandings of the model by profiling the error sources and factorizing the impacts from input channels.
%	\item We formalized the placement problem into integer linear programming (ILP) and proposed an approximated approach to solving it. With instance clustering, our approximated approach reduced 12-50\% latency and 4-14\% cost within 0.04s compared to Alibaba's default scheduler's strategy over three representative departments.
%	\item We further minimized the stage latency and cost by solving a hierarchical multi-objective optimization problem. Our stage optimizer, with instance clustering, reduced 34-72\% latency and 16-74\% cost within 0.02-0.13s scheduling cost over three departments.
%\end{enumerate}

\begin{acks}
% This work was supported by the [...] Research Fund of [...] (Number [...]). Additional funding was provided by [...] and [...]. We also thank [...] for contributing [...].
This work was partially supported by the European Research Council (ERC) Horizon 2020 research and innovation programme (grant n725561), 
Alibaba Group through Alibaba Innovative Research Program,
and China Scholarship Council (CSC). 
We also thank 
Yongfeng Chai,
Daoyuan Chen, 
Xiaozong Cui,
Botong Huang,
Xiaofeng Zhang, 
and Yang Zhang
from the Alibaba group for the discussion and help throughout the project.
\end{acks}

\clearpage

\balance
\bibliographystyle{ACM-Reference-Format}
\bibliography{refs/model+opt,refs/bigdata,refs/optimization,refs/db}
%\bibliography{refs/sample}

\techreport{
\clearpage
\appendix
\section{Problem Formulation}
\label{appendix:formula}

\begin{table*}[t]
\ra{1.2}
\small
\centering
	\begin{tabular}{cl}\toprule
	Symbol & Description \\\midrule
$d$ & the number of resource types in a resource configuration (RC). $d=2$ in our case.
\\
$m$ & the number of instances.
\\
$n$ & the number of machines.
\\
$x_i$ & the $i$-th instance
\\
$\tilde{x}_i$ & the features on channel 1 and channel 2 and AIM towards $x_i$
\\
$y_j$ & the $j$-th machine
\\
$\tilde{y}_j$ & the features on channel 4 and channel 5 towards $y_j$
\\
$B$ & the placement matrix $B \in \{0, 1\}^{m\times n}$
\\
$B_{i,j}$ & the $i$-th row, $j$-th column of $B$ ($B_{i,j} \in \{0, 1\}$). $B_{i,j} = 1$ when $x_i$ is sent to $y_j$.
\\
$\Theta$ & the second set of parameters, a.k.a. the resource plan matrix $R \subseteq \mathbb{R} ^{m\times 2}$
\\
$\Theta_i$ & the resource plan of the $i$-th instance $x_i$. $\Theta_i \in \mathbb{R}^2$. In our case, $\Theta_i = (req\_cores, req\_mem)$
\\
$f$ & the instance-level latency model, mapping from channels 1-5 and AIM to latency estimation.
\\\midrule
%$L_{i,j}$ & the latency when $x_i$ runs at $y_j$.
%\\
{\bf clustering-related}	&
\\
$m'$ & the number of instance clusters.
\\
$n'$ & the number of machine clusters.
\\
$X'_i$ & the $i$-th instance cluster
\\
$x'_i$ & the representative instance of $i$-th instance cluster
\\
$\vert X'_i \vert$ & the number of instances in $i$-th instance cluster
\\
$Y'_j$ & the $j$-th machine cluster
\\
$y'_j$ & the representative machine of $j$-th machine cluster
\\
$\vert Y'_j \vert$ & the number of machines in $j$-th machine cluster
\\
$B'$ & the placement matrix to indicate the number of instances assigned from each instance cluster to each machine cluster. $B' \in \mathbb{N}^{m'\times n'}$
\\
${B'}_{i,j}$ & the $i$-th row, $j$-th column of $B'$. ${B'}_i^j$ indicates the number of instances in $X'_i$ when they are sent to $Y'_j$.
\\
$\Theta'$ & the resource plans of instances in all instance clusters
\\
${\Theta'}_{i,j}$ & the resource plan of the representative instances in $i$-th instance cluster $x'_i$ on the $j$-th machine cluster. ${\Theta'}_{i,j} \in \mathbb{R}^d$
\\\bottomrule
	\end{tabular}
\caption{Notation for the MOO problem}
\label{tab:formula-notations}
\end{table*}

Consider $m$ instances ($x_1, ..., x_m$) and $n$ machines ($y_1$, ..., $y_n$).
Denote $\tilde{x}_i$ as the features of Ch1 and Ch2 of instance $x_i$, and $\tilde{y}_j$ as the features of Ch4 and Ch5 of machine $y_j$. 
Given two set of variables $B$  and $\Theta$, assuming that $f$ is the instance-level latency prediction model, we have $L_{i,j} = f(\tilde{x}_i, \Theta_i, \tilde{y}_j)$ as the latency when $x_i$ is running on $y_j$ using the resource configuration $\Theta_i$.

Denote two objectives as the stage-level latency 
$$
L_{stage} = L(B, \Theta) = \max_{i,j} B_{i,j} L_{i,j} = \max_{i,j} B_{i,j} f(\tilde{x}_i, \Theta_i, \tilde{y}_j)
$$
and the stage-level total cost (the weighted sum of cpu-hour and memory-hour) 
%$$
%C_{stage} = \sum_{i,j} B_{i,j} L_{i,j} (w_1 \Theta_i^1 + w_2 \Theta_i^2 ) = \sum_{i,j} B_{i,j} f(\tilde{x}_i, \Theta_i, \tilde{y}_j) (w_1 \Theta_i^1 + w_2 \Theta_i^2 )
%$$
%\qi{
$$
C_{stage} = C(B, \Theta)= \sum_{i,j} B_{i,j} L_{i,j} (\mathbf{w} \cdot  \Theta_i^T) = \sum_{i,j} B_{i,j} f(\tilde{x}_i, \Theta_i, \tilde{y}_j) (\mathbf{w} \cdot  \Theta_i^T)
$$
%}
where $w_1$ and $w_2$ are two constants.
%\qi{
$\mathbf{w}$ is a weight vector.
%}
Similarly, other objectives such as CPU-hour, memory-hour can be defined.

Consider $m$ instances ($x_1, ..., x_m$) and $n$ machines ($y_1$, ..., $y_n$).
Given two set of variables $B$  and $\Theta$,  $f$ is instance-level latency when $x_i$ is running on $y_j$ using the resource configuration $\Theta_i$.
We have a constrained Multi-Objective Optimization (MOO) problem.
\begin{align}
	\label{eq:stage-moo-def}
	\arg\min_{B, \Theta} & {\left[
			\begin{array}{l}
			L(B, \Theta) = \max_{i,j} B_{i,j}  f(\tilde{x}_i, \Theta_i, \tilde{y}_j) \\
			C(B, \Theta) = \sum_{i,j} B_{i,j}  f(\tilde{x}_i, \Theta_i, \tilde{y}_j)  (\mathbf{w} \cdot  \Theta_i^T ) \\
			...
			\end{array}
			\right]} \\
		\nonumber s.t. &  {\begin{array}{l}
			B_{i,j} \in \{0, 1\}, \,\,\, \forall i=1...m , \forall j=1...n\\
			\sum_j B_{i,j} = 1, \,\,\, \forall i=1...m \\
%			R_i^1 \in [0.5, 1], \,\,\, \forall i=1...m \\ 
%			R_i^2 \in [1G, 12G], \,\,\, \forall i=1...m \\
			\sum_i B_{i,j} \Theta_i^1 \leq U_j^1 , \,\, \ldots, \,\,\sum_i B_{i,j} \Theta_i^d \leq U_j^d \,\, , \forall j=1...n \\
%			\sum_j B_{i,j} \Theta^2 \leq U_j^2 \,\,\, \forall i=1...m \\
			\sum_i B_{i,j} \leq \alpha \,\,\, , \forall j=1...n
	\end{array}}
\end{align}
where $U_j \in \mathbb{R}^d$ is the capacity of $d$ resources on machine $y_j$ and $\alpha$ defines the maximum number of instances each machine can take based on the diverse placement preference. 
Basically, a smaller $\alpha$ shows a stronger preference of the diverse placement for a stage and we have $\alpha >= \lceil m/n \rceil$. 

To simplify the problem, we take two steps:

\minip{Step 1:} Take the resource configuration $\Theta_0$ returned from the HBO optimizer as the default and assign it uniformly to all instances, $\Theta_i = \Theta_0$, $\forall i \in [1,...,m]$.  Then minimize over $B$ in Eq.~\eqref{eq:stage-moo-def} by treating $\Theta_0$ as a constant. Then the objective functions are shown in Eq.~\eqref{eq:stage-ipa}. 
		
\begin{align}
		\label{eq:stage-ipa}
		\arg \min_{B} & {\left[
			\begin{array}{l}
			L(B, \Theta_0) = \max_{i,j} B_{i,j} f(\tilde{x}_i, \Theta_0, \tilde{y}_j) \\
			C(B, \Theta_0) = \sum_{i,j} B_{i,j} f(\tilde{x}_i, \Theta_0, \tilde{y}_j) (\mathbf{w} \cdot  \Theta_0^T ) \\
			...
			\end{array}
			\right]} 
%		\nonumber s.t. &  {\begin{array}{l}
%			B_{i,j} \in \{0, 1\}, \,\,\, \forall i=1...m , \forall j=1...n \label{eq:con1}\\
%			\sum_j B_{i,j} = 1, \,\,\, \forall i=1...m \\
%			R_i^1 \in [0.5, 1], \,\,\, \forall i=1...m \\ 
%			R_i^2 \in [1G, 12G], \,\,\, \forall i=1...m \\
%			\sum_i B_{i,j} \Theta_i^1 \leq U_j^1 \,\,\, \forall i=1...m \\
%			\sum_j B_{i,j} \Theta^2 \leq U_j^2 \,\,\, \forall i=1...m \\
%			\sum_i B_{i,j} \leq \alpha \,\,\, \forall j=1...n
%			\end{array}}
\end{align}

\minip{Step 2:} Given the solution from step 1, $B^*$, minimize over $\Theta$ by treating $B^*$ as a constant, as shown in Eq.~\eqref{eq:stage-raa}.

\begin{align}
		\label{eq:stage-raa}
		\arg \min_{\Theta} & {\left[
			\begin{array}{l}
			L(B^*, \Theta) = \max_{i,j} B_{i,j}^* f(\tilde{x}_i, \Theta_i, \tilde{y}_j) \\
			C(B^*, \Theta) = \sum_{i,j} B_{i,j}^* f(\tilde{x}_i, \Theta_i, \tilde{y}_j) (\mathbf{w} \cdot  \Theta_i^T ) \\
			...
			\end{array}
			\right]} 
%		\nonumber s.t. &  {\begin{array}{l}
%			B_{i,j} \in \{0, 1\}, \,\,\, \forall i=1...m , \forall j=1...n \label{eq:con1}\\
%			\sum_j B_{i,j} = 1, \,\,\, \forall i=1...m \\
%			R_i^1 \in [0.5, 1], \,\,\, \forall i=1...m \\ 
%			R_i^2 \in [1G, 12G], \,\,\, \forall i=1...m \\
%			\sum_i B_{i,j} \Theta_i^1 \leq U_j^1 \,\,\, \forall i=1...m \\
%			\sum_j B_{i,j} \Theta^2 \leq U_j^2 \,\,\, \forall i=1...m \\
%			\sum_i B_{i,j} \leq \alpha \,\,\, \forall j=1...n
%			\end{array}}
		\end{align}

The alternative formula forms are also shown in Eq.~\eqref{eq:ipa} for the step 1 and Eq.~\eqref{eq:stage-mult_obj_opt_def} for the step 2.
The intuition behind our approach is that if we start with a decent choice of $\Theta_0$, as returned by HBO, we hope that step 1 will reduce stage latency via a good assignment of instances to machines by considering machine capacities and instance latencies. Then step 2 will fine-tune the resources assigned to each instance on a specific machine to reduce stage latency, cost, as well as other objectives.

\subsection{Formulation of Different MOO Problems}
%\qi{

%The number of variables in the above problem formula is \todo{$m*(n+2)$}, and the number of constraints is \todo{$m+3*n$}. There could be over 6k instances in one stage and over 7k machines for instance allocation,  which leads to a considerable parameter space with millions of variables and thousands of constraints. To make it solvable by existing MOO optimizers, we considered two scenarios:

%\minip{Plan A: optimization over the whole parameter space} Take the entire parameter space with $B$ and $\Theta$ as variables while clustering instances and machines to reduce $m$ and $n$.
%
%\minip{Plan B: optimization over $\Theta$} Take Step $2$ with $\Theta$ to solve the stage-level MOO problem for resource tuning.

We consider solving the MOO problem in two scenarios:
\begin{enumerate}
	\item {\bf Plan A: Optimization over \bm{$B$} and \bm{$\Theta$}.} Take the entire parameter space with $B$ and $\Theta$ as variables.
	% while clustering instances and machines to reduce $m$ and $n$.
	\item {\bf Plan B: Optimization over \bm{$\Theta$}.} Use the solution from step 1, and take Step 2 with $\Theta$ as the variable.
\end{enumerate}

%\minip{Plan A: clustering-based problem form }
\subsubsection{Plan A: Optimization over \bm{$B$} and \bm{$\Theta$}}~

We consider two methods implemented for the plan A. 

The first method solves the {\bf vanilla plan A} by 
strictly applying the MOO definition in Eq.~\eqref{eq:stage-moo-def}. However it suffers from the huge amount fo variables and constraints, where we get 
%\todo{$m*(n+2)$} 
$m*(n+d)$
variables, and 
%\todo{$m+3*n$} 
$m+(d+1)*n$
constraints. 
Notice both $m$ and $n$ can be several thousands, the problem could have millions of parameter and thousands of constraints.
%Notice that there could be over 6k instances in one stage and over 7k machines for instance allocation,  which leads to a considerable parameter space with millions of variables and thousands of constraints.

To reduce the dimensionality, we consider the {\bf clustered plan A} that solves the variables after clustering instances and machines as our IPA did.
After clustering, assuming there are $m'$ instance clusters $({X'_1}, ..., {X'_{m'}})$ and $n'$ machine clusters $({Y'_{1}}, ..., {Y'_{n'}})$, where ${x'_{i}}$ is the representative instance of instance cluster $X'_i$, ${y'_{j}}$ is the representative machine of machine cluster $Y'_j$. Denote $\tilde{x}'_i$ as the features of Ch1 and Ch2 of ${x'_{i}}$, and $\tilde{y'}_j$ as the features of Ch4 and Ch5 of $y'_j$. 
Given two sets of variables $B'$ and $\Theta'$, assuming that $f$ is the instance-level latency prediction model, we have $f(\tilde{x'}_i, {{\Theta'}_{i,j}}, \tilde{y'}_j)$ as the latency when $x'_i$ is running on $y'_j$ using the resource configuration ${{\Theta'}_{i,j}}$. We assume that instances in the same instance cluster behave similarly when they are scheduled to the machines from the same cluster.
Therefore, the constrained MOO problem is defined in Eq.~\eqref{eq:stage-moo-ori-cluster}.
	\begin{align}
		\label{eq:stage-moo-ori-cluster}
		\arg\min_{B', {\Theta'}} & {\left[
			\begin{array}{l}
%			L'(B', \Theta') = \max_{i,j} {\mathcal{I}}({B'}_{i,j}) f(\tilde{x'}_i, {\Theta'}_i, \tilde{y'}_j) \\
%			C'(B', \Theta') = \sum_{i,j} {B'}_{i,j} f(\tilde{x'}_i, {\Theta'}_i, \tilde{y'}_j) (w_1 {\Theta'}_i^1 + w_2 {\Theta'}_i^2 )
            L'(B', \Theta') = \max_{i,j} {\mathcal{I}}({B'}_{i,j}) f(\tilde{x'}_i, {{\Theta'}_{i,j}}, \tilde{y'}_j) \\
%			C'(B', \Theta') = \sum_{i,j} {B'}_{i,j} f(\tilde{x'}_i, {\Theta'}_{i,j}, \tilde{y'}_j) (w_1 {{\Theta'}_{i,j}}^1 + w_2 {{\Theta'}_{i,j}}^2 )
			C'(B', \Theta') = \sum_{i,j} {B'}_{i,j} f(\tilde{x'}_i, {\Theta'}_{i,j}, \tilde{y'}_j)(\mathbf{w} \cdot  {\Theta'}_{i,j}^T) \\
			...
			\end{array}
			\right]} \\
		\nonumber s.t. &  {\begin{array}{l}
			\sum_j {B'}_{i,j} = \vert {X'}_{i} \vert, \,\,\, \forall i=1...m' \\
%			R_i^1 \in [0.5, 1], \,\,\, \forall i=1...m \\ 
%			R_i^2 \in [1G, 12G], \,\,\, \forall i=1...m \\
%			\sum_i {B'}_{i,j} {\Theta'}_i^1 \leq \sum_r U_r^1 \,\,\, \forall j=1...n', r=1...{\vert {Y'}_{j} \vert} \\
%			\sum_i {B'}_{i,j} {\Theta'}_i^2 \leq \sum_r U_r^2 \,\,\, \forall j=1...n', r=1...{\vert {Y'}_{j} \vert} \\
            \sum_i {B'}_{i,j} {{\Theta'}_{i,j}}^1 \leq \sum_r U_r^1 \,\,\, \forall j=1...n', r=1...{\vert {Y'}_{j} \vert} \\
			\sum_i {B'}_{i,j} {{\Theta'}_{i,j}}^2 \leq \sum_r U_r^2 \,\,\, \forall j=1...n', r=1...{\vert {Y'}_{j} \vert} \\
			...\\
			\sum_i {B'}_{i,j} {{\Theta'}_{i,j}}^d \leq \sum_r U_r^d \,\,\, \forall j=1...n', r=1...{\vert {Y'}_{j} \vert} \\
%			\sum_i {B'}_{i,j} \leq {\alpha'}_j \,\,\, \forall j=1...n'
			\sum_i {B'}_{i,j} / {\vert {Y'}_{j} \vert}\leq {\alpha} \,\,\, \forall j=1...n'
			\end{array}}
	\end{align}
where ${\mathcal{I}}({B'}_{i,j})$ is an indicator function of ${B'}_{i,j}$. 
%If ${B'}_{i,j} > 0$, the ${\mathcal{I}}({B'}_{i,j})$ = 1, else it equals 0;
$\vert {X'}_{i} \vert$ denotes the number of instances in the instance cluster $X'_i$; 
$\vert {Y'}_{j} \vert$ denotes the number of machines in the machine cluster $Y'_j$; $\sum_r U_r \in \mathbb{R}^d$ is the total capacity of the $d$ resources on the machine cluster $Y'_j$; ${\alpha}$ defines the maximum number of instances each machine cluster can take based on the diverse placement preference and we have ${\alpha} >= \lceil m/n \rceil$.	

In clustered Plan A, the number of variables and the number of constraints are reduced to 
%\todo{$m' * (3n')$ and $m' + (3n')$}
$m' * ((d + 1) * n')$ and $m' + ((d + 1) * n')$
, respectively. The constraints guarantee the requested resources of instances allocated in the same machine cluster do not exceed its capacity.

\subsubsection{Plan B: Optimization over $\Theta$}~

\cut{
Consider $m$ instances ($x_1, ..., x_m$) and $n$ machines ($y_1$, ..., $y_n$).
Denote $\tilde{x}_i$ as the features of Ch1 and Ch2 of instance $x_i$, and $\tilde{y}_j$ as the features of Ch4 and Ch5 of machine $y_j$. 
In Step $2$, $B^*$ is given, the optimization is over variables $\Theta$. Assuming that $f$ is the instance-level latency prediction model, we have $l_i^j = f(\tilde{x}_i, \Theta_i, \tilde{y}_j)$ as latency when $x_i$ is running on $y_j$ using the resource configuration $\Theta_i$.
The constrained MOO problem is defined as:
}

To reduce the parameter space in the original problem, we resolve the $B$ variable by IPA in the first step, and other MOO solutions to the step 2.
Given the solution from step 1, $B^*$, the MOO problem in Eq.~\eqref{eq:stage-moo-def} changes to the following Eq.~\eqref{eq:stage-moo-ori-b}.

\begin{align}
		\label{eq:stage-moo-ori-b}
		\arg \min_{\Theta} & {\left[
			\begin{array}{l}
			L(B^*, \Theta) = \max_{i,j} {B^*}_{i,j} f(\tilde{x}_i, \Theta_i, \tilde{y}_j) \\
%			C(B^*, \Theta) = \sum_{i,j} {B^*}_{i,j} f(\tilde{x}_i, \Theta_i, \tilde{y}_j) (w_1 \Theta_i^1 + w_2 \Theta_i^2 )
    		C(B^*, \Theta) = \sum_{i,j} {B^*}_{i,j} f(\tilde{x}_i, \Theta_i, \tilde{y}_j)(\mathbf{w} \cdot  \Theta_i^T) \\
    		...
			\end{array}
			\right]} \\
		\nonumber s.t. &  {\begin{array}{l}
%			\sum_i {\Theta}_i^1 {B^*}_{i,j} \leq U_j^1, \,\,\, \forall i=1...m, j=1...g \\
%			\sum_i {\Theta}_i^2 {B^*}_{i,j} \leq U_j^2, \,\,\, \forall i=1...m, j=1...g
%			\sum_i {\Theta}_i^1 {B^*}_{i,j} \leq U_j^1, \,\,\, \forall j=1...n \\
%			\sum_i {\Theta}_i^2 {B^*}_{i,j} \leq U_j^2, \,\,\, \forall j=1...n \\
\sum_i B^*_{i,j} \Theta_i^1 \leq U_j^1 , \,\, \ldots, \,\,\sum_i B^*_{i,j} \Theta_i^d \leq U_j^d \,\, , \forall j=1...n \\
			\end{array}}
\end{align}
		
%where $g$ denotes the number of distinct machines in $B^*$. 
		
%The constraints guarantee that the requested resources of all instances in the same machine do not exceed the capacity of this machine.

We also consider two methods implemented for the plan B. 
The first method {\bf vanilla plan B} solves Eq.~\eqref{eq:stage-moo-ori-b} directly, where the number of variables is $d * m$, and the number of constraints is $d * n$. Notice $m$ and $n$ can both be tens of thousands, the dimensionality of the problem is very high.

To further reduce the number of variables and constraints, we consider {\bf the clustered Plan B} that solves the problem after clustering.
%\todo{Qi: sync the updated symbols and fix the logic issues.}
%\minip{Clustered Plan B.}
After clustering, consider $m'$ instance clusters $({X'_1}, ..., {X'_{m'}})$ and $n$ machines ($y_1$, ..., $y_n$), where ${{x}'_{i}}$ is the representative instance of instance cluster ${X'_i}$. ${y_{j}}$ is the allocated machine of ${x}'_{i}$ after $B^*$ is given.
Denote $\tilde{x}'_i$ as the features of Ch1 and Ch2 of the representative instance ${{x}'_{i}}$ in instance cluster $X'_i$, and $\tilde{{y}_j}$ as the features of Ch4 and Ch5 of the allocated machine ${y_j}$. 

To align to the clustering ${B'}$ matrix in Plan A, we utilize ${B'}^*$ to denote the placement of instances in clustered Plan B. But the dimensions of ${B'}$ matrix in Plan A and ${B'}^*$ in Plan B are different. The differences occur in two aspects. Firstly, the instance clustering of Plan B and Plan A are different. In Plan A, as the elements in ${B'}$ matrix are variables, the instance clustering can only consider the instance's feature (e.g., cardinality). In Plan B, the placement plan ${B'}^*$ matrix is given, so the instance clustering can consider the feature of the instance and the machine state. It divided instance clusters of Plan A into more sub-clusters, which leads to different number of instance clusters (the number of rows in ${B'}$ and ${B'}^*$). Furthermore, ${B'}^*$ is given in Plan B, and ${{B'}^*}_{i,j}$ is a binary value, which indicates whether the representative instance of instance cluster $X'_i$ is scheduled on the specific machine $y_j$ rather than a machine cluster. Therefore, the number of columns of ${B'}$ matrix in Plan A and ${B'}^*$ in Plan B are different.

Similarly, the variables in Plan B are $\Theta'$, assuming that $f$ is the instance-level latency prediction model, we have ${L'}_{i,j} = f(\tilde{x'}_i, {{\Theta'}_{i,j}}, \tilde{y}_j)$ as the latency when $x'_i$ is running on ${y_j}$ using the resource configuration ${{\Theta'}_{i,j}}$. We assume groups of instances may behave similarly.
After clustering, the constrained MOO problem is defined as in Eq.~\eqref{eq:stage-moo-cluster-b}:

\cut{
\begin{eqnarray}
		\label{eq:stage-moo-ori}
		\arg \min_{\Theta'} & {\left[
			\begin{array}{l}
			L({{B'}^*}, \Theta') = \max_{i,j} {{B'}^*}_{i,j} f(\tilde{x'}_i, {{\Theta'}_{i,j}}, \tilde{y}_j) \\
%			C({{B'}^*}, \Theta') = \sum_{i,j} {{B'}^*}_{i,j} f(\tilde{x'}_i, {{\Theta'}_{i,j}}, \tilde{y}_j) (w_1 {{\Theta'}_{i,j}}^1 + w_2 {{\Theta'}_{i,j}}^2) \vert {X'}_{i} \vert
			C({{B'}^*}, \Theta') = \sum_{i,j} {{B'}^*}_{i,j} f(\tilde{x'}_i, {{\Theta'}_{i,j}}, \tilde{y}_j)(\mathbf{w} \cdot  {\Theta'}_i^T) \vert {X'}_{i} \vert
			\end{array}
			\right]} \\
		\nonumber s.t. &  {\begin{array}{l}
			\sum_i {{\Theta'}_{i,j}}^1 {\gamma'}_i^j \leq U_j^1, \,\,\, \forall i=1...m', j=1...g \\
			\sum_i {{\Theta'}_{i,j}}^2 {\gamma'}_i^j \leq U_j^2, \,\,\, \forall i=1...m', j=1...g
			\end{array}}
		\end{eqnarray}
		
where $\vert {X'}_{i} \vert$ denotes the number of instances in the instance cluster $X'_i$; 
$g$ denotes the number of distinct machines for instances in all instance clusters; ${\gamma'}_i^j$ denotes the number of instances in cluster $X'_i$ allocated on the same machine $y_j$. Denote the indices of instances in cluster $X'_i$ as $I$, ${\gamma'}_i^j = \sum_i {{B'}^*}_{i,j}, i \in I$.
}

\begin{align}
		\label{eq:stage-moo-cluster-b}
		\arg \min_{\Theta'} & {\left[
			\begin{array}{l}
			L({{B'}^*}, \Theta') = \max_{i,j} {{B'}^*}_{i,j} f(\tilde{x'}_i, {{\Theta'}_{i,j}}, \tilde{y}_j) \\
%			C({{B'}^*}, \Theta') = \sum_{i,j} {{B'}^*}_{i,j} f(\tilde{x'}_i, {{\Theta'}_{i,j}}, \tilde{y}_j) (w_1 {{\Theta'}_{i,j}}^1 + w_2 {{\Theta'}_{i,j}}^2) \vert {X'}_{i} \vert
			C({{B'}^*}, \Theta') = \sum_{i,j} {{B'}^*}_{i,j} f(\tilde{x'}_i, {{\Theta'}_{i,j}}, \tilde{y}_j)(\mathbf{w} \cdot  {\Theta'}_i^T) \vert {X'}_{i} \vert
			\end{array}
			\right]} \\
		\nonumber s.t. &  {\begin{array}{l}
%			\sum_i {{\Theta'}_{i,j}}^1 {\gamma'}_i^j \leq U_j^1, \,\,\, \forall i=1...m', j=1...g \\
%			\sum_i {{\Theta'}_{i,j}}^2 {\gamma'}_i^j \leq U_j^2, \,\,\, \forall i=1...m', j=1...g
			\sum_i {{\Theta'}_{i,j}}^1 {\gamma'}_i^j \leq U_j^1, ...,
			\sum_i {{\Theta'}_{i,j}}^d {\gamma'}_i^j \leq U_j^d, \,\,\, \forall j=1...n
			\end{array}}
		\end{align}
		
where $\vert {X'}_{i} \vert$ denotes the number of instances in the instance cluster $X'_i$; 
%$g$ denotes the number of distinct machines for instances in all instance clusters; 
${\gamma'}_i^j$ denotes the number of instances in cluster $X'_i$ allocated on the same machine $y_j$. Denote the indices of instances in cluster $X'_i$ as $I$, ${\gamma'}_i^j = \sum_i {{B'}^*}_{i,j}, i \in I$.
In clustered Plan B, the number of variables and the number of constraints can be reduced to $d * m'$ and 
$d * n$.
%}

\subsection{Implementation of Existing MOO Methods}

In our experiment, we consider $d=2$ types of resources as CPU and memory. Thus, we implemented the existing MOO methods for $d=2$. However, our implementation can be easily extended to the cases when $d > 2$.

\underline{Method 1: Evolutionary algorithm.}
We apply NSGA-II \cite{Deb:2002:FEM} in the family of Evolutionary (Evo) algorithms \cite{Emmerich:2018:TMO} for solving MOO problem of both clustered Plan A and Plan B. The objective functions and constraints are referred to the problem formulation mentioned above. We did hyperparameter tuning to choose a set of optimal population size and the number of iterations for both Plan A (\texttt{EVO}) and Plan B (\texttt{IPA+EVO}). The NSGA-II is implemented by the library Platypus \cite{Platypus}.
Here are the technical details:
\begin{enumerate}
	\item {\bf Complex functions and constraints:} When calculating the stage-level function values, we need to call the predictive model, where the variables are part of the input to get predictions. Moreover, the constraints in Plan A are quadratic, which is more complex than a linear representation. 
	\item {\bf Evaluation time:} As we discussed before, the number of variables and constraints are related to the number of instances. The time cost of different stages could be various if they have a very different number of instances. Based on our observation, a stage could take over $10$ minutes to return solutions, which is not acceptable under the requirement of \textit{efficiency} and we have 0.1M-1M stages of running. Therefore, we limit the time cost to $60$ seconds for solving a stage MOO problem.
	\item {\bf Hyperparameter tuning:} Population size and the number of iterations are two hyperparameters influencing optimization performance. To select one set of optimal hyperparameters, under the time limit of $60$ seconds, we try $12$ sets of hyperparameters for the workload with $103$ stages which include the number of instances varies from $1$ to over $6$k. Finally, the hyperparameter is selected based on the higher coverage of feasible stages and a higher reduction rate on the stage-level latency and cost.
	\item {\bf Randomness:} NSGA-II is an evolutionary algorithm where the population is initialized randomly, and genetic operations are executed under a probability (e.g., crossover and mutation). To make the results reproducible, we fix the randomness of each iteration related to population generation and genetic operations.
\end{enumerate}

%\minip{MOO baseline: Weighted Sum.} 
%\minip{Method 2: Weighted Sum.}
\vspace{0.05in}
\underline{Method 2: Weighted Sum.}
We applied \texttt{WS(Sample)}~\cite{marler2004survey} to reduce a MOO problem into a single-objective optimization (SOO) problem by minimizing the weighted sum of the stage latency and cost.
We applied the random sampling approach to solving the SOO problem in our implementation.
For each stage, we get the corresponded MOO solutions over a set of weights over latency and cost by the following steps:
\begin{enumerate}
	\item Sample up to 100K variable choices randomly.
	\item Filter the variables that violate the constraints of resource capacity or the diverse preference)
	\item Evaluate the objective values over each filtered variable.
	\item Iterate the objective weights over latency and cost and return the variables that minimize the weighted sum of the objectives respectively.
	\item Form the MOO solutions based on the returned variables.
\end{enumerate}

\texttt{WS(Sample)} treats the resource plan $\Theta$ and the placement plan $B$ as the variables in plan A.
In plan B, \texttt{IPA+WS(Sample)} solves only $\Theta$ with the placement plan returned by our \texttt{IPA(Cluster)}.

%\minip{MOO baseline: Progressive Frontier (PF)}
%\minip{Method 3: Progressive Frontier (PF)}
\vspace{0.05in}
\underline{Method 3: Progressive Frontier (PF).}
We implemented our previous work \texttt{PF(MOGD)}~\cite{spark-moo-icde21} in the parallel version (PF-AP) to solve the MOO problem progressively.
\texttt{PF(MOGD)} transforms the MOO problem into a set of single-objective constraint optimization (CO) problems and solves them via the Multi-Objective Gradient Descent (MOGD) solver. 
When doing gradient descent for solving the variables, the MOGD solver treats $\Theta$ and $B$ as float matrices and rounds their values to integers in their corresponding domain spaces after every backward step.

\cut{
\minip{Relax Constraints}
Notice that the 5-6th constraints have quadratic forms of variables, and hence the problem is difficult even for a solver to find a feasible solution.
Therefore, we target a MOO problem in a tighter space with all the original quadratic constraints replaced by linear ones to make the problem more feasible. 
\begin{eqnarray}
		\label{eq:stage-moo-solver}
		\min_{B, \Theta} & {\left[
			\begin{array}{l}
			L(B, \Theta) = \max_{i,j} B_i^j f(\tilde{x}_i, \Theta_i, \tilde{y}_j) \\
			C(B, \Theta) = \sum_{i,j} B_i^j f(\tilde{x}_i, \Theta_i, \tilde{y}_j) (w_1 \Theta_i^1 + w_2 \Theta_i^2 )
			\end{array}
			\right]} \\
		\nonumber s.t. &  {\begin{array}{l}
			B_i^j \in \{0, 1\}, \,\,\, \forall i=1...m , \forall j=1...n \label{eq:con1}\\
			\sum_j B_i^j = 1, \,\,\, \forall i=1...m \\
			\Theta_i^1 \in [0.5, 1], \,\,\, \forall i=1...m \\ 
			\Theta_i^2 \in [1G, 12G], \,\,\, \forall i=1...m \\
			\sum_i B_i^j \leq s_j \,\,\, \forall j=1...n
			\end{array}}
		\end{eqnarray}
where $s_j = \min (\lfloor U_j^1 / 1 \rfloor, \lfloor U_j^2 / 12G \rfloor, \alpha)$ refers to the minimum number of instances $y_j$ can take consider the resource capacity and the diverse preference. 
}

\section{Workload Description}
\label{appendix:workload-desc}

\begin{figure}[t]
	\centering
  \includegraphics[height=7cm,width=.7\linewidth]{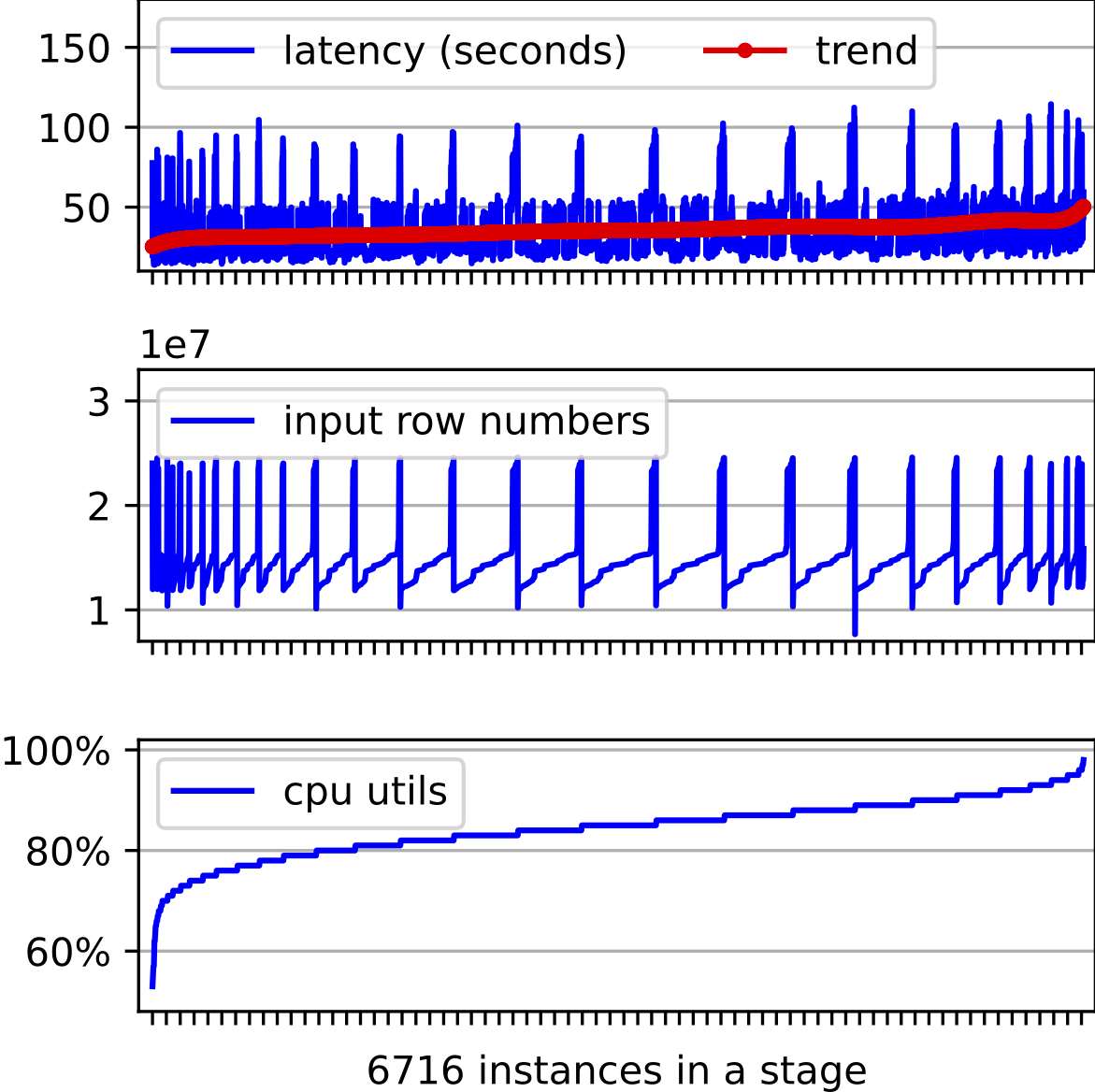}
  \captionof{figure}{Latency variation of 6716 instances in a stage}
   \vspace{-0.1in}
  \label{fig:inst-profiling}
\end{figure}

\underline{Example 1.}
Figure~\ref{fig:cdf-job-stage-num-low} and~\ref{fig:cdf-stage-inst-num-low} show that in a production trace of 0.62 million jobs, there are 1.9 million stages in total, with up to 64 stages in each job, and 121 million instances, with up to 81430 instances in a stage. For a particular stage with 6716 instances, Figure~\ref{fig:stage-lats} shows that the latencies of different instances vary a lot. 

%To show an example of the latency variance, Figure~\ref{fig:inst-profiling} profiles a stage of 6716 

Figure~\ref{fig:inst-profiling} shows the
instances sorted by (a) CPU utilization and (b) by input row number. %in a production workload. 
Even though the instances share the same (sub)query structure and the same resource plan, the instances exhibit great variation of latencies, which could be caused by different instance or machine characteristics.
(1) For the instances running on machines with similar system states (e.g., same CPU utilization), their latency is positively correlated to the input row numbers. 
(2) For those instances that run on machines with different CPU utilization, we observe that the latency has an increasing trend %\todo{which is not quite visible in the current plot} 
as the CPU utilization rises up, which is due to the lack of perfect isolation in big data systems.

If a performance model captures only the overall stage latency, i.e., the maximum instance latency, when the resource optimizer is asked to reduce latency, it will assign more resources uniformly to all the instances (as they are not distinguishable by the model). For those short-running instances, the extra resources do not contribute to the stage latency while incurring a higher cost. Instead, an optimal solution would be to assign more resources only to the long-running instances while maintaining or reducing resources for the short-running ones. Such decisions require fine-grained instance-level models as well as instance-specific resource plans. 

%-------------------------------------------------------------------------------------
\section{More Details on Modeling}
\label{appendix:model}
In this section, we present how we build fine-grained instance-level models to support minimizing both latency and cost of each stage (the granularity of scheduling)  in big data systems. 

Most notably, choosing the latency of a single instance as our fine-modeling target distinguishes our work from other existing work that captures end-to-end query latency~\cite{spark-moo-icde21,udao-vldb19} or operator latency including multiple instances on different machines~\cite{cleo-sigmod20}. These coarse-grained models not only are subject to highly-variable predictions but also miss opportunities for instance-level recommendations -- our fine-grained models aim to solve both problems.   

In addition, existing works considered only part of the system issues in resource management.
For example, the scheduling algorithms~\cite{fuxi-vldb14,Yarn,trident-vldb21} treat scheduling tasks as blackboxes without considering the complexity of the workloads. 
Recent works~\cite{qrop-icde18,cleo-sigmod20,plan-aware-resource-opt-hotcloud20} optimize the resource plan by leveraging the query characteristics, but without taking into account the cluster complexity. 
In contrast, our work aims to capture all relevant systems issues in modeling so that the resource optimizer can leverage the model to make effective recommendations.

\minip{Modeling Tools in Our Framework.}
Besides our GTN based model, we next show that our modeling framework %(MCIs, the embedder, the predictor) 
can accommodate other modeling tools designed for DBMSs, with necessary extensions. 
As thus, we can leverage different modeling tools in our framework and examine their pros and cons in experimental evaluation.

QPPNet~\cite{qppnet-vldb19} and TLSTM~\cite{tlstm-cost-estimator-vldb19} are two state-of-the-art methods that learn the query performance in a DBMS.
QPPNet learns operator-level neural networks named neural units for each query operator in a plan-structured model. The output of each neural unit includes a data vector and a latency channel. The input of each neural unit consists of the operator-oriented features and data vectors generated by its child nodes (if any). QPPNet returns the query latency prediction from the latency channel in the output of the root neural unit.
TLSTM encodes query operators into a uniform feature space and feeds the operator encodings to a Tree-Structured Long Short-Term Memory Network (TreeLSTM)~\cite{tree-lstm-acl15} to learn the query latency in a query plan tree. Figure~\ref{fig:tlstm} shows how TLSTM does stage embedding over a query job by taking use of its tree structure.

\begin{figure}[t]
	\centering
  \includegraphics[height=3cm,width=.7\linewidth]{}
  \captionof{figure}{Converting a DAG to a tree}
   \vspace{-0.1in}
  \label{fig:dag2tree}
\end{figure}

\begin{figure}[t]
	\centering
  \includegraphics[width=.7\linewidth]{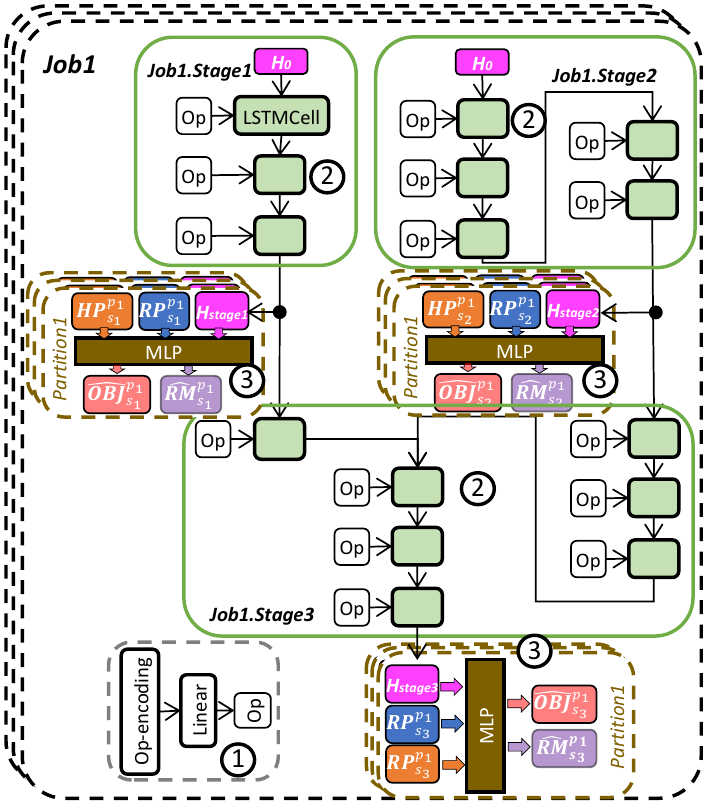}
  \captionof{figure}{Example of stage Embedding by TLSTM}
   \vspace{-0.1in}
  \label{fig:tlstm}	
\end{figure}

In our work, we extend QPPNet and TLSTM to support the MCI-based modeling in big data systems by solving two issues:

{\it (1) How to model arbitrary DAG structures in a tree-based model? }
Both QPPNet and TLSTM consider only tree-structured query plans, while a big data query plan can be an arbitrary DAG. We address the problem by converting an arbitrary DAG to a tree, as shown in Figure~\ref{fig:dag2tree}
%\futuretodo{change the DAG in the figure to a diamond}. 
We first consider a DAG as an ``extended" tree by treating output operators (out degree = 0) as the ``root(s)" and the input operators (in degree = 0) as the ``leaves". Hence the operator dependencies point from a child node to a parent node. 
Consequently, the ``extended" tree could have (i) nodes with multiple parents (multi-parent), and (ii) multiple ``roots" (multi-root). 
When a node has multiple parents, we fork the subtree rooted on the node and link each subtree to one parent. 
When multiple ``root" nodes are detected, we add one artificial node in the tree and point all the current ``root" nodes to it, such that the artificial node becomes the only root for the transformed tree.
After the transformations, the query plan could be converted into a tree and fits any tree-based model.

{\it (2) How to extend the model to leverage MCI features?}
QPPNet learns neural units in the plan structure and gets the latency prediction from the neural unit of the root operator. 
To leverage the instance-oriented features in channels 2-5, we broadcast them to all the neural units in the stage during preprocessing.
Consequently, each neural unit can have both the operator-oriented features (Ch1+AIM) and the instance-oriented features to finish the instance latency prediction.
Unlike QPPNet, TLSTM is quite compatible with our MCI framework after converting DAG plans to trees. The original TLSTM  implicitly treats the hidden layer of the root operator as a representation of the query and learns the latency for it.
In the extension, MCI+TLSTM applies the MCI framework with TreeLSTM as the plan embedder and the hidden layer of the root operator as the query embedding to learn the instance latency.

\minip{Customized Features in Figure~\ref{fig:mci}}

We will provide the list of customized features upon approval by Alibaba.

%-------------------------------------------------------------------------------------
\section{More Details on IPA}
\label{appendix:ipa}

\subsection{Proof for Theroem~\ref{thm:ipa}}

Denote $m$ and $n$ as the number of instances and machines, respectively. Denote $L \in \mathbb{R}^{m\times n}$ as the latency matrix, $L_i$ as the i-th row in $L$, and $L_{ij}$ as the latency in the $i$-th row and the $j$-th column.
And further denote IPA-$L$ as the procedure of applying IPA over a latency matrix L and $IPA(L)$ as the stage latency achieved accordingly.

\cut{
We assume that each stage has the {\bf properties $\mathcal{P}$} as:
\begin{enumerate}
	\item Instances in a stage request the same CPU and memory.
	\item A stage can always find a feasible solution because the cluster's capacity is always more than a stage's need.
	\item Each column of the $L$ matrix shares the same order\footnote{This property is verified over 94\% of the 2M stages, and the remaining 6\% violation is due to the model uncertainty.}.
\end{enumerate}
}

Recall that instances in a stage request the same CPU and memory when applying IPA and IPA can always find a feasible solution for a stage because the cluster's capacity is always more than a stage's need. 

We consider each machine a 1-instance machine whose capacity can only afford one instance. Otherwise, we drop machines out of resource capacity and flat each k-instance machine (k>1) into k 1-instance machines. 
Since a stage can always find a feasible solution in the cluster, we have $m \leq n$.

%For simplicity, we denote the property $\mathcal{C}$ as the column-order assumption and the property $\mathcal{R}$ as the row-order assumption.
%For simplicity, we further denote $\mathcal{C}$ as the {\it column-order} assumption and $\mathcal{R}$ as the {\it row-order} assumption in $L$.

Under the {\it column-order} assumption $\mathcal{C}$, let us denote $\tilde{L}$ as constructed by sorting rows of $L$ in the non-ascending order such that $\tilde{L}_{1j} \geq \tilde{L}_{2j} \geq ... \geq \tilde{L}_{mj}, \forall j$.

\begin{proposition}\label{pps:ipa-tilde}
	Under the column-order assumption $\mathcal{C}$, IPA-$L$ and IPA-$\tilde{L}$ return the same placement plan and IPA-$\tilde{L}$ returns the instance-machine matches in the descending order of the rows.
\end{proposition}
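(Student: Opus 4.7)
The plan is to exploit the row-permutation symmetry of IPA together with the monotone structure guaranteed by assumption $\mathcal{C}$, in two independent pieces.

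First, I would establish that IPA is \emph{permutation-equivariant} under relabelling of instances. Inspecting Algorithm~\ref{alg:appr-ipa-algo}, the only row-dependent operations are (i) the BPL computation, which is a pure function of row contents, and (ii) the argmax selection on the BPL list, which is again content-driven. Hence if $\sigma$ is the permutation taking $L$ to $\tilde{L}$ (so row $i$ of $\tilde{L}$ equals row $\sigma(i)$ of $L$), fixing any canonical tie-breaking rule makes IPA-$\tilde{L}$ produce exactly the set of pairs $\{x_{\sigma(i)}\to y_j\}$ whenever IPA-$L$ produces $\{x_i\to y_j\}$. This yields the first claim: IPA-$L$ and IPA-$\tilde{L}$ yield the same placement plan, viewed as a mapping from the underlying rows to machines.

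Second, I would prove the descending-order claim by induction on the selection step. For the base case, $\tilde{L}_{1j}\geq \tilde{L}_{ij}$ for every $j$ and every $i\geq 1$ by construction of $\tilde{L}$, so if $j^\star$ attains $\min_j \tilde{L}_{1j}$ then $\min_j \tilde{L}_{1j}=\tilde{L}_{1j^\star}\geq \tilde{L}_{ij^\star}\geq \min_j \tilde{L}_{ij}$, showing that $\mathit{BPL}_1$ is the maximum of the initial BPL list, and row~1 is picked first. For the inductive step, suppose rows $1,\dots,k$ have already been picked in order and matched to machines $j_1,\dots,j_k$. The column-order property is inherited by the submatrix obtained after deleting these rows and any machines whose capacity has been exhausted, so $\tilde{L}_{k+1,j}\geq \tilde{L}_{i,j}$ for all remaining $i>k+1$ and all remaining $j$. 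Repeating the min-dominance argument shows row $k+1$ is chosen next, closing the induction.

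The main subtlety I anticipate lies in handling ties in the argmax step, since then ``the'' selection order is not uniquely defined. I would resolve this by fixing a canonical tie-breaking rule (e.g., smallest row index) in both IPA-$L$ and IPA-$\tilde{L}$; the equivariance argument in the first step is stated with respect to this same rule, and the descending-row statement in the second step then holds for one specific consistent execution of IPA-$\tilde{L}$. A secondary check is that the capacity bookkeeping in line~10 of Algorithm~\ref{alg:appr-ipa-algo} commutes with row relabelling, which is immediate because capacities $\beta_j$ are indexed by machines and unaffected by permuting instances.
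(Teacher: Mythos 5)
Your proposal is correct and follows essentially the same route as the paper: both split the claim into (i) row-permutation invariance of IPA, since the BPL computation and the argmax are functions of row contents only, and (ii) an induction over selection steps showing that under the column-order assumption the top remaining row of $\tilde{L}$ always attains the maximal BPL --- your direct chain $\min_j \tilde{L}_{1j} = \tilde{L}_{1j^\star} \geq \tilde{L}_{ij^\star} \geq \min_j \tilde{L}_{ij}$ is just the contradiction-free version of the paper's argument that a lower row cannot beat row~1. The one caveat is your tie-breaking remark: a smallest-row-index rule is not itself permutation-equivariant (the same physical instance carries different indices in $L$ and $\tilde{L}$), so it does not rescue part (i) when ties occur; the paper instead assumes all entries are distinct, and you would need either that assumption or a content-based tie-break to make your first step airtight.
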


\begin{proof}
%	\todo{prove recursively by contradiction}
%Notice when $L$ has only one row ($m=1$), $\tilde{L} = L$ and hence they return the same resource plan. 
%When $m>1$, l
Let us denote the machine sequence $A=(a_1, a_2, ..., a_m)$ such that the best possible latency (BPL) for row $1, 2,..., m$ in $L$ is achieved at $L_{1a_1}, L_{2a_2}, ..., L_{ma_m}$. 
Similarly, we denote the machine sequence $\tilde{A}=(\tilde{a}_1, \tilde{a}_2, ..., \tilde{a}_m)$ as the machine sequence that BPLs are achieved for row $1,2,...,m$ in $\tilde{L}$.
Since $\tilde{L}$ is constructed by sorting rows of $L$, we have $\tilde{A}$ as one of the permutations of $A$.

Notice that after each instance-machine match, IPA will disable the responding row and column in the latency matrix, and recursively choose the maximum (recalculated) BPL as the instance-machine match from the remaining latency matrix.

Consider after the $t$-th ($0\leq t \leq m-1$) match, where we have the remaining latency matrices $L, \tilde{L} \in \mathbb{R}^{(m-t)\times (n-t)}$.
Assume $(i, a_i)$ is the first instance-machine match picked by IPA in $L$, such that $L_{ia_i}$ is the maximum among BPLs of all remaining rows.
Since each entry in $L$ and $\tilde{L}$ represents the running latency, we further assume each entry in $L$ and $\tilde{L}$ has a unique value, and hence the column order in $\tilde{L}$ is descending.

Now we show that IPA always picks the BPL at the first row in $\tilde{L}$ at each step $t\in[0, m-1]$ as the first instance-machine match by proving $L_{ia_i} = \tilde{L}_{1\tilde{a}_1}$ over the remaining latency matrices $L$ and $\tilde{L}$.
%that is originally from the $(i, a_i)$ match in $L$.

When $t=m-1$, the remaining $L$ has only one row, and hence $\tilde{L} = L$. Thus, they return the same BPL in the first and only row.
When $t < m-1$, assume the first instance-machine match picked by IPA in $\tilde{L}$ is not in the first row. 
Without loss of generality, we assume the first match is $(2, \tilde{a}_2)$ in $\tilde{L}$. I.e., $\tilde{L}_{2\tilde{a}_2}$ is the maximum among all BPLs from each row of $\tilde{L}$ and hence $\tilde{L}_{2\tilde{a}_2} \geq \tilde{L}_{1\tilde{a}_1}$.
Notice that by following the descending column order in $\tilde{L}$ at column $\tilde{a}_1$, we have $\tilde{L}_{2\tilde{a}_1} < \tilde{L}_{1\tilde{a}_1}$. Since $(2, \tilde{a}_2)$ achieves the BPL as the minimum latency in row $\tilde{L}_2$, we have $\tilde{L}_{2\tilde{a}_2} \leq \tilde{L}_{2\tilde{a}_1}$.
Thus, we get $\tilde{L}_{2\tilde{a}_2} < \tilde{L}_{1\tilde{a}_1}$, which contradicts our assumption that $\tilde{L}_{2\tilde{a}_2} \geq \tilde{L}_{1\tilde{a}_1}$. Therefore, IPA picks $(1, \tilde{a}_1)$ as the first instance-machine match in $\tilde{L}$, which is originally from the first match $(i, a_i)$ in $L$.

Therefore, IPA-$\tilde{L}$ always picks the same instance-machine match as IPA-$L$ in each step, and hence IPA-$L$ and IPA-$\tilde{L}$ shall return the same placement plan.
Since IPA-$\tilde{L}$ recursively picks the first row in the remaining $\tilde{L}$ in each step, we also show that IPA-$\tilde{L}$ returns the instance-machine matches following the descending order of the rows.
\end{proof}

\begin{theorem}\label{thn:ipa-so}
Under the column-order assumption $\mathcal{C}$, 
	IPA-$\tilde{L}$ achieves the minimum stage latency, and hence for IPA-$L$, such that $IPA(\tilde{L}) = IPA(L) = l_m^*$, where $l_m^*$ denotes the groundtruth minimum stage latency of $L \in \mathbb{R}^{m\times n}$.
	
\cut{
	IPA-$\tilde{L}$ achieves Pareto-optimality, and hence for IPA-$L$, such that
	\begin{enumerate}[label=(\Alph*)]
		\item IPA achieves the minimum stage latency of $L \in \mathbb{R}^{m\times n}$ as $l_m^*$, where $l_m^*$ denotes the groundtruth minimum stage latency.
		\item IPA achieves the minimum stage cost among the placement plans that achieve $l_m^*$.
	\end{enumerate} 
	}
\end{theorem}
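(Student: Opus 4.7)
\textbf{Proof plan for Theorem \ref{thn:ipa-so}.} By Proposition \ref{pps:ipa-tilde} it suffices to prove that IPA-$\tilde{L}$ attains the minimum stage latency, since IPA-$L$ returns the same placement plan. I will proceed by induction on $m$, the number of instances. The base case $m=1$ is immediate, because IPA simply picks the column minimizing row $1$, which is by definition optimal. For the inductive step, I assume the claim holds for every $(m-1)\times(n-1)$ matrix satisfying the column-order assumption, and I consider $\tilde{L}\in\mathbb{R}^{m\times n}$.

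By Proposition \ref{pps:ipa-tilde}, IPA-$\tilde{L}$ first picks the cell $(1,\tilde{a}_1)$ with latency $\tilde{L}_{1\tilde{a}_1}=\min_j \tilde{L}_{1j}$, and then runs recursively on the reduced matrix $\tilde{L}'$ obtained by deleting row $1$ and column $\tilde{a}_1$. Note that $\tilde{L}'$ still satisfies the column-order assumption, so by the inductive hypothesis IPA achieves the optimal stage latency $s^*_{m-1}$ on $\tilde{L}'$. Consequently, the stage latency produced by IPA-$\tilde{L}$ equals $\max(\tilde{L}_{1\tilde{a}_1},\,s^*_{m-1})$. Let $s^*_m$ denote the true optimum for $\tilde{L}$. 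Since every feasible placement must assign row $1$ somewhere and, by the column-order assumption, $\tilde{L}_{1j}\geq \tilde{L}_{1\tilde{a}_1}$ for all $j$, we immediately obtain $s^*_m\geq \tilde{L}_{1\tilde{a}_1}$. It remains to show $s^*_m\geq s^*_{m-1}$, which combined with the above yields $\max(\tilde{L}_{1\tilde{a}_1},s^*_{m-1})\leq s^*_m$, hence equality.

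The central (and only nontrivial) step is a \emph{swap argument} showing that there exists an optimal placement $\pi^*$ of $\tilde{L}$ with $\pi^*(1)=\tilde{a}_1$. Given any optimal $\pi^*$, if $\pi^*(1)=\tilde{a}_1$ we are done. Otherwise let $c=\pi^*(1)\neq \tilde{a}_1$. If no row is assigned to column $\tilde{a}_1$ under $\pi^*$, simply re-route row $1$ to $\tilde{a}_1$: row $1$'s latency changes from $\tilde{L}_{1c}$ to $\tilde{L}_{1\tilde{a}_1}\leq \tilde{L}_{1c}$, so the stage latency does not increase. If instead some $i^*\neq 1$ has $\pi^*(i^*)=\tilde{a}_1$, swap the two assignments to obtain $\pi'(1)=\tilde{a}_1$, $\pi'(i^*)=c$. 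The column-order assumption gives both $\tilde{L}_{1\tilde{a}_1}\leq \tilde{L}_{1c}$ and $\tilde{L}_{i^*c}\leq \tilde{L}_{1c}$ (since row $1$ dominates row $i^*\geq 2$ entry-wise after sorting), so neither of the two affected latencies exceeds $\tilde{L}_{1c}$, and all other instance latencies are unchanged. Hence $\pi'$ is still optimal and satisfies $\pi'(1)=\tilde{a}_1$. Restricting $\pi'$ to rows $2,\dots,m$ gives a feasible placement on $\tilde{L}'$ whose stage latency is at most $s^*_m$, so $s^*_{m-1}\leq s^*_m$, completing the induction.

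The main obstacle I expect is the swap argument: I need the post-swap configuration to remain feasible (each instance still goes to a distinct $1$-instance machine, which is clear because we only permute assignments), and I must invoke the column-order assumption in the precise form "row $1$ dominates every other row entry-wise" — this is exactly what the sorting in $\tilde{L}$ guarantees, which is why Proposition \ref{pps:ipa-tilde} is used upfront. All remaining arithmetic is routine.
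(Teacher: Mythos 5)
Your proposal is correct and shares the paper's overall skeleton: reduce to $\tilde{L}$ via Proposition~\ref{pps:ipa-tilde}, induct on $m$, observe that IPA's output equals $\max(\tilde{L}_{1\tilde{a}_1}, s^*_{m-1})$, and then show this does not exceed the true optimum $s^*_m$. Where you genuinely diverge is in the crux of the inductive step. The paper fixes an arbitrary optimal machine sequence $B$ and runs a three-way case analysis on where the optimum places instance~$1$ and who (if anyone) occupies machine $j^*$; its Case~3 is the heavy one, comparing the optima of two \emph{different} reduced submatrices (delete row $i$ vs.\ delete row $1$, both minus column $j^*$) via an elementwise-domination argument and then splitting further into subcases 3.1/3.2. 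You instead normalize the optimal plan with a single exchange argument: swap the assignments of row $1$ and whoever sits on $\tilde{a}_1$, and use the column-order assumption in the form ``row $1$ dominates every other row entrywise in $\tilde{L}$'' to show both affected latencies stay below $\tilde{L}_{1c}$, so optimality is preserved and the restriction to rows $2,\dots,m$ witnesses $s^*_{m-1}\le s^*_m$. Your route is shorter and avoids the paper's submatrix comparison entirely (it subsumes the paper's Cases 2 and 3 in one move); the paper's route proves the slightly more general intermediate fact $l'^*_k \ge l^*_k$ about reduced matrices, but pays for it in bookkeeping. Both hinge on exactly the same use of the column-order assumption, and your feasibility observations (the swap stays injective; the reduced matrix inherits the column order) are the right ones to check.
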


\begin{proof}
	
According to Proposition~\ref{pps:ipa-tilde}, we consider each column shares the same non-ascending order in $L_{m\times n}$.
Given an arbitrary $L_{m\times n} \in \mathbb{R}^{m\times n} (m\leq n)$ that satisfies $\mathcal{C}$, we show that $IPA(L_{m\times n}) = l_m^*$ by induction, where $l_m^*$ is the ground truth minimum stage latency in $L_{m \times n}$.
	
{\ }\\
{\it \underline{Base case.}} When $m=1$ and $\forall n \geq 1$, $L$ has only one row and the stage latency equals the latency of the only instance. Notice IPA picks the machine that achieves the minimum latency for the first instance. Therefore, $IPA(L_{1\times n}) = \min_j (L_{1j})$ is also minimum achievable stage latency $l_1^*$. Thus, $IPA(L_{m\times n}) = l_m^*$ for $m=1$ and $\forall n \geq 1$.

{\ }\\
{\it \underline{Induction.}} Assume when $m=k (k\geq 1)$ and $\forall n \geq k$, IPA achieves the groundtruth minimum stage latency, s.t., $IPA(L_{k\times n}) = l_k^*$.
Given $L_{(k+1)\times (n+1)} \in \mathbb{R}^{(k+1)\times (n+1)}$, we prove that IPA also achieves the minimum stage latency $l_{k+1}^*$, s.t., $IPA(L_{(k+1)\times (n+1)}) = l_{k+1}^*$. In the following text, we use $\hat{l}$ to represent $IPA(L_{(k+1)\times (n+1)})$ for the simplicity of reading, and hence we are going to prove $\hat{l} = l_{k+1}^*$.

Denote $L_{1j^*}$ as the first matched latency by IPA and $L_{k\times n} = L_{(k+1)\times (n+1)} / (L_{1\cdot}, L_{\cdot j^*})$. Thus, $L_{1j^*}$ is the minimum over $L_1$.
Notice that IPA will continue to be applied on $L_{k\times n}$. Hence we have $\hat{l} = \max(L_{1j^*}, IPA(L_{k\times n}))$.
Since $L_{k\times n}$ has $k$ rows and satisfies , IPA achieves the minimum stage latency $IPA(L_{k\times n}) = l_k^*$ according to the assumption. Therefore, we can rewrite $\hat{l} = \max(L_{1j^*}, l_k^*)$.

Assume $l_{k+1}^*$ is achieved by an optimal placement plan that provides a machine sequence $B = (b_1, b_2, ..., b_{k+1})$, s.t., instance $1$, $2$, ..., $k$+1 match machine $b_1$, $b_2$, ..., $b_{k+1}$, respectively. We further represent the stage latency achieved by a machine sequence $B$ over a latency matrix of $k+1$ rows as $l_{k+1}^B$. Thus, we have $l_{k+1}^B = l_{k+1}^*$.

Next, we category $B$ into three possible cases and prove $\hat{l} = l_{k+1}^*$ on each case.

\noindent
\underline{Case 1. $b_1 = j^*$ (when instance 1 matches machine $j^*$).} 

Since $l_{k+1}^B$ is the maximum instance latency picked over $k$+1 rows, we have $l_{k+1}^B = \max_i L_{ib_i} = \max(L_{1j^*}, \max_{i\geq 2}L_{ib_i})$. Notice the machine sequence $(b_2, ..., b_{k+1})$ constructs a placement plan over $L_{k\times n}$ and hence its corresponding stage latency $\max_{i\geq 2}L_{ib_i}$ cannot be less than the groundtruth minimum $l_k^*$. Thus, $\max_{i\geq 2}L_{ib_i} \geq l_k^*$. Therefore, we have 
$l_{k+1}^B = \max(L_{1j^*}, \max_{i\geq 2}L_{ib_i}) \geq \max(L_{1j^*}, l_k^*)$. 
Recall that $l_{k+1}^B = l_{k+1}^*$ is the groundtruth minimum and $\hat{l} = \max(L_{1j^*}, l_k^*)$. We get $\hat{l} \leq l_{k+1}^*$. Therefore, $\hat{l} = l_{k+1}^*$ based on the optimality of $l_{k+1}^*$.
%Recall that $l_{k+1}^B = l_{k+1}^*$, we have $l_{k+1}^* \geq \hat{l}$. By the definition of $l_{k+1}^*$, we also have $l_{k+1}^* \leq \hat{l}$. Hence we get $\hat{l} = l_{k+1}^*$ in case 1.

%we have $l_k^* \leq \max_{i\geq 2}L_{ib_i}$ by definition.

\noindent
\underline{Case 2. $b_1 \neq j^*$ and machine $j^*$ is not picked by any instance.}

Consider another machine sequence $B' = (j^*, b_2, b_3, ..., b_{k+1})$ that picks machine $j^*$ for instance 1 and keeps the same machine choices for instance 2 to $k$+1 as $B$.
Since $L_{1j^*}$ is the minimum latency of $L_1$, we have $L_{1j^*} \leq L_{1b_1}$. Thus,
 $l_{k+1}^{B'} = \max(L_{1j^*}, \max_{i\geq2} L_{ib_i}) \leq \max(L_{1b_1}, \max_{i\geq2} L_{ib_i}) = l_{k+1}^B$. 
Since $l_{k+1}^B = l_{k+1}^*$ and $l_{k+1}^{B'} \leq l_{k+1}^B$, we shall have $l_{k+1}^{B'} = l_{k+1}^*$ based on the optimality of $l_{k+1}^*$.
Notice $B'$ achieves the groundtruth minimum latency and picks $L_{1j^*}$. Therefore, we have $\hat{l} = l_{k+1}^{B'} = l_{k+1}^*$ by the conclusion of case 1.

\noindent
\underline{Case 3. $b_1 \neq j^*$ and machine $j^*$ is picked by instance $i (i \neq 1)$}, such that $b_i = j^*$.
By the definition of the stage latency, $l_{k+1}^B = \max_i L_{ib_i} = \max(L_{ij^*}, \max_{i'\neq i}L_{i'b_{i'}})$.
Denote $L'_{k\times n} = L_{(k+1)\times (n+1)} / (L_{i\cdot}, L_{\cdot j^*})$ and its groundtruth minimum stage latency is $l'^*_{k}$.
Consider $\max_{i'\neq i}L_{i'b_{i'}}$ as the stage latency achieved over $L'_{k\times n}$ by picking machines in the sequence $(b_1, b_2, ..., b_{i-1}, b_{i+1}, b_{i+2}, ..., b_{k+1})$. Then we have $l'^*_{k} \leq \max_{i'\neq i}L_{i'b_{i'}}$ based on the optimality of $l'^*_{k}$.

We now claim that $l'^*_{k} \geq l^*_k$, where $l^*_k$ is the groundtruth minimum stage latency of $L_{k\times n} = L_{(k+1)\times (n+1)} / (L_{1\cdot}, L_{\cdot j^*})$.

First, $L_{k\times n}$ is elementwisely less than or equal to $L'_{k\times n}$. 
Since both latency matrices drop the column $j^*$ in $L_{(k+1)\times(n+1)}$ and the last $k-i+1$ rows of the two matrices are both from row $i+1$ to row $k+1$ in $L_{(k+1)\times(n+1)}$, the differences remain in the first $i-1$ rows.
Notice that the first $i-1$ rows in $L_{k\times n}$ are from row 2 to row $i$ in $L_{(k+1)\times(n+1)}$; and the first $i-1$ rows in $L'_{k\times n}$ are from row 1 to row $i-1$ in $L_{(k+1)\times(n+1)}$. 
Since each column in $L_{(k+1)\times(n+1)}$ keeps the non-ascending order, each of the first $i-1$ rows in $L_{k\times n}$ is elementwisely less than or equal to that of $L'_{k\times n}$,
and hence $L_{k\times n}$ is elementwisely less than or equal to $L'_{k\times n}$.

Second, we prove $l'^*_{k} \geq l^*_k$ by contradiction. 
Assume $l'^*_k < l_k^*$ and denote $B'=(b'_1, b'_2, ..., b'_k)$ as the machine sequence that achieves $l'^*_k$ in $L'_{k\times n}$, i.e., $l'^*_k = \max_{i'} L'_{i'b_{i'}}$. 
Since $L_{k\times n}$ is elementwisely less than or equal to $L'_{k\times n}$, we have $l'^*_k = \max_{i'} L'_{i'b_{i'}} \geq \max_{i'} L_{i'b_{i'}} = l_k^{B'}$. Therefore, $B'$ constructs one placement plan such that $l_k^{B'} \leq l'^*_k$.
Furthermore, we have $l_k^* \leq l_k^{B'}$ according to the optimality of $l_k^*$. Thus, $l_k^* \leq l_k^{B'} \leq l'^*_k$, which contradicts the assumption that $l_k^* > l'^*_k$.
Therefore, we have $l'^*_k \geq l_k^*$.

Recall that we have $l_{k+1}^* = L_{k+1}^B = \max (L_{ij^*}, \max_{i'\neq i}L_{i'b_{i'}})$ in case 3. We now show that $\hat{l} = l_{k+1}^*$ over the two subcases.

\noindent
\underline{Case 3.1. when $L_{ij^*} < \max_{i'\neq i}L_{i'b_{i'}}$}, $l_{k+1}^* = \max_{i'\neq i} L_{i'b_{i'}}$. (1) Notice $l'^*_k \geq l_k^*$ and $l'^*_{k} \leq \max_{i'\neq i}L_{i'b_{i'}}$. Thus, we have $l_{k+1}^* \geq l_k^*$.
(2) On the other hand, notice that $l_{k+1}^* = \max_{i'\neq i} L_{i'b_{i'}} \geq L_{1b_1} \geq L_{1j^*}$. Thus, we have $l_{k+1}^* \geq L_{1j^*}$. 
Considering (1) and (2), $l_{k+1}^* \geq \max(L_{1j^*}, l_k^*) = \hat{l}$. Based on the optimality of $l_{k+1}^*$, we have $\hat{l} = l_{k+1}^*$.

\noindent
\underline{Case 3.2. when $L_{ij^*} \geq \max_{i'\neq i}L_{i'b_{i'}}$}, $l_{k+1}^* = L_{ij^*}$. 
(1) Based on the definition of stage latency, we have $\max_{i'\neq i}L_{i'b_{i'}} \geq L_{1b_1}$. Thus, $L_{ij^*} \geq \max_{i'\neq i}L_{i'b_{i'}} \geq L_{1b_1}$. 
(2) Since each column has the non-ascending order, and $L_{1j^*}$ is the minimum of $L_1$, we have $L_{ij^*} \leq L_{1j^*} \leq L_{1b_1}$.
Thus, $l_{k+1}^* = L_{ij^*} = L_{1j^*} = L_{1b_1}$ based on (1) and (2).
Furthermore, since $l'^*_k \geq l_k^*$ and $l'^*_{k} \leq \max_{i'\neq i}L_{i'b_{i'}}$, we can derive $\max(L_{1j^*}, l_k^*) \leq \max(L_{ij^*}, l'^*_k) \leq \max(L_{ij^*}, \max_{i'\neq i}L_{i'b_{i'}})$. 
Recall that $\hat{l} = \max(L_{1j^*}, l_k^*)$ and $\max(L_{ij^*}, \max_{i'\neq i}L_{i'b_{i'}}) = l_{k+1}^*$.
Hence we have $\hat{l} \leq l_{k+1}^*$. According to the optimality of $l_{k+1}^*$, we have $\hat{l} = l_{k+1}^*$.

Therefore, we show that $IPA(L_{(k+1)\times (n+1))}) = \hat{l} = l_{k+1}^*$ holds for case 1-3 when $m=k+1$. And hence $IPA(L_{m\times n}) = l_m^*$ holds for $\forall m \geq 1, n \geq m$.
\end{proof}

\subsection{Clustering-based IPA}

By applying a clustering method, let us denote 
\begin{enumerate}
	\item $X' = [X'_1, ..., X'_{m'}]$ as the $m'$ instance clusters
	\item $\hat{X}' = [x'_1, ..., x'_{m'}]$ as the $m'$ instance cluster representatives.
	\item $\beta = [\beta_1, ..., \beta_{m'}]$ denote the number of instances in each instance cluster. $\beta_i = |X'_i|$
	\item $Y' = [Y'_1, ..., Y'_{n'}]$ as the $n'$ machine clusters
	\item $\hat{Y}' = [y'_1, ..., y'_{n'}]$ as the $n'$ machine cluster representatives.
	\item $s = [s_1, ..., s_{m'}]$ denotes the maximum instance each machine cluster can take. $s_{j} = \sum_{\{ j' | y_{j'} \in Y'_{j} \}}{s_{j'}} $
\end{enumerate}

\begin{algorithm}[t]
	\caption{Clustering-based IPA Approach}
	\label{alg:appr-ipa-algo-cluster}
	\small
	\begin{algorithmic}[1]  
		\REQUIRE {$X'$, $\hat{X}'$, $\beta$, $Y'$, $\hat{Y}'$, $s$}		
		\STATE $L'$ = cal\_latency(model, $\hat{X}'$, $\hat{Y}'$)	
		\STATE $Y^* = \hat{Y}'$, $X^* = \hat{X}'$, and $P = \{\}$. \hfill\COMMENT{Init}		
		\STATE $BPL_{list}$ = cal\_bpl($L'$, $X^*$, $Y^*$).
		\REPEAT
		\STATE $i_t, j_t$ = find\_index\_argmax($BPL_{list}$) \hfill\COMMENT{the instance and machine {\bf cluster ids} corresponding to the largest BPL value}.
		\STATE $\delta = \min(\beta_{i_t}, s_{j_t})$ \hfill\COMMENT{the num of instance can be sent}
		\STATE $X^{\delta}$ = get\_inst($X'_{i_t}$) \hfill\COMMENT{pick $\delta$ insts with top-$\delta$ input row num}
		\STATE $Y^{\delta}$ = get\_mach($Y'_{i_t}$) \hfill\COMMENT{pick $\delta$ available machs randomly}
		\STATE $P = P \bigcup \{x^\delta_k \rightarrow y^\delta_k |k\in [1,\delta] \}$, $s_{j_t} = s_{j_t} - \delta$, $\beta_{i_t} = \beta_{i_t} - \delta$, $X'_{i_t} = X'_{i_t} - X^\delta$, $Y'_{j_t} = Y'_{j_t} - Y^\delta$.		
		\IF {$Y^*$ is $\emptyset$ \AND $X^*$ is not $\emptyset$}
		\RETURN $P = \{\}$ \hfill\COMMENT{No solution found}
		\ENDIF			
		\STATE\algorithmicif\ {$\beta_{i_t} == 0$} \algorithmicthen\ $X^* = X^* - \{\hat{X}'_{i_t}\}$ \algorithmicend\ \algorithmicif
		\STATE\algorithmicif\ {$s_{j_t} == 0$} \algorithmicthen\ $Y^* = Y^* - \{\hat{Y}'_{j_t}\}$, Recalculate the $BPL_{list}$. \algorithmicend\ \algorithmicif		
				
		\UNTIL{$X^*$ is empty}
		\RETURN P
	\end{algorithmic}
\end{algorithm}

Algorithm~\ref{alg:appr-ipa-algo-cluster} shows the Pseudo code of the clustering based algorithm.
We use a density-based clustering~\cite{book-data-clustering} that calculates the kernel density estimation (KDE) of the inputs in the 1D space and uses the minima of the density distribution as the boundaries to divide instances into clusters.

It is worth mentioning that instances in the same cluster can still have smaller variances due to the differences in input row numbers. 
To avoid the case when the number of machines is not enough for running the entire instance cluster while the longer-running instances cannot be sent in line 9, 
our approach gives instances with a larger input row number a higher priority to be chosen in the same group. It also sorts the instances in the descending input row number order during clustering to reduce the time complexity during picking instances from the cluster.

\section{More details on RAA}
\label{appendix:raa}

%-------------------------------------------------------------------------------------
\begin{table*}[t]
\ra{1.2}
\small
\centering
	\begin{tabular}{cl}\toprule
	Symbol & Description \\\midrule
{\bf instance-oriented}	&
\\
$k$ & the number of objectives. %$k=2$ in our case.
\\	
$d$ & the number of resource types in a resource configuration (RC). %$d=2$ in our case.
\\
$m$ & the number of instances.
\\
$i$ & the indicator to an instance.
\\
$j$ & the indicator to a \po solution in a Pareto set (list).
\\
$p_i$ & $p_i$ is the number of \po solutions in the $i$-th instance.
\\	
$\theta_i^{j}$ & the $j$-th RC corresponding to the Pareto-optimal solution for the $i$-th instance, $\theta_i^{j}\in \mathbb{R}^d$.
%(CPU, memory sizes) in our special case.
\\
%$f_i^{\theta_i^j}$ / $f_i({\theta_i^j})$ / $f_i^j$ & the $j$-th \po solution for the $i$-th instance. $f_i^{\theta_i^j}\in \mathbb{R}^k$.
$f_i^j$ & the $j$-th \po solution for the $i$-th instance. $f_i^j\in \mathbb{R}^k$.
%$f_i^{\theta_i^j}$ & the $j$-th \po solution for the $i$-th instance. $f_i^j \in \mathbb{R}^k$.
%(latency, cost) in our special case
\\
$\theta_i$ & the RC list, a list of RC corresponding to the Pareto set for the $i$-th instance. $\theta_i = [\theta_i^1, \theta_i^2, ..., \theta_i^{p_i}]$.
\cut{\\}
%$f_i$ & the Pareto set for instance $i$, represented by a list of all \po solutions for the $i$-th instance. $f_i = [f_i^{\theta_i^1}, f_i^{\theta_i^2}, ..., f_i^{{\theta_i^{p_i}}}]$.
\cut{$f_i$ & the Pareto set for instance $i$, represented by a list of all \po solutions for the $i$-th instance. $f_i = [f_i^1, f_i^2, ..., f_i^{p_i}]$.}
%$f_i$ & the Pareto set for instance $i$, represented by a list of all \po solutions for the $i$-th instance. $f_i = [f_i^1, f_i^2, ..., f_i^{p_i}]$.
%\\
%\color{red}{$\Psi$} & predictive model from $\theta$ to the objective value: $f = \Psi(\theta)$.
%\\
%\color{red}{$h$} & general mapping function from $\theta$ to the objective value: $f = h(\theta)$. \color{red}{currently it is $f$, may cause the confusion: function or value.}

\\\midrule
{\bf stage-oriented} & 
\\
$\bs{\theta}$ & a list of RC list over $m$ instances, $\bs{\theta} = [\theta_1, \theta_2, ..., \theta_m]$
\\
$\bs{f}$ & a list of Pareto set over $m$ instances, $\bs{f} = [f_1, f_2, ..., f_m]$
\\
%$\lambda_i$ &  the indicator to the instance-level \po solutions for the $i$-th instance in a stage-level resource configurations. $\lambda_i \in [1, p_i]$
$\lambda_i$ &  the index of the instance-level \po solution in the $i$-th instance. $\lambda_i \in [1, p_i]$
\\
%$\lambda$ & the {\it state} defined as the indicator to the stage-level resource configurations based on $\bs{\theta}$. $\lambda = [\lambda_1, \lambda_2, ..., \lambda_m] \in \Lambda$. \\
$\lambda$ & the index of the stage-level resources configuration based on $\bm{\theta}$, aka., the {\it state}. $\lambda = [\lambda_1, \lambda_2, ..., \lambda_m] \in \Lambda$.
\\
$\pi_i$ & the {\it step} defined as to move a state $\lambda$ to another state $\lambda'$, such that $\lambda'_{i} = \lambda_i+1$ and $\lambda'_{i'} = \lambda_{i'}$ when $i' \neq i$
\\
$\Theta^{\lambda}$ & the stage-level resource configurations indicated by $\lambda$, $\Theta^\lambda = [\theta_1^{\lambda_1},  \theta_2^{\lambda_2}, ..., \theta_m^{\lambda_m}]$.
%$\bs{\theta}^{\lambda}$ & the stage-level \it{resource plan} indicated by $\lambda$, $\bs{\theta}^{\lambda}= [\theta_1^{\lambda_1},  \theta_2^{\lambda_2}, ..., \theta_m^{\lambda_m}]$.
%\\
%$F^{\Theta^{\lambda}}$ & the stage-level objectives achieved by $\Theta^{\lambda}$, $F^{\Theta^{\lambda}} \in \mathbb{R}^k $.
%\\
%$\bs{f}^{\lambda}$ & a list of instance objectives indicated by $\lambda$, corresponding to $\bs{\theta}^{\lambda}$. $\bs{f}^{\lambda} = [f_1^{\lambda_1}, f_2^{\lambda_2}, ..., f_m^{\lambda_m}] $.
\\
$F^{\Theta^{\lambda}}$ / $F^{\lambda}$ & the stage-level objective achieved by the stage-level resource configurations $\Theta^{\lambda}$, $F^{\lambda} \in \mathbb{R}^k$
\cut{\\
$\Phi$ & the function mapping from the stage-level resource configurations space to the space of multi-objectives, $F^{\lambda} = \Phi({\Theta}^{\lambda})$.}
\\
\cut{$\Theta$ & a list of resource plans corresponding to all the stage-level \po solutions.}
$PO_{\Theta}$ & a list of resource plans corresponding to all the stage-level \po solutions.
%\\
%$F$ & the Pareto set for the stage-level objectives, represented by a list of stage-level \po solutions.
\\
$PO_{F}$ & the Pareto set for the stage-level objectives, represented by a list of stage-level \po solutions.
%\\
%\color{red}{$\Phi$} & predictive model from $\Theta$ to the objective value: $F = \Phi(\Theta)$. \color{red}{We can either remove it (don't have this kind of model) or keep it (to be symmetric).}
%\\
%\color{red}{$H$} & general mapping function from $\Theta$ to the objective value: $F = H(\Theta)$. \color{red}{currently it is $F$, may cause the confusion: function or value.}

\\\bottomrule
	\end{tabular}
\caption{Notation for the Hierarchical MOO problem}
\label{tab:notations}	
\end{table*}

%-------------------------------------------------------------------------------------
\subsection{Optimized Versions of RAA}

Our Stage Optimizer supports several choices for using RAA. 

\minip{Clustering choices.} we provide choices for clustering including:
\begin{enumerate}
	\item \verb|RAA(W/O_C)| does not cluster instances before applying our algorithm, which could achieve the best stage-level latency (exclusive solving time) and cost. However, it has a higher overhead than others.
	\item \verb|RAA(DBSCAN)| applies a standard clustering approach, \verb|DBSCAN|, to divide instances into several clusters. Clustering is performed over the MCI features of these instances. 
	\item \verb|RAA(Fast_MCI)| applies our customized MCI-based strategy to cluster instances. It leverages the instance clusters from IPA's clustering result (based on fast 1D clustering) without additional time overhead.
Specifically, we subdivide each instance cluster in IPA into separate new clusters based on the specific machine clusters that the different instances are sent. 
%Notice that this process naturally happens at step 6 when $\delta$ instances are sent to machines belonging to the same cluster. \todo{Where is step 6 described?}
Notice that this process naturally happens at line 9 in Algorithm~\ref{alg:appr-ipa-algo-cluster} when a subset of $\delta$ instances ($X^{\delta}$) from an instance cluster ($X^c_{i_t}$) are sent to a subset of machines ($Y^{\delta}$) from the same machine cluster ($Y^c_{j_t}$).
Therefore it incurs no additional time overhead to generate new instance clusters and find the instance with the largest input row number as its representative.
\end{enumerate}

Furthermore, when a clustering method is applied, each instance cluster would be reduced to one instance such that the latency of the instance cluster equals the latency of the representative instance and the cost of a cluster is the cost of the representative times the cluster size.
After clustering, for ease of composition, we also use ``instance'' to refer to an instance cluster (when the clustering method is applied) when confusion does not arise.
%To simplify the discussion, we use ``instance'' for either an instance (when the clustering method is not applied) or an instance cluster (when the clustering method is applied)

By default, we choose \verb|RAA(Fast_MCI)| for its efficiency. 

\minip{MOO solutions.} In this work, we provide two solutions to the hierarchical MOO problem:
\begin{enumerate}
	\item \verb|RAA(General)| applies a general hierarchical MOO solution (Algorithm~\ref{algo_stage_moo}) for the general stage-level MOO problems.
	\item \verb|RAA(Path)| applies a faster hierarchical MOO solution (Algorithm~\ref{alg:appr-raa}) for a specific case when the two objectives are stage-level latency and a summable cost metric (e.g.,  CPU-hours).
\end{enumerate}

%-------------------------------------------------------------------------------------
\subsection{Additional Notation}

Table~\ref{tab:notations} shows the notation for our description of the MOO solutions.
Recall that a resource configuration (RC) refers to the configuration used for a container running on a specific instance, and it is an instance-specific configuration.
A stage-level resource plan is a list of RC's, one for each instance, and it is a stage-oriented solution.
The design reserves the subscript exclusively for indicating the instances and uses superscripts to indicate others.

\cut{
% for future use 
\begin{table*}[t]
\ra{1.2}
\small
\centering
	\begin{tabular}{cl}\toprule
	Symbol & Desription \\\midrule
{\bf instance-oriented}	&
\\
$k$ & the number of objectives. %$k=2$ in our case.
\\	
$d$ & the number of resource types in a resource configuration (RC). %$d=2$ in our case.
\\
$m$ & the number of instances.
\\
$i$ & the indicator to an instance.
\\
$j$ & the indicator to a \po solution in a Pareto set (list).
\\
$p_i$ & $p_i$ is the number of \po solutions in the $i$-th instance.
\\	
$\theta_i^{j}$ & the $j$-th RC corresponding to the Pareto-optimal solution for the $i$-th instance, $\theta_i^{j}\in \mathbb{R}^d$.
%(CPU, memory sizes) in our special case.
\\
%$f_i^{\theta_i^j}$ & the $j$-th \po solution for the $i$-th instance. $f_i^{\theta_i^j}\in \mathbb{R}^k$.
$f_i^j$ & the $j$-th \po solution for the $i$-th instance. $f_i^j \in \mathbb{R}^k$.
%(latency, cost) in our special case
\\
$\theta_i$ & the RC list, a list of RC corresponding to the Pareto set for the $i$-th instance. $\theta_i = [\theta_i^1, \theta_i^2, ..., \theta_i^{p_i}]$.
\\
%$f_i$ & the Pareto set for instance $i$, represented by a list of all \po solutions for the $i$-th instance. $f_i = [f_i^{\theta_i^1}, f_i^{\theta_i^2}, ..., f_i^{{\theta_i^{p_i}}}]$.
$f_i$ & the Pareto set for instance $i$, represented by a list of all \po solutions for the $i$-th instance. $f_i = [f_i^1, f_i^2, ..., f_i^{p_i}]$.
%\\
%\color{red}{$\Psi$} & predictive model from $\theta$ to the objective value: $f = \Psi(\theta)$.
%\\
%\color{red}{$h$} & general mapping function from $\theta$ to the objective value: $f = h(\theta)$. \color{red}{currently it is $f$, may cause the confusion: function or value.}
\\\midrule
{\bf stage-oriented} & 
\\
$\bs{\theta}$ & a list of RC list over $m$ instances, $\bs{\theta} = [\theta_1, \theta_2, ..., \theta_m]$
\\
$\bs{f}$ & a list of Pareto set over $m$ instances, $\bs{f} = [f_1, f_2, ..., f_m]$
\\
$\lambda_i$ &  the indicator to the instance-level \po solutions for the $i$-th instance in a stage-level resource configurations. $\lambda_i \in [1, p_i]$

\\
$\lambda$ & the {\it state} defined as the indicator to the stage-level resource configurations based on $\bs{\theta}$. $\lambda = [\lambda_1, \lambda_2, ..., \lambda_m] \in \Lambda$. \\
$\pi_i$ & the {\it step} defined as to move a state $\lambda$ to another state $\lambda'$, such that $\lambda'_{i} = \lambda_i+1$ and $\lambda'_{i'} = \lambda_{i'}$ when $i' \neq i$
\\
%$\Theta^{\lambda}$ & the stage \it{resource plan} indicated by $\lambda$, $\Theta^\lambda = [\theta_1^{\lambda_1},  \theta_2^{\lambda_2}, ..., \theta_m^{\lambda_m}]$.
%\\
%$F^{\Theta^{\lambda}}$ & the stage-level objectives achieved by $\Theta^{\lambda}$, $F^{\Theta^{\lambda}} \in \mathbb{R}^k $.
$\bs{\theta}^{\lambda}$ & the stage-level resource configuration indicated by $\lambda$, $\bs{\theta}^{\lambda}= [\theta_1^{\lambda_1},  \theta_2^{\lambda_2}, ..., \theta_m^{\lambda_m}]$.
\\
$\bs{f}^{\lambda}$ & a list of instance objectives indicated by $\lambda$, corresponding to $\bs{\theta}^{\lambda}$. $\bs{f}^{\lambda} = [f_1^{\lambda_1}, f_2^{\lambda_2}, ..., f_m^{\lambda_m}] $.
\\
$F^{\lambda}$ & the stage-level objective corresponding to $\bs{f}^{\lambda}$, $F^{\lambda} \in \mathbb{R}^k$
\\
$\Phi$ & the function mapping from the stage-level resource configuration space to the space of multi-objectives, $F^{\lambda} = \Phi(\bs{\theta}^{\lambda})$.
\\
$\Theta$ & a list of stage-level resource configurations corresponding to all the stage-level \po solutions.
\\
$F$ & the Pareto set for the stage-level objectives, represented by a list of stage-level \po solutions.
%\\
%\color{red}{$\Phi$} & predictive model from $\Theta$ to the objective value: $F = \Phi(\Theta)$. \color{red}{We can either remove it (don't have this kind of model) or keep it (to be symmetric).}
%\\
%\color{red}{$H$} & general mapping function from $\Theta$ to the objective value: $F = H(\Theta)$. \color{red}{currently it is $F$, may cause the confusion: function or value.}

\\\bottomrule
	\end{tabular}
\caption{Notations for the Hierarchical MOO problem}
\label{tab:notations}	
\end{table*}
}
%-------------------------------------------------------------------------------------

%\subsection{General Hierarchical MOO}}
%-------------------------------------------------------------------------------------
% \minip{Proof of Proposition~\ref{algo2_optimality}}
\subsection{Proof of Proposition~\ref{algo2_optimality}}
\begin{table*}[t]
	\ra{1.2}
	\small
	\centering
	\begin{tabular}{cl}\toprule
		Symbol & Description \\\midrule
		% $C$ & the stage-level solution set returned by algorithm \ref{algo_stage_moo}.
		% \\
		% $c$ & an arbitrary point in $C$.
		% \\
		$k_1$ & the number of objectives with \texttt{max} operator.
		\\
		$k_2$ & the number of objectives with \texttt{sum} operator.
		% \\
		% $g_h^c$ & the stage-level values of \texttt{max} operator of an arbitrary point $c$, $h\in[1,...,k_1]$.
		% \\
		% $g_s^c$ & the stage-level values of \texttt{sum} operator of an arbitrary point $c$, $s\in[1,..., k_2]$.
		% \\
		% $x_{ij}$ & a binary variable to indicate whether the $j$-th solution in $i$-th instance is selected to contribute to the stage-level $k_2$ solutions.
		\\
		$f_{ijv}$ & the $v$-th objective value of $j$-th solution in $i$-th instance.
		\\
		$F_h$ & the stage-level $h$-th objective value in $k_1$ objectives $(h\in[1,...,k_1])$.
		\\
		$F_v$ & the stage-level $v$-th objective value in $k_2$ objectives $(v\in[1,...,k_2])$.
		\\
		$w_v$ & the weight setting of $v$-th objective, $v \in [1,...,k_2]$.
		
		\\\bottomrule
	\end{tabular}
	\caption{Additional notation for General Hierarchical MOO Problem}
	\label{tab:notations-5.1}	
\end{table*}
% The corresponding notations of this proof is described in Table \ref{tab:notations-5.1}.

% \todo{proof}
\noindent {\bf{Proposition \ref{algo2_optimality}}} For a stage-level MOO problem, Algorithm~\ref{algo_stage_moo} guarantees to find a subset of stage-level Pareto optimal points. 

\vspace{0.1in}
The corresponding notation of this proof is described in Table \ref{tab:notations} and \ref{tab:notations-5.1}. First  we give some details for the sub-procedure \textit{find\_optimal}.

The high level idea is that during the enumeration of the $k_1$ objectives (that use the \texttt{max} aggregate operator), once we have obtained some fixed values for the $k_1$ objectives, we can determine the $k_2$ objectives (that use the \texttt{sum} aggregate operator) as follows: we transform the $k_2$ objective values in each  instance-level Pareto solution into a single value by utilizing a given weight vector\cut{WeightSum (WS) method~\cite{marler2004survey}}. Then we select for each instance the instance-level solution 
%$f_i$ 
$f_i^j$
corresponding to the best weighted value, and sum up 
%$f_i$ ($i \in [1 \ldots m]$) 
[$f_i^1, ..., f_i^{p_i}$], $i \in [1 \ldots m]$]
to construct one stage-level Pareto solution, which contains the fixed $k_1$ objective values and the summed values for the $k_2$ objectives. Then by varying the weight vectors, we get multiple stage-level solutions, which is a hierarchical version of the WeightSum (WS) method~\cite{marler2004survey}.

As a concrete example, suppose that we have two instances and three objectives, one using the \texttt{max} aggregate operator and the other two using the \texttt{sum} operator. For instance 1, there are two instance-level Pareto solutions $[15, 10, 5], [20, 15, 2]$. For instance 2, there are also two Pareto solutions, $[30, 5, 15], [40, 10, 5]$. As illustrated in Figure~\ref{fig:raa-example} and the example in Section \ref{sec:raa}, 
%the stage-level enumeration for objective $1$ gives the list of possible values $[30, 40]$. 
%\qi{
The function \textit{find\_all\_possible\_values} for objective $1$ gives the list of possible values $[30, 40]$.
%}
Under the value $30$ and the weight setting of $[0.5, 0.5]$ for the other two objectives, for instance 1, both solutions do not exceed $30$, and the weighted sum is $0.5*10+0.5*5 = 7.5$ (solution 1) and $0.5*15+0.5*2=8.5$ (solution 2) respectively. So the solution $[15, 10, 5]$ is selected as the optimal for instance 1. In the instance 2, only the solution $[30, 5, 15]$ does not exceed $30$, so $[30, 5, 15]$ is the optimal choice for instance 2. After that, a new stage-level solution is computed based on the selected instance-level solutions as, $[max(15, 30)=30, sum(10, 5)=15, sum(5, 15)=20]$. In the same way, under the value $40$ of objective 1 and the weight setting of $[0.5, 0.5]$ for the other two objectives, the optimal choice for instance 1 and instance 2 are $[15, 10, 5]$ and $[40, 10, 5]$ respectively. We can obtain the stage-level values as $[max(15, 40)=40, sum(10, 10)=20, sum(5, 5)=10]$. So, the stage-level MOO set is $[[30, 15, 20], [40,20,10]]$.
% 	\todo{Add a sentence to explain where the weight vectors are varied in this procedure.}
	If we vary the weight vector as $[0.1, 0.9]$, we can find other stage-level MOO solutions, such as $[30, 20, 17]$ and $[40,25,7]$. 
%\qi{
%\\

As a concrete example with more than one objectives using \texttt{max} to see details in step $3$ of Algorithm \ref{algo_stage_moo}, suppose that we have two instances and four objectives, two (objective $1$ and $2$) using the \texttt{max} aggregate operator and the other two (objective $3$ and $4$) using the \texttt{sum} operator. For instance $1$, there are two instance-level Pareto solutions $[15, 6, 10, 5], [20, 30, 15, 2]$. For instance $2$, there are also two Pareto solutions, $[30, 10, 5, 15], [40, 50, 10, 5]$. The function \textit{find\_all\_possible\_values} for objective $1$ and $2$ gives the lists of possible values $[30, 40]$ and $[10, 30, 50]$ respectively. Then, we use the the Cartesian product of these two lists as $[[30,10], [30,30], [30, 50], [40,$ $10], [40,30], [40,50]]$. Under the combination $[30, 10]$, for instance $1$, only the solution $1$ does not exceed the objective $1$ value $30$ and the objective $2$ value $10$. So the solution $[15, 6, 10, 5]$ is selected as the optimal for instance $1$. In the instance $2$, similarly, $[30, 10, 5, 15]$ is the optimal choice for instance $2$. After that, a new stage-level solution is computed based on the selected instance-level solutions as, $[max(15, 30)=30, max(6, 10)=10, sum(10, 5)=15, sum(5, 15)=20]$. 
In the same way, under the combination $[30,50]$ of objective $1$ and $2$, and the weight setting of $[0.5, 0.5]$ for the objective $3$ and $4$, for instance $1$, both solutions do not exceed the objective $1$ value $30$ and the objective $2$ value $50$, and the weighted sum is $0.5*10+0.5*5 = 7.5$ (solution $1$) and $0.5*15+0.5*2=8.5$ (solution $2$) respectively. So the solution $[15, 6, 10, 5]$ with lower weighted sum is selected as the optimal for instance $1$. In the instance $2$, both solutions do not exceed the value of objective $2$ ($50$), but only solution $1$ does not exceed the value of objective $1$ ($30$). Therefore, the solution $1$ $[30, 10, 5, 15]$ is selected as the optimal choice for instance $2$. After that, a new stage-level solution is computed based on the selected instance-level solutions as, $[max(15, 30)=30, max(6, 10)=10, sum(10, 5)=15, sum(5, 15)=20]$. Then, after traversing all combinations of objective $1$ and $2$ under the weight setting of $[0.5, 0.5]$, we get the stage-level MOO set $[[30, 10, 15, 20], [40, 50, 20, 10]]$.
%}

Now we first prove for a special case: when there are only $k_2$ objectives using the \texttt{sum} operator.

%The following is the proof of Proposition \ref{algo2_optimality}. 

\begin{lemma} \label{lemma_proof_5.1} 
	For a stage-level MOO problem with $k_2$ objectives that use the \texttt{sum} operator only, Algorithm~\ref{algo_stage_moo} guarantees to find a subset of stage-level Pareto optimal points.
\end{lemma}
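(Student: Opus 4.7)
My plan is to show that every stage-level solution produced by Algorithm~\ref{algo_stage_moo} in the $k_1 = 0$ case is Pareto optimal, by a contradiction argument that exploits the positive weight vectors used in the instance-level \textit{find\_optimal} sub-procedure together with the additive structure of the \texttt{sum} aggregator. Since the final filtering step only removes solutions dominated within the returned set, showing that each produced solution is globally Pareto optimal (before filtering) immediately implies that the filtered output is a subset of the stage-level Pareto set.

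\textbf{Key steps.} First I would fix one iteration of the outer loop, corresponding to a weight vector $w = (w_1,\dots,w_{k_2})$ with $w_v > 0$, and let $F^*$ be the stage-level solution produced in this iteration, with $F^*_v = \sum_{i=1}^m f_{i\, j_i\, v}$, where $j_i$ is the index of the instance-level Pareto solution chosen for instance $i$ (i.e., a minimizer of $\sum_v w_v f_{ijv}$ over $j \in [1,p_i]$). Second, I would assume for contradiction that $F^*$ is Pareto dominated by some other feasible stage solution $F'$ obtained from configurations $\theta_1^{j_1'},\dots,\theta_m^{j_m'}$ (not necessarily instance-level Pareto optimal). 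Third, I would perform a ``Pareto-lift'' step: for each instance $i$ whose choice $\theta_i^{j_i'}$ is not instance-level Pareto optimal, I replace it with an instance-level Pareto-optimal solution $\theta_i^{j_i''}$ dominating it (which must exist by definition of the instance-level Pareto set). This yields $F''$ with $F''_v \le F'_v$ for every $v$, so $F''$ also weakly dominates $F^*$ and, together with the strict domination coordinate of $F'$ over $F^*$, strictly dominates $F^*$ in at least one coordinate.

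\textbf{Closing the contradiction.} Finally I would take the weighted sum $\sum_v w_v(\cdot)$ of both $F^*$ and $F''$. Using strict positivity of $w$ and the (weak) Pareto domination of $F^*$ by $F''$ with at least one strict inequality, I get
\begin{equation}
    \sum_{v=1}^{k_2} w_v F''_v \;<\; \sum_{v=1}^{k_2} w_v F^*_v .
\end{equation}
On the other hand, since $F''_v = \sum_i f_{i\,j_i''\,v}$ decomposes additively across instances, and each $j_i$ was chosen as a minimizer of $\sum_v w_v f_{ijv}$ over instance-level Pareto solutions, we have $\sum_v w_v f_{i\,j_i''\,v} \ge \sum_v w_v f_{i\,j_i\,v}$ for every $i$; summing over $i$ yields $\sum_v w_v F''_v \ge \sum_v w_v F^*_v$, contradicting the strict inequality above.

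\textbf{Main obstacle.} The subtle step is the Pareto-lift, because $F'$ may use instance-level solutions that are not Pareto optimal, while the minimality property of $j_i$ is only guaranteed against instance-level Pareto-optimal alternatives; I must be careful that the lift preserves (weak) stage-level domination of $F^*$ and that at least one strict inequality survives so that strict positivity of $w$ yields a strict gap in the weighted sum. A minor accompanying issue is arguing that the post-processing \textit{filter\_dominated} step cannot remove a globally Pareto-optimal point (it can only remove a point if another returned point dominates it, which would already violate Pareto optimality), ensuring the ``subset'' claim in the lemma statement.
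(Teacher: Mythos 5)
Your proof is correct and follows essentially the same route as the paper's: both rest on the weighted-sum characterization of the returned solutions together with the observation that the weighted objective decomposes additively across instances, so that the per-instance minimization in \textit{find\_optimal} realizes the global weighted-sum minimizer, whose Pareto optimality then follows by the standard contradiction argument. Your write-up is in fact slightly more careful than the paper's on two points it glosses over --- the strict positivity of the weights needed to turn domination into a strict gap in the weighted sum, and the ``Pareto-lift'' step that extends the comparison class from selections of instance-level Pareto solutions to arbitrary feasible configurations.
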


%Within this procedure, for each instance, we first transform the $k_2$ different objective values in one instance-level solution into a single value by utilizing the WeightSum (WS) method~\cite{marler2004survey}. Then we select the instance-level solution corresponding to the best value to construct one stage-level solution for each instance. By changing weights, we get multiple stage-level solutions.

%Next, we introduce the Weighted Sum over Functions (WSF).

In proving Lemma \ref{lemma_proof_5.1}, we observe that Algorithm~\ref{algo_stage_moo} is essentially a Weighted Sum procedure over Functions (WSF). Indeed we will prove the following two Propositions: 1) each solution returned by WSF is Pareto optimal; 2) the solution returned by the function \textit{find\_optimal}  is equivalent to the solution returned by WSF. Then it follows that the solution returned by Algorithm~\ref{algo_stage_moo} is Pareto optimal.

To introduce WSF, we first introduce the indicator function $x_{ij}$, $i \in [1, ..., m], j \in [1, ..., p_i]$, to indicate that the $j$-th solution in $i$-th instance is selected to contribute to the stage-level solution. $\sum_{j=1}^{p_i} x_{ij} = 1$ means that only one solution is selected for each instance. Then $x = [x_{1j_1}, ... x_{mj_m}]$ represents the 0/1 selection for all $m$ instances to construct a stage-level solution.

So for the $v$-th objective, its stage-level value could be represented as the function $H$ applied to $x$: 
\begin{equation}
\begin{aligned}
& F_v = H_v (x) = \sum_{i=1}^{m} \sum_{j=1}^{p_i} x_{ij} * f_{ijv}, \\
& \quad where \sum_{j=1}^{p_i} x_{ij} = 1, i \in [1, ..., m], j \in [1, ..., p_i], v \in [1, ... k_2]
\end{aligned}
\end{equation}

Now we introduce the Weighted Sum over Functions (WSF) as: \cut{(where $w_v$ is the weight setting of $v$-th objective, $v \in [1,k_2]$ )}
% $Min: \sum_{v=1}^{k_2} w_v * F_v$
%
% $s.t. \sum_{j=1}^{p_i} x_{ij} = 1$
\begin{align}
& \arg\!\min_{x} (\sum_{v=1}^{k_2} w_v * H_v(x)) \\
\mbox{s.t.} & \;\; \sum_{v=1}^{v} w_v = 1 
% 		\;\; \mbox{and}
%              \;\; \sum_j B_i^j \leq s_j, \forall j
%    \mbox{s.t.} & \;\; \sum_i B_i^j = 1, \forall i \\
%                & \;\; \sum_j B_i^j \leq s_j, \forall j
\end{align}

%Where $x_{ij}$ is a binary variable to indicate whether the $j$-th solution in $i$-th instance is selected to contribute to the stage-level $k_2$ solutions. $\sum_{j=1}^{p_i} x_{ij} = 1$ means only one solution is selected for each instance.

%to instance-level Pareto solutions (under the stage-level $k_1$ combination bound) to select the optimal solution with the minimum weighted sum for each instance among $k_2$  objectives.

Next, we prove for Lemma \ref{lemma_proof_5.1}. As stated before, It is done in two steps.

\begin{proposition}
	The solution constructed using $x$ returned by WSF is Pareto optimal.
\end{proposition}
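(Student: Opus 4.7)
The plan is a standard proof by contradiction, exploiting the linearity of the weighted sum objective. Let $x^*$ denote a solution returned by WSF under weight vector $w = (w_1, \ldots, w_{k_2})$ with $w_v > 0$ and $\sum_v w_v = 1$. Assume for contradiction that the stage-level point $F^* = (H_1(x^*), \ldots, H_{k_2}(x^*))$ is not Pareto optimal. Then by the definition of Pareto dominance stated in Section~\ref{sec:moo-problem}, there must exist another feasible selection $x'$ (satisfying $\sum_{j=1}^{p_i} x'_{ij} = 1$ for all $i$) such that $H_v(x') \leq H_v(x^*)$ for every $v \in [1, k_2]$ and $H_{v_0}(x') < H_{v_0}(x^*)$ for at least one index $v_0$.

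Next, I would combine these instance-wise inequalities into a strict inequality on the weighted sum. Multiplying each coordinate inequality by the corresponding positive weight $w_v$ and summing over $v$ yields
\begin{equation}
\sum_{v=1}^{k_2} w_v H_v(x') \;<\; \sum_{v=1}^{k_2} w_v H_v(x^*),
\end{equation}
where the strict inequality follows from $w_{v_0} > 0$ applied to the strict coordinate inequality at $v_0$. This directly contradicts the fact that $x^*$ is an argmin of $\sum_v w_v H_v(x)$ over the feasible set, because $x'$ is feasible and attains a strictly smaller objective. Hence the original assumption fails, and $F^*$ must be Pareto optimal.

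The only subtlety to be careful about is the assumption that all weights are strictly positive; if some $w_v = 0$, the above argument breaks down, and the returned solution is only guaranteed to be \emph{weakly} Pareto optimal. In the accompanying discussion, I would either restrict attention to $w \in \mathbb{R}^{k_2}_{>0}$ with $\sum_v w_v = 1$ (as the algorithm iterates over such weight vectors), or explicitly note this caveat. I do not anticipate this to be a real obstacle, since WSF in Algorithm~\ref{algo_stage_moo} is invoked with strictly positive weight choices to enumerate candidate trade-offs. The remainder of the proof of Proposition~\ref{algo2_optimality} would then follow by showing (in the second proposition mentioned in the excerpt) that \textit{find\_optimal} reproduces exactly the solution given by WSF, after which combining the two propositions with the outer enumeration over the $k_1$ \texttt{max} objectives yields the subset-of-Pareto-optimal claim.
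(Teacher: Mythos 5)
Your proof follows essentially the same route as the paper's: assume the WSF solution is Pareto-dominated, observe that the dominating point would achieve a strictly smaller weighted sum, and contradict the argmin property. Your explicit handling of the strict-positivity requirement on the weights (without which only weak Pareto optimality follows) is a detail the paper's own proof glosses over, but the core argument is identical.
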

% \textbf{proposition 1.1: The optimal solution returned by WSF is Pareto optimal}:

\begin{proof}
	~\\
	\indent Assume that $x^*$ (correspoding to $[F_1^*, ..., F_{k_2}^*]$ ) is the solution of WSF.
	Suppose that an existing solution $[F_1^{'}, ..., F_{k_2}^{'}]$ (correspoding to $x^{'}$)  dominates $[F_1^*, ..., F_{k_2}^*]$. This means that $\sum_{v=1}^{k_2} w_v * H_v({x^{'}})$ is less than that of $x^*$. \par
	This contradict that  $x^*$ is the solution of WSF. So there is no $[F_1^{'}, ..., F_{k_2}^{'}]$ dominating $[F_1^*, ..., F_{k_2}^*]$. Thus, $[F_1^*, ..., F_{k_2}^*]$ is Pareto optimal.
\end{proof}

\begin{proposition}
	The optimal solution returned by the function \textit{find\_optimal} in Algorithm \ref{algo_stage_moo} is equivalent to the solution constructed using $x$ returned by WSF.
\end{proposition}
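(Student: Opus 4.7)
The plan is to exploit the separability of the \texttt{sum} aggregator across instances, which makes the global WSF minimization decompose into $m$ independent per-instance minimizations — exactly what \textit{find\_optimal} performs. First I would unfold the WSF objective by swapping summation orders:
\begin{equation}
\sum_{v=1}^{k_2} w_v H_v(x) \;=\; \sum_{v=1}^{k_2} w_v \sum_{i=1}^{m} \sum_{j=1}^{p_i} x_{ij} f_{ijv} \;=\; \sum_{i=1}^{m} \sum_{j=1}^{p_i} x_{ij} \Bigl( \sum_{v=1}^{k_2} w_v f_{ijv} \Bigr).
\end{equation}
Let $c_{ij} := \sum_{v=1}^{k_2} w_v f_{ijv}$ denote the per-solution weighted cost. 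Then the WSF objective becomes $\sum_{i=1}^{m} \sum_{j=1}^{p_i} x_{ij} c_{ij}$, whose only coupling across $i$ is the per-instance constraint $\sum_{j=1}^{p_i} x_{ij} = 1$. Thus the problem decomposes into $m$ independent subproblems, each asking to pick a single $j$ per instance to minimize $c_{ij}$.

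Next I would argue that \textit{find\_optimal}, as described in Algorithm~\ref{algo_stage_moo}, for each instance $i$ selects $j_i^\star \in \arg\min_{j \in [1,p_i]} \sum_{v=1}^{k_2} w_v f_{ijv} = \arg\min_j c_{ij}$ among the candidates compatible with the fixed combination $c$ of the $k_1$ \texttt{max} objectives. This is precisely the solution to the $i$-th decomposed subproblem. Since the WSF minimizer is the concatenation of per-instance minimizers (by separability), the $x$ induced by \textit{find\_optimal} coincides with an optimal $x$ for WSF. Equivalent optima in WSF may be broken differently by tie-breaking rules, but both procedures return an assignment attaining the same optimal value and hence the same stage-level objective vector $(F_1, \ldots, F_{k_2})$.

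Finally, I would account for the $k_1$ \texttt{max} objectives treated by the outer enumeration: once the $k_1$-tuple $c$ is fixed by \textit{find\_all\_possible\_values}, the set of feasible instance-level solutions for each instance is restricted identically in both formulations (only solutions with the $h$-th \texttt{max} coordinate $\le c_h$ for all $h \in [1,k_1]$ are retained). After this common restriction, the same separability argument applies, so \textit{find\_optimal} solves the restricted WSF optimally instance by instance. The main obstacle I anticipate is not the algebra (which is straightforward) but carefully handling feasibility: one must verify that for every enumerated $c$, the per-instance feasible sets are non-empty in the cases where Algorithm~\ref{algo_stage_moo} actually appends a solution (otherwise the equivalence is vacuous for that $c$), and that ties in $c_{ij}$ do not break the correspondence — both handled by a routine observation that any tie-breaking yields the same objective vector.
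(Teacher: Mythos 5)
Your proposal is correct and follows essentially the same route as the paper's own proof: both swap the order of summation to rewrite the WSF objective as $\sum_{i}\sum_{j} x_{ij}\bigl(\sum_{v} w_v f_{ijv}\bigr)$ and then observe that, under the per-instance constraint $\sum_j x_{ij}=1$, this decomposes into the independent per-instance minimizations that \textit{find\_optimal} performs. Your added remarks on tie-breaking and on the feasibility restriction imposed by the fixed $k_1$-combination are points the paper leaves implicit, but they do not change the argument.
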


\begin{proof}
	~\\
	\indent Suppose $x^{'}$ is returned by WSF. The corresponding stage-level solution is $[F_1^{'}, ..., F_{k_2}^{'}]$
	\begin{equation}
	\begin{split}
	x^{'} & = \arg\!\min(\sum_{v=1}^{k_2} w_v * H_v(x))\\
	& = \arg\!\min(\sum_{v=1}^{k_2} w_v * (\sum_{i=1}^{m} \sum_{j=1}^{p_i} x_{ij} * f_{ijv}))\\
	& = \arg\!\min(\sum_{i=1}^{m} (\sum_{v=1}^{k_2} \sum_{j=1}^{p_i} (w_v * f_{ijv}) * x_{ij}))
	\end{split}
	\end{equation}
	
	For the solution $[F_1^{''}, ..., F_{k_2}^{''}]$ returned by the function \textit{find\_optimal} in Algorithm \ref{algo_stage_moo}, $x^{''}$ represents the corresponding selection. It is achieved by minimizing the following formula:
	
	\begin{equation}
	\begin{split}
	& \sum_{i=1}^{m} (WS_{ij} | j\in[1, p_i])\\
	& = \sum_{i=1}^{m} (\sum_{v=1}^{k_2} w_v * f_{ijv} | j\in[1, p_i])\\
	& = \sum_{i=1}^{m} (\sum_{v=1}^{k_2} \sum_{j=1}^{p_i}
	(w_v * f_{ijv}) * x_{ij})\\
	\end{split}
	\end{equation} 
	where $WS_{ij} = \sum_{v=1}^{k_2} w_v * f_{ijv}$. 
	
	So, we have:
	\begin{equation}
	\begin{split}
	x^{''} & = \arg\!\min(\sum_{i=1}^{m} (\sum_{v=1}^{k_2} \sum_{j=1}^{p_i} (w_v * f_{ijv}) * x_{ij}))
	\end{split}
	\end{equation}
	\noindent 
	 Since they are of the same form and achieve min at the same time, so we have $[F_1^{'}, ..., F_{k_2}^{'}] = [F_1^{''}, ..., F_{k_2}^{''}]$
	
\end{proof}
% \todo{The above equations should show the symbol that the $\min$ function optimizes.}
 
With these two propositions, we finish the proof of Lemma \ref{lemma_proof_5.1}.

Finally, we prove that for the general case that also involves $k_1$ objectives that use the \texttt{max} aggregate operator, Proposition \ref{algo2_optimality} holds as well.

\begin{proof}
	Recall that Algorithm \ref{algo_stage_moo} enumerates all combinations of stage-level $k_1$ objective values. Thus, there does not exist a (prefix of) stage-level $k_1$ solution 
	% $(g_h^c | h\in[1,k_1])$
	$ [F_1^{'},..., F_{k_1}^{'}]$
	that could not be found by algorithm \ref{algo_stage_moo}. Putting it together with Lemma \ref{lemma_proof_5.1}, we complete the proof of Proposition \ref{algo2_optimality}.
\end{proof}

%By considering $k_1$ objectives, Algorithm \ref{algo_stage_moo} can enumerate all stage-level $k_1$ combinations. Under each stage-level $k_1$ combination, by considering $k_2$ objectives, Algorithm \ref{algo_stage_moo} can guarantee the returned solutions are Pareto optimal. Under different stage-level $k_1$ combinations, it is possible that the function \textit{findOptimal} returns the same optimal choice for each instance. The last step runs the filter. Therefore, for a stage-level MOO problem as we defined, Algorithm \ref{algo_stage_moo} guarantees to find a subset of stage-level Pareto optimal points.

\cut{

In our Algorithm~\ref{algo_stage_moo}, we use Weighted Sum over Functions (WSF) to .

\begin{proof}
~\\
%% aspect 1: Algorithm 2 enumerates all stage-level k1 values
 \indent 
%  Assume that the stage-level solutions returned by Algorithm \ref{algo_stage_moo} are enclosed in a set $C$, and $c$ is an arbitrary point $((g_h^c, g_s^c) | h\in[1,k_1], s\in[1, k_2])$\cut{, where $g_h^c$ represents stage-level values of \texttt{max} operator of $c$ and $g_s^c$ represents one stage-level value of \texttt{sum} operator of $c$}. 
 Assume that the stage-level solutions returned by Algorithm \ref{algo_stage_moo} are enclosed in a set $F$ in the objective space, and $c$ is an arbitrary solution ($[F_1^{'},...,F_h^{'},...,F_{k_1}^{'}, F_{k_1+1}^{'},...,F_{k_1+v}^{'}...,F_{k_1+k_2}^{'}]$) in the objective space.
 \cut{, where $g_h^c$ represents stage-level values of \texttt{max} operator of $c$ and $g_s^c$ represents one stage-level value of \texttt{sum} operator of $c$}.

% Algorithm \ref{algo_stage_moo} enumerates all combinations of stage-level $k_1$ objective values, $g_h^c \in g_h^C (h\in[1,k_1])$. Thus, there does not exist a stage-level $k_1$ solution $(g_h^c | h\in[1,k_1])$ that could not be found by algorithm \ref{algo_stage_moo}.
Algorithm \ref{algo_stage_moo} enumerates all combinations of stage-level $k_1$ objective values, $ [F_1^{'},...,F_h^{'},...,F_{k_1}^{'}]\in F$ . Thus, there does not exist a stage-level $k_1$ solution 
% $(g_h^c | h\in[1,k_1])$
$ [F_1^{'},...,F_h^{'},...,F_{k_1}^{'}]$
that could not be found by algorithm \ref{algo_stage_moo}.

%% aspect 2: under each k1 enumeration, MOO methods like WS can guarantee to return a subset of Pareto optimal solutions.

Under each Stage-level $k_1$ combination, function \textit{findOptimal} in Algorithm \ref{algo_stage_moo} returns the optimal choice among instance-level Pareto solutions in each instance, which utilizes the existing MOO methods by considering trade-offs among the $k_2$ objectives at lower complexity. After that, the stage-level $k_2$ values are computed based on the optimal selections of all instances. So, if this MOO method guarantees that the returned solutions are Pareto optimal, Algorithm \ref{algo_stage_moo} will guarantee to find a subset of stage-level Pareto optimal points (lemma \ref{lemma_proof_5.1}). \cut{This guarantee is proved in section \ref{section:WS}}

\end{proof}

%%%%%%%%%%%%%% optimality proof %%%%%%%%%%%%%%%%
% \subsubsection{Optimality Proof by Utilizing Weighted Sum (WS)}
\subsubsection{Optimality Proof of lemma}\label{section:WS}
~\\
\begin{lemma} \label{lemma_proof_5.1} 
If one MOO method guarantees that the returned solutions are Pareto optimal, Algorithm \ref{algo_stage_moo} will guarantee to find a subset of stage-level Pareto optimal points.
\end{lemma}
Before we prove the optimality of Algorithm \ref{algo_stage_moo}, firstly, a description for Weighted Sum over Functions (WSF) is introduced.

\textbf{Description: Weighted Sum over Functions (WSF)}:

\cut{Under each stage-level $k_1$ combination, under each weight setting, }
% Where $m$ is the number of instances; $p_i$ denotes the number of Pareto solutions in $i$-th instance; $x_{ij}$ is a binary variable to indicate whether the $j$-th solution in $i$-th instance is selected to contribute to the stage-level $k_2$ solutions; $f_{ijv}$ denotes $v$-th objective value of $j$-th solution in $i$-th instance; the $F_v$ denotes the stage-level $v$-th objective in $k_2$ objectives $(v\in[1,k_2])$

Thus, optimization over stage-level $k_2$ objectives can be considered to add different weights over all the stage-level $k_2$ functions.

Then WSF can be represented as: \cut{(where $w_v$ is the weight setting of $v$-th objective, $v \in [1,k_2]$ )}
% $Min: \sum_{v=1}^{k_2} w_v * F_v$
%
% $s.t. \sum_{j=1}^{p_i} x_{ij} = 1$
\begin{align}
    & \min_{\bm{x}} (\sum_{v=1}^{k_2} w_v * F_v(\bm{x} )) \\
    \mbox{s.t.} & \;\; \sum_{j=1}^{p_i} x_{ij} = 1 
    % 		\;\; \mbox{and}
    %              \;\; \sum_j B_i^j \leq s_j, \forall j
%    \mbox{s.t.} & \;\; \sum_i B_i^j = 1, \forall i \\
%                & \;\; \sum_j B_i^j \leq s_j, \forall j
\end{align}

In Algorithm \ref{algo_stage_moo}, by utilizing the WS method in \textit{findOptimal}, we can add weights to instance-level Pareto solutions (under the stage-level $k_1$ combination bound) to select the optimal solution with minimum weighted sum for each instance among $k_2$  objectives. For example, if there are two instances with three user-specified objectives ($\texttt{max}, \texttt{sum}, \texttt{sum}$). For instance 1, there are two instance-level solutions $[15, 10, 5], [20, 15, 2]$, and the two solutions in instance 2 are $[30, 5, 15], [40, 10, 5]$. The stage-level $k_1$ enumeration is $[30, 40]$. Under the stage-level $k_1$ value $30$, weight setting of $[0.5, 0.5]$ for $k_2$ objectives, for instance 1, both solutions do not exceed the stage-level $k_1$ value $30$, and the weighted sum is $0.5*10+0.5*5 = 7.5$ (solution 1) and $0.5*15+0.5*2=8.5$ (solution 2) respectively. So the solution $[15, 10, 5]$ is selected as the optimal for instance 1. In the instance 2, only solution $[30, 5, 15]$ does not exceed the stage-level $k_1$ value $30$, so $[30, 5, 15]$ is the optimal choice for instance 2. After that, the stage-level values are computed based on the selections as $[max(15, 30)=30, sum(10, 5)=15, sum(5, 15)=20]$. In the same way, under the stage-level $k_1$ value $40$, weight setting of $[0.5, 0.5]$ for $k_2$ objectives, the optimal choice for instance 1 and instance 2 are $[15, 10, 5]$ and $[40, 10, 5]$ respectively. We can obtain the stage-level values as $[max(15, 40)=40, sum(10, 10)=20, sum(5, 5)=10]$. So, the stage-level MOO set is $[[30, 15, 20], [40,20,10]]$.

The difference between WS in function \textit{findOptimal} and WSF is that: function \textit{findOptimal} utilizes WS for each instance while WSF adds weights directly on the stage-level $k_2$ functions.

By exploiting the WSF, we can prove the proposition \ref{algo2_optimality} from two aspects: 1) the optimal solution returned by WSF is Pareto optimal; 2) the optimal solution returned by the function \textit{findOptimal} in Algorithm \ref{algo_stage_moo} is equivalent to the solution returned by WSF\cut{ under a specific stage-level $k_1$ combination and $w$}.

% \begin{proof}
% ~\\

% \textbf{proposition 1.2: Under a specific stage-level $k_1$ combination and $w$, the optimal solution returned by Algorithm 2 is equivalent to the solution returned by WSF}:

% $y_w^1 = min(\sum_{v=1}^{k_2} w_v * F_v)$ \\

% $= min(\sum_{v=1}^{k_2} w_v * (\sum_{i=1}^{m} \sum_{j=1}^{p_i} x_{ij} * f_{ijv})) $ \\

% $= min(\sum_{i=1}^{m} (\sum_{v=1}^{k_2} \sum_{j=1}^{p_i} (w_v * f_{ijv}) * x_{ij})) $ , where $\sum_{j=1}^{p_i} x_{ij} = 1$\\

% $y_w^2 = \sum_{i=1}^{m} min(WS_{ij} | j\in[1, p_i])$, where $WS_{ij} = \sum_{v=1}^{k_2} w_v * f_{ijv}$\\

% $=\sum_{i=1}^{m} min(\sum_{v=1}^{k_2} w_v * f_{ijv} | j\in[1, p_i])$\\

% $=\sum_{i=1}^{m} min(\sum_{v=1}^{k_2} \sum_{j=1}^{p_i} (w_v * f_{ijv}) * x_{ij}^{'})$, where $\sum_{j=1}^{p_i} x_{ij}^{'} = 1$\\
}

Here we discuss {\bf the time complexity} when WS is utilized in the $find\_optimal$ function in Algorithm \ref{algo_stage_moo}.
% %k_2>1
% step3: $v_k = (m*p\_{max})$
% step4: $O((m*p\_{max})^{k_1})$
% step8: $O(w*(m*p\_{max})^{k_1} * m * p\_{max}) = O(w*(m*p\_{max})^{k_1+1})$
Recalling the Algorithm \ref{algo_stage_moo}, it takes $O(m*p_{max})$ to find all the values within the lower and upper bounds of stage-level values for each \texttt{max} objectives. 
%So, the enumeration of all $k_1$ objectives takes $O((m*p_{max})^{k_1})$. Within the loop of each enumeration, 
%\qi{
So, the Cartesian product of all $k_1$ lists takes $O((m*p_{max})^{k_1})$. Within the loop of each combination, 
%}
Algorithm \ref{algo_stage_moo} varies $w$ weight vectors to generate multiple stage-level solutions. And under each weight vector, it takes $O(m*p_{max})$ to select the optimal solution for each instance based on WS. Thus, the overall time complexity is $O((m*p_{max})^{k_1} * w * (m*p_{max})) = O(w*(m*p_{max})^{k_1+1})$.

For the particular case of $k=2$ with one using \texttt{max} and the other using \texttt{sum} (the same case in Algorithm \ref{alg:appr-raa}), $k_1 = 1$ and $k_2 = 1$, which means the weight vector for $find\_optimal$ is fixed. Therefore, Algorithm \ref{algo_stage_moo} takes $O((m*p_{max})^{2})$.

%-------------------------------------------------------------------------------------
\subsection{Proof of Proposition~\ref{algo3_optimality}}
\cut{ % future use
\begin{table*}[t]
\ra{1.2}
\small
\centering
	\begin{tabular}{cl}\toprule
	Symbol & Desription \\\midrule
%$u_i^{\theta_i^j}$ & the latency in $f_i^{\theta_i^j}$.
$u_i^{\theta_i^j}$ / $u_i^j$ & the latency in $f_i^{\theta_i^j}$.
\\
$o_i^{\theta_i^j}$ / $o_i^j$ & the cost in $f_i^{\theta_i^j}$.
\\
%$u_i$ & a list of latency in $f_i$. $u_i = [u_i^{\theta_i^1}, u_i^{\theta_i^2}, ..., u_i^{{\theta_i^{p_i}}}]$.
$u_i$ & a list of latency in $f_i$. $u_i = [u_i^1, u_i^2, ..., u_i^{p_i}]$. 
\\
$o_i$ & a list of cost in $f_i$. $o_i = [o_i^1, o_i^2, ..., o_i^{p_i}]$. 
\\
%$\bs{l}$ & a list of latency list from each Pareto set over $m$ instances, $\bs{l} = [u_1, u_2, ..., u_m]$ 
%\\
%$\bs{o}$ & a list of cost list from each Pareto set over $m$ instances, $\bs{o} = [o_1, o_2, ..., o_m]$ 
%\\
%$\bs{l}^{\lambda}$ & a list of instance latency indicated by $\lambda$, corresponding to $\bs{\theta}^{\lambda}$. $\bs{l}^{\lambda} = [u_1^{\lambda_1}, u_2^{\lambda_2}, ..., u_m^{\lambda_m}]  $. 
%\\
%$\bs{o}^{\lambda}$ & a list of instance cost indicated by $\lambda$, corresponding to $\bs{\theta}^{\lambda}$. $\bs{o}^{\lambda} = [o_1^{\lambda_1}, o_2^{\lambda_2}, ..., o_m^{\lambda_m}]  $. 
%\\
$U^{\lambda}$ & the stage-level latency corresponding to $\bs{l}^{\lambda}$. $U^{\lambda} = \max(\bs{l}^{\lambda})$. $U^{\lambda} \in \mathbb{R}$ 
\\
$O^{\lambda}$ & the stage-level cost corresponding to $\bs{o}^{\lambda}$. $O^{\lambda} = \sum(\bs{o}^{\lambda})$. $O^{\lambda} \in \mathbb{R}$ 
\\\bottomrule
	\end{tabular}
\caption{Additional notations for RAA path problem}
\label{tab:notations-5.2}	
\end{table*}
}

\begin{table*}[t]
\ra{1.2}
\small
\centering
	\begin{tabular}{cl}\toprule
	Symbol & Description \\\midrule
%$u_i^{\theta_i^j}$ & the latency in $f_i^{\theta_i^j}$.
%$u_i^{\theta_i^j}$ / $u_i^j$ & the latency in $f_i^{\theta_i^j}$.
$u_i^j$ & the latency in $f_i^j$.
\\
$o_i^j$ & the cost in $f_i^j$.
\\
%$u_i$ & a list of latency in $f_i$. $u_i = [u_i^{\theta_i^1}, u_i^{\theta_i^2}, ..., u_i^{{\theta_i^{p_i}}}]$.
$u_i$ & a list of latency in $f_i$. $u_i = [u_i^1, u_i^2, ..., u_i^{p_i}]$. 
\\
$o_i$ & a list of cost in $f_i$. $o_i = [o_i^1, o_i^2, ..., o_i^{p_i}]$. 
\\
%$\bs{l}$ & a list of latency list from each Pareto set over $m$ instances, $\bs{l} = [u_1, u_2, ..., u_m]$ 
%\\
%$\bs{o}$ & a list of cost list from each Pareto set over $m$ instances, $\bs{o} = [o_1, o_2, ..., o_m]$ 
%\\
%$\bs{l}^{\lambda}$ & a list of instance latency indicated by $\lambda$, corresponding to $\bs{\theta}^{\lambda}$. $\bs{l}^{\lambda} = [u_1^{\lambda_1}, u_2^{\lambda_2}, ..., u_m^{\lambda_m}]  $. 
%\\
%$\bs{o}^{\lambda}$ & a list of instance cost indicated by $\lambda$, corresponding to $\bs{\theta}^{\lambda}$. $\bs{o}^{\lambda} = [o_1^{\lambda_1}, o_2^{\lambda_2}, ..., o_m^{\lambda_m}]  $. 
%\\
$U^{\lambda}$ & the stage-level latency achieved by the stage-level configurations $\Theta^\lambda$,  $U^{\lambda} = \max_{i\in [1,m]}(u_i^{\lambda_i}) \in \mathbb{R}$ 
\\
$O^{\lambda}$ & the stage-level cost achieved by the stage-level configurations $\Theta^\lambda$, $O^{\lambda} = \sum_{i\in [1,m]}(o_i^{\lambda_i}) \in \mathbb{R}$ 
\\\bottomrule
	\end{tabular}
\caption{Additional notations for RAA path problem}
\label{tab:notations-5.2}	
\end{table*}

We show that Algorithm~\ref{alg:appr-raa} guarantees to find the full set of stage-level \po points
% at the complexity $O(m \cdot p_{max} \log (m \cdot p_{max}))$ 
by induction. 
The notations in the proof are from Table~\ref{tab:notations} and~\ref{tab:notations-5.2}.

Assume that the stage-level Pareto set 
%$F$ 
$PO_{F}$
has $z$ solutions. 
Denote 
%$F = [F^{{\lambda^{(1)}}}, F^{{\lambda^{(2)}}}, ..., F^{{\lambda^{(z)}}}]$, 
%$\Theta = [\Theta^{{\lambda^{(1)}}}, \Theta^{{\lambda^{(2)}}}, ..., \Theta^{{\lambda^{(z)}}}]$, 
$PO_{F} = [F^{{\lambda^{(1)}}}, F^{{\lambda^{(2)}}}, ..., F^{{\lambda^{(z)}}}]$, $PO_{\Theta} = [\Theta^{{\lambda^{(1)}}}, \Theta^{{\lambda^{(2)}}}, ..., \Theta^{{\lambda^{(z)}}}]$, 
and the state
${\lambda^{(t)}}\in \Lambda, t\in[1, z]$ indicates the $t$-th stage-level \po solution.
After pre-sorting, we assume that both the stage-level and instance-level \po solutions are sorted in the descending order of latency (and hence the ascending order of cost):
\begin{gather}
	U^{{\lambda^{(1)}}} > U^{{\lambda^{(2)}}} > ... > U^{{\lambda^{(z)}}} \label{eq:5.2-1} \\
	O^{{\lambda^{(1)}}} < O^{{\lambda^{(2)}}} < ... < O^{{\lambda^{(z)}}} \label{eq:5.2-2} \\
	u_i^1 > u_i^2 > ... > u_i^{p_i} \label{eq:5.2-3}, \forall i \in [1, m]\\
	o_i^1 < o_i^2 < ... < o_i^{p_i} \label{eq:5.2-4}, \forall i \in [1, m]
\end{gather}

To recall, Algorithm~\ref{alg:appr-raa} maintains a max-heap $Q$ with $m$ nodes built over the latencies among each of the $m$ instances at a state $\lambda$. 
At each RAA step from $\lambda$ to $\lambda'$, $Q$ pops out the maximum latency with its corresponding instance id $i$, and pushes back the $(\lambda_i+1)$-th \po solution in instance $i$. It stops the path when the $(\lambda_i+1)$-th solution does not exist in the popped instance $i$ ($\lambda_i = p_i$).

%\todo{Should make it clear that $\lambda$ is the INDEX, e.g., in Table 5. Otherwise, the proof is confusing.}

\begin{proof} Now we show that Algorithm~\ref{alg:appr-raa} covers $\lambda^{(1)}, ... \lambda^{(z)}$ along the path.
\\
\underline{Base case.} The state ${\lambda^{(1)}}$ is in RAA path. When $\lambda = [1,1, ..., 1]$, 
$$
O^{\lambda} = \sum_{i\in [1,m]}(o_i^1) \leq \sum_{i\in [1,m], j\in [1,p_i]}(o_i^j) 
$$ 
%Therefore 
%$
%O^{\lambda} = \min_{{\lambda'}\in \Lambda} O^{{\lambda'}}
%$
where "=" exists iff $o_i^j=o_i^1, \forall i$.
%where "=" exists iff $\theta_i^j=\theta_i^1, \forall i$.
%\todo{Add a phrase to relate $\theta_i^j$ to $o_i^j$.}
%is the minimum achievable stage cost based on $\bs{\theta}$. Notice  Thus $\lambda$ is the only state in $\Lambda$ that achieves the minimum stage cost. 
Therefore, no other point in the stage-level objective space could dominate $F^{\lambda}$ (on $O^{\lambda}$) and hence $F^{\lambda}$ is a stage-level \po point. Notice 
$$
U^{\lambda} = \max_{i\in [1,m]}(u_i^1) \geq \max_{i\in [1,m], j\in [1,p_i]}(u_i^j)
$$ 
Therefore $U^{\lambda} = \max_{{\lambda' \in \Lambda}} U^{{\lambda'}}$ is the maximum achievable stage latency and 
$\lambda = \arg \max_{{\lambda_q}} U^{{\lambda_q}} = {\lambda^{(1)}}$. Therefore $\lambda^{(1)}$ is on the RAA path as the starting point.
\\
\underline{Induction}: Assume that state ${\lambda^{(t)}}, t <= z-1$ is on the RAA path. Prove the state $\lambda^{(t+1)}$ will also be on the RAA path. 

Given $\lambda^{(t)}$ indicating the $t$-th stage-level \po solution, we use  $O^*$ and $U^*$ as shorthand for  
$O^* = O^{\lambda^{(t)}}, U^* = U^{\lambda^{(t)}}$. %for simplicity. 

\underline{First}, let us construct $\hat{\lambda}$ using Eq.~\eqref{eq:5.2-5} and show that $\lambda^{(t+1)} = \hat{\lambda}$
\begin{align}\label{eq:5.2-5}
	\hat{\lambda}_i = \begin{cases}
		\lambda_i^{(t)} + 1 & \text{if} \;\; u_i^{\lambda^{(t)}_i} = U^* \\
		\lambda_i^{(t)}     & \text{otherwise} \;\; (u_i^{\lambda^{(t)}_i} < U^* )\\		
	\end{cases}, \forall i \in [1, m]
\end{align}

We prove two properties below.

(i) $O^{\hat{\lambda}}$ is the minimum stage-level cost when the stage-level latency is smaller than $U^*$.
Notice for an instance $i$ whose latency is not the largest at state $\lambda^{(t)}$, the instance-level minimum cost has already been achieved at $\hat{\lambda}_i = \lambda_i^{(t)}$ when the latency needs to be smaller than $U^*$ according to the Pareto-optimality (otherwise, $F^{\lambda^{(t)}}$ could have been dominated on the cost).
For an instance $i$ whose latency is equal to $U^*$ at state $\lambda^{(t)}$ ($u_i^{\lambda^{(t)}_i} = U^*$), its minimum cost shall be $o_i^{\lambda^{(t)}_i+1}$ at $\hat{\lambda}_i = \lambda_i^{(t)}+1$ when the latency needs to be smaller than $U^*$, according to Eq.~\eqref{eq:5.2-4}.
Therefore, at state $\hat{\lambda}$, the stage-level cost $O^{\hat{\lambda}}$ is the summation of the minimum instance-level achievable cost among all instances when their latencies are smaller than $U^*$. 

(ii) $U^{\hat{\lambda}}$ is the maximum stage-level latency that is smaller than $U^*$. Otherwise, $\exists \lambda' \in \Lambda, i \in [1, m]$, such that $u_{i}^{\hat{\lambda}_{i}} < u_{i}^{\lambda'_{i}} < U^*$.
Notice for an instance $i$ with $u_i^{\lambda^{(t)}_i} = U^*$, there is no instance-level \po solutions with latency between $u_i^{\hat{\lambda}_i}$ ($=u_i^{\lambda^{(t)}_i+1}$) and $U^*$ ($=u_i^{\lambda^{(t)}_i}$).
Thus, $u_{i}^{\lambda^{(t)}_{i}} < U^*$, and hence $u_{i}^{\hat{\lambda}_{i}} = u_{i}^{\lambda^{(t)}_{i}}$. 
Therefore, $u_{i}^{\hat{\lambda}_{i}} = u_{i}^{\lambda^{(t)}_{i}} < u_{i}^{\lambda'_{i}} < U^*$. According to the Pareto optimality in instance $i$, we have the cost as $o_{i}^{\lambda^{(t)}_{i}} > o_{i}^{\lambda'_{i}}$. Hence, there exists a \po solution in instance $i$ such that when the instance latency is smaller than $U^*$, it achieves a smaller cost than the $t$-th stage-level optimal solution at state $\lambda^{(t)}$, which violates the Pareto-optimality of $F^{\lambda^{(t)}}$. Therefore, $U^{\hat{\lambda}}$ is the maximum stage-level latency that is smaller than $U^*$. 

With (i) and (ii), we have $\lambda^{(t+1)} = \hat{\lambda}$ as constructed.

\underline{Second}, let us show that Algorithm~\ref{alg:appr-raa} can reach $\lambda^{(t+1)}$ by its policy. 
When $\lambda^{(t)}$ is on the path, the maximum latencies in the max-heap $Q$ is $U^*$. At each step, the policy will pop out an instance $i$ whose latency is $U^*$ achieved at $\lambda^{(t)}_i$ and push back $\lambda^{(t)}_i + 1$. 
Therefore, each step addresses one instance that matches the first line of Eq.~\eqref{eq:5.2-5}. 
Denote $q$ as the number of instances that matches the first line of Eq.~\eqref{eq:5.2-5}. Then we have $\lambda^{(t+1)}$ on the path after $q$ steps from $\lambda^{(t)}$.
%And \todo{Eq.~\eqref{eq:5.2-5} is equivalent to having 1 step ($\pi_i$) achieving $U^*$ at $\lambda^{(t)}$.  Therefore, we have $\lambda^{(t+1)}$ on the path after $q$ steps from $\lambda^{(t)}$.}

%And Eq.~\eqref{eq:5.2-5} is equivalent to having $q$ RAA steps, where $q$ is the number of instances achieving $U^*$ at at $\lambda^{(t)}$. Therefore, we have $\lambda^{(t+1)}$ on the path after $q$ steps from $\lambda^{(t)}$.

Therefore, we have all the stage-level \po solutions in the RAA path by using Algorithm~\ref{algo3_optimality}.
\end{proof}

%Due to the progressive property for instance-level MOO, the \po solutions for each instance could be sorted as in our assumption. 
{\bf The time complexity} for sorting instance-level \po solutions costs $O(m \cdot p_{max} \log p_{max})$ over $m$ instances.
The algorithm uses $O(m\log m)$ to build a max-heap. At each step, it includes one pop and one push, with $O(\log m)$. It takes at most $\sum_i {p_i}$ steps to iterate all instance-level \po solutions, which is $O(\sum_i {p_i}) = O(m \cdot p_{max})$ iterations. Therefore the time complexity is $O(m \cdot p_{max} \log p_{max} + m \log m + m \cdot p_{max} \log m) = O((1+p_{max}) m \log m) = O(m \cdot p_{max} \log (m \cdot p_{max}))$.

\section{More on Performance Evaluation}
\label{appendix:evaluation}

\begin{figure}[t]
	\centering
    \includegraphics[width=.48\textwidth]{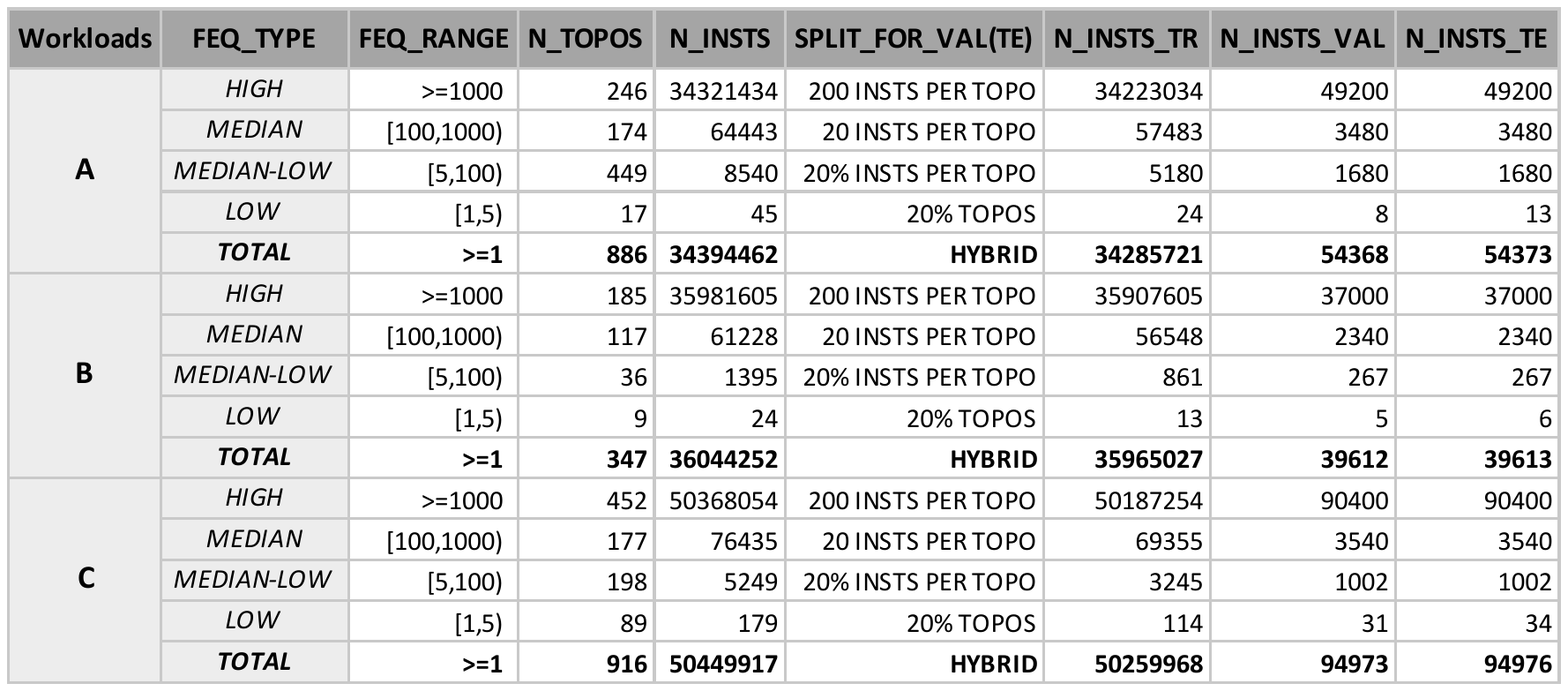}
    \caption{Data split for model evaluations}
    \label{fig:data-split}
\end{figure}

%-------------------------------------------------------------------------------------
\subsection{Workload Characteristics}
\label{appendix:workload}

\begin{figure*}[t]
	\centering
	\vspace{-0.1in}
	\hspace{-6cm}
	
	\begin{tabular}{lcc}

		\subfigure[\small{CDF for the Number of Stages in a Job}]
		{\label{fig:cdf-job-stage-num}\includegraphics[height=3.0cm,width=5.5cm]{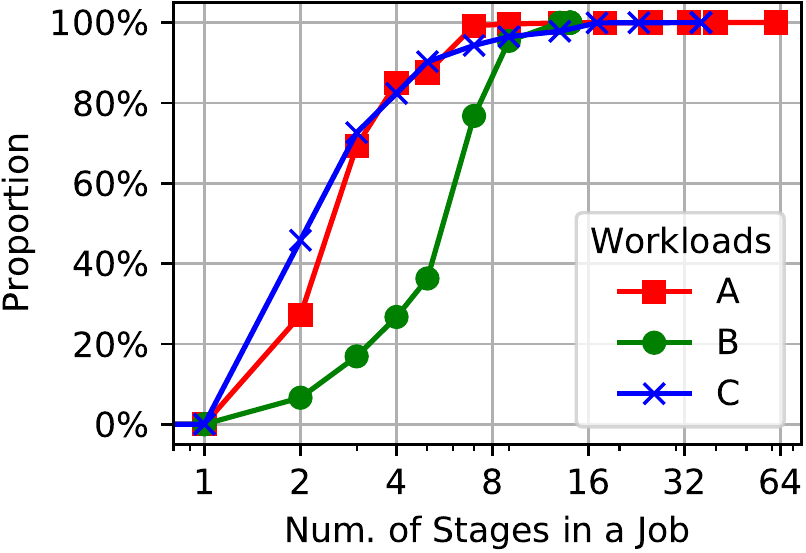}}

		&
		\subfigure[\small{CDF for the Number of Instances in a Stage}]
		{\label{fig:cdf-stage-inst-num}\includegraphics[height=3.0cm,width=5.5cm]{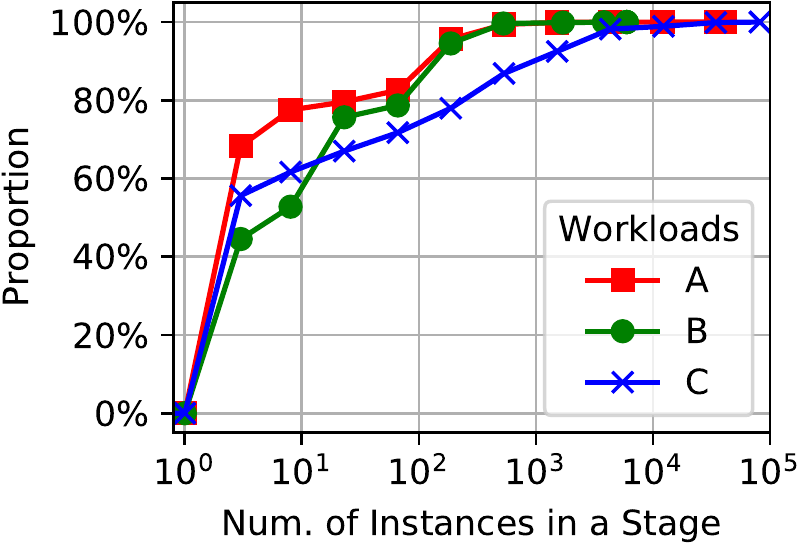}}

		&		
		\subfigure[\small{CDF for the number of operators in a Stage}]
		{\label{fig:cdf-stage-op-num}\includegraphics[height=3.0cm,width=5.5cm]{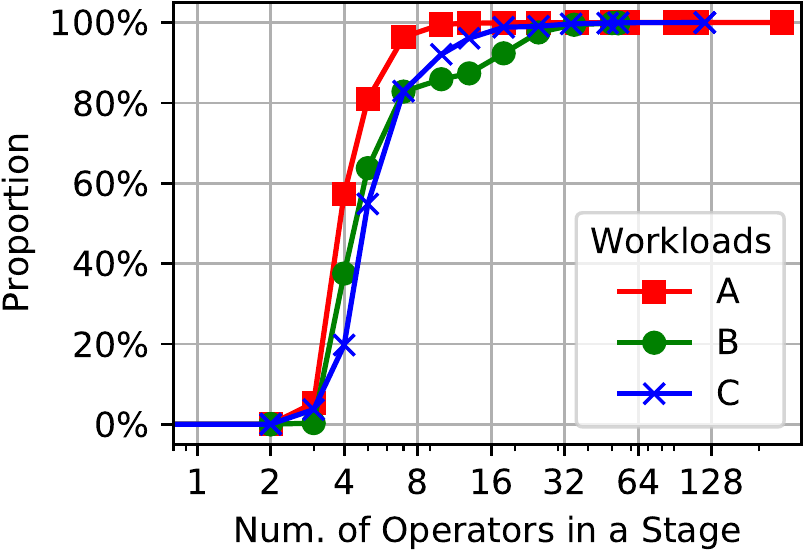}}
	
		\\	
		
		\subfigure[\small{CDF for the Job Latency}]
		{\label{fig:cdf-job-lat}\includegraphics[height=3.0cm,width=5.5cm]{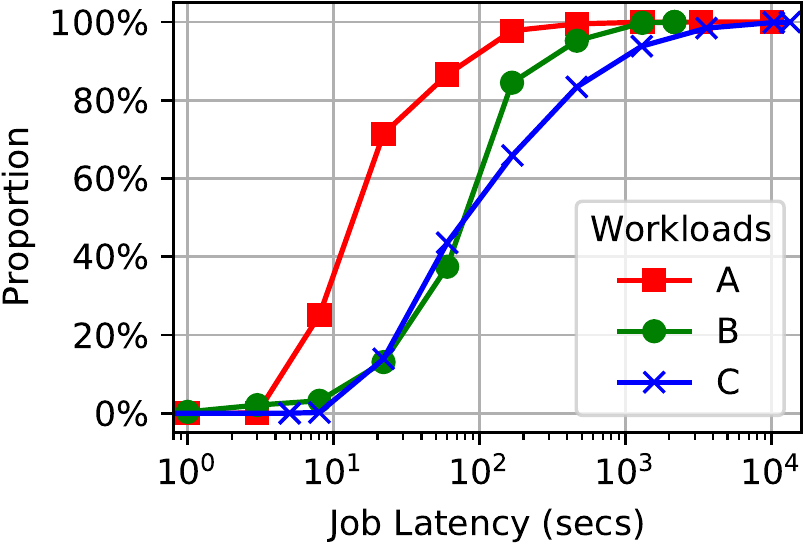}}

		&
		\subfigure[\small{CDF for the Stage Latency}]
		{\label{fig:cdf-stage-lat}\includegraphics[height=3.0cm,width=5.5cm]{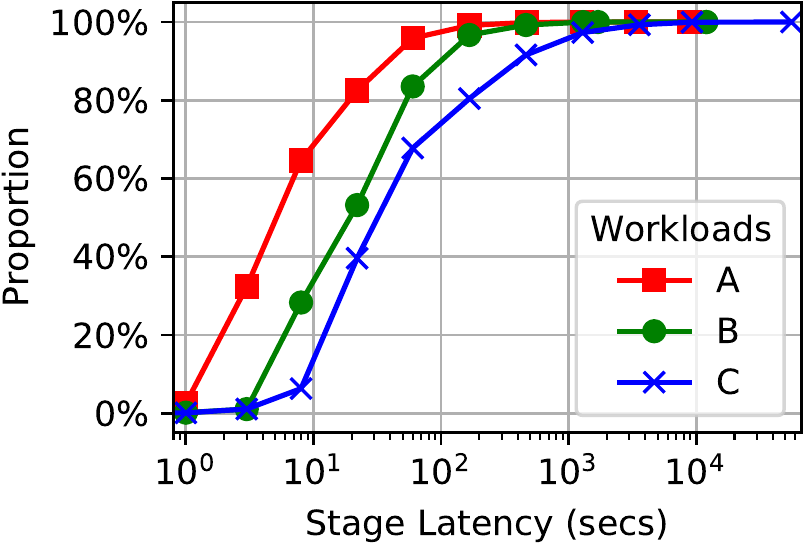}}

		&
		\subfigure[\small{CDF for the Instance Latency}]
		{\label{fig:cdf-inst-lat}\includegraphics[height=3.0cm,width=5.5cm]{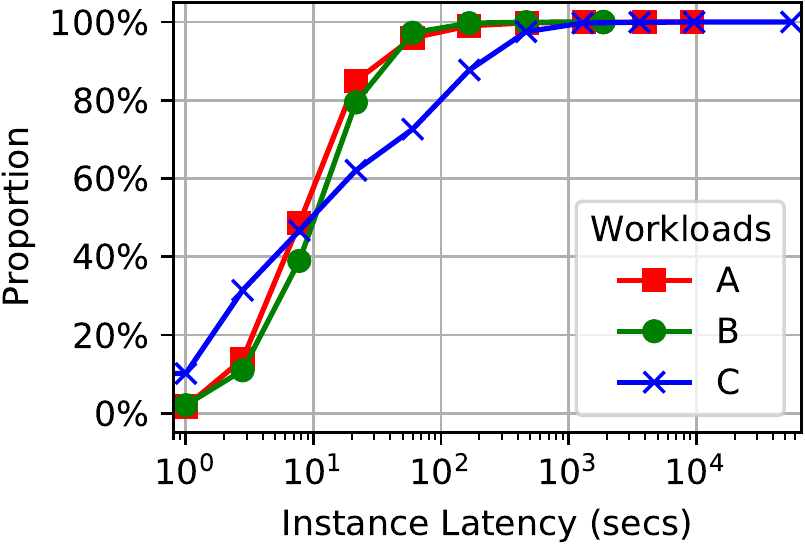}}
		
	\end{tabular}
%	\vspace{-0.1in}
	\caption{The Cumulative Distribution Function (CDF) for 6 metrics over 3 workloads}
	\label{fig:workload-cdfs}
%	\vspace{-0.2in}
\end{figure*}

We consider three representative workloads in a total of $\approx$ 0.62 million jobs, with $\approx$ 2 million stages and $\approx$ 0.12 billion instances. 
Each workload consists of productive jobs from a business department over five consecutive days in Alibaba. 
As shown in Fig.~\ref{tab:workload-stats}, workloads A, B, and C show various properties: % Specifically,

\begin{enumerate}
	\item the total number of jobs (41K-0.4M), stages (0.1M-1.0M), and instances (34M-50M).
	\item the total size of collected traces (97GB-168GB)
	\item the average number of stages per job (2.4-5.0)
	\item the average number of instances per stage (85-1224).
	\item the average number of operators per stage (3.7-6.3)
	\item the average job latency (31-377s).
	\item the average stage latency (15-182s).
	\item the average instance latency (16-71s).
\end{enumerate} 

Workload A involves the most significant number of jobs (2-10x compared to B and C), while most are short-running jobs. 
Workload C has the least number of jobs but the longest job/stage/instance latencies, with its average job latencies 3-12.5x larger than A and B. It also has more instances on average, 12-14x times than A and B. 
Workload B has the most complex DAG topologies, with an average of 2x number of stages per job and an average of 1.2-1.7x number of operators per stage.

The statistics for data split in model training and evaluation are in Figure~\ref{fig:data-split}.
We show the cumulative distribution functions for the metrics over the three projects in Figure~\ref{fig:workload-cdfs}.

%-------------------------------------------------------------------------------------
\subsection{Simulation Environment}
\label{appendix:simulator}

\begin{figure*}[t]
	\centering
    \includegraphics[height=8cm,width=.98\textwidth]{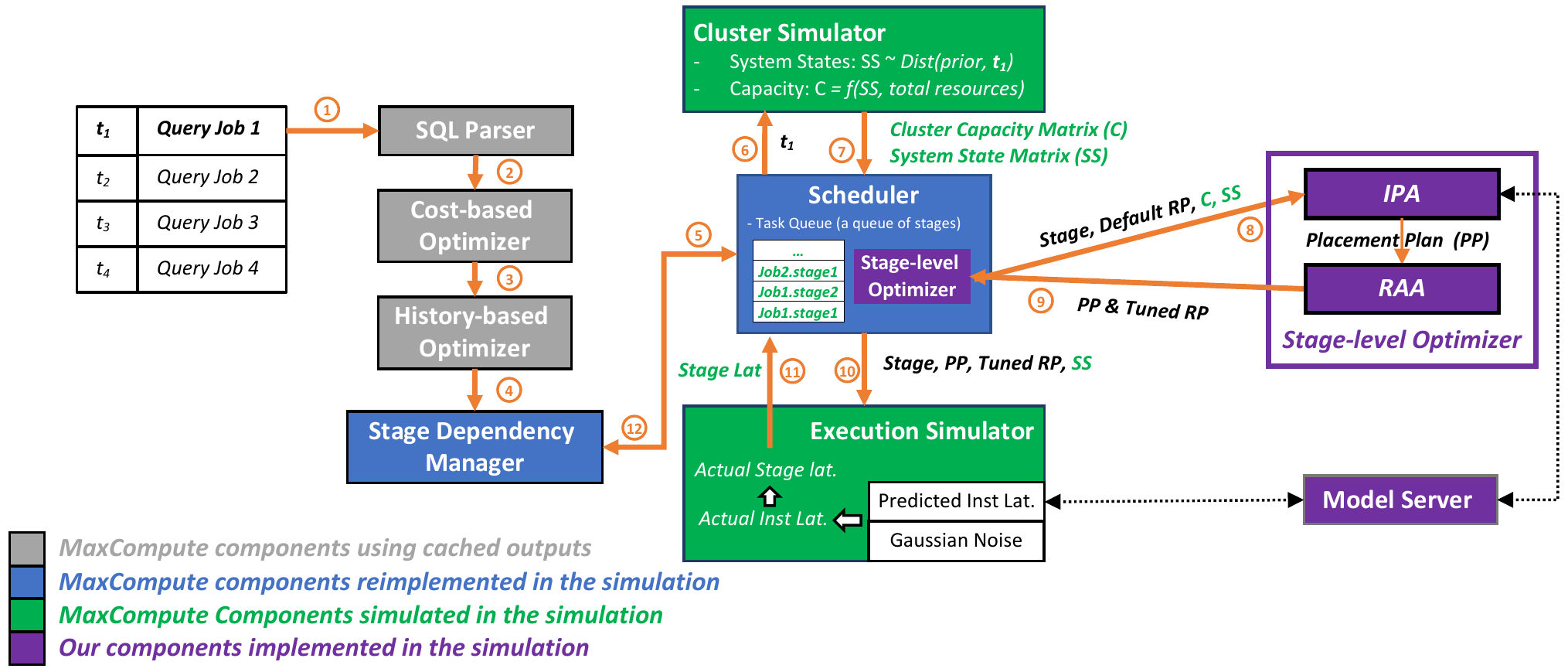}
    \caption{Simulation framework}
    \label{fig:simu-framework}
\end{figure*}

\minip{Hardware Property}
Our experiments are deployed over 3 machines, each with 2x Intel Xeon Platinum 8163 CPU with 24 physical cores, 500G RAM, and 8 GeForce GTX 2080 GPU cards.

\minip{Simulator}
Our simulator simulates the extended MaxCompute environment. 
First, to replace all the productive traces, it caches the query plans and instance meta information for the workload.
Second, it reimplements the stage dependency manager and an extensible scheduler that can support both Fuxi and SO.
Third, it generates the system states and capacity for each machine by sampling from prior knowledge. 
Lastly, it uses a Gaussian Processing Regression (GPR) model to sample the actual latency from a Gaussian distribution given the predicted latency. The GPR is pre-trained by learning the actual latency distribution against the latency predicted by a bootstrap model (by default, the MCI+GTN).
During simulation, the GPR model takes a predicted latency and generates a Gaussian distribution ($N(\mu,\sigma)$) as the output. Then our simulator will sample from the distribution within $\mu\pm3\sigma$ to simulate the actual latency. 
We show the diagram of the simulator in Figure~\ref{fig:simu-framework}.

%As we cannot run experiments directly in the production clusters for workloads A-C, we developed a  simulator of the extended MaxCompute environment for our experiments. To simulate the latency of a stage after our SO determines the PP and RP, we pre-train a Gaussian Process Regression (GPR) model to learn the actual latency distribution against the predicted latency from a set of (predicted, actual) pairs generated from a bootstrap model (by default, MCI+GTN).
%During simulation, the GPR model takes a predicted latency and generates a Gaussian distribution ($N(\mu,\sigma)$) as the output. Then our simulator will sample from the distribution within $\mu\pm3\sigma$ to simulate the actual latency.
%-------------------------------------------------------------------------------------
\begin{figure}[t]
  \centering
	\begin{tabular}{c}
		\subfigure[\small{Department A}]
		{\label{fig:to-A}\includegraphics[width=0.98\linewidth]{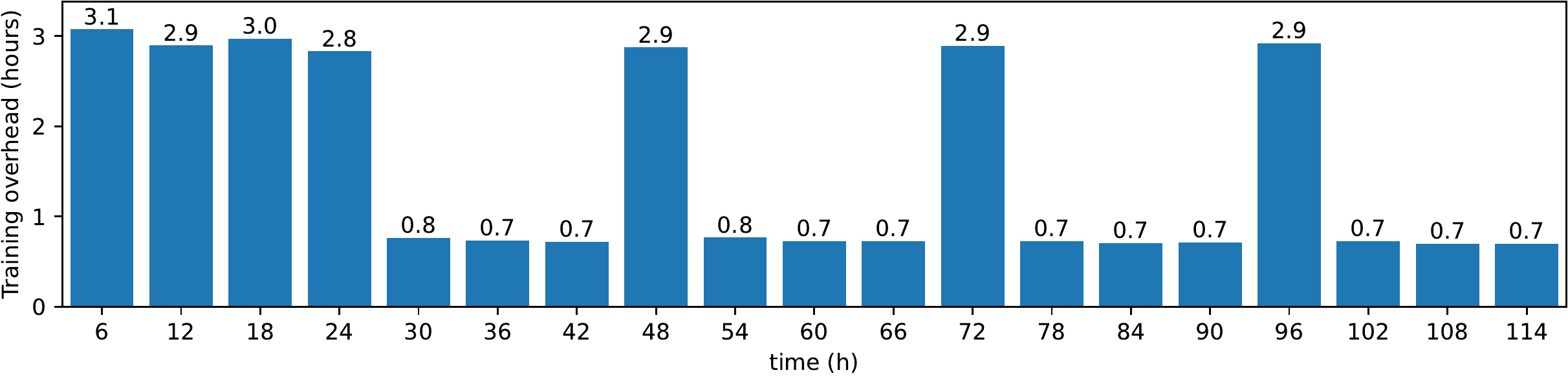}}
		\\
		\subfigure[\small{Department B}]
		{\label{fig:to-B}\includegraphics[width=0.98\linewidth]{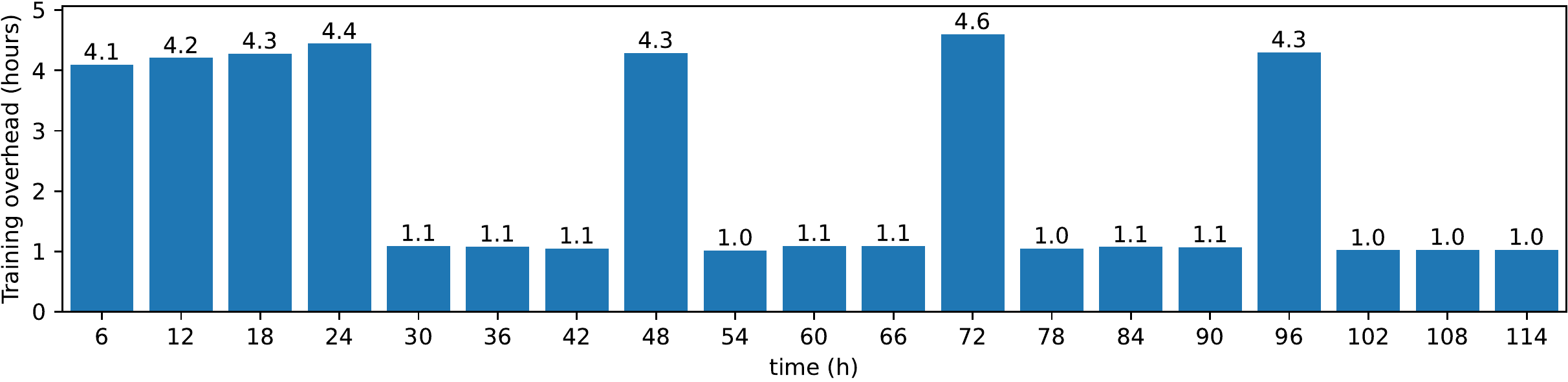}}
		\\		
		\subfigure[\small{Department C}]
		{\label{fig:to-C}\includegraphics[width=0.98\linewidth]{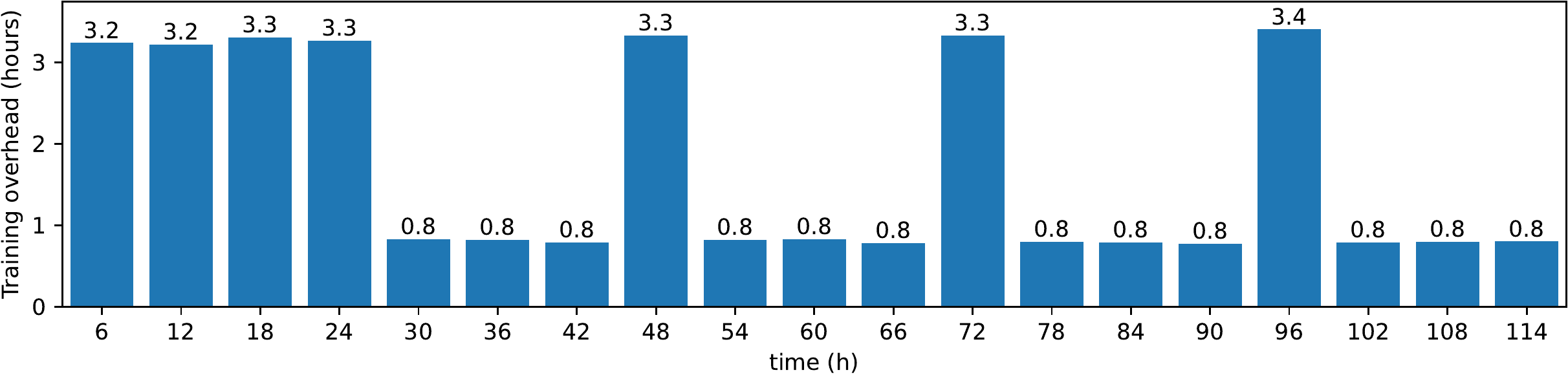}}
	\end{tabular}
  \captionof{figure}{Training overheads over 5 days}
  \label{fig:training-overhead}
\end{figure}

%-------------------------------------------------------------------------------------
\subsection{Experimental Setup of Model Evaluation} 
\label{appendix:model-setup}

{\bf Data Split and Training for the Model Analyzing.} 
We follow the standard data split methods~\cite{petroni-etal-2021-kilt,imagenet-RussakovskyDSKS15} used in the modern big data era, where the validation and test data set are representative of a much smaller portion (40K-95K) of the total (35M-50M). 
Figure~\ref{fig:data-split} shows how we construct small and representative datasets for validation and testing over the three departments.
Notice that the number of instances in a stage has huge skewness as shown in Figure~\ref{fig:cdf-stage-inst-num-low}. 
To address the issue, we count the frequency of each DAG structure from all instances, and divide the DAG structures in each workload into four categories by their frequency.
We construct our test and validation set by applying different sampling strategies in each frequency type. 
For the high (>1K) and median (100-1K) type, we apply stratified sampling to pick 200 and 20 instances over each DAG structure; for the meidan-low (5-100) types, we randomly sample 20\% instances per structure; on the remaining low frequency type, we randomly sample 20\% over all instances.
Each of our workloads has separate 40K-95K instances for validation and testing. And we use the remaining instances for training. 
We follow the common rules for training and hyperparameter tuning in the machine learning literature.

%\rv{
%-------------------------------------------------------------------------------------
\subsection{Training Strategies and Evaluation} \label{appendix:add-model-deploy}

\subsubsection{Training strategy.}
We employ two methods for training:
\begin{itemize}
	\item {\it Retraining:} We retrain the model every 6 hours on the first day when data is insufficient and switch to the 24-hour retraining schedule from the second day.
	\item {\it Incremental Updates:} We also provide incremental updates by fine-tuning the model every 6 hours from the second day.
\end{itemize}
Before deploying the model online, we pick 16 hyperparameters as candidates by exploring the hyperparameter space intensively via the grid-search approach. 
Once deployed, we train models over the 16 hyperparameters simultaneously on 16 GPUs for every model update.
%upon three machines
The system switches to use the most recent model after a training or fine-tuning session finishes.

\subsubsection{Training overhead.}
\label{appendix:training-overhead}
Our experiments involve traces of 0.62M jobs, 2M stages, 0.12B instances, and 407GB raw data over five consecutive days in three departments.
The traces for training include:
\begin{itemize}
	\item query plans from CBO; 
	\item machine states and hardware information from the trace table at MaxCompute; 
	\item the instance-level input cardinality of each stage from an in-memory distributed storage system.
\end{itemize}

The data preparation stage costs 30 minutes to generate the data for training by running ETL queries over those traces at MaxCompute with a parallelism of 219.
Therefore, it costs $\sim$2 minutes to collect and update the training data every day for a department.

In the training stage, we execute the periodical 24h retraining (over up to 50M instances) at midnight each day when the workloads are light, and the incremental fine-tuning (over $\sim$2M instances) every 6 hours. 
%The average training overheads for retraining and finetuning are 3.5 hours and 0.9 hour respectively.
Figure~\ref{fig:training-overhead} shows the training overhead of model updating over five consecutive days in three departments using 16 GPUs.
The average training overhead for retraining is 3.5 hours, and the average training overhead for fine-tuning is 0.9 hour.

%It is also worth mentioning that the system keeps using the latest model until a session of retraining or finetuning finishes. E.g., the system will use the finetuned model for $\sim$3.5 hours until the retrained session is done. 

%-------------------------------------------------------------------------------------
\subsubsection{Modeling performance with the workload drift.}
\label{appendix:perf-over-workload-shift}

\begin{figure*}
  \centering
	\begin{tabular}{c}
		\subfigure[\small{Department A}]
		{\label{fig:hws-A}\includegraphics[width=0.98\linewidth,height=2.4cm]{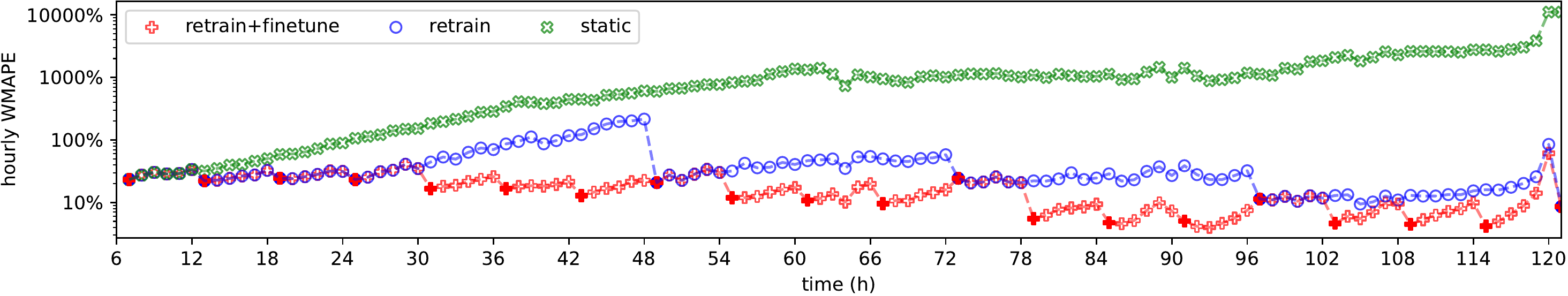}}
		\\
		\subfigure[\small{Department B}]
		{\label{fig:hws-B}\includegraphics[width=0.98\linewidth,height=2.4cm]{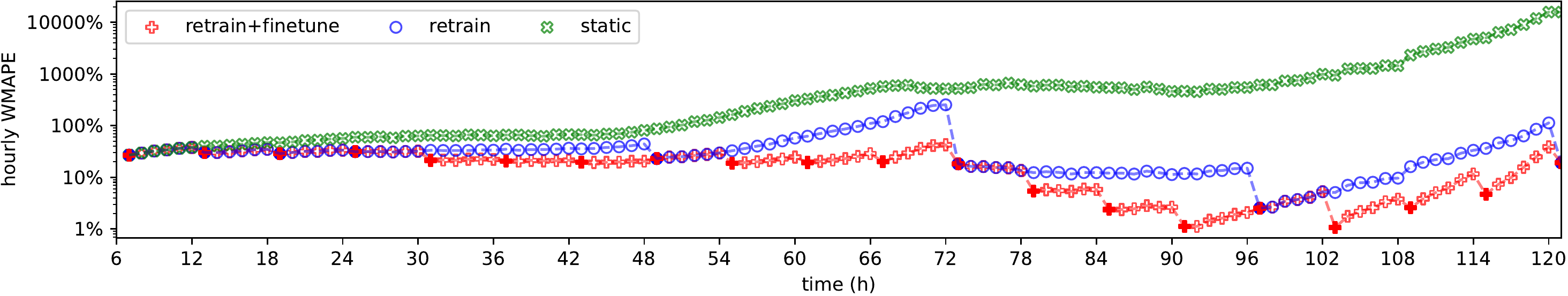}}
		\\		
		\subfigure[\small{Department C}]
		{\label{fig:hws-C}\includegraphics[width=0.98\linewidth,height=2.4cm]{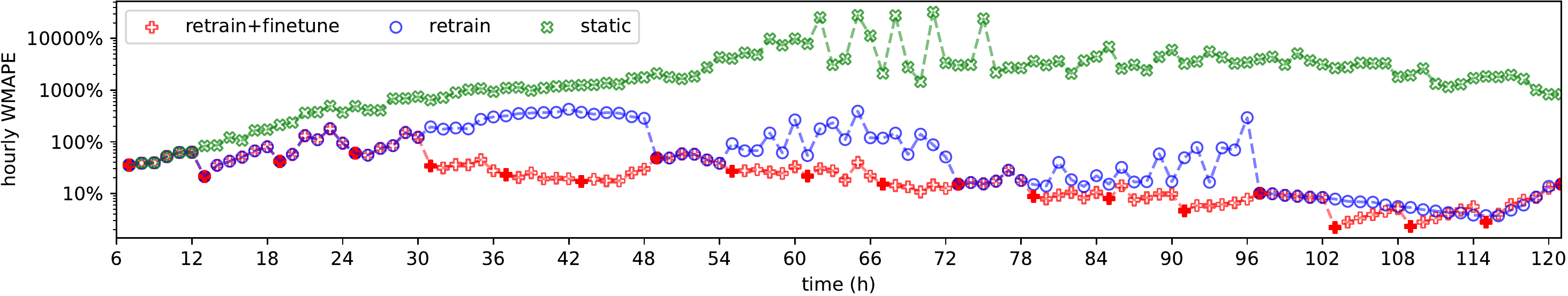}}
	\end{tabular}
  \captionof{figure}{Hourly WMAPE trends of three departments when injecting stages from long-running to short-running}
  \label{fig:hourly-wmape-slat}
\end{figure*}

\begin{figure*}
  \centering
	\begin{tabular}{c}
		\subfigure[\small{Department A}]
		{\label{fig:hwt-A}\includegraphics[width=0.98\linewidth,height=2.4cm]{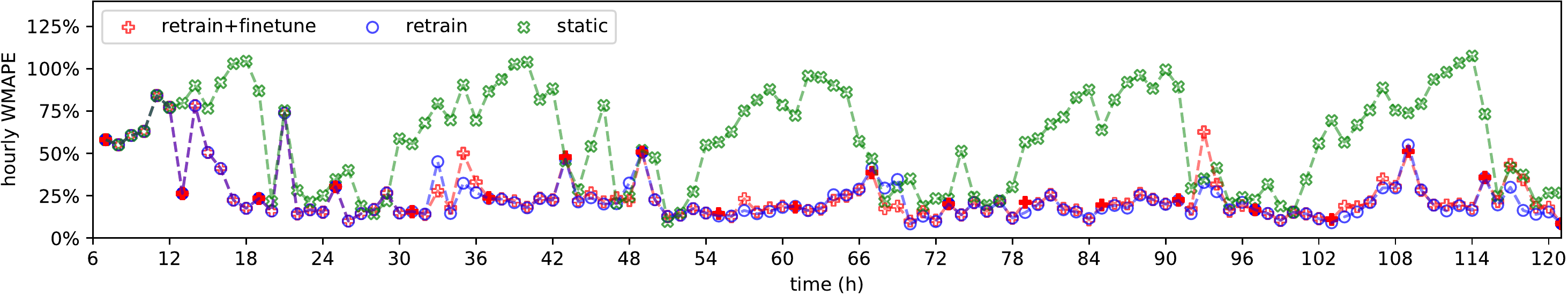}}
		\\
		\subfigure[\small{Department B}]
		{\label{fig:hwt-B}\includegraphics[width=0.98\linewidth,height=2.4cm]{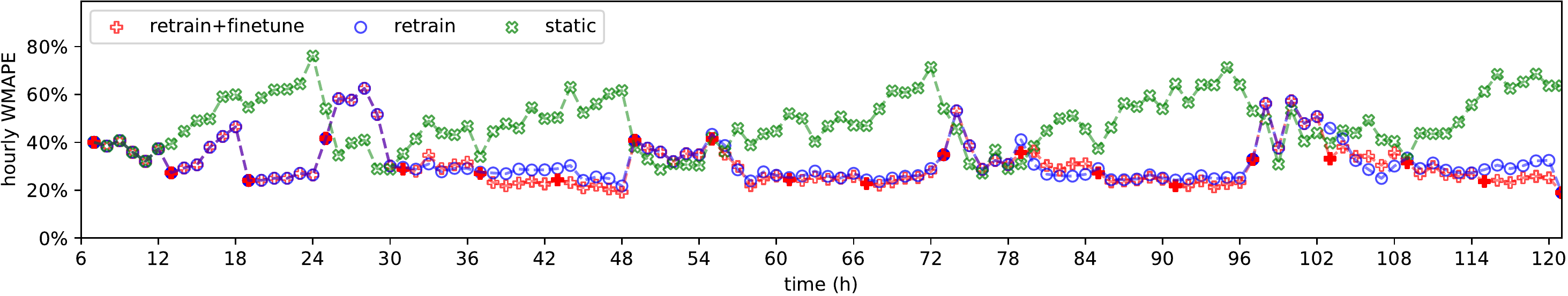}}
		\\		
		\subfigure[\small{Department C}]
		{\label{fig:hwt-C}\includegraphics[width=0.98\linewidth,height=2.4cm]{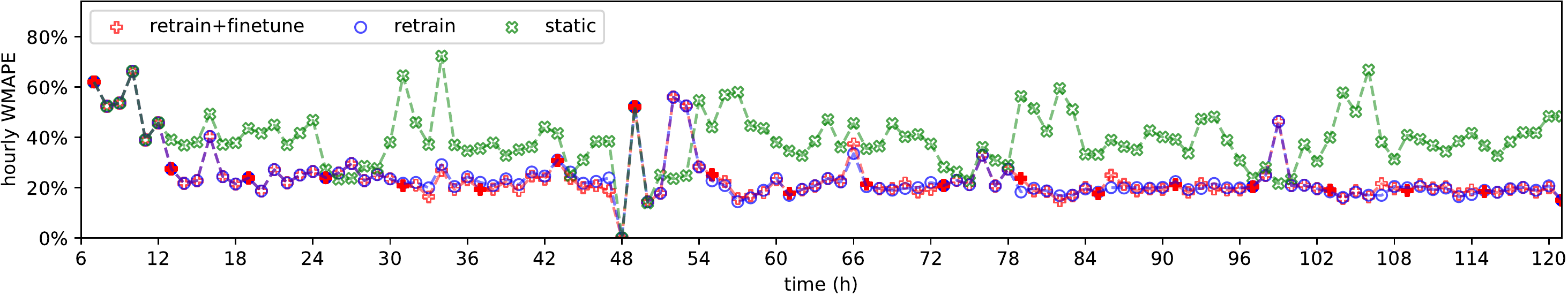}}
	\end{tabular}
  \captionof{figure}{Hourly WMAPE trends of three departments when injecting stages as the realistic setting}
  \label{fig:hourly-wmape-temporal}
\end{figure*}

\begin{figure*}[t]
	\begin{tabular}{ccc}
		\subfigure[\small{Department A}]
		{\label{fig:tws-A}\includegraphics[height=2.4cm]{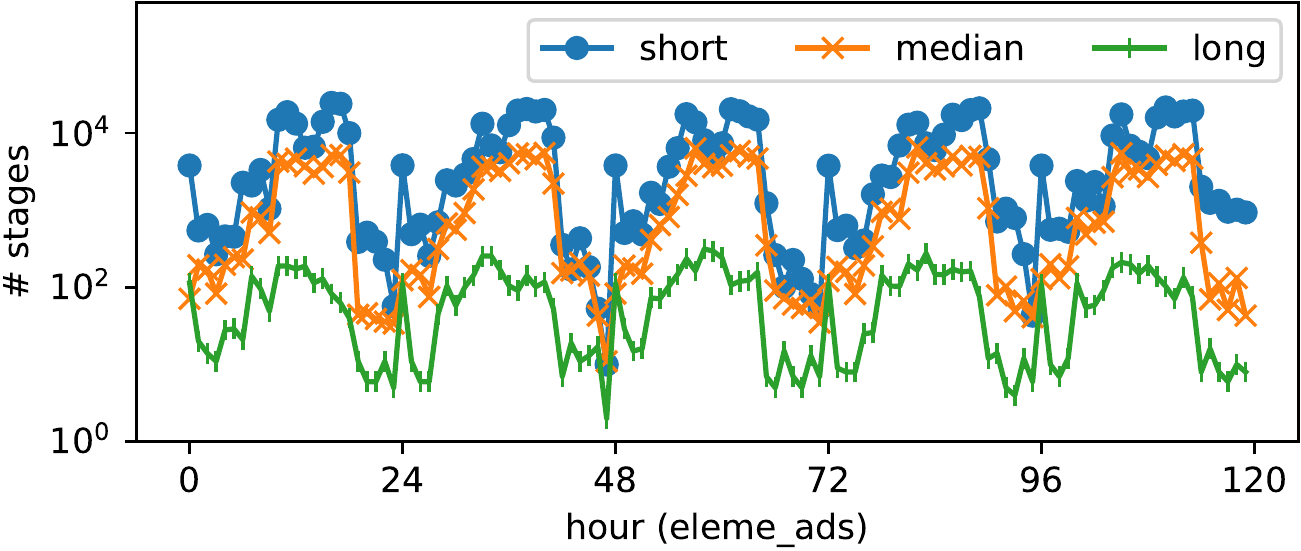}}
		&
		\subfigure[\small{Department B}]
		{\label{fig:tws-B}\includegraphics[height=2.4cm]{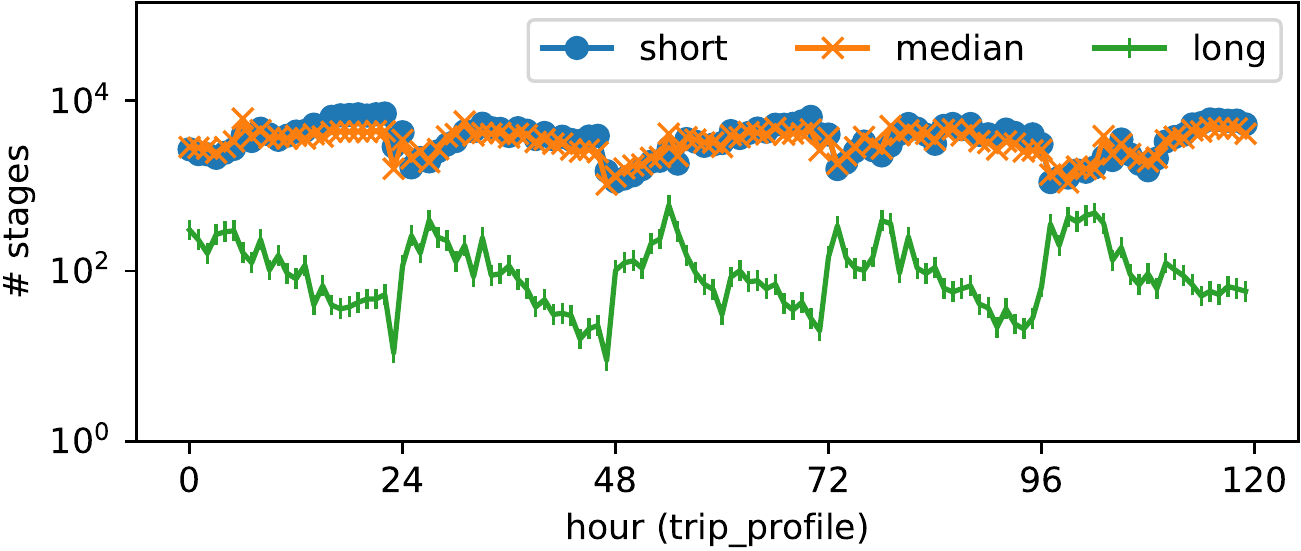}}
		&		
		\subfigure[\small{Department C}]
		{\label{fig:tws-C}\includegraphics[height=2.4cm]{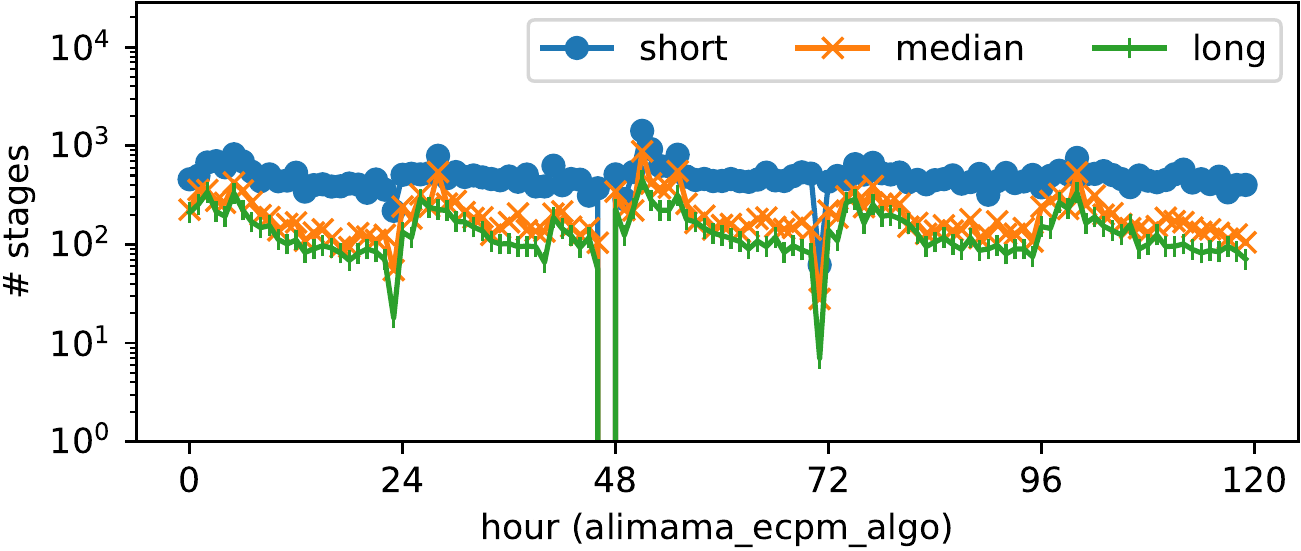}}
	\end{tabular}
%	\vspace{-0.1in}
	\caption{Hourly workload shifts}
	\label{fig:temporal-workload-shift}
\end{figure*}

To investigate the effects of the workload drift on the model performance, we design two experimental settings. One is the hypothetical worst, where we sorted stages in the descending order of the stage-level latency,  divided them evenly into 120 hourly (5 day) workloads, and injected them into the system in this order. The other is more realistic, where we injected jobs by obeying the job arriving order in the real world.
For each setting, we compare the following three training methods.
\begin{enumerate}
	\item The \textit{static} method: the model is trained over the data in the first 6 hours and never gets updated. Its WMAPE performance indicates the effect of workload drifts.
	\item The \textit{retrain} method: the model is retrained every 6 hours on the first day and gets periodically retrained every 24 hours from the second day.
	\item The \textit{retrain+finetune} method: the model has an incrementally fine-tuning every 6 hours from the second day in addition to the \texttt{retrain} method.
\end{enumerate}

\minip{Setting 1.} In the hypothetical worst case, the workload keeps drifting from the longest-running stage to the shortest-running stage over time. 
As shown in Fig.~\ref{fig:hourly-wmape-slat}, the~\textit{static} method shows the worst performance, with the WMAPE increasing up to 10000\%. 
The \textit{retrain} and \textit{retrain+finetune} methods both adapt to the workload drift, while the ~\textit{retrain+finetune} method shows a better adaptivity with more incremental updates. 
As shown in Figure~\ref{fig:hourly-wmape-slat}, the \texttt{retrain+finetune} method achieves an average of 6-10\% WMAPE on the last day, while the \texttt{retrain} method gets 7-17\%.

\minip{Setting 2.} In the realistic setting, the workload pattern is defined as the mix of stages with short-, median- and long-running latencies on an hourly basis. 
The drifts are not monotonic and more complex. As shown in Fig.~\ref{fig:temporal-workload-shift}, the drift patterns are mainly based on 24 hours and the differences between hours vary from time to time.
Therefore, if we only fine-tune every six hours, the \texttt{retrain+finetune} cannot adapt more to the hourly trend than the \texttt{retrain} method, as shown in Fig.~\ref{fig:hourly-wmape-temporal}. 
However, the \texttt{retrain} method can still catch the daily pattern and keep the average WMAPE low. For example, the hourly WMAPE in workload C mostly achieves as low as 15-25\% over the last day, which performs much better than the average WMAPE of 831-5083\% in the hypothetical worst setting.

\cut{
The workload drift is more complex and harder to adapt.
First, the performance of the \textit{static} model reflects a coarse periodical pattern in the workload drift. The WMAPE of the \textit{static} model works relatively well for the first 6 hours (28-38\% WMAPE) every day but performs badly (up to 100\%) around noon. However, the \texttt{retrain} and \texttt{retrain+refine} show better adaptivity and robust lower WMAPEs of 20-32\% and 20-31\%, respectively, in the last two days.
Secondly, the dramatic drifts have a big impact on the WMAPE. Figure~\ref{fig:hourly-wmape-temporal} shows the WMAPE jumps to high values in the first hours each day even though the model is just updated by retraining.
Figure~\ref{fig:temporal-workload-shift} shows the number of long-running, median-running, and short-running stages in each hourly workload. We observe a variety of stage-level latency distributions around the beginning of each day. 
Figure~\ref{fig:hourly-lat-cdf} shows a dramatic drift of the instance latency distribution during 42-51h. Notice that the CDFs of 49-51h are not as close as the previous hours. Therefore, our model failed to perform well with the assumption of smooth workload characteristics.
Lastly, when the workload drifts to certain relatively stable characteristics, the \texttt{retrain+finetune} still shows a better adaptivity than \texttt{retrain}, as shown during the 36-48h and 114h-120h in Figure~\ref{fig:hwt-B}.
}

%-------------------------------------------------------------------------------------
\subsection{Modeling Targets} 

We group the targets of the latency measurements in big data systems into the following four categories.
\begin{enumerate}
	\item Single-instance Stage Latency (SiSL)
	\item Single-instance Operator Latency (SiOL)
	\item Multi-instance Stage Latency (MiSL)
	\item Multi-instance Operator Latency (MiOL)
\end{enumerate}

This paper aims at {\bf single-instance stage latency (SiSL)} as the modeling target. Besides, we model SiOL for the breakdown error profiling and MiSL for performance comparison. MiOL is the modeling target used by CLEO~\cite{cleo-sigmod20} to evaluate the exclusive cost of a query operator.

\begin{table}[t]
\centering
\small
\newrobustcmd{\B}{\bfseries}
\begin{tabular}{ccccccc}
\hline
\multicolumn{1}{l}{\B Target} & \B WL & \B WMAPE  & \B MdErr  & \B 95\%Err & \B Corr   & \B GlbErr \\ \hline
\multirow{3}{*}{\B SiSL} & A        & 8.6\%  & 7.4\%  & 62.4\%  & 96.6\% & 1.9\%  \\
                      & B        & 19.0\% & 15.1\% & 71.5\%  & 96.4\% & 5.4\%  \\
                      & C        & 15.1\% & 12.7\% & 97.3\%  & 98.4\% & 5.1\%  \\ \hline
\multirow{3}{*}{\B SiOL} & A        & 11.7\% & 6.2\%  & 52.4\%  & 98.6\% & 2.9\%  \\
                      & B        & 27.1\% & 2.3\%  & 46.5\%  & 95.5\% & 11.7\% \\
                      & C        & 18.1\% & 8.7\%  & 65.2\%  & 98.4\% & 6.5\%  \\ \hline
\multirow{3}{*}{\B ACT}  & A                 & 6.6\%          & 5.9\%          & 38.0\%           & 99.6\%        & 1.6\%           \\
                         & B                 & 14.5\%         & 9.2\%          & 40.6\%           & 96.6\%        & 2.0\%           \\
                         & C                 & 14.7\%         & 12.9\%         & 737.3\%          & 98.5\%        & 4.8\%           \\ \hline                
\multirow{3}{*}{\B ACT*} & A & 6.3\%  & 5.3\% & 60.2\%  & 90.1\% & 0.9\% \\
                     & B & 12.5\% & 7.8\% & 36.8\%  & 97.3\% & 0.9\% \\
                     & C & 10.9\% & 9.7\% & 942.0\% & 98.9\% & 3.6\% \\ \hline
\end{tabular}	
\caption{Modeling Performance for SiSL, SiOL, and ACTs}
\label{tab:model-performance-all}
\end{table}

%-------------------------------------------------------------------------------------
\subsection{Breakdown of Model Errors}\label{appendix:expt-breakdown-analyses}

\begin{figure*}[t]
	\centering
	\vspace{-0.1in}
	\hspace{-6cm}
	
	\begin{tabular}{lcc}

		\subfigure[\small{7 operators make up 95\% of WMAPE in A}]
		{\label{fig:e2e-ipa-metrics}\includegraphics[height=3.0cm,width=5.5cm]{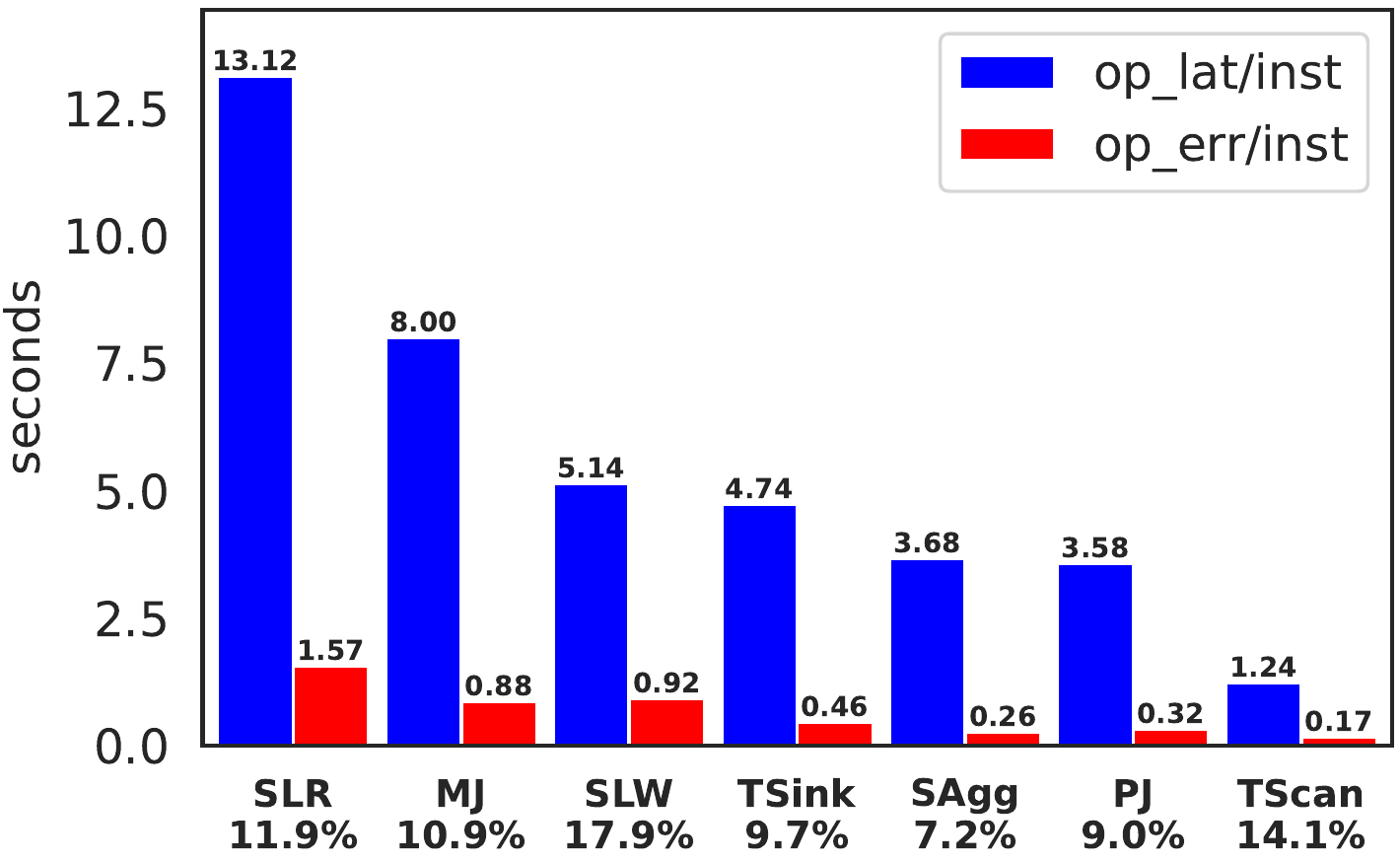}}

		&
		\subfigure[\small{6 operators make up 95\% of WMAPE in B}]
		{\label{fig:e2e-ipa-busy-idle}\includegraphics[height=3.0cm,width=5.5cm]{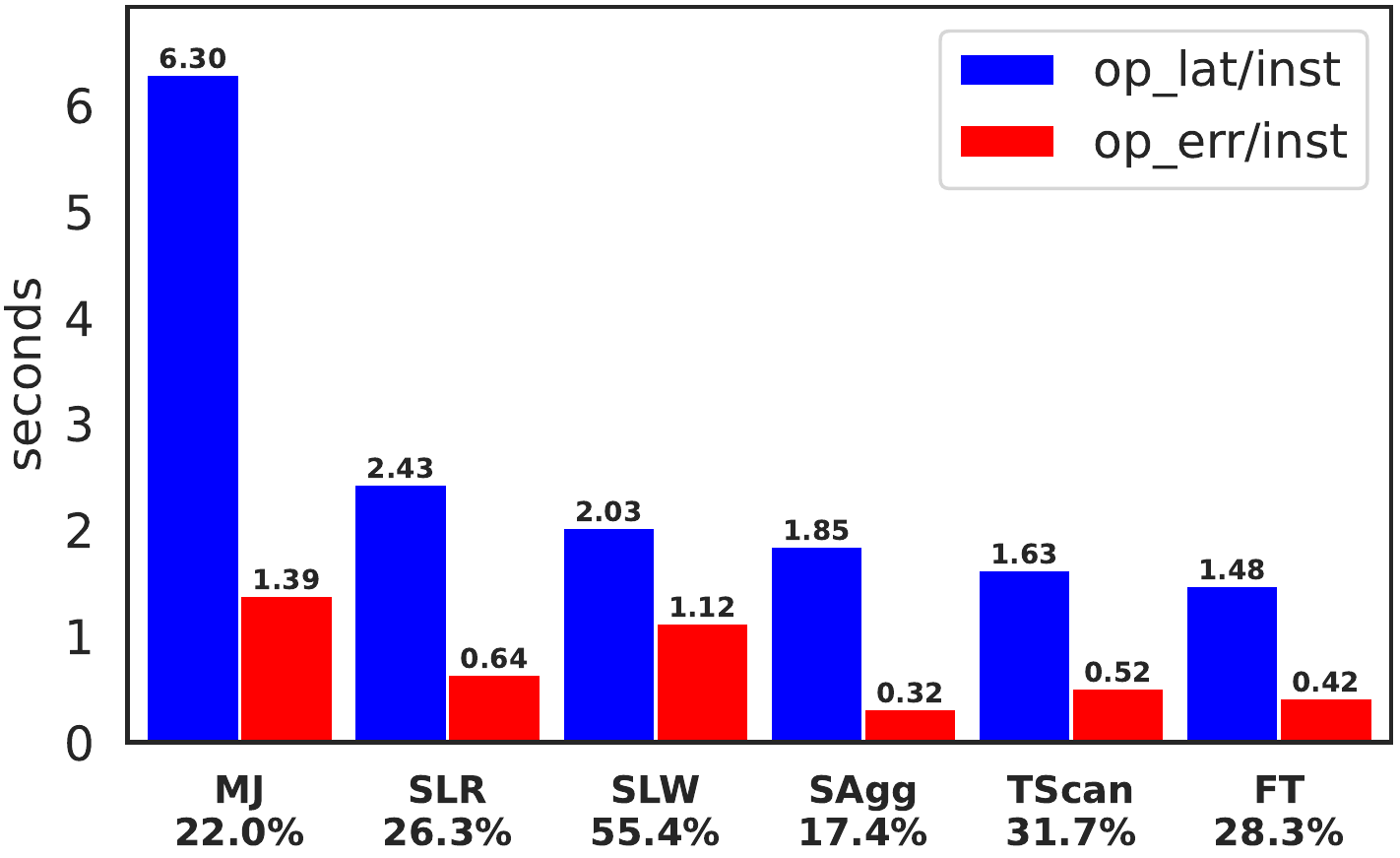}}

		&		
		\subfigure[\small{9 operators make up 95\% of WMAPE in C}]
		{\label{fig:e2e-ipa-fuxi-solving-time}\includegraphics[height=3.0cm,width=5.5cm]{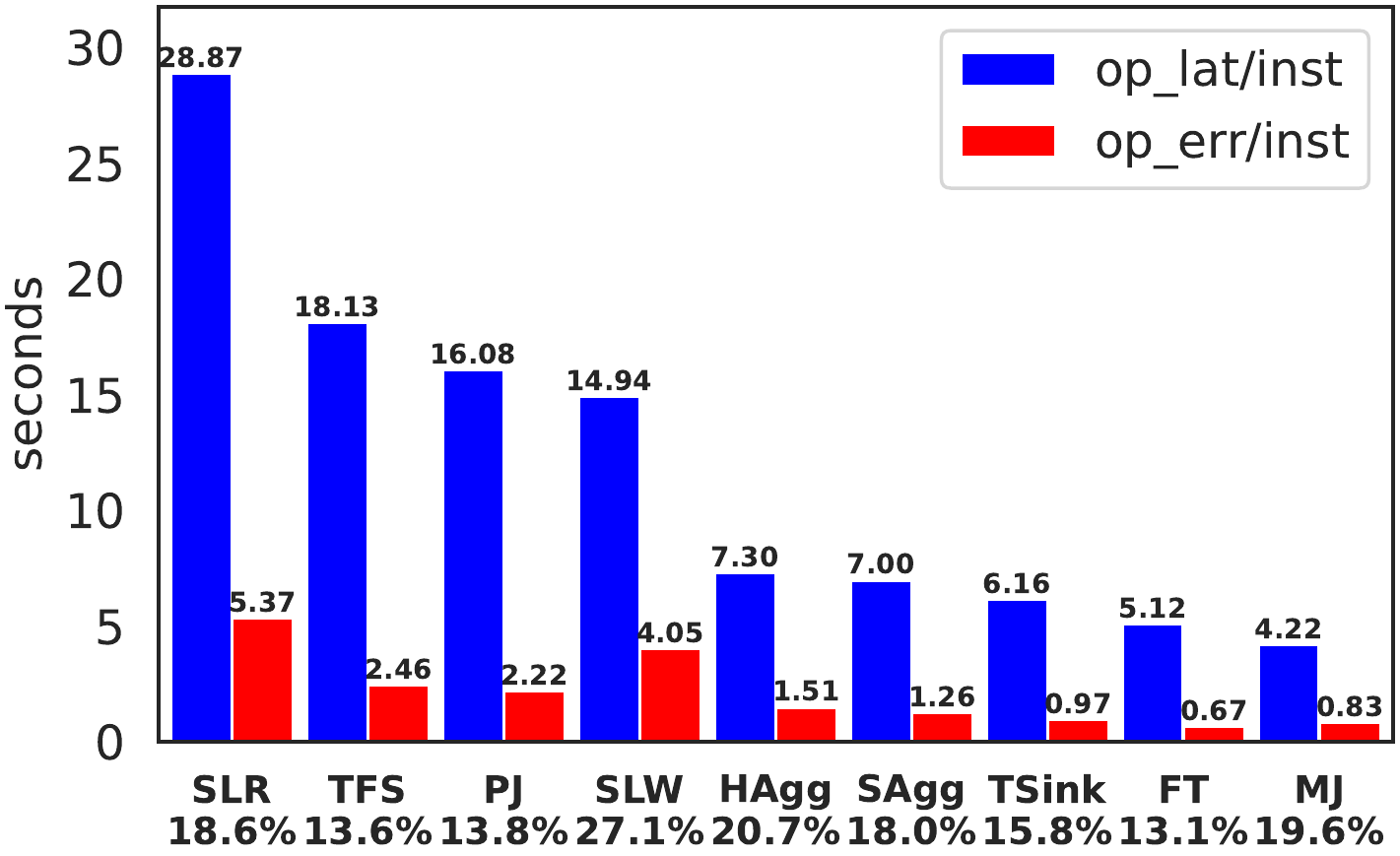}}
		
	\end{tabular}
%	\vspace{-0.1in}
	\caption{SiOL breakdown analyses in the test sets of the 3 workloads}
	\label{fig:siol-breakdown}
%	\vspace{-0.2in}		

\end{figure*}

Our breakdown analyses show that the errors are mainly from the IO-intensive operators (due to lack of features to capture IO/network activities in the current system) and the dynamic system states.
\begin{enumerate}
	\item we train a separate model for SiOL by directly feeding GTN's node embedding features to the MLP. Our tuned model achieves 12-27\% WMAPEs and 2-8\% MdErrs on SiOL over the three workloads, as shown in Table~\ref{tab:model-performance-all}. 
	\item we average every operator's latency and error across all instances, and observe that IO-intensive operators make up 59-84\% of the WMAPE in the three workloads.
\begin{itemize}
	\item Workload A: 9.4\%/11.7\%
	\item Workload B: 21\%/27.1\%
	\item Workload C: 11\%/18.1\%
\end{itemize}
 	\item Figure~\ref{fig:siol-breakdown} shows that the top-k error contributed operators make up at least 95\% of the entire SiOL WMAPE in each workload. We sort the operators based on their average latencies in an instance and show the error rates of each operator in the x-axis ticks. We observe
\begin{itemize}
	\item IO-intensive operators take up most of the average latencies in an instance and account for most errors.
	\item IO-intensive operators have more significant individual error rates. E.g., the \verb|StreamLineWrite(SLW)| operator doing shuffle writing has the largest error rate in all workloads, with 17.9\% in A, 55.4\% in B, and 27.1\% in C.
\end{itemize}
	\item %\todo{We show an example of big variance in an IO-intensive operator.} 
	We look into 127 instances from the same stage with an \verb|SLW| (StreamLineRead) operator in workload B. Besides the same query plan and resource plan, those instances run over machines with the same hardware types and similar system states with similar input row numbers for the \verb|SLW|; however, the time costs for SLWs vary from 34s to 117s.  %\futuretodo{add the profiling plot}
	\item When we switch to a CPU-oriented metric {\bf actual CPU time (ACT)} that measures the actual CPU cost in the lifetime of an instance, the WMAPE and MdErr reduce to 7-15\% and 6-13\%, respectively, as shown in Table~\ref{tab:model-performance-all}.
	\item When we augment the system states by adding the average system states during the lifetime of an instance (which is not available before running), our model towards actual cpu time {\bf (ACT*)} achieves additional improvement, with its WMAPE and MdErr reduced to 6-12\% and 5-10\%.
\end{enumerate}

%-------------------------------------------------------------------------------------
\subsection{Impact of Discretized System States}\label{appendix:expt-modeling-ss}

\begin{figure*}[t]
	\centering
	\vspace{-0.1in}
%	\hspace{-6cm}
	\begin{tabular}{lcc}

		\subfigure[\small{workload A}]
		{\label{fig:ss-dd-a}\includegraphics[height=3.0cm,width=5.5cm]{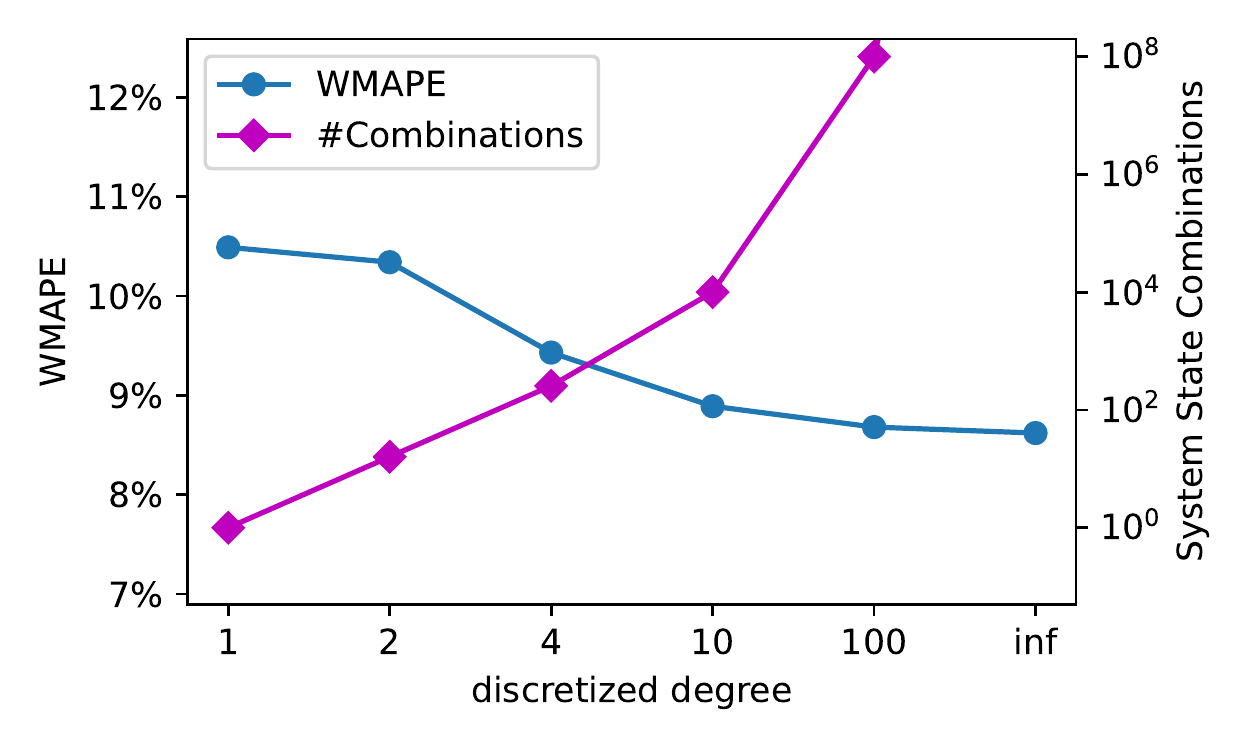}}

		&
		\subfigure[\small{workload B}]
		{\label{fig:ss-dd-b}\includegraphics[height=3.0cm,width=5.5cm]{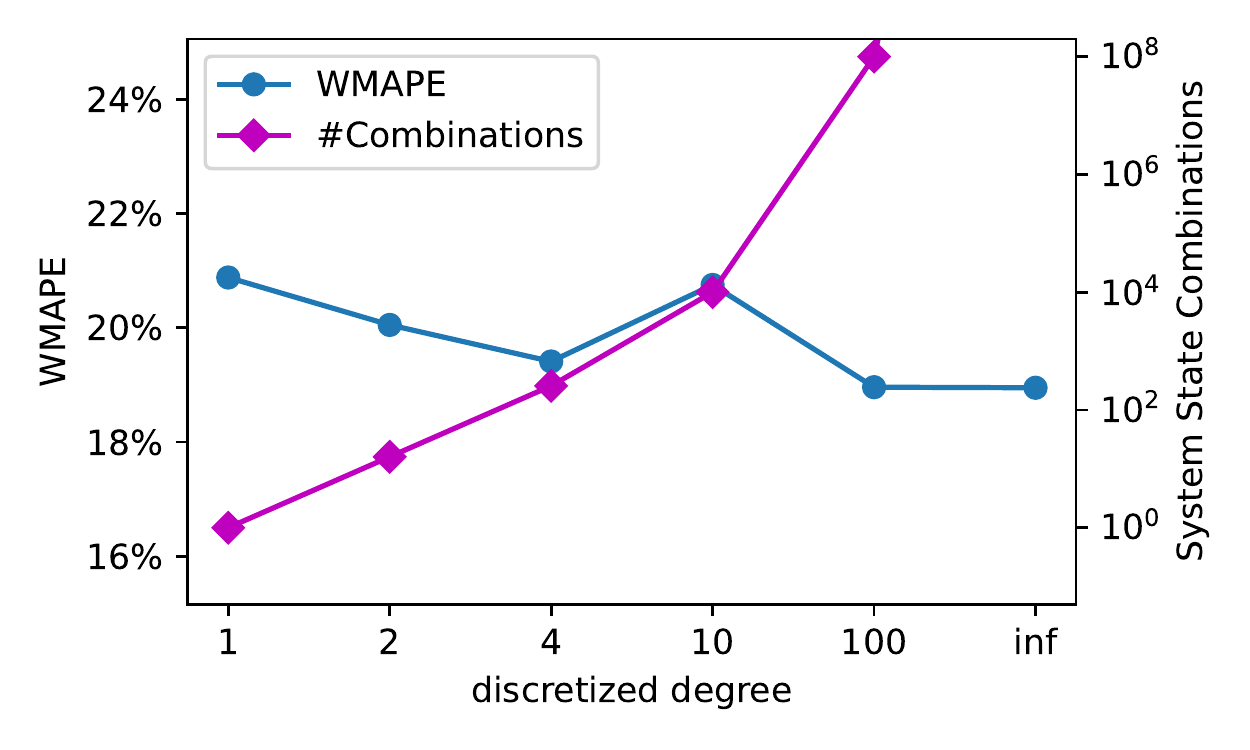}}

		&		
		\subfigure[\small{workload C}]
		{\label{fig:ss-dd-c}\includegraphics[height=3.0cm,width=5.5cm]{figures/expr-ss-dd-C.pdf}}
		
	\end{tabular}
%	\vspace{-0.1in}
	\caption{Tradeoffs between WMAPE and the num. of system state combinations}
	\label{fig:ss-dd}
%	\vspace{-0.2in}
\end{figure*}

\underline{Additional Expt 1:} {\it Impact of Discretized System States}.
We train separate models over system states with different discretized degrees (DD) and compare the model performances against optimization overhead in IPA. 
Figure~\ref{fig:ss-dd} shows the examples in three workloads. When increasing DD, WMAPE converges to a certain range, while the complexity of the resource optimization problem is exponentially increased. 
WMAPE increases when DD is increased from 4 to 10 in Figures~\ref{fig:ss-dd-b} and \ref{fig:ss-dd-c} due to the model overfitting.
We choose DD=10 for workload A, and DD=4 for both B and C when doing resource optimization by exploring the DD for each workload. 
%\futuretodo{upate the secondary y-axis plot to time cost}.

%-------------------------------------------------------------------------------------
\subsection{Comparison of Modeling Targets}\label{appendix:expt-modeling-cleo}

\underline{Additional Expt 2:} {\it Comparison of Modeling Targets}.
CLEO~\cite{cleo-sigmod20} learns a model for \textit{end-to-end query operator latency} across multiple instances. While the CLEO code is not publicly available and we lack source data to train query operator latency, we construct an indirect comparison via modeling the \textit{end-to-end stage latency across multiple instances} against our \textit{instance-level latency} to illustrate the issues of CLEO's  choice of the modeling target.
For the \textit{end-to-end stage latency}, our best method can achieve only 37-54\% WMAPE, much worse than 8.6-19\% WMAPE of our \textit{instance latency model},  showing that it is more challenging to predict end-to-end latency due to the high variability across different instances.
The modeling performance of the end-to-end stage latency is in Table~\ref{tab:model-performance-cleo}.

\begin{table}[t]
\centering
\small
\newrobustcmd{\B}{\bfseries}
\begin{tabular}{ccccccc}
\hline
\multicolumn{1}{l}{\B Target} & \B WL & \B WMAPE  & \B MdErr  & \B 95\%Err & \B Corr   & \B GlbErr \\ \hline
\multirow{3}{*}{MiSL} & A & 36.7\% & 25.5\% & 122.1\% & 90.1\% & 22.4\% \\
                      & B & 47.4\% & 29.6\% & 129.2\% & 84.2\% & 36.3\% \\
                      & C & 53.8\% & 39.2\% & 123.9\% & 63.7\% & 52.7\% \\ \hline
\end{tabular}	
\caption{Modeling Performance for MiSL}
\label{tab:model-performance-cleo}
\end{table}

%-------------------------------------------------------------------------------------

\begin{table*}
\ra{0.7}
\addtolength{\tabcolsep}{-2pt}
\newrobustcmd{\B}{\bfseries}
\newrobustcmd{\BR}{\color{red}}
\newrobustcmd{\BL}{\color{blue}}
%\footnotesize
\begin{tabular}{@{}lrclrclrclrclrclrclrcl@{}}\toprule
& \multicolumn{3}{c}{Coverage}
& \multicolumn{3}{c}{$\overline{Lat_{stage}^{(in)}}$ (s)} &
 \multicolumn{3}{c}{$\overline{Cost_{stage}}$ (0.001\$)} & \multicolumn{3}{c}{$\overline{T_{stage}}$ (ms)} & \multicolumn{3}{c}{$\max(T_{stage})$ (ms)}\\
\cmidrule{2-4} \cmidrule{5-7} \cmidrule{8-10} \cmidrule{11-13} \cmidrule{14-16}
SO choice & A & B & C & A & B & C & A & B & C & A & B & C & A & B & C \\ \midrule
\verb|FUXI| &100\% &100\% &100\% & 9.5&34&22 & 50&27&234 & 2&1&3 & 8&2&4 \\ \midrule
\verb|IPA(Org)| &100\% &100\% &100\%  &  8.5&  28&  13	& 	49&  24&  206	& 	280&  17&  1.4K	& 	2.0K&  18&  1.8K\\
\B \texttt{IPA(Cluster)} &\B100\% & \B100\% &\B100\% & \B 8.4&\B29&\B14  &  \B50&\B25&\B208  & \B 15 & \B 10 &\B 30  & \B 24 & \B 10 & \B 36 \\
\midrule % aws-cost
\verb|IPA+RAA(W/O_C)| &100\% &100\% &100\% &  8.7&7&11  &  36&6&60  & 2.5K & 177 & 3.5K & 19K & 220 & 6.3K \\
\verb|IPA+RAA(DBSCAN)| &100\% &100\% &100\% &  7.0&11&10  &  40&10&62  & 223   & 132 & 258  & 937 & 136 & 452 \\
\verb|IPA+RAA(General)| &100\% &100\% &100\% &  6.3&7&7  &  38&7&59  & 100 & 20  & 167   & 241 & 23 & 229  \\
\B \texttt{IPA+RAA(Path)} &\B100\% &\B100\% &\B100\% &  \B6.3&\B7&\B7  &  \B38&\B7&\B59  & \B98    & \B17  & \B156  & \B 226 & \B 18 & \B 224 \\ \midrule
\BR \texttt{EVO} & \BR   0\%& \BR 82\%& \BR 0\%    & \BR   --& \BR 38& \BR --  & \BR   --& \BR 3& \BR -- & \BR   --& \BR 21K& \BR -- & \BR   --& \BR 24K& \BR --\\
%\BR{\texttt{WS(MOGD)}} & \BR   99\%& \BR 100\%& \BR 98\%       & \BR   8.8& \BR 17& \BR 8  & \BR   15& \BR 12& \BR 45      & \BR   1.1K& \BR 869& \BR 953  & \BR   1.5K& \BR 926& \BR 1.3K\\
\BR{\texttt{WS(Sample)}}& \BR   90\%& \BR 85\%& \BR 82\%        & \BR   19& \BR 17& \BR 20        & \BR   5& \BR 4& \BR 6 & \BR   7.4K& \BR 465& \BR 9.8K & \BR   22K& \BR 753& \BR 12K\\
% MOGD random initial
% \BR{\texttt{PF(MOGD)}} & \BR   94\%& \BR 96\%& \BR 89\%        & \BR   14& \BR 32& \BR 12        & \BR   4& \BR 11& \BR 9        & \BR   2.9K& \BR 1.8K& \BR 3.0K        & \BR   8.6K& \BR 2.2K& \BR 3.6K\\
% MOGD boosted by Fuxi
\BR{\texttt{PF(MOGD)}} & \BR   99\%& \BR 100\%& \BR 98\%       & \BR   10& \BR 18& \BR 9 & \BR   15& \BR 12& \BR 46      & \BR   2.7K& \BR 2.1K& \BR 1.5K        & \BR   4.0K& \BR 2.5K& \BR 2.3K\\ 
\midrule
\BL{\texttt{IPA+EVO}} & \BL   98\%& \BL 100\%& \BL 98\%       & \BL   11& \BL 11& \BL 13        & \BL   9& \BL 6& \BL 39        & \BL   3.0K& \BL 2.4K& \BL 5.0K        & \BL   7.3K& \BL 2.6K& \BL 5.9K\\
%\BL{\texttt{IPA+WS(MOGD)}} & \BL   100\%& \BL 100\%& \BL 100\%     & \BL   8.6& \BL 17& \BL 8  & \BL   41& \BL 12& \BL 59      & \BL   815& \BL 651& \BL 675   & \BL   1.3K& \BL 693& \BL 988\\
\BL{\texttt{IPA+WS(Sample)}} & \BL   100\%& \BL 100\%& \BL 100\%     & \BL   13& \BL 8& \BL 23 & \BL   66& \BL 7& \BL 156      & \BL   3.5K& \BL 517& \BL 12K        & \BL   10K& \BL 741& \BL 18K \\
% MOGD random initial
%\BL{\texttt{IPA+PF(MOGD)}} & \BL   100\%& \BL 100\%& \BL 100\%     & \BL   12& \BL 18& \BL 9 & \BL   46& \BL 12& \BL 58      & \BL   1.8K& \BL 1.2K& \BL 1.4K        & \BL   3.5K& \BL 1.7K& \BL 2.6K\\
% MOGD boosted by Fuxi
\BL{\texttt{IPA+PF(MOGD)}} & \BL   100\%& \BL 100\%& \BL 100\%     & \BL   9.3& \BL 17& \BL 8  & \BL   42& \BL 12& \BL 59      & \BL    1.6K& \BL 1.2K& \BL 1.2K        & \BL   2.5K& \BL 1.6K& \BL 2.2K\\
\bottomrule
\end{tabular}
\caption{Metrics of simulating the 29 subworkloads}
\label{tab:expt-so-sub-workloads-raw}
\end{table*}

\subsection{Subworkloads for RAA Evaluation}\label{appendix:expt-sub-workloads}

We record the trend of CPU utilization every day and average them over five consecutive days, where we slide a 40 min time window to find periods of the highest and the lowest average CPU utilization for each workload.

We subsample jobs on the two time periods for every workload on each of the five consecutive days to form the 29 subworkloads (workload C submitted 0 jobs during its idle period on the second day), including 12K jobs, 51K stages, and 6M instances.

\cut{
We then dive into 29 sub-workloads, subsampled in different system busy and idle periods. 
%, and show the properties of the IPA. 
% based on Fig.~\ref{fig:e2e-fuxi-ipa-ic-on-kde-silverman}.
Fig.~\ref{fig:ipa-1} shows that IPA reduces more stage latencies and \rv{cloud cost} for workload C (with more long-running instances,) than workload B (median-length stages) and workload A (short-running stages).
Fig.~\ref{fig:ipa-2} shows that  IPA helps reduce more latency and cost when a cluster has an idle environment because it can assign more machines of high capacity to a stage and hence reduce the stage latency more.
%Third, \verb|IPA(Cluster)| is much more efficient than \verb|IPA(Org)| when the number of instances increases. As shown in Fig.~\ref{fig:ipa-time-cluster} and~\ref{fig:ipa-time-appr}, for a sub-workload with $\sim$3K thousand instances per stage, \verb|IPA(Org)| needs 2s to get the placement plan, while \verb|IPA(Cluster)| can finish within 0.04s.
%We leave other results for IPA in Appendix~\ref{appendix:expt-ipa}.
}
%-------------------------------------------------------------------------------------
\subsection{More Results for Expt 6}\label{appendix:expt-ipa}

\begin{figure*}[t]
	\centering
	\vspace{-0.1in}
	\hspace{-6cm}
	
	\begin{tabular}{lcc}

		\subfigure[\small{reduction rates of $Lat_{stage}^{(ex)}$, $Lat_{stage}^{(in)}$, $Cost_{stage}$}]
		{\label{fig:ipa-appr-metrics}\includegraphics[height=3.0cm,width=5.8cm]{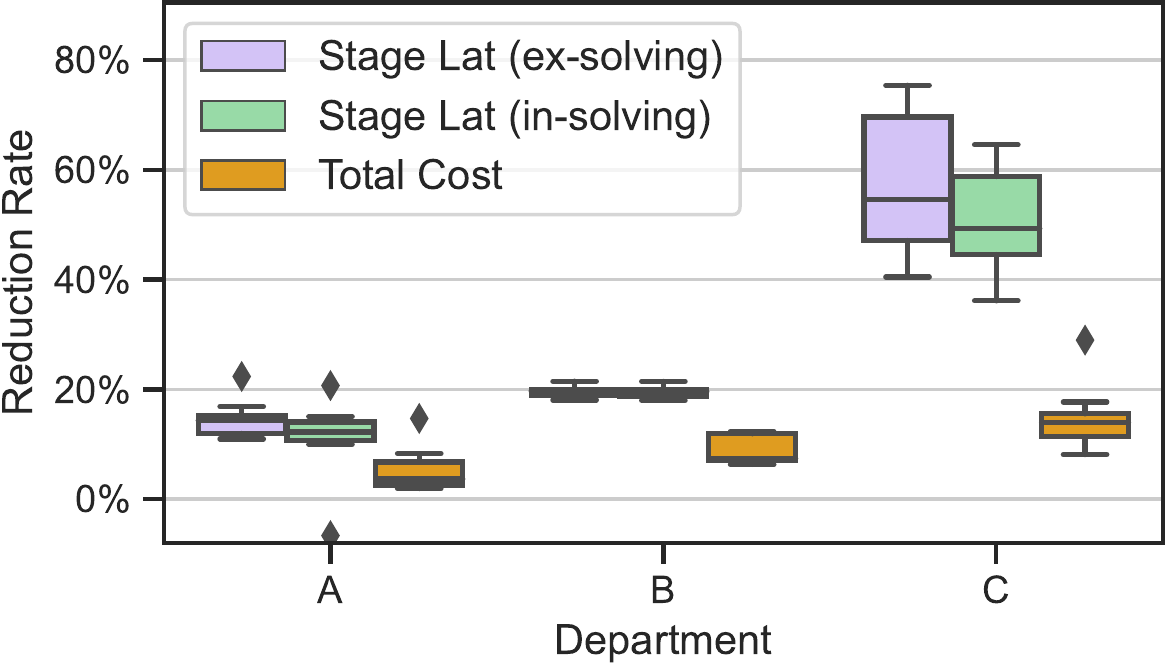}}

		&
		\subfigure[\small{reduction rates of $Lat_{stage}^{(ex)}$ on busy/idle}]
		{\label{fig:ipa-appr-busy-idle}\includegraphics[height=3.0cm,width=5.8cm]{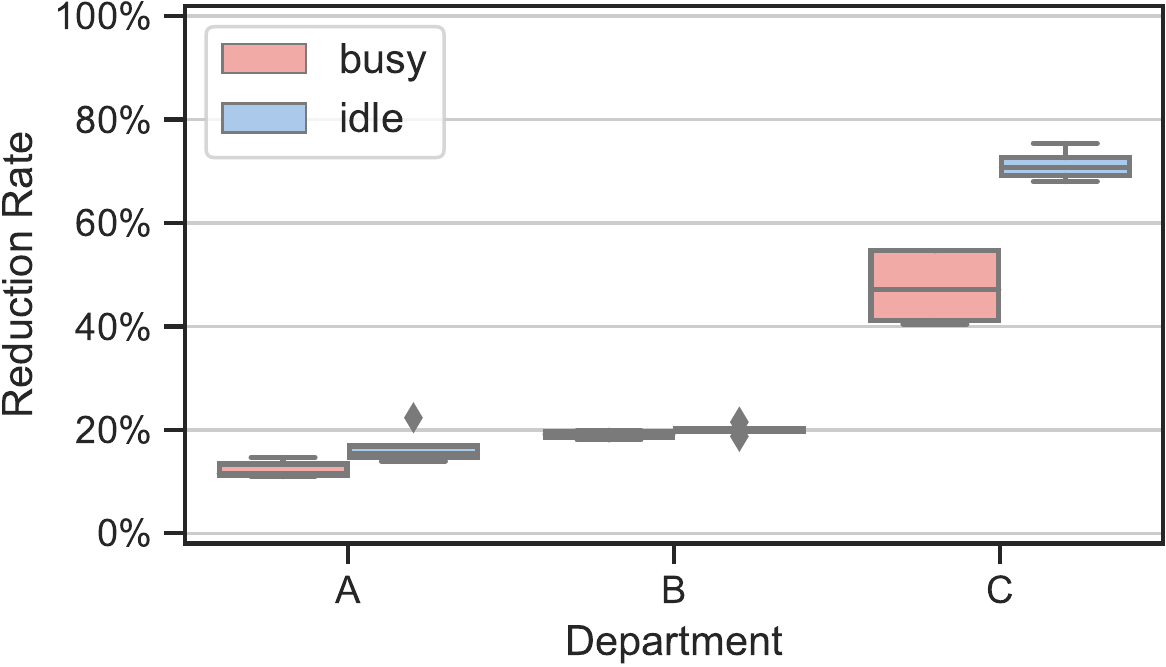}}

		&		
		\subfigure[\small{$T_{stage}$ for IPA(Cluster)}]
		{\label{fig:ipa-appr-solving-time}\includegraphics[height=3.0cm,width=5cm]{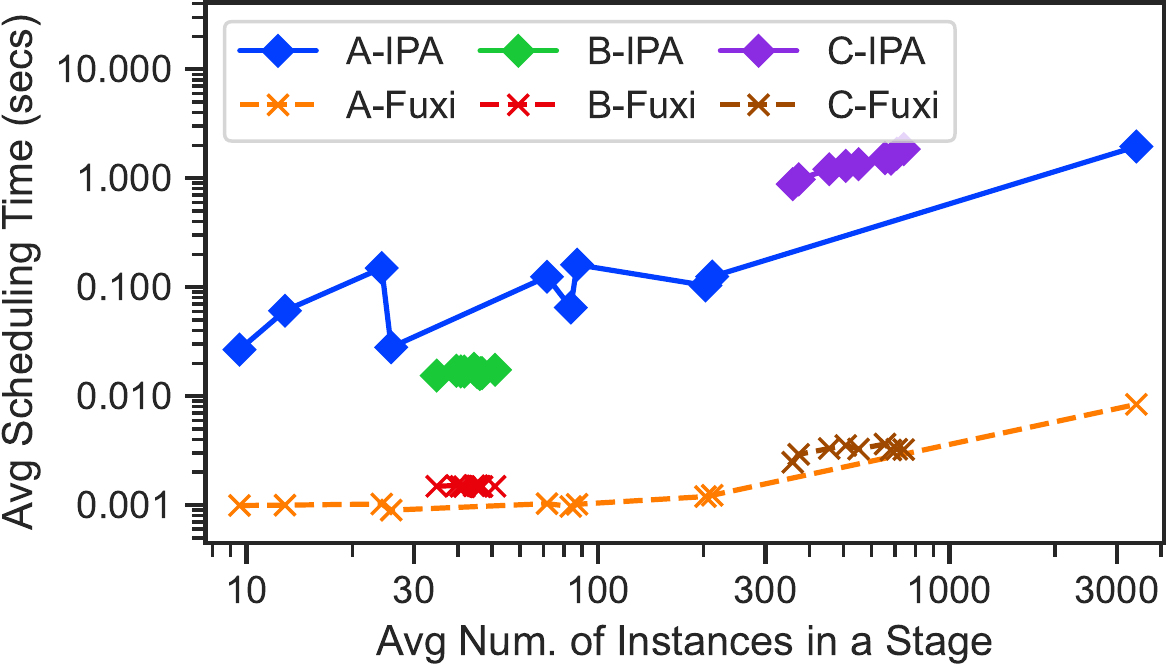}}
		
	\end{tabular}
%	\vspace{-0.1in}
	\caption{Comparative results between IPA(Org) and Fuxi over 29 workloads}
	\label{fig:ipa-appr}
%	\vspace{-0.2in}
\end{figure*}

\begin{figure*}[t]

	\centering
%	\vspace{-0.3in}
%	\hspace{-6cm}
	
	\begin{tabular}{lccc}

		\subfigure[\small{reduction rates of $Lat_{stage}^{(ex)}$, $Lat_{stage}^{(in)}$, $Cost_{stage}$}]
		{\label{fig:ipa-cluster-metrics}\includegraphics[height=3.0cm,width=5.8cm]{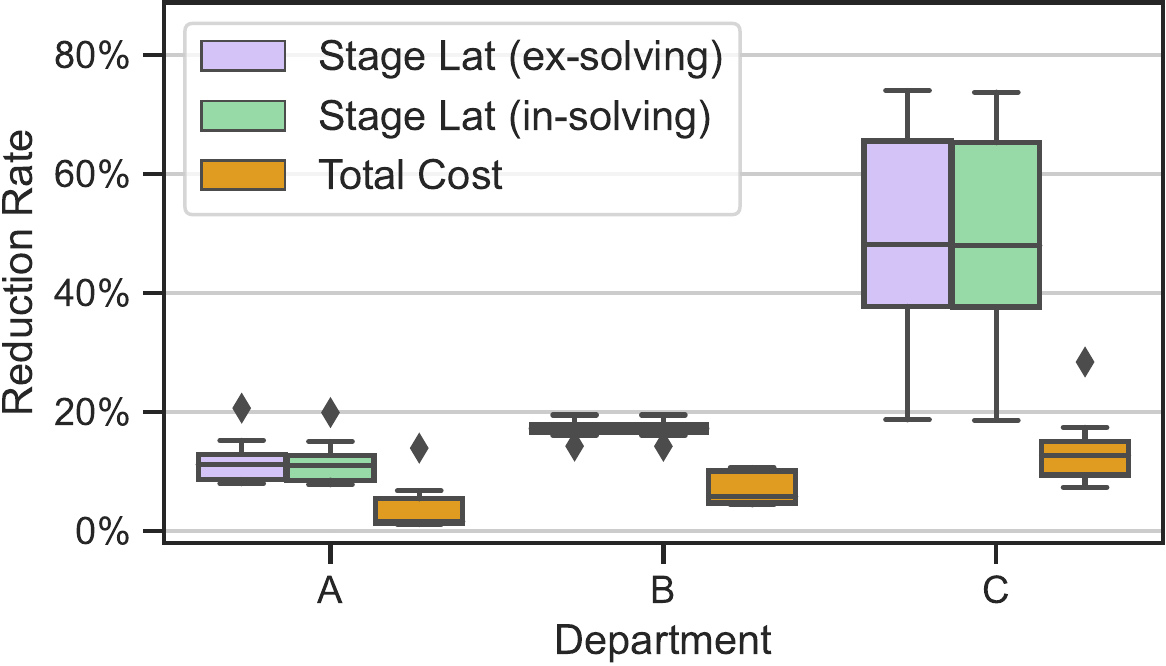}}

		&
		\subfigure[\small{reduction rates of $Lat_{stage}^{(ex)}$ on busy/idle}]
		{\label{fig:ipa-cluster-busy-idle}\includegraphics[height=3.0cm,width=5.8cm]{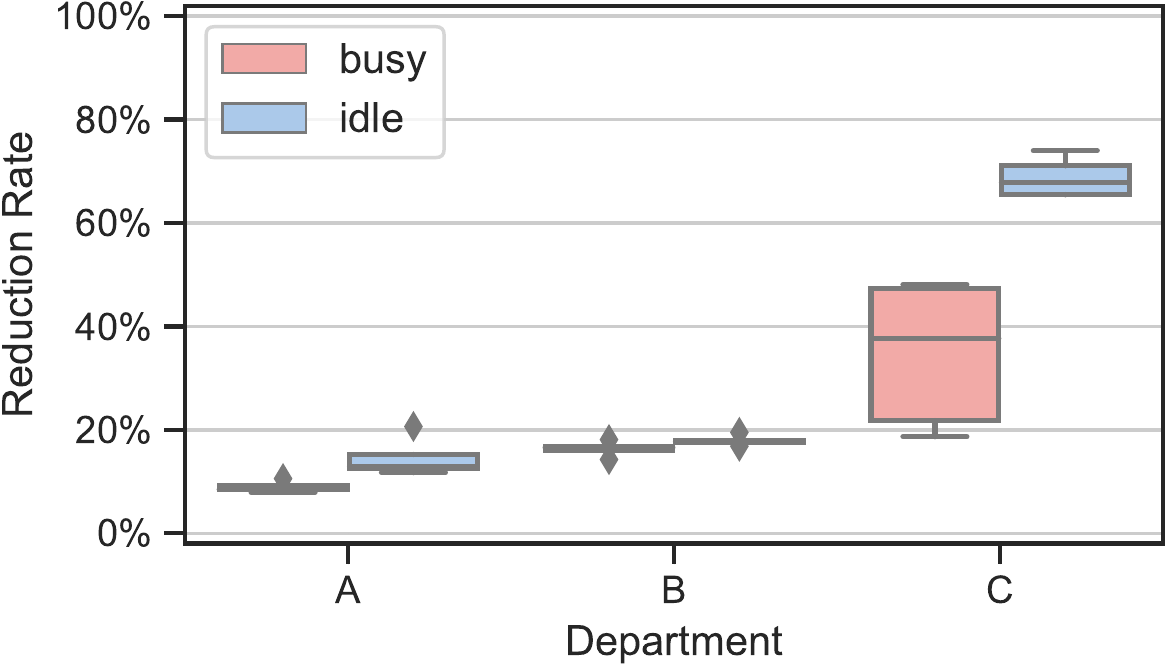}}

		&		
		\subfigure[\small{$T_{stage}$ for IPA(Cluster)}]
		{\label{fig:ipa-cluster-solving-time}\includegraphics[height=3.0cm,width=5cm]{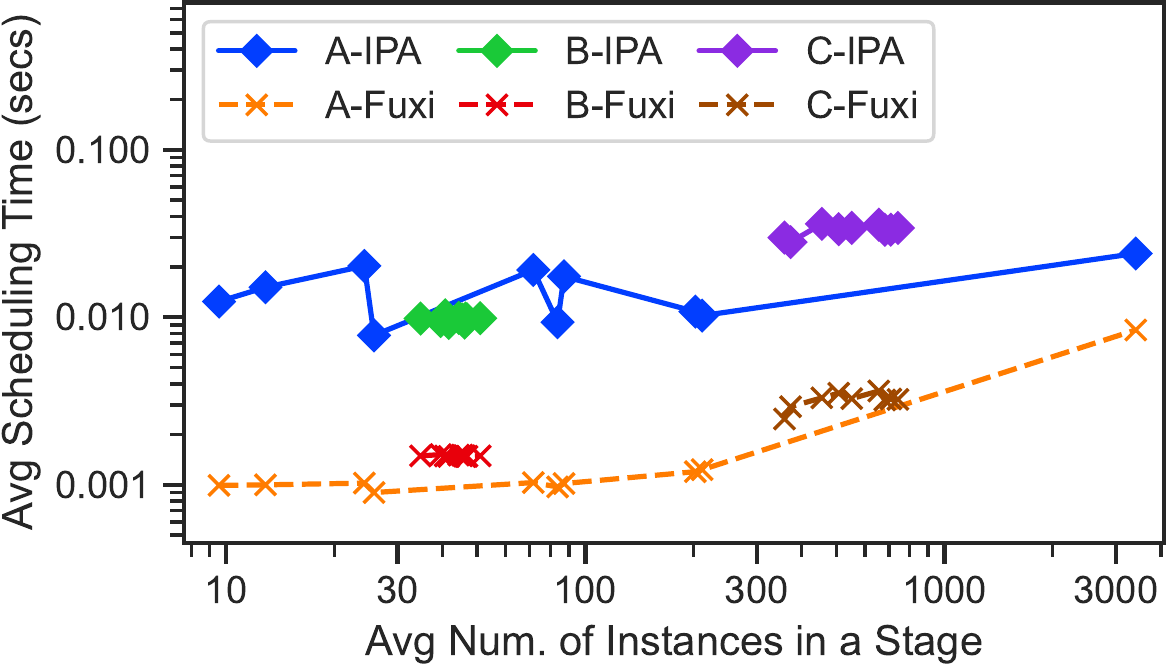}}

	\end{tabular}
	\vspace{-0.2in}
	\caption{Comparative results between IPA(Cluster) and Fuxi over 29 workloads}
	\label{fig:ipa-cluster}
\end{figure*}

We show more results on (1) stage-level latency reduction rates (both excluding and including the RO time overhead), (2) cloud cost reduction rates, (3) IPA time overhead for 
\verb|IPA(Org)| and \verb|IPA(Cluster)| in Figure~\ref{fig:ipa-appr} and Figure~\ref{fig:ipa-cluster}, respectively.

Fig.~\ref{fig:ipa-cluster-metrics} shows that IPA reduces more stage latencies and cloud cost for workload C (with more long-running instances,) than workload B (median-length stages) and workload A (short-running stages).
Fig.~\ref{fig:ipa-cluster-busy-idle} shows that  IPA helps reduce more latency and cost when stages are submitted to an idle environment because the cluster can assign more machines of high capacity to a stage.

We also show the average stage latency, cloud cost and solving time in Table~\ref{tab:expt-so-sub-workloads-raw}.

\cut{
\noindent
For each of the 29 sub-workloads, we calculate the average reduction rates of
\begin{enumerate}
	\item the latency (excluding the solving time)
	\item the latency (including the solving time)
	\item the cloud cost over all stages
\end{enumerate} 
and summarize those average reduction rates in the box plots.

Figure~\ref{fig:e2e-fuxi-ipa} groups workloads by the department and shows the reduction rate distributions of the three metrics.
Figure~\ref{fig:e2e-ipa-busy-idle} aims at the latency (excluding the solving time) and divides workloads into busy and idle based on their running environment.
Figure~\ref{fig:e2e-ipa-fuxi-solving-time} shows the average scheduling costs of IPA and Fuxi against the average number of instances in a stage over the 29 workloads.

 we observe: 
\begin{enumerate}
	\item According to Figure~\ref{fig:e2e-ipa-metrics}, IPA reduces the average stage latency (including the scheduling cost) for workloads in departments A, B, C by 14-58\% and cuts the total CPU-hour by 5-15\%, compared to the default policy used in Alibaba~\cite{fuxi-vldb14}.
	\item According to Figure~\ref{fig:e2e-ipa-metrics}, IPA reduces more stage latencies and CPU-hours for workloads from C (representing long-running stages) compared to the workloads from B (representing median-running stages) and workloads from A (representing short-running 
	\item According to Figure~\ref{fig:e2e-ipa-busy-idle}, IPA helps reduce more latency and cost when a cluster has an idler running environment. When a cluster is idle, it could assign more efficient machines to a stage, and hence the stage latency tends to be reduced more.
	\item According to Figure~\ref{fig:e2e-ipa-fuxi-solving-time}, the average scheduling cost for workloads is proportional to the average number of instances per stage given a fixed cluster. It aligns with our complexity analyses in Section~\ref{sec:appr-algo}. However, the scheduling cost of a naive IPA could be more than 1s when a stage have more than 3000 instances. 
	\item By comparing Figure~\ref{fig:e2e-ipa-ic-on-kde-silverman-fuxi-solving-time} and~\ref{fig:e2e-ipa-fuxi-solving-time}, we notice that IPA with instance clustering can reduce the time cost significantly (45-97\%) and the maximum average scheduling time of the 29 workloads is still less than 0.033s. Meanwhile, the reduction rates of the three metrics keep the same trends as before clustering.
	\item Table~\ref{tab:e2e-ipa-clustering-metrics} summarizes the end-to-end performance using IPA with and without instance clustering. The table's entries are the average values for each box in Figure~\ref{fig:e2e-ipa-metrics} and~\ref{fig:e2e-ipa-ic-on-kde-silverman-metrics}, showing the average reduction rates for the workloads in each department. With the instance clustering, IPA sacrifices the exclusive stage latency to reduce the computing complexity and scheduling cost. Although the exclusive stage latency reduction rates reduce from 15-58\% to 12-50\% after instance clustering, inclusive stage latency reduction rates only have a tiny change from 11-53\% to 12-50\% due to a reduced scheduling 
\end{enumerate}
}

%-------------------------------------------------------------------------------------
\subsection{More Results for Expt 7}\label{appendix:expt-ipa+raa}

\begin{figure*}[t]
	\centering
	\vspace{-0.1in}
	\hspace{-6cm}
	
	\begin{tabular}{lcc}

		\subfigure[\small{reduction rates of $Lat_{stage}^{(ex)}$, $Lat_{stage}^{(in)}$, $Cost_{stage}$}]
		{\label{fig:e2e-so(ic_off)-metrics}\includegraphics[height=3.0cm,width=5.8cm]{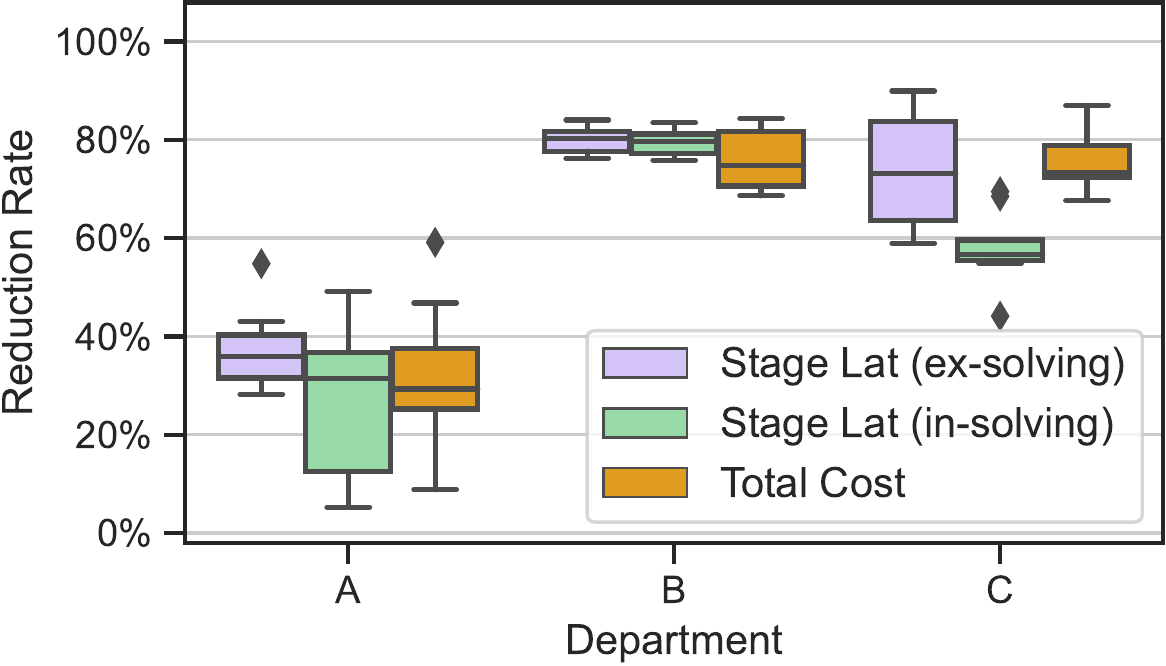}}

		&
		\subfigure[\small{reduction rates of $Lat_{stage}^{(ex)}$ on busy/idle}]
		{\label{fig:e2e-so(ic_off)-busy-idle}\includegraphics[height=3.0cm,width=5.8cm]{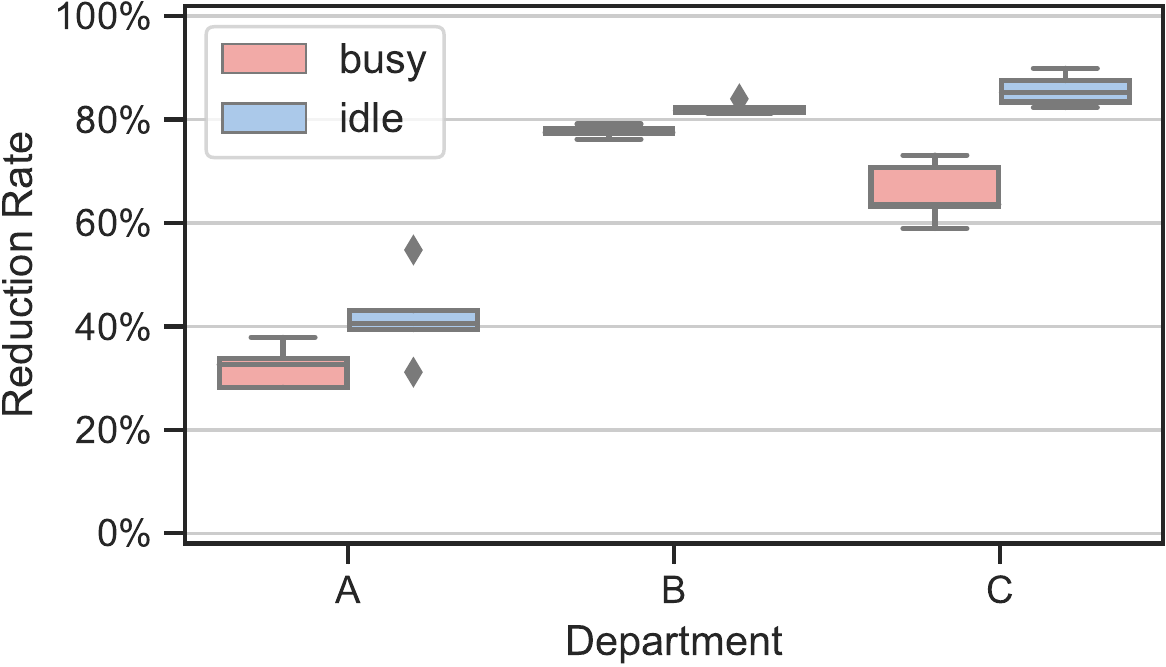}}

		&		
		\subfigure[\small{$T_{stage}$ for IPA(Cluster)}]
		{\label{fig:e2e-so(ic_off)-fuxi-solving-time}\includegraphics[height=3.0cm,width=5.cm]{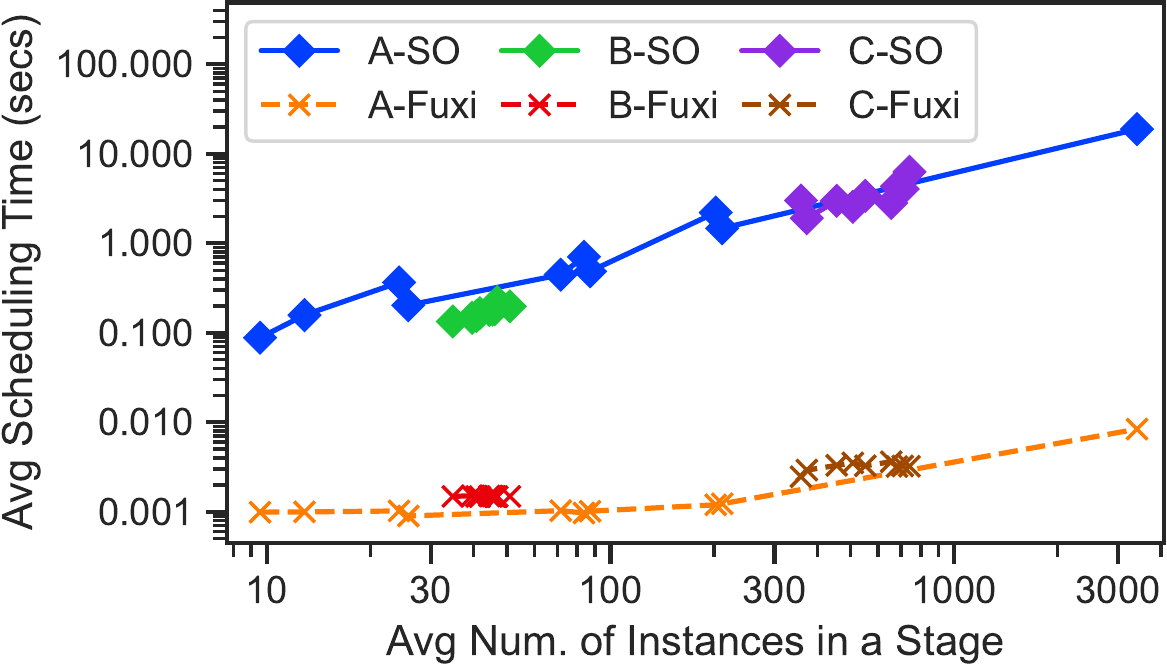}}
		
	\end{tabular}
%	\vspace{-0.1in}
	\caption{Comparative results between IPA+RAA(W/O\_C) and Fuxi over 29 workloads}
	\label{fig:ipa+raa_ic_o}
%	\vspace{-0.2in}
\end{figure*}

\begin{figure*}[t]
	\centering
	\vspace{-0.1in}
	\hspace{-6cm}
	
	\begin{tabular}{lcc}

		\subfigure[\small{reduction rates of $Lat_{stage}^{(ex)}$, $Lat_{stage}^{(in)}$, $Cost_{stage}$}]
		{\label{fig:e2e-so_DBSCAN_-metrics}\includegraphics[height=3.0cm,width=5.8cm]{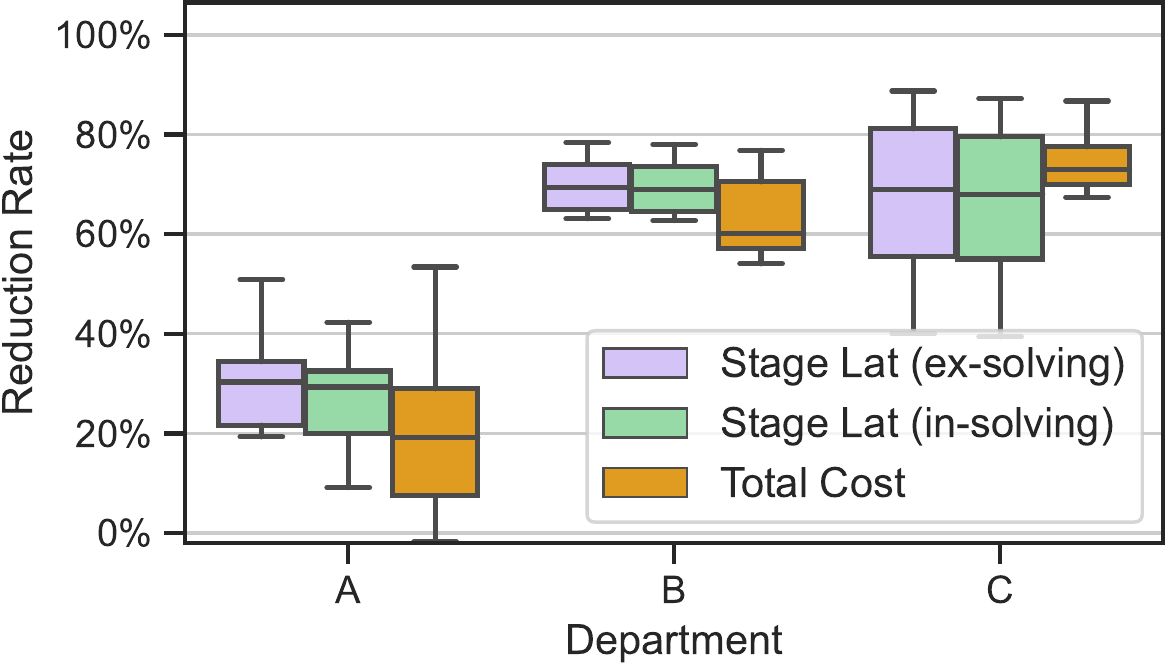}}

		&
		\subfigure[\small{reduction rates of $Lat_{stage}^{(ex)}$ on busy/idle}]
		{\label{fig:e2e-so_DBSCAN_-busy-idle}\includegraphics[height=3.0cm,width=5.8cm]{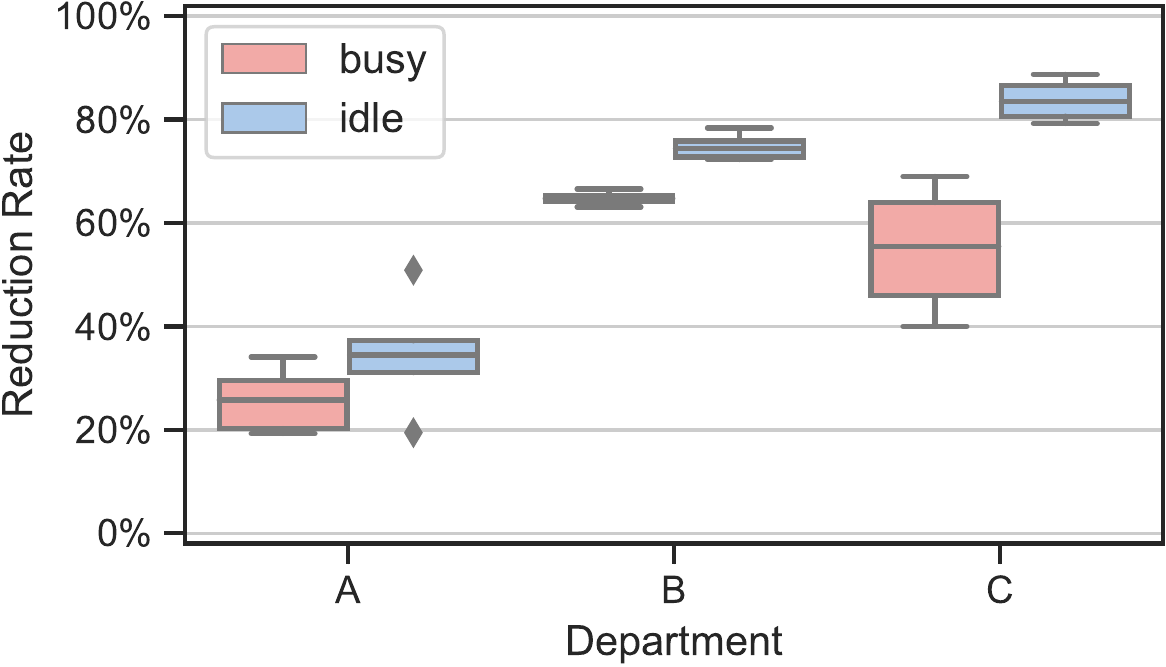}}

		&		
		\subfigure[\small{$T_{stage}$ for IPA(Cluster)}]
		{\label{fig:e2e-so_DBSCAN_-fuxi-solving-time}\includegraphics[height=3.0cm,width=5cm]{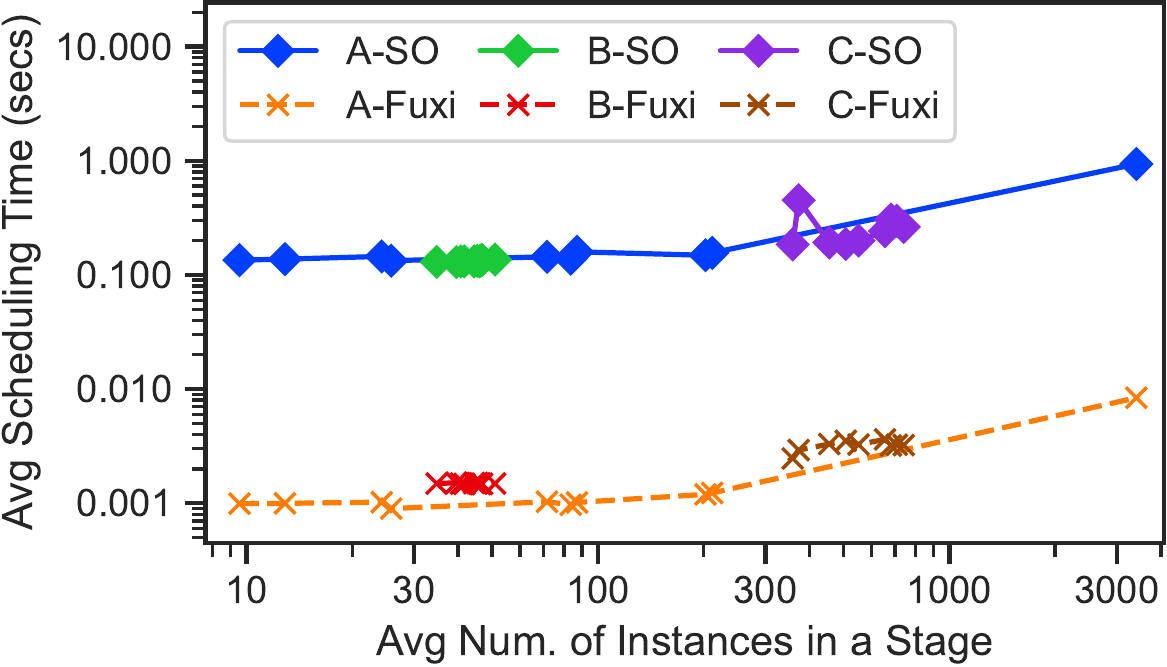}}
		
	\end{tabular}
%	\vspace{-0.1in}
	\caption{Comparative results between IPA+RAA(DBSCAN) and Fuxi over 29 workloads}
	\label{fig:ipa+raa_dbscan}
%	\vspace{-0.2in}
\end{figure*}

\begin{figure*}[t]
	\centering
	\vspace{-0.1in}
	\hspace{-6cm}
	
	\begin{tabular}{lcc}

		\subfigure[\small{reduction rates of $Lat_{stage}^{(ex)}$, $Lat_{stage}^{(in)}$, $Cost_{stage}$}]
		{\label{fig:e2e-so_General_-metrics}\includegraphics[height=3.0cm,width=5.8cm]{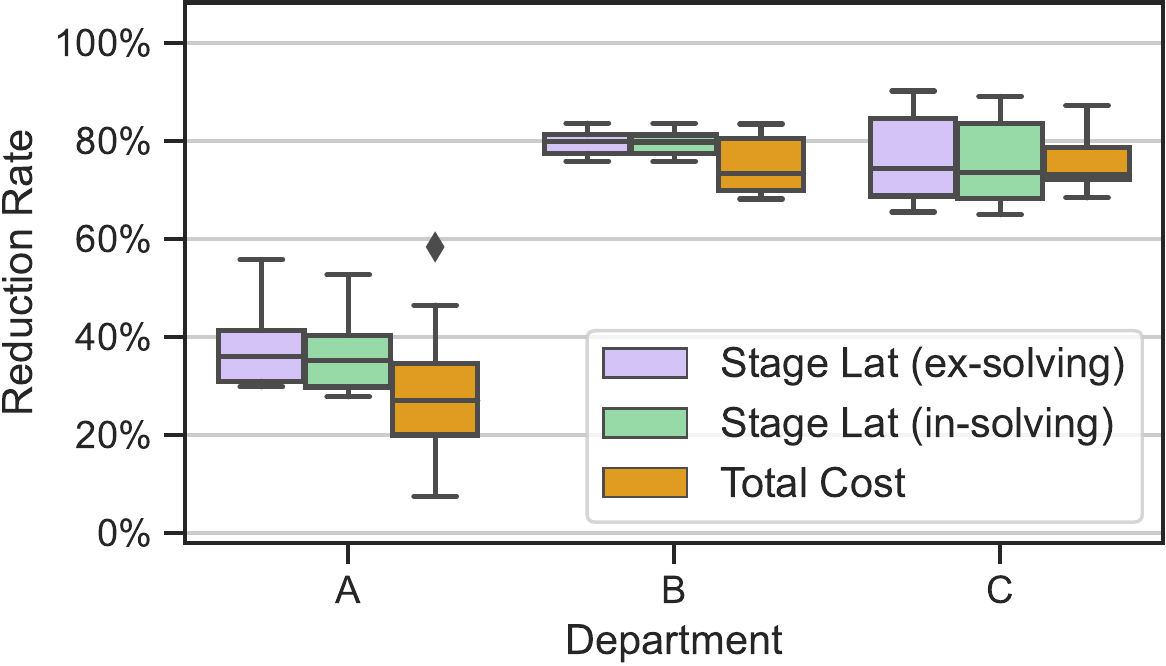}}

		&
		\subfigure[\small{reduction rates of $Lat_{stage}^{(ex)}$ on busy/idle}]
		{\label{fig:e2e-so_General_-busy-idle}\includegraphics[height=3.0cm,width=5.8cm]{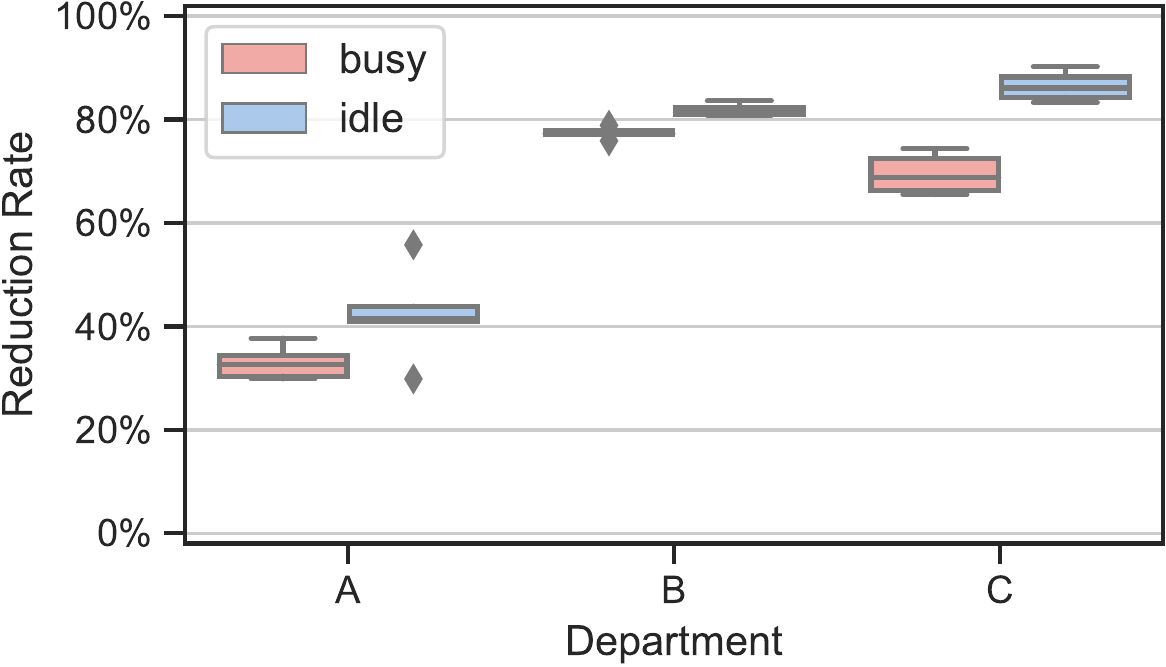}}

		&		
		\subfigure[\small{$T_{stage}$ for IPA(Cluster)}]
		{\label{fig:e2e-so_General_-fuxi-solving-time}\includegraphics[height=3.0cm,width=5cm]{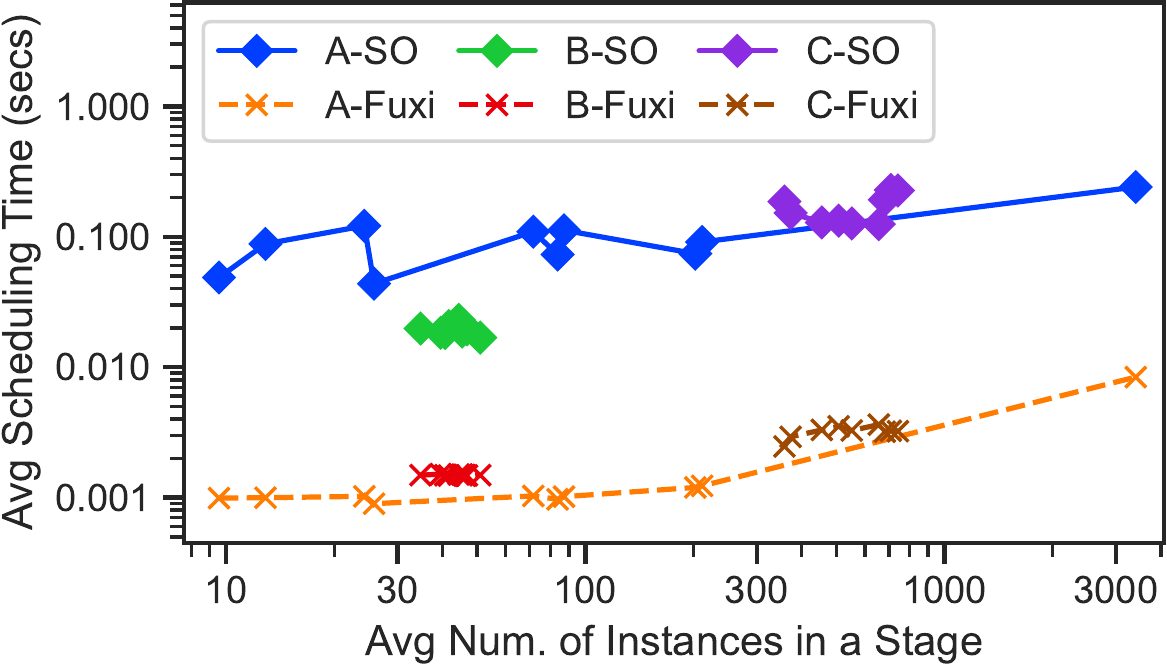}}
		
	\end{tabular}
%	\vspace{-0.1in}
	\caption{Comparative results between IPA+RAA(General) and Fuxi over 29 workloads}
	\label{fig:ipa+raa_inherit_general}
%	\vspace{-0.2in}
\end{figure*}

\begin{figure*}[t]
	\centering
	\vspace{-0.1in}
	\hspace{-6cm}
	
	\begin{tabular}{lcc}

		\subfigure[\small{reduction rates of $Lat_{stage}^{(ex)}$, $Lat_{stage}^{(in)}$, $Cost_{stage}$}]
		{\label{fig:e2e-so(inherit)-metrics}\includegraphics[height=3.0cm,width=5.8cm]{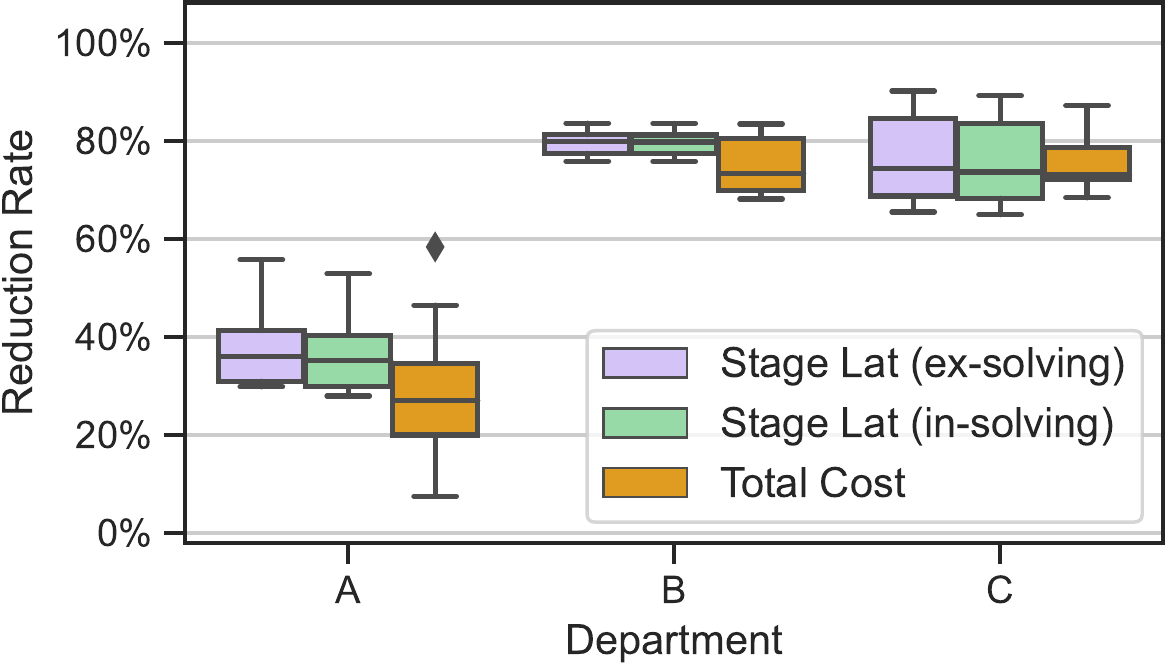}}

		&
		\subfigure[\small{reduction rates of $Lat_{stage}^{(ex)}$ on busy/idle}]
		{\label{fig:e2e-so(inherit)-busy-idle}\includegraphics[height=3.0cm,width=5.8cm]{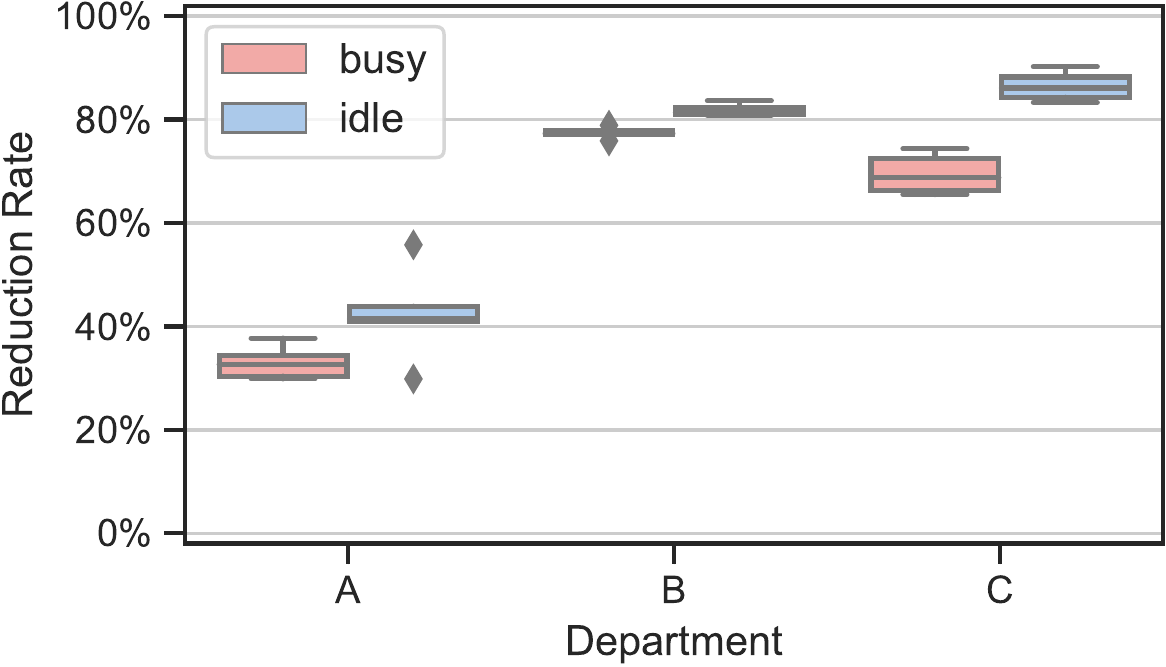}}

		&		
		\subfigure[\small{$T_{stage}$ for IPA(Cluster)}]
		{\label{fig:e2e-so(inherit)-fuxi-solving-time}\includegraphics[height=3.0cm,width=5cm]{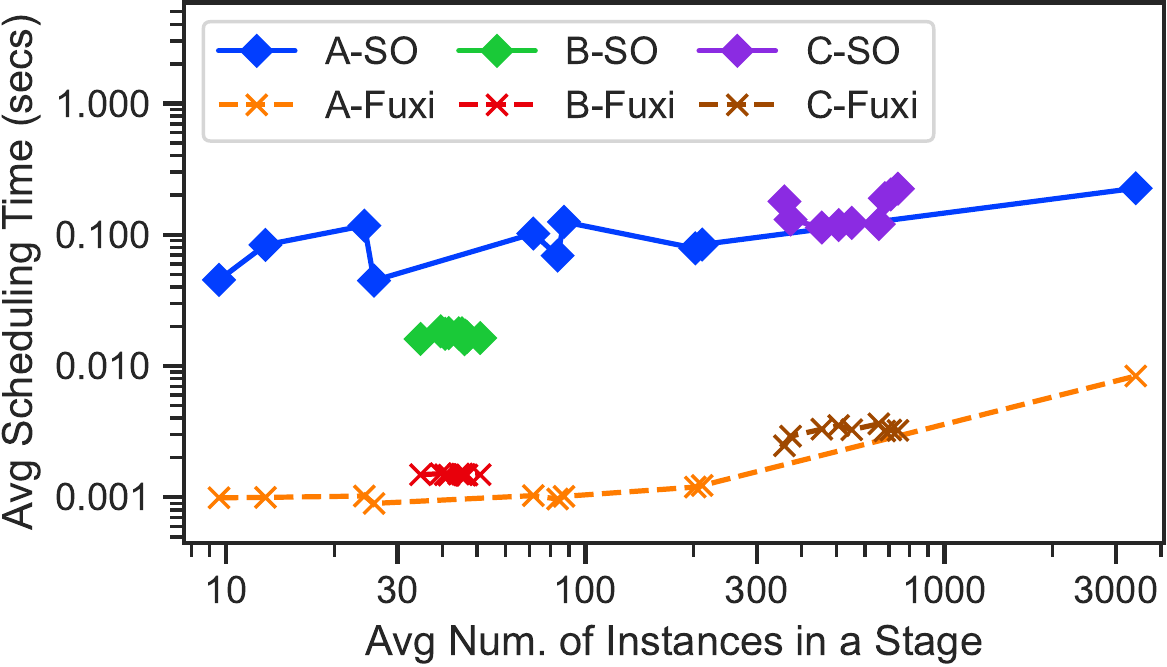}}
		
	\end{tabular}
%	\vspace{-0.1in}
	\caption{Comparative results between IPA+RAA(Path) and Fuxi over 29 workloads}
	\label{fig:ipa+raa_inherit_path}
%	\vspace{-0.2in}
\end{figure*}

We show more results on (1) stage-level latency reduction rates (both excluding and including the RO time overhead), (2) cloud cost reduction rates, (3) the solving time overhead for 
\verb|IPA+RAA(W/O_C)|, \verb|IPA+RAA(DBSCAN)|, \verb|IPA+RAA(General)|, \verb|IPA+RAA(Path)| in 
Figure~\ref{fig:ipa+raa_ic_o}, \ref{fig:ipa+raa_dbscan}, \ref{fig:ipa+raa_inherit_general}, and \ref{fig:ipa+raa_inherit_path} respectively, with \verb|IPA(Cluster)| as the default choice.

Among different workloads, \verb|IPA+RAA(Path)| reduces a lot more cloud cost than \verb|IPA(Cluster)| for workload C, which has more instances (505) per stage than B (42) and A (35), as shown in Fig.~\ref{fig:e2e-so(inherit)-metrics}.

Among the busy and idle cluster environments, \verb|IPA+RAA(Path)| shows a similar trend as \verb|IPA(Cluster)| to reduce more latency and cost when a cluster is idle, as shown in Fig.~\ref{fig:e2e-so(inherit)-busy-idle}.

We also show the average stage latency, cloud cost and solving time in Table~\ref{tab:expt-so-sub-workloads-raw}.

\subsection{More Analyses for Expt 8}

We also show the average stage latency, cloud cost and solving time in Table~\ref{tab:expt-so-sub-workloads-raw} in the last 6 rows for \texttt{EVO}, \texttt{WS(Sample)}, \texttt{PF(MOGD)}, \texttt{IPA+EVO}, \texttt{IPA+WS(Sample)}, and \texttt{IPA+PF(MOGD)}.

%-------------------------------------------------------------------------------------
\subsection{More Analyses for Expt 10}

\begin{table}[t]
\ra{1}
\small
\addtolength{\tabcolsep}{1pt}
\newrobustcmd{\B}{\bfseries}
\begin{tabular}{lcc}
\toprule
Bootstrap Model    & Inst-latency (WMAPE) & Cost (GlbErr)       \\
\midrule                 
MCI+GTN    & 8.6\%, 19.0\%, 15.1\%      & 1.9\%, 5.4\%, 5.1\% \\
TLSTM    & 15.4\%, 30.6\%, 20.9\%     & 4.2\%, 14.2\%, 6.5\%  \\
QPPNet    & 22.3\%, 35.7\%, 27.9\%     & 11.8\%, 16.4\%, 14.3\% \\
\midrule
Bootstrap Model    & Stage   Lat (in):    & Cost:       \\ %todo: change.
%\midrule % cpu-hour
%GTN+MCI	&	35\% ,50\% ,74\%	&	54\% ,44\% ,83\% \\
%TLSTM	&	17\% ,4\% ,72\%	&	49\% ,34\% ,82\% \\
%QPPNet	&	34\% ,2\% ,70\%	&	52\% ,33\% ,81\% \\
\midrule % aws-cost                 
GTN+MCI	&	34\% ,49\% ,68\%	&	48\% ,41\% ,71\% \\
TLSTM	&	17\% ,2\%, 65\%	&	43\%, 31\%, 70\% \\
QPPNet	&	34\%, 1\%, 63\%	&	46\%, 30\%, 68\% \\
\bottomrule
\end{tabular}
\caption{Predictive Errors VS Reduction Rate}
\label{tab:bootstrip-model-accr-cost2}
\end{table}

Table~\ref{tab:bootstrip-model-accr-cost2} shows the model errors of instance latency and the cloud cost for (1) MCI+GTN, (2) TLSTM, (3) QPPNet together with the reduction rate achieved when those models are used as the bootstrap models. 
%The reduction rates of the stage latency dramatically decrease (some even to the negative) when using a more inaccurate model TLSTM or QPPNet.
While the reduction rates of the latency in most stages decrease significantly when using a more inaccurate model TLSTM or QPPNet,
the reduction rates of the cloud cost only drop from 41\%-71\% to 31\%-70\% for TLSTM and 30\%-68\% for QPPNet.
The instance latency errors have a canceling effect on the cloud cost, and hence the model errors of cloud cost for TLSTM and QPPNet are still in a good range. 
Therefore, the impact of an inaccurate model on the reduction rate of the cloud cost is smaller than the stage-level latency.

%-------------------------------------------------------------------------------------
\subsection{Breakdown Analyses on the End-to-End Performance}

\begin{table*}[t]
\ra{1}
\small
\addtolength{\tabcolsep}{2pt}
\newrobustcmd{\B}{\bfseries}
\begin{tabular}{lccc}
\toprule
Stages    & Short (<10s) & Median (10-100s) & Long (>100s)  \\
\midrule                 
 \% of stages that dominates the default RP in latency and cost     & 68\%  & 99\%  & 97\%  \\
Avg latency reduction (s)                                          & 0.9    & 11.7   & 201.8  \\
Avg latency reduction rate = (Avg latency reduction / Avg latency) & 46\% & 54\% & 65\% \\
Avg cloud cost reduction                                           & 0.002   & 0.019   & 1.697  \\
Avg cost reduction rate = (Avg cost reduction / Avg cost)          & 77\% & 62\% & 75\% \\
\bottomrule
\end{tabular}
\caption{Breakdown analyses on the end-to-end effects}
\label{tab:e2e-breakdown-cost2}
\end{table*}

\begin{figure*}[t]
	\centering
	\begin{tabular}{lcc}

		\subfigure[\small{Instances latency in Fuxi, IPA and IPA+RAA}]
		{\label{fig:e2e-bd1-eg1-lat}\includegraphics[height=6.5cm,width=5.5cm]{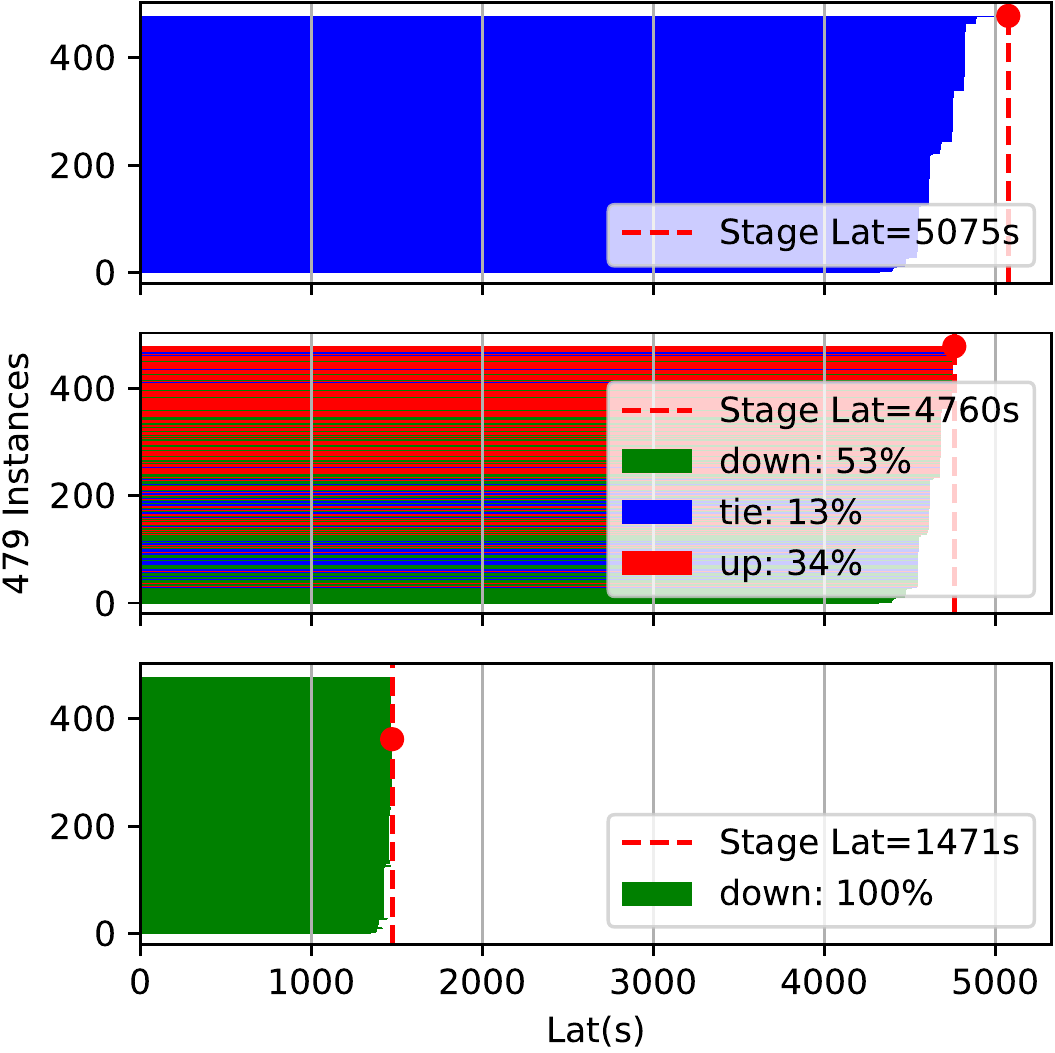}}

		&
		\subfigure[\small{Instances cost in Fuxi, IPA and IPA+RAA}]
		{\label{fig:e2e-bd1-eg1-cost}\includegraphics[height=6.5cm,width=5.5cm]{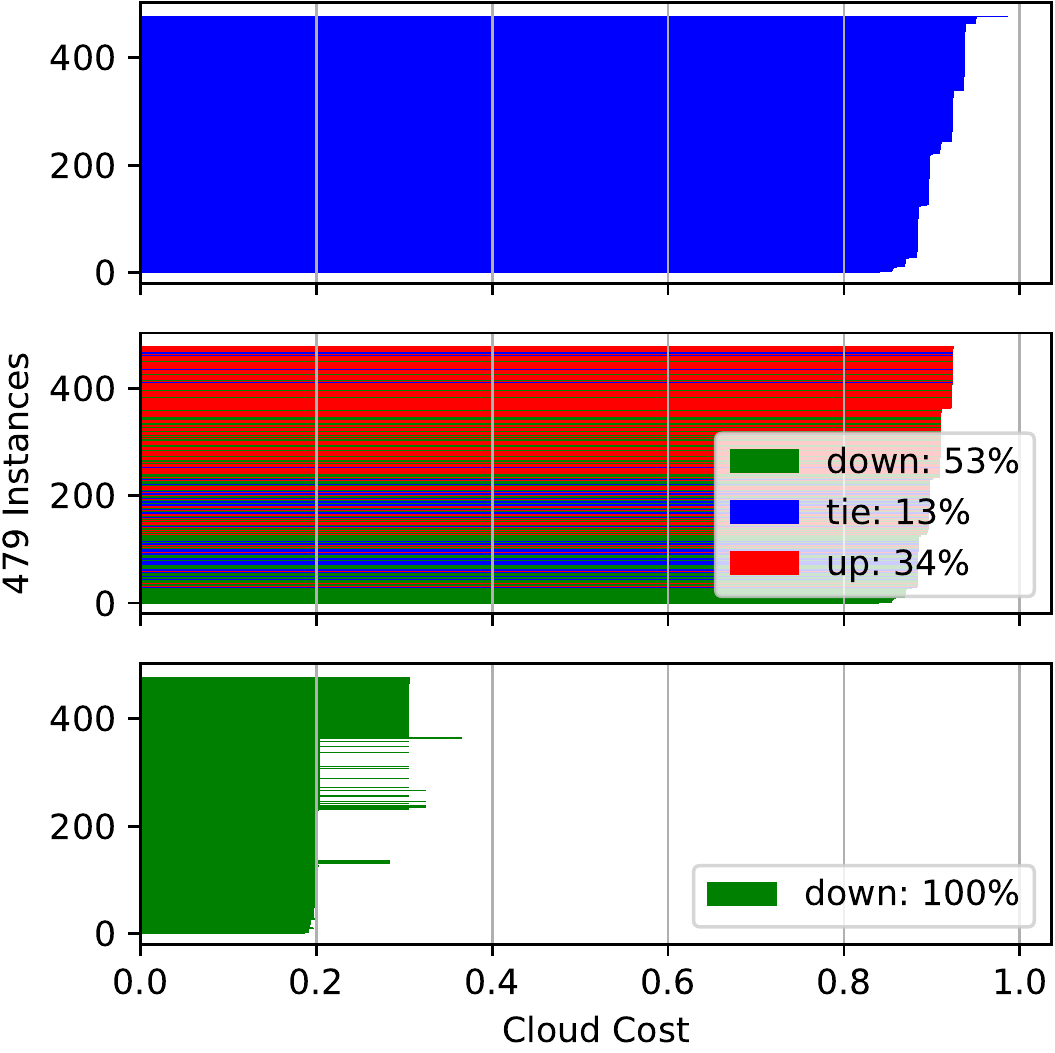}}

		&				
		\subfigure[\small{Instance cardinality, and resource plans}]
		{\label{fig:e2e-bd1-eg1-rp}\includegraphics[height=6.5cm,width=5.5cm]{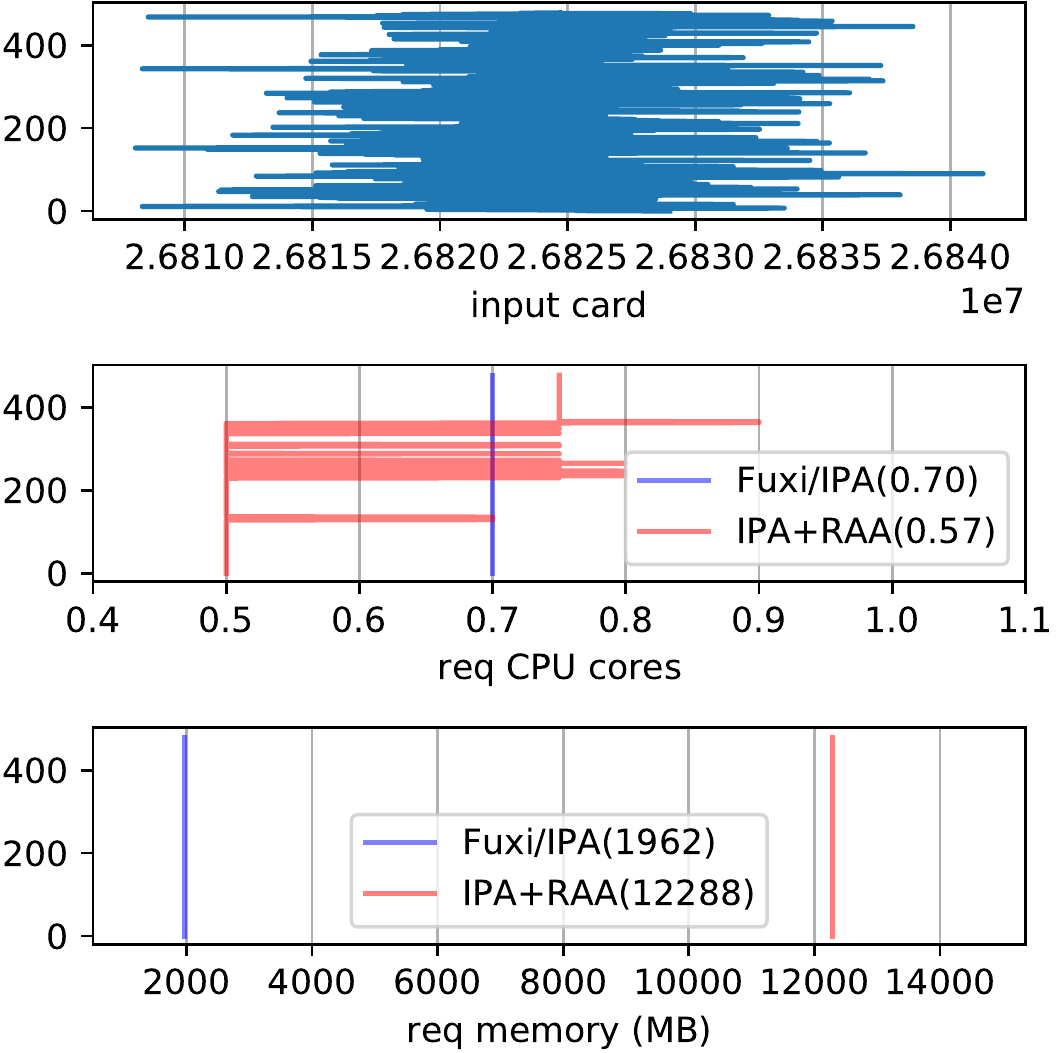}}
		
	\end{tabular}
	\caption{An example of stage breakdown analyses on the end-to-end performance}
	\label{fig:e2e-breakdown-eg1}
\end{figure*}

\begin{figure*}[t]
	\centering
	\begin{tabular}{lcc}

		\subfigure[\small{Instances latency in Fuxi, IPA and IPA+RAA}]
		{\label{fig:e2e-bd1-eg2-lat}\includegraphics[height=6.5cm,width=5.5cm]{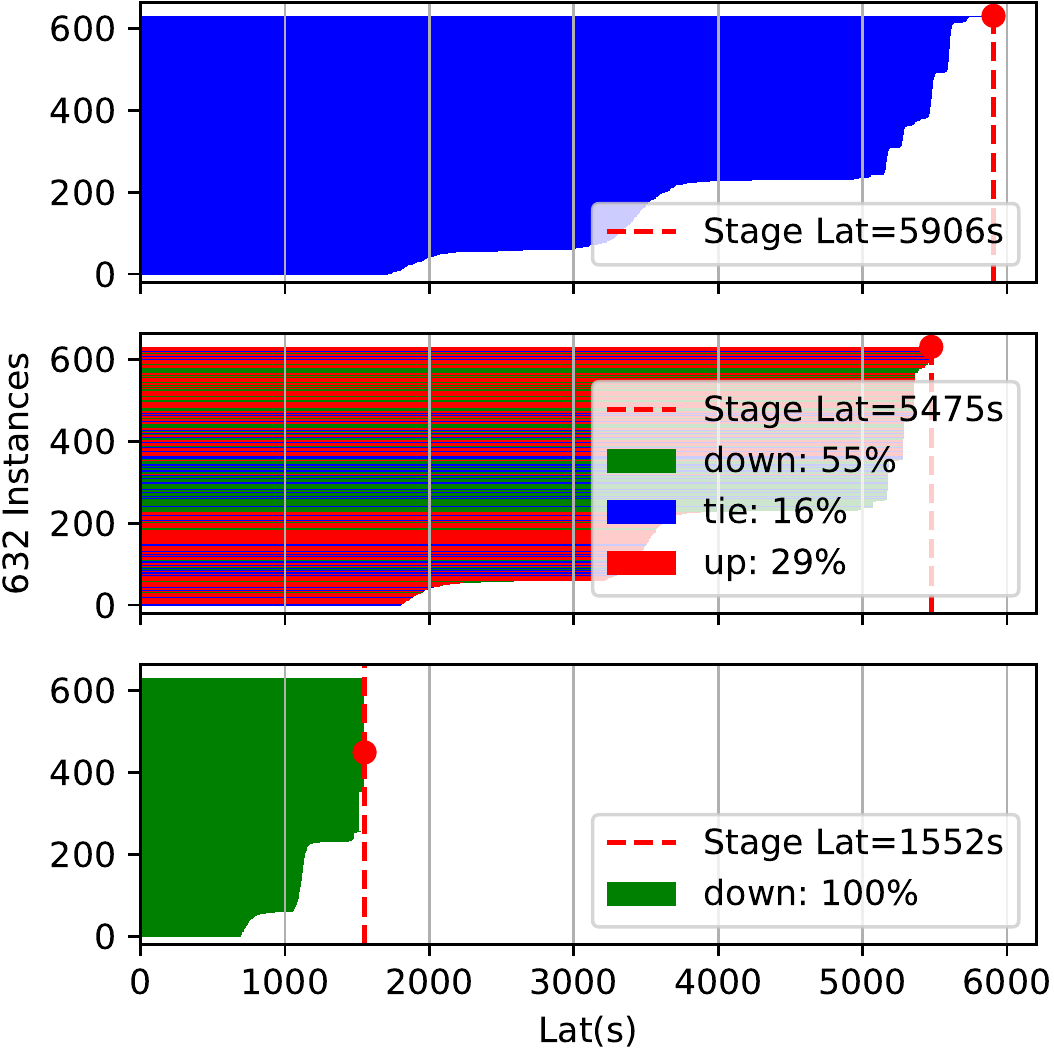}}

		&
		\subfigure[\small{Instances cost in Fuxi, IPA and IPA+RAA}]
		{\label{fig:e2e-bd1-eg2-cost}\includegraphics[height=6.5cm,width=5.5cm]{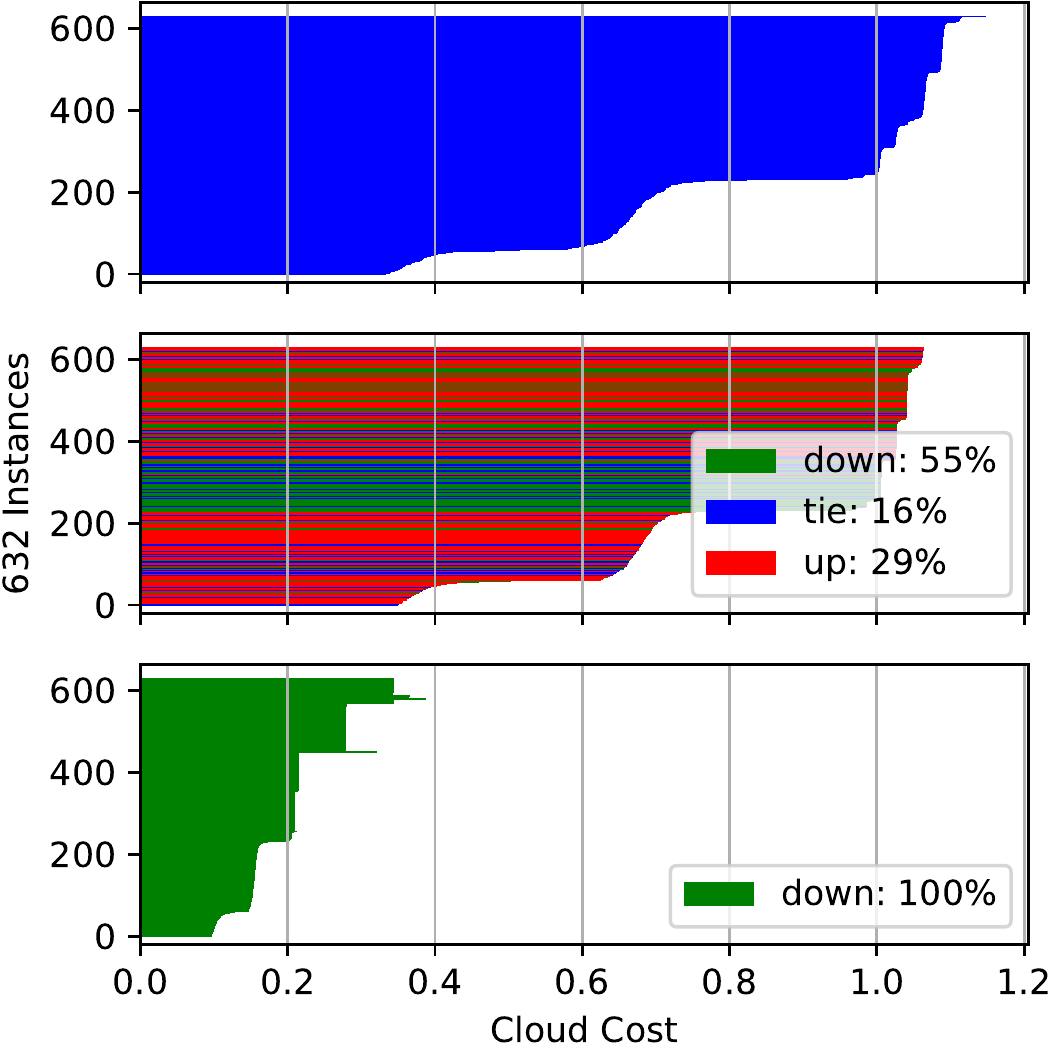}}

		&				
		\subfigure[\small{Instance cardinality, and resource plans}]
		{\label{fig:e2e-bd1-eg2-rp}\includegraphics[height=6.5cm,width=5.5cm]{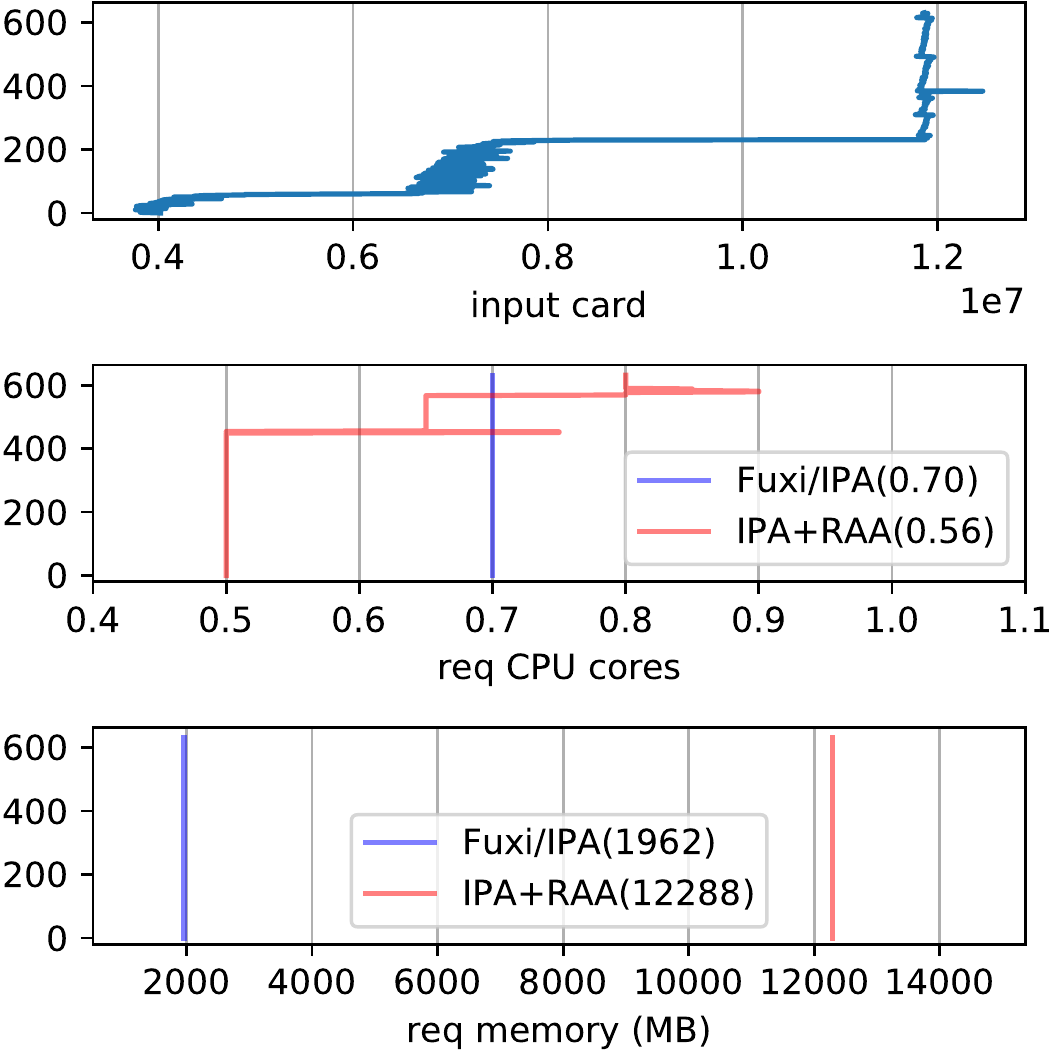}}
		
	\end{tabular}
	\caption{An example of uneven instance latency distribution}
	\label{fig:e2e-breakdown-eg2}
\end{figure*}

\begin{figure*}[t]
	\centering
	\begin{tabular}{lcc}

		\subfigure[\small{Instances latency in Fuxi, IPA and IPA+RAA}]
		{\label{fig:e2e-bd1-eg3-lat}\includegraphics[height=6.5cm,width=5.5cm]{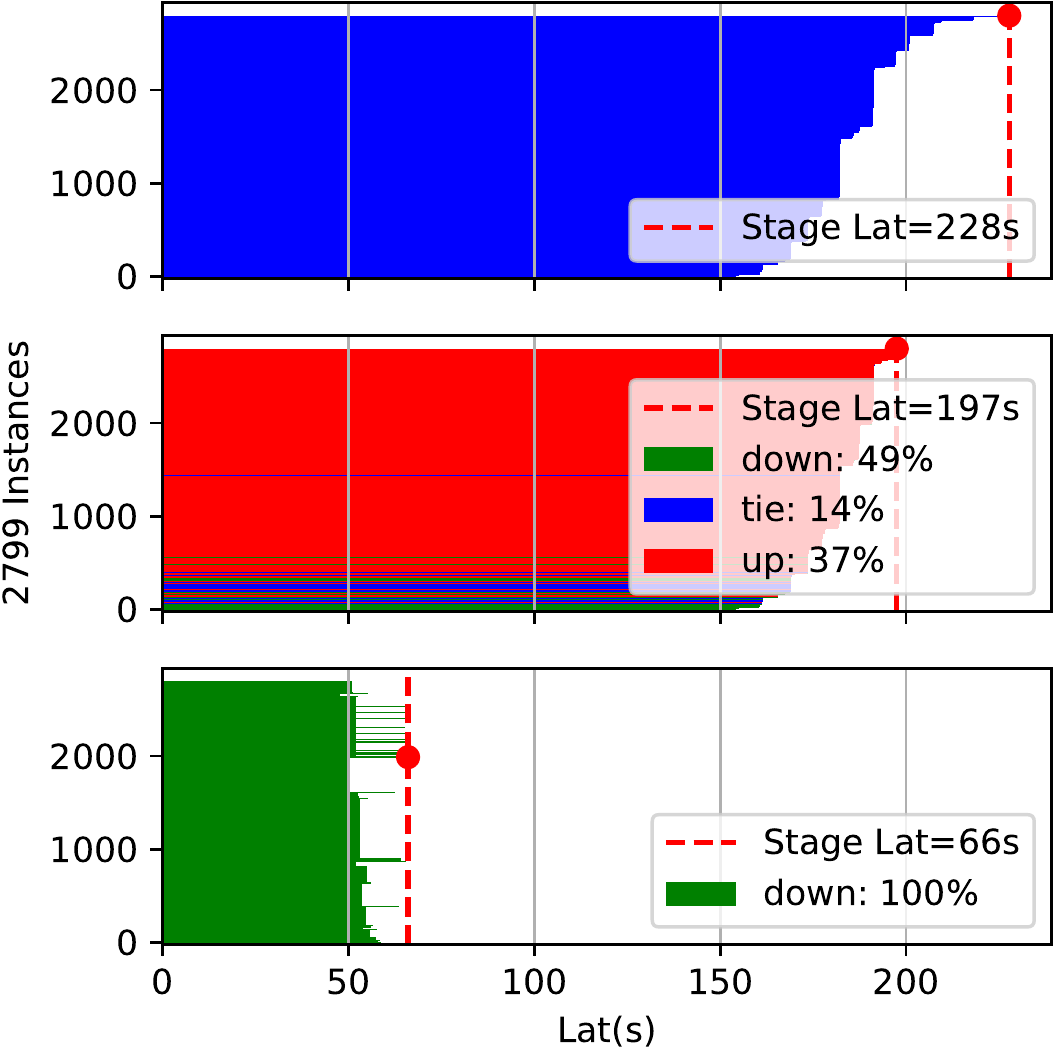}}

		&
		\subfigure[\small{Instances cost in Fuxi, IPA and IPA+RAA}]
		{\label{fig:e2e-bd1-eg3-cost}\includegraphics[height=6.5cm,width=5.5cm]{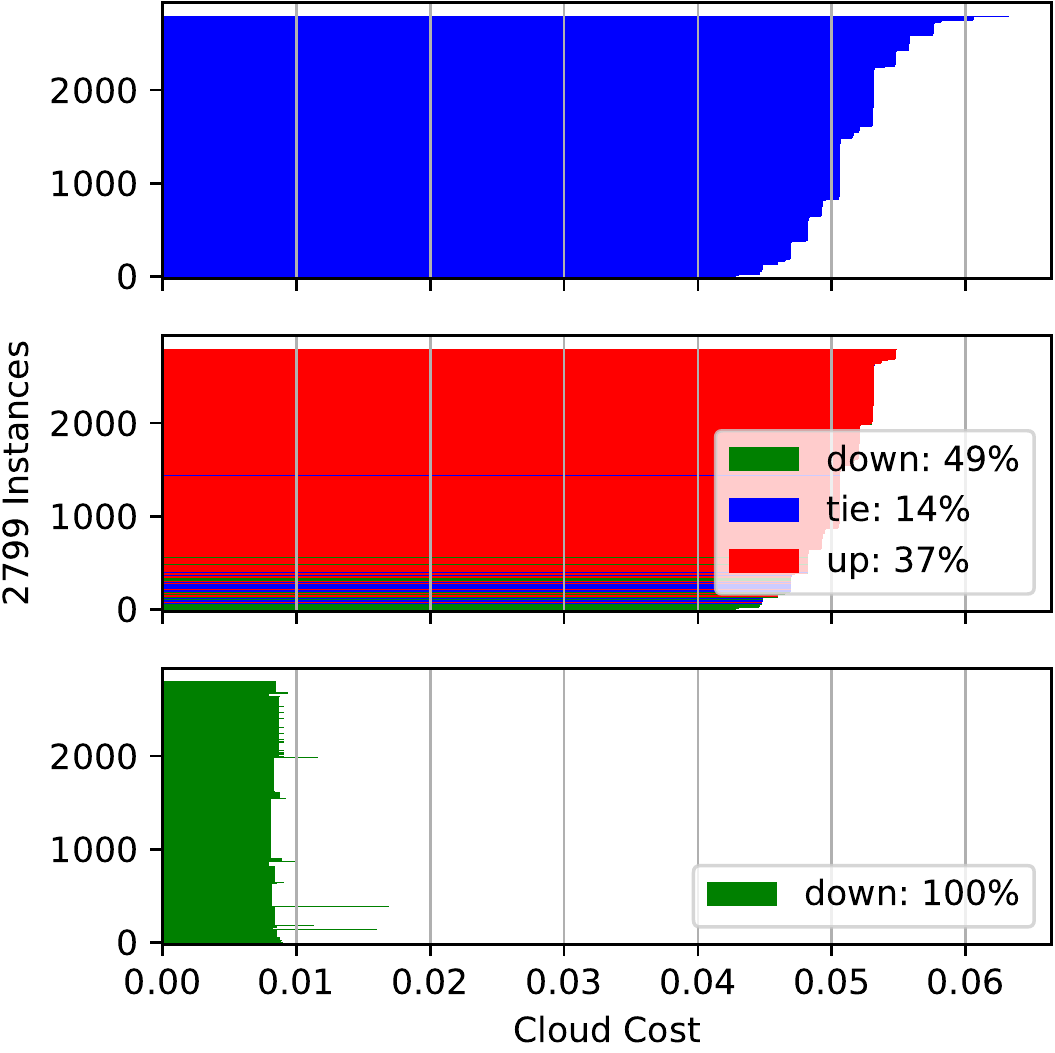}}

		&				
		\subfigure[\small{Instance cardinality, and resource plans}]
		{\label{fig:e2e-bd1-eg3-rp}\includegraphics[height=6.5cm,width=5.5cm]{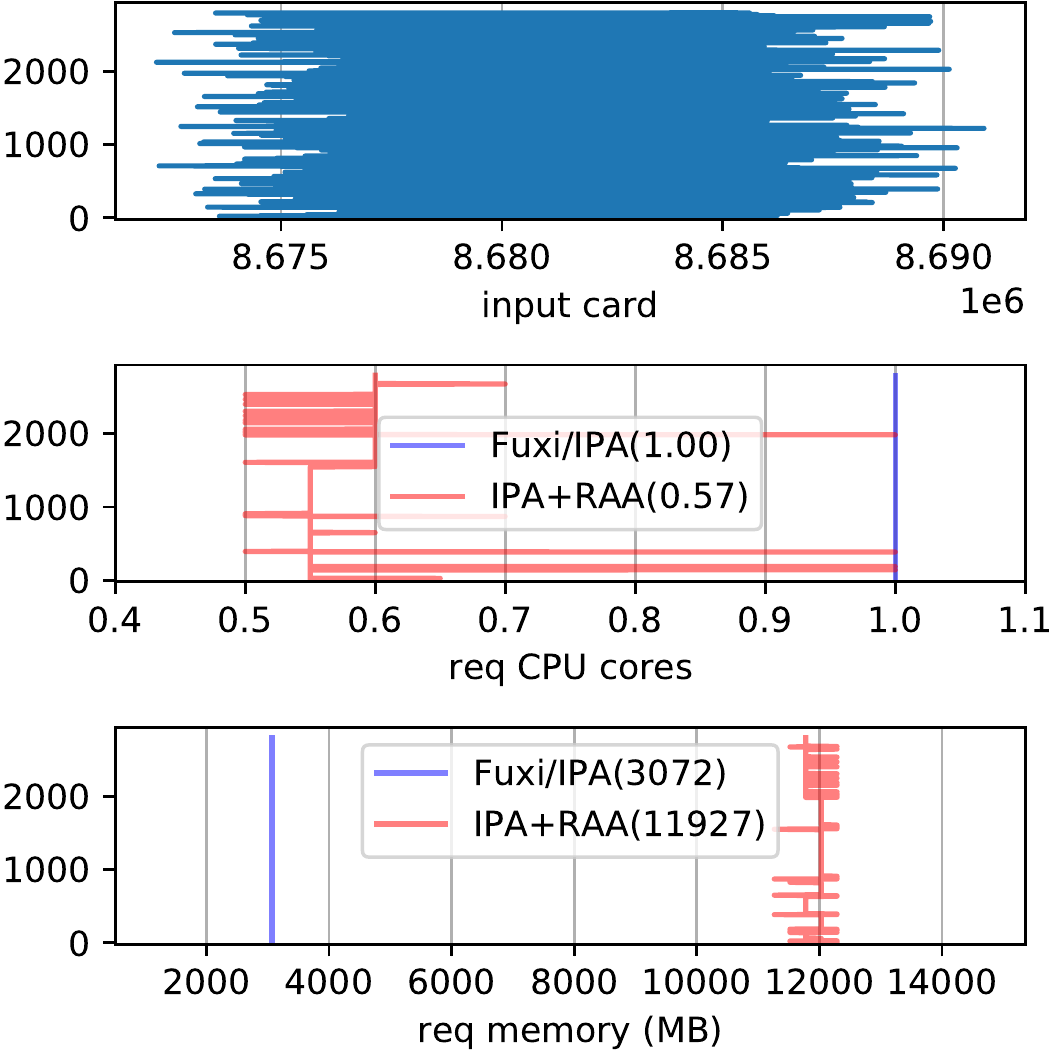}}
		
	\end{tabular}
	\caption{An example of instance latency spikes}
	\label{fig:e2e-breakdown-eg3}
\end{figure*}

To investigate the effects of IPA and RAA on the stages, we group all 1.9M stages into short-, median- and long-running categories based on the default stage latencies and show the statistics in Table~\ref{tab:e2e-breakdown-cost2} when assuming the model is accurate. 
Our IPA+RAA dominates the Fuxi policy in latency and cost on 93\%-97\% of the stages in each category, with a 34-66\% average latency reduction rate and 51-79\% average cost reduction rate.
Specifically, our approach reduces more latency and cost for long-running stages compared to the median- and short-running stages. 

Figure~\ref{fig:e2e-breakdown-eg1} shows the performances of the 479 instances (with similar input cardinality) in a long-running stage, regarding their latency, cost, cardinality, and resource plans.
Figure~\ref{fig:e2e-bd1-eg1-lat} and \ref{fig:e2e-bd1-eg1-cost} show the latency and cost of each instance by using Fuxi (row 1), IPA only (row 2), and IPA+RAA (row3). 
IPA reduces the stage latency by reducing the latency for 52\% instances and increasing the latency for the 34\% instances. 
IPA+RAA reduces more latency and cost after tuning the resource plan. In this example, IPA+RAA adjusts CPU cores and assigns x6 memory size for each instance and dominates the Fuxi policy in both latency and cost.

%-------------------------------------------------------------------------------------
\subsection{Diagnoses of Unexpected Behaviors}

\begin{figure*}[t]
	\centering
    \includegraphics[width=.4\textwidth]{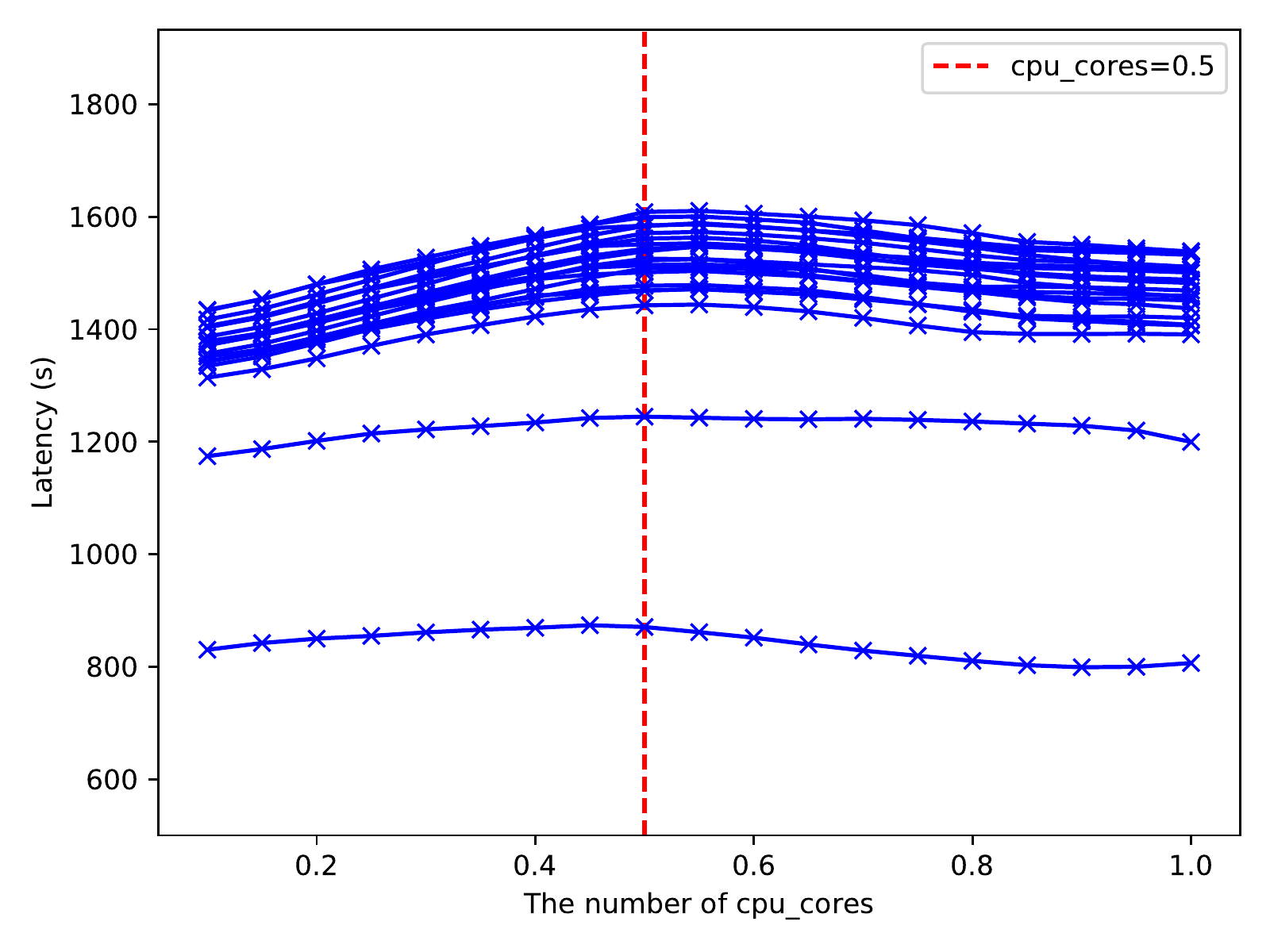}
    \caption{Latencies distribution over different number of cpu cores}
    \label{fig:uneven_diag_diff_cpu_cores}
\end{figure*}

\begin{figure*}[t]
	\centering
	\begin{tabular}{lcc}

		\subfigure[\small{}]
		{\label{fig:e2e-eg3-lat-all-clusters}\includegraphics[height=5.0cm,width=5.5cm]{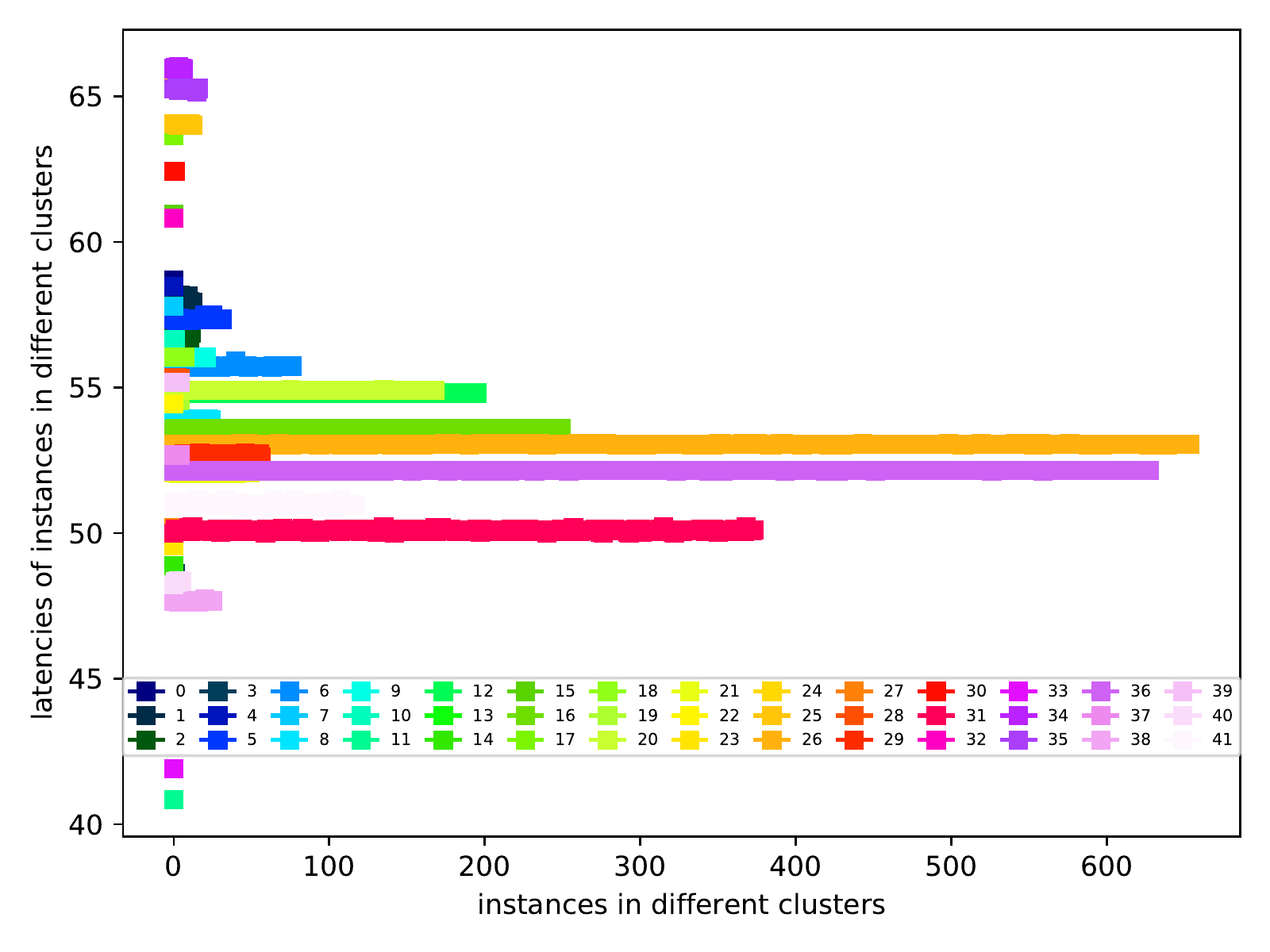}}

		&
		\subfigure[\small{}]
		{\label{fig:e2e-eg4-lat-all-clusters}\includegraphics[height=5.6cm,width=6.0cm]{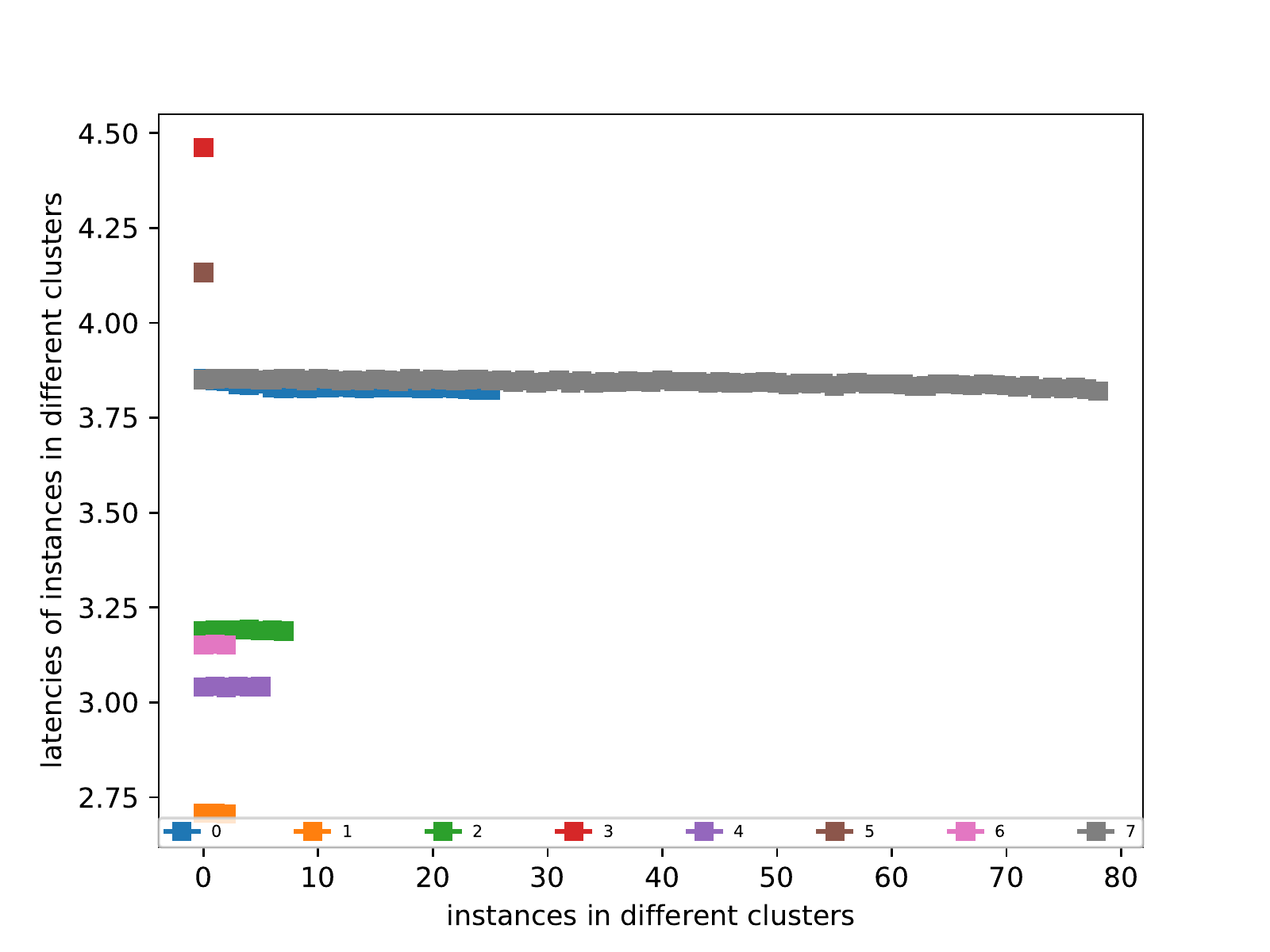}}

		&				
		\subfigure[\small{}]
		{\label{fig:e2e-eg5-lat-all-clusters}\includegraphics[height=5.0cm,width=5.5cm]{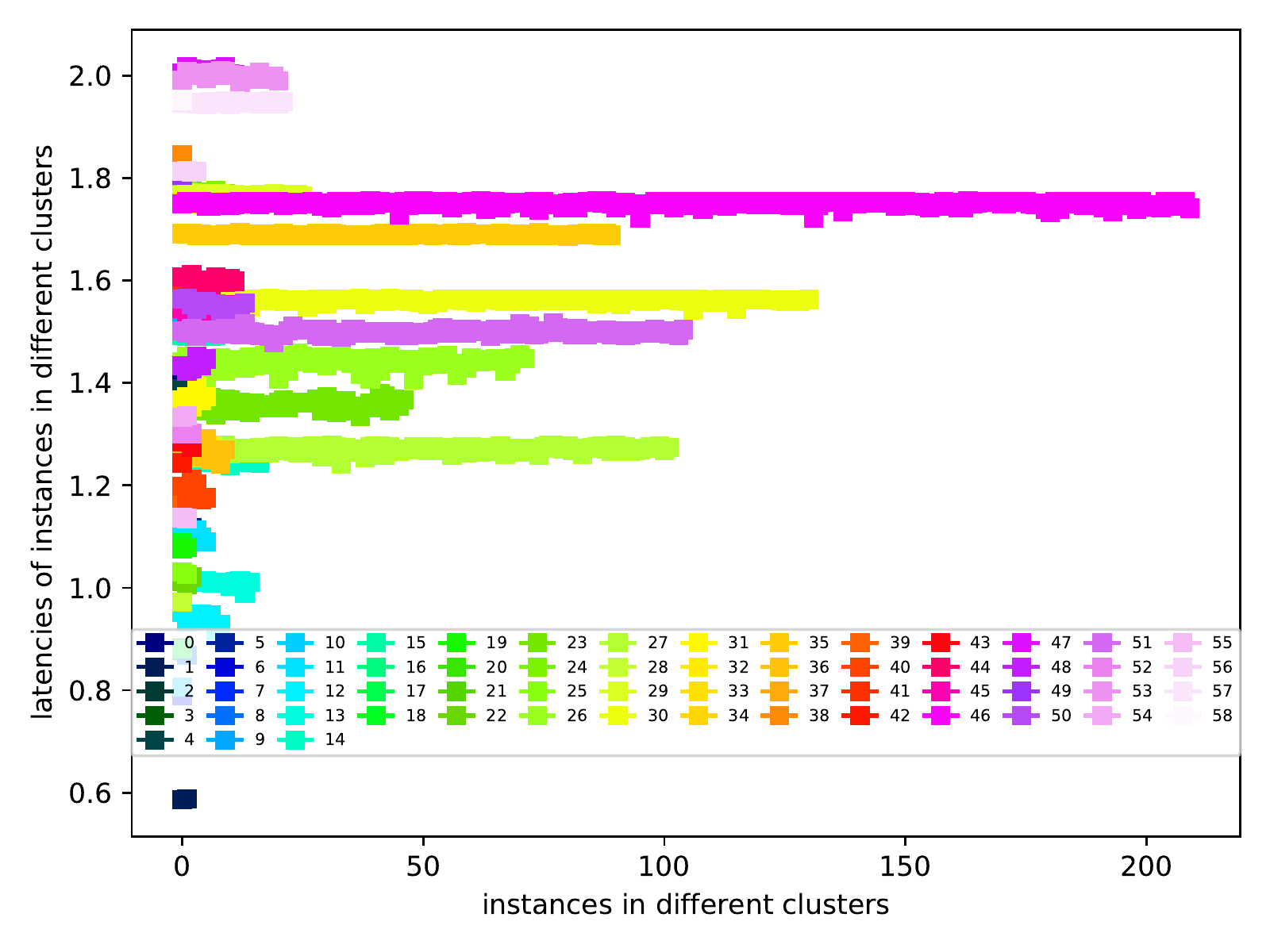}}
		
	\end{tabular}
	\caption{Examples of instance latencies in different instance clusters}
	\label{fig:e2e-breakdown-eg345}
\end{figure*}

Ideally, the IPA+RAA approach should increase the resources for long-running instances and decrease the resources for short-running instances in a stage to make the instances more even and, in the meantime, hit an instance-level \po point. 
While the IPA+RAA approach manages to reduce the stage latency and cost significantly, we observed the uneven behavior (Figure~\ref{fig:e2e-bd1-eg2-lat}) and spikes (Figure~\ref{fig:e2e-bd1-eg3-lat}) in the instance-level latency distribution after the optimization.

Now we enumerate the potential factors that cause the uneven behavior and spikes. 

First, the latency predictive model could have nonintuitive down-linear behaviors. E.g., 
%\todo{Qi: to provide a line plot (point markers) including lines of 20 representative instances in Figure~\ref{fig:e2e-breakdown-eg2}, where the x-axis is the cores, the y-axis is the latency, given the maximum memory}.
%\qi{Figure~\ref{fig:uneven_diag_diff_cpu_cores} shows latencies of representative instances in different instance clusters under different number of cpu cores (given the maximum memory). 
%}
Figure~\ref{fig:uneven_diag_diff_cpu_cores} shows the predicted latency over the CPU cores given a fixed memory size over several representative instances in a stage. Notice when the requested CPU is less than 0.5, the predicted latency decreases by using fewer cores, which is problematic.
The improper modeling results underestimate the latency of the short-running instances when the cores are small. So we would miss some \po points and get the unexpected behaviors.
It turns out that although our data set is quite large, it still does not have comprehensive coverage of the parameter space on the resource plan. E.g. among the 34M instances in workload A, we only observed 38 resource plans.

Second, we only consider the resource plan in a narrow range due to the limited coverage of resource plan space in the training data. 
The searching space of the resource plan is still in a narrow range even though being relaxed by assuming the choices of requested cores and memory are independent in their min-max ranges. 
However, when the resource plan is outside the searching space, the model is not guaranteed to function properly, preventing us from finding better solutions. Specifically, we cannot lower the cores anymore in the resource plan to enlarge the latencies of the short-running instances (and hence could reduce the cost) to make the distribution even. For example, the first 200 instances in Figure~\ref{fig:e2e-bd1-eg2-rp} have already reached the lowest cores in the searching space and cannot do things anymore.

Lastly, one may think our clustering could bring prediction errors when assuming the latencies of the instances in a cluster are the same, and hence contributes to the spikes.
However, according to our observations, the clustering works functionally well. 
E.g., Figure~\ref{fig:e2e-breakdown-eg345} shows the instance latency distributions of three stages, where instances from the same cluster are given the same color.
In each stage, the latencies of the instances from the same cluster are all quite close, verifying the good clustering performance.
%E.g., Figures \todo{add instance latency distribution over different clusters in a stage; use 3 stage examples.}
%\qi{~\ref{fig:e2e-eg3-lat-all-clusters}, \ref{fig:e2e-eg4-lat-all-clusters}, \ref{fig:e2e-eg5-lat-all-clusters}}
 %show the instance latency distribution of three stages, where instances in the same 
% stage 
%\qi{cluster} (same color)
% have similar latencies.
% Figures \todo{add instance latency distribution over different clusters in a stage; use 3 stage examples.} show the instance latency distributions of three stages. In each figure, the instance latency 
In addition, it is worth mentioning that when we eliminate the factor of the clustering, the issue of spikes could degenerate into an uneven behavior, where instances of the spike latencies could belong to one cluster of fewer members.

Therefore, our follow-up work will include (1) augmenting the model with more offline sampling over the parameter space and (2) taking the model uncertainty (the expectation and variance of the latency prediction) into the concern.

\cut{
\begin{figure*}
\centering
\begin{minipage}{.48\textwidth}
	\begin{tabular}{c}
		\subfigure[\small{Department A}]
		{\label{fig:tws-A}\includegraphics[width=0.98\linewidth,height=2.4cm]{figures/wmape-hourly-trends/workloads_shift_over_hour_eleme_ads.pdf}}
		\\
		\subfigure[\small{Department B}]
		{\label{fig:tws-B}\includegraphics[width=0.98\linewidth,height=2.4cm]{figures/wmape-hourly-trends/workloads_shift_over_hour_trip_profile.pdf}}
		\\		
		\subfigure[\small{Department C}]
		{\label{fig:tws-C}\includegraphics[width=0.98\linewidth,height=2.4cm]{figures/wmape-hourly-trends/workloads_shift_over_hour_alimama_ecpm_algo.pdf}}
	\end{tabular}
%	\vspace{-0.1in}
	\caption{Hourly workload shifts}
	\label{fig:temporal-workload-shift}
\end{minipage}%
\begin{minipage}{.48\textwidth}
  \centering
	\begin{tabular}{c}
		\subfigure[\small{Department A}]
		{\label{fig:hlc-A}\includegraphics[width=0.98\linewidth,height=2.4cm]{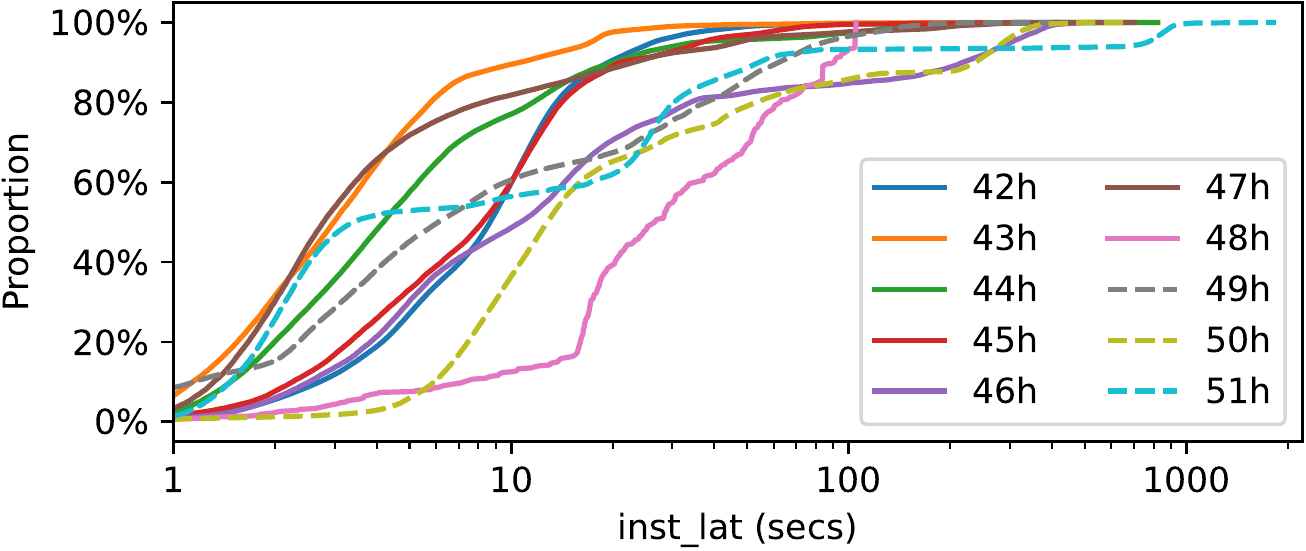}}
		\\
		\subfigure[\small{Department B}]
		{\label{fig:hlc-B}\includegraphics[width=0.98\linewidth,height=2.4cm]{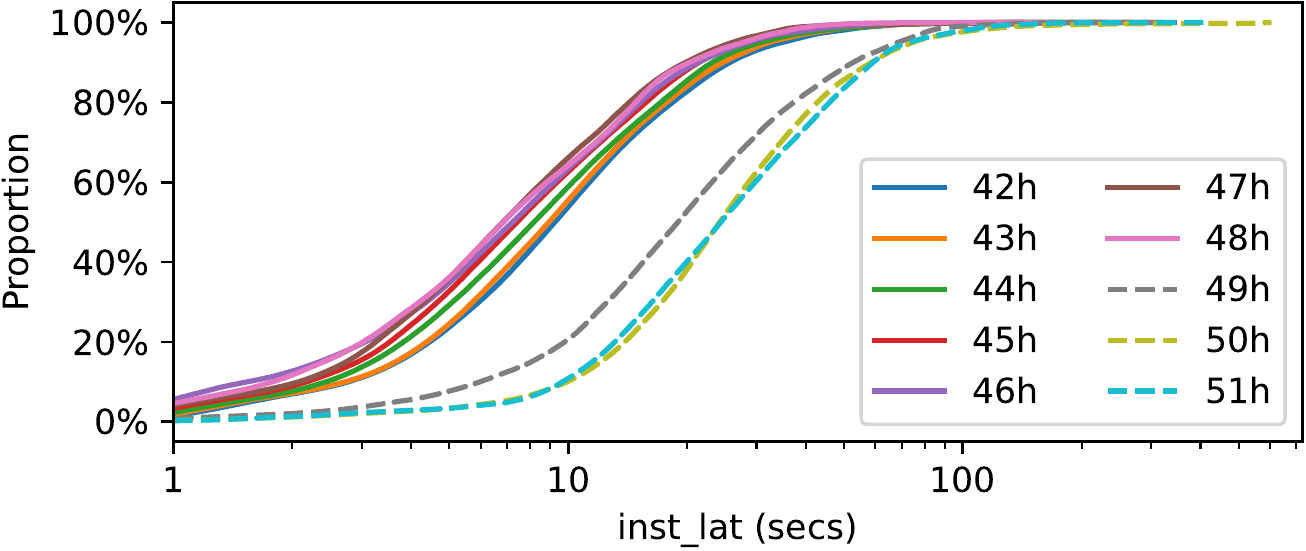}}
		\\		
		\subfigure[\small{Department C}]
		{\label{fig:hlc-C}\includegraphics[width=0.98\linewidth,height=2.4cm]{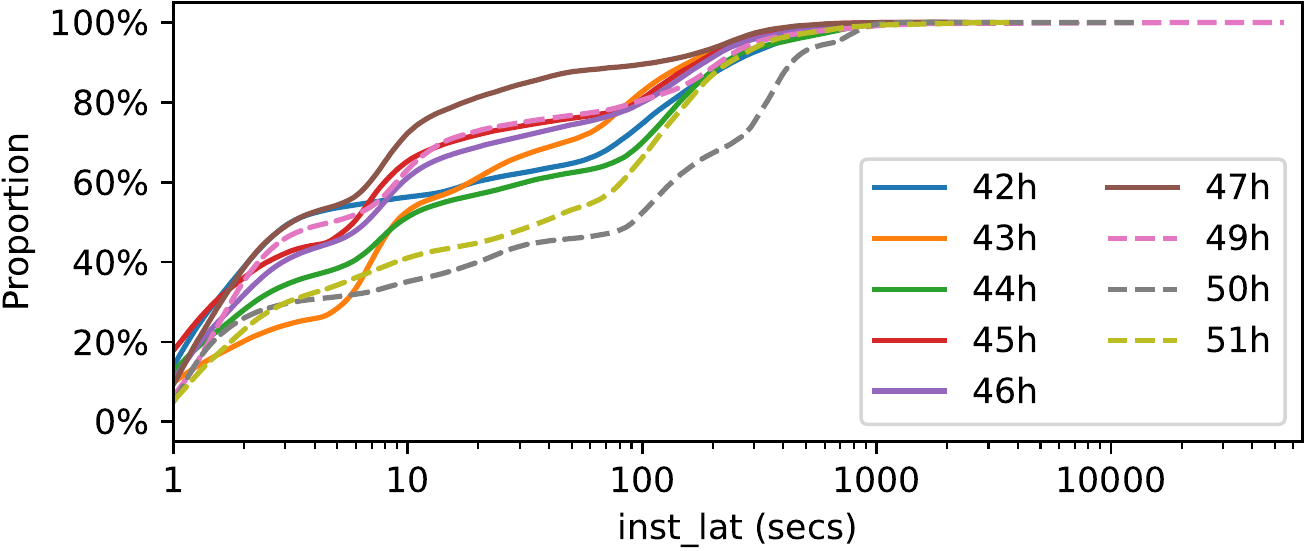}}
	\end{tabular}
  \captionof{figure}{Instance Lat Dist. in hourly workloads}
  \label{fig:hourly-lat-cdf}
\end{minipage}
\end{figure*}
}

}

\end{document}